\tikzset{
	myblue/.style ={circle, white, draw=blue, fill=blue, inner sep =2.2},
	arn/.style = {circle, white, draw=black, fill=black, inner sep = 1.4},
	arn_l/.style = {circle, white, draw=black, fill=black, inner sep = 2.5},
	photon/.style={draw=black, thin},
	electron/.style={draw=black, very thick, line width=0.09cm},
	solidline/.style={draw=black, very thick, solid, line width=0.08cm},
	tr/.style={buffer gate US,thick,draw,fill=gray!60,rotate=90,	anchor=east,minimum width=2.25cm},
	br/.style={buffer gate US,thick,draw,fill=gray!60,rotate=90,	anchor=east,minimum width=4.5cm},
	brr/.style={buffer gate US,draw,fill=gray!60,rotate=90,	anchor=east,minimum width=4.5cm, opacity = 0.6},
	trr/.style={buffer gate US,thick,draw,fill=gray!40,rotate=90,	anchor=east,minimum width=1.9cm, opacity = 0.5},
	trrr/.style={buffer gate US,draw,fill=gray!40,rotate=90,	anchor=east,minimum width=3.6cm, opacity = 0.5}
}
\def\BState{\State\hskip-\ALG@thistlm}
\newtheorem{theorem}{Theorem}
\newtheorem{lemma}{Lemma}
\newtheorem{corollary}{Corollary}
\newtheorem{conjecture}{Conjecture}
\newtheorem{claim}{Claim}
\newtheorem{assumption}{Assumption}
\newenvironment{proofof}[1]{\noindent{\bf Proof of #1:}}{$\qed$\par}
\newenvironment{proofsketch}{\noindent{\bf Proof outline:}}{$\qed$\par}
\theoremstyle{definition}
\newtheorem{definition}{Definition}
\DeclareMathOperator{\poly}{poly}
\let\emptyset\varnothing
\newcommand{\high}{\mathrm{highbudget}~}
\newcommand{\lef}{\mathrm{left}}
\newcommand{\righ}{\mathrm{right}}
\newcommand{\leaves}{\textsc{Leaves}}
\newcommand{\hleaves}{\textsc{HeavyLeaves}}
\newcommand{\head}{\textsc{head}}
\newcommand{\Estimated}{\mathrm{Est}}
\newcommand{\found}{\mathsf{Found}}
\newcommand{\steps}{\mathrm{Steps}}
\newcommand{\Path}{\textsc{SideTree}}
\newcommand{\idnt}{\mathrm{Identified}}
\newcommand{\excl}{\mathsf{Excluded}}
\newcommand{\supp}{\mathrm{supp}}
\newcommand{\tfull}{T^{\mathrm{full}}}
\newcommand{\scheap}{\mathrm{Cheap}}
\newcommand{\sident}{\mathrm{Marked}}
\newcommand{\tree}{\mathrm{Tree}}
\newcommand{\Anc}{\mathrm{Anc}}
\newcommand{\Estimate}{\textsc{Estimate}}
\newcommand{\ZeroTest}{\textsc{ZeroTest}}
\newcommand{\marked}{\text{Marked}}
\bmdefine{\aaa}{a}
\bmdefine{\jj}{j}
\bmdefine{\rr}{r}
\bmdefine{\lv}{l}
\bmdefine{\sv}{s}
\bmdefine{\tv}{t}
\bmdefine{\ff}{f}
\bmdefine{\gg}{g}
\bmdefine{\hh}{h}
\bmdefine{\tt}{t}
\bmdefine{\qq}{q}
\bmdefine{\vv}{v}
\bmdefine{\ww}{w}
\bmdefine{\phib}{\phi}
\global\long\def\ZZ{\mathbb{Z}}
\bmdefine{\alphav}{\alpha}
\bmdefine{\betav}{\beta}
\bmdefine{\bv}{B}
\bmdefine{\bb}{b}
\newcommand{\wh}{\widehat}
\newcommand{\wt}{\widetilde}
\newcommand{\E}{\mathbb{E}}
\newcommand{\argmin}{\mathrm{argmin}}
\newcommand{\true}{\mathrm{True}}
\newcommand{\false}{\mathrm{False}}
\newcommand{\frontier}{\textsc{Frontier}}
\newcommand{\correct}{\textsc{IsCorr}}
\DeclareMathOperator{\subtree}{\mathrm{FreqCone}}
\newcommand{\C}{{\mathbb C}}
	\gdef\xxxmark{%
		\expandafter\ifx\csname @mpargs\endcsname\relax 
		\expandafter\ifx\csname @captype\endcsname\relax 
		\marginpar{xxx}
		\else
		xxx 
		\fi
		\else
		xxx 
		\fi}
	\gdef\xxx{\@ifnextchar[\xxx@lab\xxx@nolab}
	\long\gdef\xxx@lab[#1]#2{{\bf [\xxxmark #2 ---{\sc #1}]}}
	\long\gdef\xxx@nolab#1{{\bf [\xxxmark #1]}}
\definecolor{light}{rgb}{0.5, 0.5, 0.5}
\def\lighttitle#1{{\color{light}#1}}
\newcommand{\problemtitle}[1]{\gdef\@problemtitle{#1}}
\newcommand{\probleminput}[1]{\gdef\@probleminput{#1}}
\newcommand{\problemquestion}[1]{\gdef\@problemquestion{#1}}
  \par\addvspace{.5\baselineskip}
  \par\addvspace{.5\baselineskip}
\newcommand{\e}{\epsilon}
\renewcommand{\Pr}{\mathrm{Pr}}
\begin{document}

\title{Traversing the FFT Computation Tree for Dimension-Independent Sparse Fourier Transforms}
\author{Karl Bringmann\\Saarland Uni. \& MPI \and Michael Kapralov\\EPFL \and Mikhail Makarov\\EPFL \and Vasileios Nakos\\Saarland Uni. \& MPI \and Amir Yagudin\\MIPT \and Amir Zandieh\\MPI}

\maketitle
\thispagestyle{empty}

\begin{abstract}


We are interested in the well-studied Sparse Fourier transform problem, 
where one aims to quickly recover an approximately Fourier $k$-sparse domain vector $\widehat{x} \in \mathbb{C}^{n^d}$ from observing 
its time domain representation $x$. In the exact $k$-sparse case the best known dimension-independent algorithm 
runs in near cubic time in $k$ and it is unclear whether a faster algorithm like in low dimensions is possible. Beyond that, all known approaches either suffer from an exponential dependence of 
their runtime on the dimension $d$ or can only tolerate a trivial amount of noise. This is in sharp contrast with the classical FFT algorithm of Cooley and Tukey, 
which is stable and completely insensitive to the dimension of the input vector: its runtime is $O(N\log N)$ in any dimension $d$ for $N=n^d$. 
Our work aims to address the above issues.

First, we provide a translation/reduction of the exactly $k$-sparse Sparse FT problem to a concrete tree exploration task which asks to recover $k$ leaves 
in a full binary tree under certain exploration rules. Subsequently, we provide (a) an almost quadratic in $k$ time 
algorithm for the latter task, and (b) evidence that obtaining a strongly subquadratic time for Sparse FT via this approach is likely to be impossible.
 We achieve the latter by proving a conditional quadratic time 
lower bound on sparse polynomial multipoint evaluation (the classical non-equispaced sparse Fourier transform problem) which is a core routine in the aforementioned translation.
Thus, our results combined can be viewed as an almost complete understanding of this approach, which is the only known approach that yields 
sublinear time dimension-independent Sparse FT algorithms.

Subsequently, we provide a robustification of our algorithm, yielding a robust cubic time algorithm under bounded $\ell_2$ noise. This requires
proving new structural properties of the recently introduced adaptive aliasing filters combined with a variety of new techniques and ideas. 
Lastly, we provide a preliminary experimental evaluation comparing the runtime of our algorithm to FFTW and SFFT 2.0.  
\end{abstract}
\setcounter{page}{0}
\newpage
\thispagestyle{empty}
\tableofcontents

\newpage
\setcounter{page}{1}

\section{Introduction.}

Computing the largest in magnitude Fourier coefficients of a function without computing all of its Fourier transform, or reconstructing a sparse vector/signal $x$ from partial Fourier measurements are common and well-studied tasks across science and engineering, as they appear in a variety of disciplines. Possibly the earliest work on the topic was by Gaspard de Prony in 1795, who showed that any $k$-sparse vector can be efficiently reconstructed from its first $2k$ Discrete Fourier transform (DFT) coefficients. These ideas have been re-discovered/used both in the context of decoding BCH codes~\cite{wolf1967decoding}, as well as in the context of computer algebra by Ben-Or and Tiwari~\cite{ben1988deterministic}. In the context of learning theory, and in particular learning decision trees, Kushilevitz and Mansour~\cite{KM} devised an algorithm that detects the largest Fourier coefficients of a function defined over the Boolean hypercube, building upon~\cite{GL}. The work of~\cite{AGS} uses sparse Fourier transform techniques in cryptography, namely for proving hard-core predicates for one-way functions. In 2002, a sublinear-time efficient algorithm for learning the $k$ largest DFT coefficients was proposed in~\cite{GGIMS}; this line of work has resulted in (near-)optimal algorithms~\cite{GMS,hikp12a,k16,k17} for the DFT case. In terms of its applications to signal processing and reconstruction, arguably the most prominent is the work of Candes, Donoho, Romberg, and Tao~\cite{Don,CTao,CRT}, which has far-reaching applications in fields such as medical imaging and spectroscopy~\cite{MRICS,nmrCS}, and created the area of \emph{compressed sensing}; the reader may consult the text~\cite{FR} for a thorough view on the topic.

Formally, the Sparse Fourier Transform problem is the following. Given oracle access to a size $N$ $d$-dimensional vector $x$, find a vector $\wh \chi$ such that

\[  \|\wh x - \wh \chi \|_p \leq C \cdot \mathrm{min}_{k\text{-sparse vectors } \wh z} \| \wh{x} - \wh z\|_q,       \]
where $C$ is the approximation factor, and $\|\cdot\|_p, \|\cdot\|_q$ are norms. The number of oracle accesses to $x$ shall be referred to as \emph{sample complexity}. The most well studied case in the literature is the case where $C=1+\epsilon$ (or constant) and $p = q = 2$, referred to as the $\ell_2/\ell_2$ guarantee. Other well-studied cases are the so-called $\ell_\infty/\ell_2$ guarantee, where $C=\frac{1}{\sqrt{k}}, p = \infty, q = 2$, as well as the $\ell_2/\ell_1$ guarantee, see~\cite{CTao,IK,NSW19}. Our focus in this paper is the $\ell_2/\ell_2$ guarantee. Frequently, the $k$ largest in magnitude coordinates of $\wh{x}$ are referred to as the \emph{head} of the signal, while all the other coordinates are referred to as the \emph{tail} of the signal, or as \emph{noise}. With this vocabulary, the $\ell_2/\ell_2$ guarantee asks to recover the head of $\wh{x}$ with error up to $(1+\epsilon)$ times the noise level.

The research on the topic, especially over the last fifteen years, has been extensive~\cite{KM,LMN,FFTmachinelearning,FFTBoollearning,Mansour95,GGIMS,GMS,CTao,IGS,Iwen10,Ak,CGV,hikp12a,hikp12b,BoufounosCGLS12,pawar2013computing,ikp14,pawar2014, Bourgain2014,IK,OngPR15, ps15,JanakiramanENR15, ChenKPS16, HR16, k16,cevher2017adaptive, k17,CheraghchiI17,merhi2017new,kapralov2019dimension,amrollahi2019efficiently,NSW19,HR19,jin2020robust}.
Our understanding of the sample complexity of this problem is quite good: we know that $O(k \poly(\log N))$ samples are sufficient for finding in time near linear in $N$ a vector $\wh \chi$ satisfying any of the aforementioned guarantees~\cite{CTao,HR16,NSW19}. Regarding the particularly interesting case of $d=1$, the research effort of the community has produced time-efficient algorithms as well. The fastest algorithm, due to the celebrated work of Hassanieh, Indyk, Katabi, and Price~\cite{hikp12a}, runs in time $O(k \log (N/k)  \log N)$ and achieves the same sample complexity as well. We know also how to achieve $O(k \log N)$ sample complexity and $O(k \poly(\log N))$ running time~\cite{k17}. On the other extreme, when $d=\log N$, i.e. in the case of the Walsh-Hadamard transform, almost optimal running time is known to be achievable, even deterministically~\cite{CheraghchiI17}.

Along with the running time, the sample complexity, and the error guarantee, of particular interest is also the sensitivity of the algorithm to the underlying field. When we are concerned with Fourier transforms over $\mathbb{Z}_n^d$\footnote{This is the case with the groups of interest in the Sparse FT literature. Furthermore, these are the groups on which the FFT algorithm of Cooley and Tukey operates. For general finite groups $G$, the fastest FT algorithm runs in time almost $|G|^{\omega/2}~\cite{umans2019fast}$, where $\omega$ is the matrix multiplication exponent.}, this corresponds to the sensitivity to the dimension $d$. Indeed, virtually all Sparse Fourier transform algorithms have a running time that suffers from an \emph{exponential} dependence on $d$ (in particular $\log^{\Omega(d)} N$), and the techniques either in dimension $d=1$ or $d= \log N$ heavily rely on the structure of the corresponding group. At the same time, given that the Cooley-Tukey FFT algorithm itself is completely dimension-independent, a natural question is whether this independence transfers also to the Sparse Fourier transform setting. Concretely, is the curse of dimensionality an inherent problem, or an artifact of previous techniques? A major practical motivation is that a quest for removing the curse of dimensionality can ultimately lead to new insights for designing empirical, efficient algorithms in dimensions $d=3,4$, which are mostly relevant in applications in NMR-spectroscopy and MRI imaging. Thus, an algorithm with better dependence on the $d$ and $k$ could thus be of practical importance as well.

 A step towards dimension-independence was taken in~\cite{kapralov2019dimension}, by giving a $O(k^3 \cdot \poly(\log N))$-time algorithm which recovers \emph{exactly} $k$-sparse signals in any dimension. Their approach is based on pruning the FFT computation graph, using a new tool called adaptive aliasing filters. 
However, the aforementioned algorithm had two disadvantages: i) the time was cubic and there was no evidence whether this was optimal under some reasonable assumption, and ii) was not able to go beyond the barrier of exactly $k$-sparse signals (or, noise level $\poly(N)$ times smaller than the energy of the head). 
Somewhat relevant is an algorithm due to Mansour~\cite{Mansour95}, which performs breadth-first search on the Cooley-Tukey FFT computation tree, and can get $\poly(k)$ running time for exactly $k$-sparse signals, but pays an additional multiplicative \emph{signal to noise ratio} factor for general signals~\cite{Mansour95}. We also mention a beautiful $O(k \cdot \poly(\log N))$-time algorithm for exactly $k$-sparse signals from~\cite{GHIKPS}, which requires  a distributional assumption on the support of the input signal in Fourier domain and unfortunately suffers from the restriction $k= O(N^{1/ d})$; already in dimension $d = O(\log N / \log \log N)$, this guarantees correctness only for $k \leq \poly(\log N)$.

\paragraph{Our results.} 
First, we translate the exactly $k$-Sparse FT problem using the machinery developed in~\cite{kapralov2019dimension} to a tree exploration problem 
that is accessible without any knowledge on Fourier transform. Our first main result is an almost complete understanding of this line of attack.

\begin{itemize}
\item The tree exploration task can be solved in almost quadratic time, and hence the exact $k$-Sparse FT problem can be solved in almost quadratic time.
This shaves off almost a factor of $k$ from the previous best sublinear-time, dimension-independent algorithm of~\cite{kapralov2019dimension}.
\item The quadratic time is most likely impenetrable by any explorative algorithm which successively peels off elements. That implies that overcoming this quadratic time barrier will likely require a major paradigm shift in Sparse FFT technology. This is based on a lower bound
on sparse polynomial multipoint evaluation and is interesting in its own right as the problem is well-studied under the name of non-equispaced Fourier transform.

\end{itemize}

In the robust case, we obtain a quadratic sample complexity, sublinear-time, dimension-independent algorithm that recovers the head of the signal under bounded $\ell_2$ noise, i.e. when every frequency in the head is larger than the energy of the tail. Even under this seemingly restricted noise model, designing an efficient algorithm turns out to be non-trivial, requiring a constellation of new techniques. Previous algorithms were either i) robust and dimension-independent but not sublinear-time~\cite{CTao,IK,NSW19}, ii) sublinear-time and robust but not dimension-independent~\cite{GMS,hikp12a,k16}, or iii) sublinear-time and dimension-independent but not robust to any form of noise~\cite{kapralov2019dimension}. We also discuss all the barriers we have faced, including the barrier to handling noise of larger magnitude, in Section~\ref{sec:barriers}.

\section{Computational Tasks and Formal Results Statement.}\label{sec:results}

This section contains the computational tasks studied in this paper, our results, and a preparations section for the lower bound, namely Theorem~\ref{thm:lower_bound}. We will be concerned with $N$-length $d$-dimensional vectors $x: [n]^d \rightarrow \mathbb{C}$, where $N=n^d$ and $n$ is a power of $2$. Thus, $N,n,d$ will remain unaltered throughout the paper. We will use the notation $[n]$ to denote the set of integer numbers $\{0, 1, \dots, n-1 \}$. We will use a \textbf{non-standard} notation $\widetilde{O}(f) = O( f \poly(\log N))$, where $f$ is some parameter and $N$ is the size of our underlying vector $x$. For a vector $x$, we denote $\|\wh{x}\|_0 = \left| \left\{ \bm f\in [n]^d: \wh{x}_{\bm f} \neq 0\right\} \right|$, and $\wh{x}_T$, for a set $T \subseteq [n]^d$, to be the vector that results from zeroing out every coordinate of $x$ outside of $T$. We let $\wh{x}_{-k}$ be the vector that occurs after zeroing out the top $k$ coordinates in magnitude, breaking ties arbitrarily. All logarithms are base $2$. For the algorithm we present, we shall assume exact arithmetic operations over $\mathbb C$ in unit time throughout the paper, although the analysis goes through with $\frac{1}{\poly(N)}$ precision as well.


\begin{figure}
\begin{tabular}{|m{\dimexpr \linewidth - 5em}|m{5em}|}
  \hline
  Task & Result \\
  \hline
  \hline
  \begin{minipage}{\linewidth}
\begin{problem}
  \problemtitle{$\star$ Sparse Fourier Transform in the exact case}
  \probleminput{Integers $n,d,k$ and $N=n^d$, and oracle access to a vector $x \in \mathbb{C}^{n^d}$ satisfying $\|\wh{x}\|_0 \leq k$.}
  \problemquestion{Compute  $\wh x$.}
\end{problem}
\end{minipage}
&Theorem~\ref{thm:exact_quadratic}\\
\hline
\begin{minipage}{\linewidth}
\begin{problem}
  \problemtitle{$\star$ $\ell_2/\ell_2$ Sparse Fourier Transform}
  \probleminput{Integers $n,d,k$ and $N=n^d$, parameter $\epsilon <1$, and oracle access to a vector $x \in \mathbb{C}^{n^d}$.}
  \problemquestion{Compute a vector $\wh{\chi} \in \mathbb{C}^{n^d}$ such that $\|\wh{x} - \wh{\chi}\|_2 \leq (1+\epsilon) \|\wh{x}_{-k}\|_2$.}
\end{problem}
\end{minipage}
&Theorem~\ref{thrm:near-quad-robust}\\
\hline
\begin{minipage}{\linewidth}
\begin{problem}
  \problemtitle{$\star$ Non-Equispaced Fourier Transform}
  \probleminput{Integers $n,d$, parameter $\epsilon<1$, two sets $F, T \subseteq [n]^d$ with $|F|=|T|=k$, and a vector $x \in \mathbb{C}^{n^d}$ supported on $T$.}
  \problemquestion{Compute additive $\pm \epsilon \|\wh{x}\|_2$ approximations to each of $\wh{x}_{\ff}$, for $ \bm{f} \in F$.}
\end{problem}
\end{minipage}
&Theorem~\ref{thm:lower_bound}\\
\hline
\begin{minipage}{\linewidth}
\begin{problem}
  \problemtitle{$\star$ Sparse Polynomial Multipoint Evaluation}
  \probleminput{Integers $n,k$, parameter $\epsilon<1$, a polynomial $p$ of degree $n$ and sparsity $k$, i.e. $k$ non-zero coefficients, each of which is of magnitude $1$, as well as points $a_1,a_2,\ldots,a_k \in \mathbb{C}^n$ of magnitude $1$.}
  \problemquestion{Compute additive $\pm \epsilon $ approximations to each of $p(a_i)$, for all $i=1,2,\ldots,k$.}
\end{problem}
\end{minipage}
&Theorem~\ref{thm:multipoint}\\
\hline
\begin{minipage}{\linewidth}
\begin{problem}
\problemtitle{$\star$ Orthogonal vectors, $\textsc{OV}_{k,d}$}
\probleminput{$A,B \subseteq \{0,1\}^d$, with $|A|=|B|= k$}
\problemquestion{Determine whether there exists $a \in A, b \in B$ such that $\langle a,b \rangle = 0$.}
\end{problem}
\end{minipage}
&\\
\hline
\end{tabular}
\caption{Computational tasks considered in this paper.}\label{fig-problem-defs-all}
\end{figure}


We start by summarizing the formal definitions of all relevant computational problems in Figure~\ref{fig-problem-defs-all}. With these definitions in place we can state our results as follows:

\begin{theorem}[Almost-Quadratic Time Exact $k$-Sparse FFT]
  \footnote{proved as Theorem~\ref{thm:exact_sparseft} in \cref{sec:tree_problem} and \cref{sec:exact-k-sparse-reduction}}
\label{thm:exact_quadratic}
Given oracle access to $x: [n]^d \to \C$ with $\|\wh{x}\|_0 \leq k$, we can find $\wh{x}$ in deterministic time
	\[ \widetilde O\left( k^2 \cdot 2^{8\sqrt{\log k \cdot \log \log N}} \right).	\]
\end{theorem}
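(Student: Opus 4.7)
The plan splits along the theorem's footnote: first, a reduction from exact $k$-Sparse FFT to a combinatorial tree exploration problem (to be carried out in \cref{sec:exact-k-sparse-reduction}); second, an almost-quadratic-time algorithm for that tree exploration task (\cref{sec:tree_problem}).

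For the reduction I intend to follow the adaptive aliasing filter approach of~\cite{kapralov2019dimension}. The Cooley--Tukey factorization of the DFT naturally presents the frequency domain $[n]^d$ as the leaves of a full binary tree $T$ of depth $\log N$, with internal nodes corresponding to frequency cones (frequencies sharing a common binary prefix). At each internal node $v$, the filters of~\cite{kapralov2019dimension} simulate an oracle \textsc{ZeroTest}$(v)$ that detects whether the subtree rooted at $v$ contains any non-zero frequency of $\widehat{x}$, using $\poly(\log N)$ samples of $x$ and, crucially, with no exponential dependence on $d$. The exact $k$-Sparse FFT problem then reduces to recovering the at most $k$ marked leaves of $T$ given such \textsc{ZeroTest} oracles, which is precisely the tree exploration task referenced in the paper.

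For the tree algorithm, the naive peeling DFS/BFS visits one level at a time: starting from the root, it maintains the active ancestors (those with at least one marked leaf in their subtree) and tests both children at the next level. Since at most $k$ ancestors are ever active at any level, this uses $O(k \log N)$ oracle queries, each implemented by evaluating the current $k$-sparse folded polynomial at a single point in time $\widetilde O(k)$, for a total of $\widetilde O(k^2 \log N)$. The improvement I pursue is a \emph{batched} traversal that processes $L$ consecutive levels together: at each super-level, for every active ancestor I test all $2^L$ of its descendants $L$ levels below by multipoint-evaluating a single $k$-sparse polynomial at $2^L$ points. With an appropriate choice of aliasing parameters those $2^L$ points will lie in a single orbit of a small cyclic subgroup, making them accessible by a size-$2^L$ FFT or Bluestein-style routine in time $\widetilde O(k + 2^L)$, rather than the $\widetilde O(k \cdot 2^L)$ of point-by-point evaluation. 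Summing over the $k$ active ancestors per super-level and the $\log N / L$ super-levels, and accounting for an overhead of the form $\poly(\log N)^{O(L)}$ arising from the filter machinery at each batch, gives a total cost of shape $k^2 \cdot \poly(\log N)^{O(L)}/L$. Optimizing over $L \asymp \sqrt{\log k / \log \log N}$ balances these contributions and yields precisely $\widetilde O(k^2 \cdot 2^{O(\sqrt{\log k \cdot \log \log N})})$.

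The hard part, I expect, is combining the batched multipoint evaluation with the adaptive aliasing filter machinery: the framework of~\cite{kapralov2019dimension} is tailored to one \textsc{ZeroTest} per call, so adapting it to produce $2^L$ algebraically structured evaluations jointly, without inflating the sample complexity in $d$, will require a careful restructuring of the filter parameters at each super-level so that all $2^L$ target points lie in a single exploitable orbit. A secondary obstacle is determinism: in the exact $k$-sparse case every \textsc{ZeroTest} must be unconditionally correct, which requires offsets that provably avoid accidental cancellations among the at most $k$ active frequencies --- this is where the exactness of $\|\widehat{x}\|_0 \leq k$ enters the analysis, and must be handled deterministically rather than by random hashing.
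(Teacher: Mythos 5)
Your reduction to a tree exploration problem with \textsc{ZeroTest} oracles matches the paper's framing, but from there you diverge both in algorithm and, more critically, in cost accounting, and the divergence exposes a gap.

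Your baseline estimate is that the naive traversal costs $\widetilde O(k^2\log N)$ because each \textsc{ZeroTest} amounts to evaluating a $k$-sparse folded polynomial at a single point in $\widetilde O(k)$ time. That model is wrong, and once it is corrected your batching idea no longer targets the actual bottleneck. In the paper's framework (Assumption~\ref{assmptn-zerotest-estimate} and Lemma~\ref{lem:zerotest}), a call to \textsc{ZeroTest} with budget $b$ costs $\widetilde O\bigl(2^{w_{\excl}(v)}\cdot b + |\found|\cdot b\bigr)$: to decide deterministically whether a subtree with up to $b$ non-zero frequencies is empty, one must take $\Theta(b\,\poly\log N)$ RIP samples, and for \emph{each} sample one pays $2^{w_{\excl}(v)}$ for the filter support and $|\found|$ for subtracting already-recovered mass. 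The vanilla algorithm of~\cite{kapralov2019dimension} always uses budget $b=k$, hence $\widetilde O(k^2)$ per call and $\widetilde O(k^3)$ overall. The $k^3$ is not caused by evaluating one filtered value per child, so collapsing $2^L$ children into a size-$2^L$ FFT or Bluestein evaluation does not remove a $k$ factor; the $k$ factor you need to shave is the RIP oversampling $b$, and batching across children leaves it untouched. Moreover, the paper itself proves (Theorem~\ref{thm:lower_bound}) that the natural ``evaluate a $k$-sparse polynomial at $k$ unstructured points'' subroutine you are trying to accelerate is OVH-hard, and the subtraction of $\wh\chi$ inside \textsc{ZeroTest} is exactly such an instance---so without a careful structured arrangement of \emph{both} the evaluation points and $\supp(\wh\chi)$ (which the paper does not attempt and your plan does not pin down), this route runs into the paper's own barrier.

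The mechanism the paper actually uses is orthogonal to batching: \emph{hierarchical error correction via variable budgets}. In \Cref{alg:exact_sparse_simple}, children are explored optimistically with a shrunken budget $\alpha s$ (making the recursive \textsc{ZeroTest} calls cheap), and the parent afterwards re-checks each child with the full budget $s$; if the optimistic exploration was fooled, the child is simply re-added to the frontier and re-explored at a larger budget. Correctness follows by induction on the budget because the top-level check at budget $k$ is unconditionally sound. The runtime analysis (Theorem~\ref{thm:exact_sparse_time}) is a recurrence over budget levels: $\log k/\log(1/\alpha)$ levels, each with a $\poly(\log N)/\alpha$ blowup from the number of iterations and the weight of excluded nodes; balancing $\alpha=2^{-2\sqrt{\log k\log\log N}}$ gives the stated $k^2\cdot 2^{8\sqrt{\log k\log\log N}}$. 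Your optimization ``$k^2\cdot\poly(\log N)^{O(L)}/L$'' is reverse-engineered to hit the same exponent, but it has no derivation behind it: the $\poly(\log N)^{O(L)}$ filter overhead is never justified, and dividing by $L$ cannot compensate an exponential in $L$. The fix you need is to replace the level-batching idea with budget-batching: guess a small \textsc{ZeroTest} budget, verify at the parent, and geometrically increase on failure.
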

We formally show that the exact Sparse FFT problem can be reduced to a tree exploration problem and show how to solve the tree exploration in almost quadratic time, and thus prove the above theorem, in \cref{sec:tree_problem} and \cref{sec:exact-k-sparse-reduction} as Theorem~\ref{thm:exact_sparseft}.

\begin{conjecture} (Orthogonal Vectors Hypothesis(OVH)~\cite{Williams05,AbboudWW14})
\label{ovc}
For every $\epsilon > 0$, there exists a constant $c \geq 1$ such that $\textsc{OV}_{k,d}$ (see Figure~\ref{fig-problem-defs-all}) requires $\Omega(k^{2-\epsilon})$ time whenever $d \geq c\log k$.
\end{conjecture}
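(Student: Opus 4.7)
The statement is a widely-believed conjecture in fine-grained complexity rather than a theorem admitting an unconditional proof with current techniques, so my plan is to reduce it from the Strong Exponential Time Hypothesis (SETH) of Impagliazzo and Paturi, essentially reproducing the Williams (2005) reduction that first established the implication SETH $\Rightarrow$ OVH.

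Concretely, given a CNF formula $\phi$ on $m$ variables and $\poly(m)$ clauses for which SETH forbids $(2-\delta)^m$-time satisfiability, I would split the variables into two halves $V_1, V_2$ of size $m/2$ each. For every assignment $\alpha$ of $V_1$ I build a 0/1 vector $a_\alpha \in \{0,1\}^M$ indexed by the clauses of $\phi$, setting the coordinate for clause $C$ to $0$ if $\alpha$ already satisfies $C$ and to $1$ otherwise; analogously I build vectors $b_\beta$ for assignments of $V_2$. Then $\phi$ is satisfiable iff there exist $\alpha, \beta$ with $\langle a_\alpha, b_\beta\rangle = 0$, because orthogonality is precisely the condition that every clause is covered by at least one half. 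Setting $k := 2^{m/2}$ and $d := M = \poly(m) = \Theta(\log k)$, a hypothetical $\textsc{OV}_{k,d}$ algorithm running in time $O(k^{2-\epsilon})$ would yield a SAT algorithm of runtime $2^{(1-\epsilon/2)m}\cdot\poly(m)$, contradicting SETH. A direct book-keeping of constants in the Williams reduction fixes the $c$ in the conjecture's statement ($d \geq c \log k$ suffices).

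The genuinely hard part is that SETH itself is unproven, and unconditionally proving OVH seems to require algorithmic circuit lower bounds far beyond current reach (by the connections of Williams between fast algorithms and lower bounds). So within the scope of this paper the honest posture is to state OVH as a conjecture, invoke the Williams reduction only as justification for its standing in the fine-grained landscape, and then use it as a black-box hardness assumption when proving Theorem~\ref{thm:lower_bound} about sparse polynomial multipoint evaluation. In other words, the ``proof'' here is really a \emph{conditional} derivation from an even more fundamental hypothesis, and any attempt to do better would constitute a major breakthrough rather than a routine argument.
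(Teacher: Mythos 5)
You correctly recognize that Conjecture~\ref{ovc} is a hardness hypothesis, not a theorem: the paper itself offers no proof, only citations to Williams and to Abboud--Williams--Wang, exactly as you describe. Your sketch of the SETH~$\Rightarrow$~OVH reduction (split variables into halves, index coordinates by clauses, observe that satisfiability becomes existence of an orthogonal pair, and note $d = \poly(m) = O(\log k)$) is the standard and correct justification for why the community treats OVH as a safe working assumption, and your closing remark that the conjecture is then invoked as a black box in Theorem~\ref{thm:lower_bound} matches the paper's usage precisely.
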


It is known that a collapse of the Orthogonal Vectors Hypothesis would have groundbreaking implications in algorithm design, see~\cite{GaoIKW19} and~\cite{AbboudBDN18}.

\begin{theorem}[Lower Bound for Non-Equispaced Fourier Transform]
\footnote{proved as~Theorem~\ref{thm:detailed_lb} in~\cref{sec:lb}}
\label{thm:lower_bound}
Assume that for all $k<n$ and $\epsilon>0$ there exists an algorithm that solves the Non-Equispaced Fourier Transform in time $O(k^{2-\delta} \poly(\log(n/\epsilon)))$ for some constant $\delta > 0$. Then the Orthogonal Vectors hypothesis fails.
\end{theorem}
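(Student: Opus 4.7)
The plan is to reduce $\textsc{OV}_{k,d}$ to Non-Equispaced FT. Given an OV instance $A,B\subseteq\{0,1\}^d$ with $|A|=|B|=k$ and $d = c\log k$ as in the hard regime of Conjecture~\ref{ovc}, I would set $n$ to the smallest power of two strictly exceeding $d$ (so $n = \Theta(\log k)$), and treat $\{0,1\}^d$ as a subset of $[n]^d$. For each scalar $s \in \{1,\ldots,n-1\}$, I would invoke NEFT on a fresh sparsity-$k$ instance: the support is $T_s := \{s\cdot \bm{a} : \bm{a} \in A\}$ (coordinate-wise scaling is a bijection $A\to T_s$ since $s\ne 0$ in $[n]$), the evaluation set is $F := B$, and $x^{(s)}$ is the indicator of $T_s$. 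Each call returns approximations
\[ \wt v^{(s)}_{\bm{b}} \;\text{ with }\; \bigl|\wt v^{(s)}_{\bm{b}} - \wh{x^{(s)}}_{\bm{b}}\bigr| \le \epsilon\,\|\wh{x^{(s)}}\|_2, \quad \text{where } \wh{x^{(s)}}_{\bm{b}} = \sum_{\bm{a} \in A}\omega^{s\langle \bm{a},\bm{b}\rangle},\ \omega = e^{2\pi i/n}. \]

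The combination step is character orthogonality. Since $0\le\langle \bm{a},\bm{b}\rangle\le d < n$, the sum $\sum_{s=0}^{n-1}\omega^{s\langle \bm{a},\bm{b}\rangle}$ equals $n$ when $\langle \bm{a},\bm{b}\rangle = 0$ and $0$ otherwise, so
\[ \sum_{s=0}^{n-1}\wh{x^{(s)}}_{\bm{b}} \;=\; n\cdot \bigl|\{\bm{a}\in A : \langle \bm{a},\bm{b}\rangle = 0\}\bigr|. \]
The $s=0$ contribution is simply $k$ (no NEFT call needed, since $|T_0|=1$), so I would assemble $S_{\bm{b}} := k + \sum_{s=1}^{n-1}\wt v^{(s)}_{\bm{b}}$ for each $\bm{b}\in B$ and declare an OV pair exists iff $\max_{\bm{b}\in B}\mathrm{Re}(S_{\bm{b}}) \ge n/2$.

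For error and runtime bookkeeping: Parseval gives $\|\wh{x^{(s)}}\|_2 = n^{d/2}\sqrt{k}$, so each $\wt v^{(s)}_{\bm{b}}$ carries additive error at most $\epsilon\, n^{d/2}\sqrt{k}$, and summing $n-1$ of them bounds the error on $S_{\bm{b}}$ by $n\epsilon\, n^{d/2}\sqrt{k}$. Taking $\epsilon = 1/(4 n^{d/2+1}\sqrt{k})$ keeps the total error below $1/4$, well inside the decision gap of $n$, while forcing $\log(1/\epsilon) = O(d\log n + \log k) = O(\log k\cdot \log\log k)$. Since $\log n = O(\log\log k)$, the $\poly(\log(n/\epsilon))$ factor in the assumed NEFT runtime stays at $\poly(\log k)$, so each of the $n-1 = O(\log k)$ NEFT calls runs in $O(k^{2-\delta}\poly(\log k))$ time, and the overall reduction in $\wt O(k^{2-\delta})$ time, contradicting Conjecture~\ref{ovc}. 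The main obstacle I anticipate is the tension between making $n$ small (so that only polylogarithmically many NEFT calls are needed to sweep all scales $s$) and letting $\epsilon$ absorb the Parseval factor $n^{d/2}\sqrt{k}$; the analysis above shows that these demands are compatible because polynomial precision costs only $\log(1/\epsilon) = O(\log k\cdot\log\log k)$ bits, which remains polylogarithmic in the OVH regime $d = \Theta(\log k)$. All remaining steps (embedding, instance construction, summing and thresholding) are $O(nk\log k)$-time bookkeeping, well below the dominant NEFT cost.
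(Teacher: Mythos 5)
The idea is clean and the character-orthogonality computation is correct, but it clashes irreconcilably with the hypothesis of the theorem, which supplies an algorithm only for NEFT instances with $k < n$. Your reduction forces $n$ to be tiny: you must sweep all scalings $s\in\{1,\ldots,n-1\}$ and make one NEFT call per scaling, so to keep the total work subquadratic in $k$ you need $n \le k^{o(1)}$ (and indeed you set $n = \Theta(\log k)$). But then $k \gg n$ and the hypothesis simply does not hand you an algorithm on your instances. This tension is inherent to the approach, not a parameter-tuning issue: if you instead pick $n > k$ to satisfy $k < n$, you already need more than $k$ NEFT calls and the reduction costs $\Omega(k^2)$ before any single call finishes. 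Note also that the paper's detailed version, Theorem~\ref{thm:detailed_lb}, is stated for the \emph{one-dimensional} NEFT on universe size $N$ with $k \le 2^{c'(\log N)^{1/3}}$, so even reading ``$n$'' as ``$n^d$'' puts your $(n,d,k)$ well outside the regime the paper actually works in: for you $\log(n^d) = \Theta(\log k \log\log k)$, so $k$ is roughly $2^{\log(n^d)/\log\log(n^d)}$, vastly larger than $2^{c'(\log(n^d))^{1/3}}$.

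The paper avoids this tension entirely by making a \emph{single} NEFT call in dimension one on a gigantic universe $N = M^{2dq}$, with $t_j := \sum_r a_j(r) M^{rq}$ and $f_j := \sum_r b_j(r) N/M^{rq+1}$ chosen so that cross terms $r\ne r'$ either vanish modulo $1$ or are negligibly small, leaving $\wh{x}_{f_j} \approx \sum_\ell \exp(-2\pi i \langle a_\ell,b_j\rangle / M)$. Because $M$ is huge, a Taylor expansion reveals the power sums $V_{j,h} = \sum_\ell \langle a_\ell,b_j\rangle^h$ at successive scales $M^{-h}$, and inverting a $d\times d$ Vandermonde system yields the counts $Z_r = |\{\ell : \langle a_\ell,b_j\rangle = r\}|$, in particular $Z_0$. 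Where you use $\Theta(\log k)$ calls and DFT-over-$\ZZ_n$ orthogonality, they use one call and moment inversion, with the precision $\epsilon = 1/N$ (so $\log(1/\epsilon) = \poly(d,\log k)$) doing the work of your many scalings. That is the idea your proposal is missing: encoding all the ``scales'' into a single high-precision 1D evaluation rather than a sweep of low-precision $d$-dimensional ones.
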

\begin{proofsketch}
 Given sets of vectors \[A=\{a_0,a_1,\ldots,a_{k-1}\},~B=\{b_0,b_1,\ldots,b_{k-1}\} \subseteq\{0,1\}^d,\] we build $|A|$ points in time domain and $|B|$ points in frequency domain as follows. We pick sufficiently large $M,q,N$ (for details see Section~\ref{sec:lb}) and define for $j \in [k]$:
  \begin{align*}
    t_j := \sum_{r\in[d]} a_j(r) \cdot M^{r q}, \qquad
    f_j := \sum_{r\in [d]} b_j(r) \cdot \frac{N}{M^{r q +  1}}, 
  \end{align*}

Subsequently, we look at the indicator vector of the set $\{t_0,t_1,\ldots,t_{k-1}\}$, let it be $x$. Asking for the values $\wh x_{f_0},\wh{x}_{f_1},\ldots,\wh{x}_{f_{k-1}}$ corresponds exactly to the non-equispaced Fourier transform problem. 
Using the aforementioned evaluations we show that it is possible to extract the values

\[ V_{j,h} := \sum_{\ell \in [k]} \langle a_\ell, b_j \rangle^h, \text{ for $j\in[n],h \in [d]$.} \]
For a fixed $j$, the values of $V_{j,h}$ can be expresed in terms of $Z_r := \left|\{ \ell \in [k] \mid \langle a_\ell, b_j \rangle = r \}\right|$, via multiplication by a $d\times d$ Vandermonde matrix. Since the entries involved in this matrix and $V_{j,h}$ have $\poly(d,\log k)$ bits, we can then solve for $Z_0$ in $\poly(d,\log k)$ time, where $Z_0$ corresponds to the number of vectors $a \in A$ which are orthogonal to $b_j$. Repeating this over all $j \in [k]$ yields whether there exists a pair of orthogonal vectors. 

{\em Of course, the overview presented above completely ignores how we actually extract the values of $V_{j, h}$ from evaluations of the Fourier transform. This carefully exploits periodicity of complex exponentials -- see Section~\ref{sec:lb} for more details.}
\end{proofsketch}
A lower bound for sparse polynomial multipoint evaluation (Figure~\ref{fig-problem-defs-all}) also follows immediately.
\begin{theorem}[Lower bound for Sparse Polynomial Multipoint Evaluation over $\mathbb C$]
\label{thm:multipoint}
Assume that for all $k<n$ and $\epsilon$ there exists an algorithm for sparse polynomial multipoint evaluation which runs in time $k^{2-\delta} \poly(\log(n/\epsilon))$.  Then the Orthogonal Vector Hypothesis fails.
\end{theorem}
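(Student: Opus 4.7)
The plan is to obtain Theorem~\ref{thm:multipoint} as an almost immediate corollary of Theorem~\ref{thm:lower_bound}, by observing that the hard Non-Equispaced Fourier Transform instances produced in its proof are in fact one-dimensional, and that a one-dimensional NEFT instance is syntactically a Sparse Polynomial Multipoint Evaluation instance.

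First I would unpack the construction sketched in the proof of Theorem~\ref{thm:lower_bound}. Given $A,B\subseteq\{0,1\}^d$ with $|A|=|B|=k$, the reduction picks sufficiently large integers $M,q,N$ and defines $t_j=\sum_{r\in[d]}a_j(r)\,M^{rq}\in[N]$ and $f_j=\sum_{r\in[d]}b_j(r)\,N/M^{rq+1}\in[N]$. Taking $x$ to be the $0/1$ indicator vector of $\{t_0,\ldots,t_{k-1}\}\subseteq[N]$, the task of computing (approximations to) $\wh{x}_{f_0},\ldots,\wh{x}_{f_{k-1}}$ is exactly a Non-Equispaced Fourier Transform instance with parameters $n=N$ and $d=1$. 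In one dimension,
\[
\wh{x}_{f_j} \;=\; \sum_{t\in T} x_t\,\omega_N^{-f_j t} \;=\; p\!\left(\omega_N^{-f_j}\right),
\qquad\text{where}\qquad p(z)\;=\;\sum_{t\in T} x_t\,z^{t}.
\]
Thus $p$ is a univariate polynomial of degree less than $N$ with exactly $k$ non-zero coefficients, each equal to $1$, and the evaluation points $a_j=\omega_N^{-f_j}$ lie on the unit circle. This matches the input format of Sparse Polynomial Multipoint Evaluation in Figure~\ref{fig-problem-defs-all}.

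Next I would check that the additive accuracy of the SPME solver is strong enough to drive the Vandermonde-inversion step used in the proof of Theorem~\ref{thm:lower_bound}. The NEFT guarantee required there is an additive $\pm\epsilon\|\wh{x}\|_2$ error, and since $x$ is a $0/1$ vector with exactly $k$ ones, $\|\wh{x}\|_2\le\poly(N)$. Hence it suffices to invoke the SPME oracle with accuracy $\epsilon'=\epsilon/\poly(N)$; the factor $\log(1/\epsilon')$ is absorbed into $\poly(\log(n/\epsilon))$ in the claimed runtime. Conversely, $|p(a_j)|\le k$ since $p$ has sparsity $k$ and unit coefficients, so the additive-error regime of SPME is well defined on these instances.

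Putting the pieces together: any algorithm that solves Sparse Polynomial Multipoint Evaluation in time $k^{2-\delta}\poly(\log(n/\epsilon))$ yields, via the above one-line translation, an algorithm that solves Non-Equispaced Fourier Transform in time $k^{2-\delta}\poly(\log(N/\epsilon))$ on the instances produced by the reduction of Theorem~\ref{thm:lower_bound}, which by that theorem refutes the Orthogonal Vectors Hypothesis. There is no real obstacle here beyond verifying the two bookkeeping points above — that the NEFT instance really is one-dimensional and that the polynomial coefficients and evaluation points sit on the unit circle — both of which are immediate from the explicit construction of $t_j$ and $f_j$ in Section~\ref{sec:lb}.
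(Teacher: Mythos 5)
Your argument is correct and is precisely the route the paper intends: the paper itself only states that the SPME lower bound "follows immediately" from the NEFT construction, and your one-dimensional translation $\wh{x}_{f_j} = p(\omega_N^{-f_j})$ together with the accuracy bookkeeping ($\|\wh{x}\|_2 = \sqrt{Nk}$, so the $\pm\epsilon\|\wh{x}\|_2$ tolerance becomes a $\pm\epsilon'$ tolerance with $\log(1/\epsilon') = O(\log(N/\epsilon))$) fills in exactly the details the paper leaves implicit.
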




\paragraph{Significance of our lower bound for computational Fourier Transforms.} Non-equispaced Fourier transform falls into a class of Fourier transforms referred to as \emph{non-uniform}. These transforms are an extensively studied topic in signal processing and numerical analysis~\cite{greengard1987fast,fessler2003nonuniform,greengard2004accelerating}, with numerous applications in imaging, signal interpolation and solutions of differential equations; the reader may consult the texts~\cite{bagchi1996nonuniform,potts2001fast,bagchi2012nonuniform}. 
\newline

To present our robust Sparse FFT results, we first quantify the notion of ``bounded $\ell_2$ noise''.
\paragraph{High SNR model.} A vector $x: [n]^d \to \C$ satisfies the $k$-high SNR assumption, if there exist vectors $w,\eta: [n]^d \to \C$ such that i) $\wh{x}=\wh{w} + \wh{\eta}$, ii) $\supp(\wh{w}) \cap \supp(\wh{\eta}) = \emptyset$, iii) $|\supp(\wh{w})| \leq k$ and iv) $|\wh{w}_f| \geq 3 \cdot \|\wh{\eta}\|_2$\footnote{The constant $3$ is arbitrary, and can be driven down to $(1+\zeta)$, for any $\zeta>0$.}, for every $f \in \supp(\wh{w})$.


\begin{restatable}[Robust Sparse Fourier Transform with Near-quadratic Sample Complexity]{theorem}{recursivesfftrobust}
\footnote{proved as Theorem~\ref{thm:core} in \Cref{sec:robust_first}}
\label{thrm:near-quad-robust}
	Given oracle access to $x: [n]^d \to \C$ in the $k$-high SNR model and parameter $\epsilon >0$, we can solve the $\ell_2/\ell_2$ Sparse Fourier Transform problem with high probability in $N$ using 
	\[m = \widetilde{O}\left( \frac{k^2}{\epsilon} + k^2 \cdot 2^{\Theta\left( \sqrt{\log k \cdot \log \log N} \right)} \right)\] samples from $x$ and $\widetilde{O}\left( \frac{k^3}{\epsilon} \right)$ running time.
\end{restatable}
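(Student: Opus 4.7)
I would extend the near-quadratic tree-exploration algorithm of Theorem~\ref{thm:exact_quadratic} from the exact $k$-sparse setting to the $k$-high-SNR setting, and then append a standard $\ell_2/\ell_2$ amplitude-estimation stage. The exact algorithm walks the Cooley--Tukey FFT computation tree using two primitives per visited node: an \emph{adaptive aliasing filter} in the sense of~\cite{kapralov2019dimension} that isolates the subtree rooted at that node, and an exact \textsc{HeavyTest} that decides whether the isolated subtree contains a non-zero frequency. My plan is to replace each primitive by a randomized, noise-tolerant counterpart and argue that, under the high-SNR assumption, the robust tests agree with their exact counterparts at all $\widetilde O(k\cdot 2^{\Theta(\sqrt{\log k\log\log N})})$ nodes touched by the exploration, with probability $1-1/\poly(N)$.

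\textbf{Step 1 -- robust \textsc{HeavyTest}.} I would prove new leakage properties of the adaptive aliasing filters: for any subtree $S$ chosen on the fly, the output of a properly randomized filter equals $\sum_{\ff\in S\cap \supp(\wh w)}\wh w_{\ff}$ plus a random tail term whose expected squared magnitude is a small constant fraction of $\min_{\ff\in\supp(\wh w)}|\wh w_{\ff}|^2$. Under the $k$-high-SNR assumption $|\wh w_{\ff}|\ge 3\|\wh\eta\|_2$ this yields a constant multiplicative gap between ``$S$ contains a head frequency'' and ``$S$ contains none''. Amplifying with $O(\log N)$ independent filter evaluations and a median-of-means step then gives a robust \textsc{HeavyTest} of failure probability $1/\poly(N)$, sufficient for a union bound across the whole exploration.

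\textbf{Step 2 -- identification and estimation.} Plugged into the exact algorithm, the robust test gives an identification stage that touches $\widetilde O(k\cdot 2^{\Theta(\sqrt{\log k\log\log N})})$ nodes, each of which consumes $\widetilde O(k)$ samples of $x$ (one filter evaluation, spread over the $O(\log N)$ repetitions) and $\widetilde O(k)$ time via the sparse polynomial multipoint evaluation routine already used in the exact case. Summing contributes $\widetilde O(k^2\cdot 2^{\Theta(\sqrt{\log k\log\log N})})$ to the sample complexity and $\widetilde O(k^2\cdot 2^{\Theta(\sqrt{\log k\log\log N})})\le \widetilde O(k^3)$ to the runtime. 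Once the support $S$ has been recovered, I would estimate each $\wh x_{\ff}$, $\ff\in S$, to additive error $\sqrt{\epsilon/k}\cdot\|\wh x_{-k}\|_2$ using the standard $B$-bucket hashing / median-of-means scheme with $B=\widetilde\Theta(k/\epsilon)$ from~\cite{hikp12a,IK,NSW19}. A Pythagorean sum over $\ff\in S$ then gives $\|\wh x-\wh\chi\|_2\le (1+\epsilon)\|\wh x_{-k}\|_2$ with $\widetilde O(k^2/\epsilon)$ additional samples and $\widetilde O(k^3/\epsilon)$ additional time, matching the remaining terms of the theorem.

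\textbf{Main obstacle.} The heart of the argument is Step~1: getting strong enough leakage bounds on adaptive aliasing filters. A filter-by-filter worst-case analysis is too weak---at a node isolating a subtree of Fourier dimension $\ell$, roughly $2^\ell$ tail coordinates are aliased together and their aggregated energy can reach $\|\wh\eta\|_2^2$, which at the $3\times$ SNR allowed by the model leaves essentially no gap. A refined filter construction combined with a variance calculation over the filter's randomness should drive the \emph{expected} squared leakage into any fixed subtree below a small constant fraction of the squared head magnitudes, uniformly in $\ell$. Because the exploration is adaptive---later filters depend on the outcomes of earlier tests---the concentration must hold against an adversarially chosen sequence of subtrees, so one has to decouple the randomness of the filters from the randomness driving the exploration. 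This is precisely the ``new structural properties of the adaptive aliasing filters'' alluded to in the abstract, and where I expect most of the technical effort to lie.
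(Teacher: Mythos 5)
Your high-level architecture is aligned with the paper's: a robust \textsc{HeavyTest} via median amplification, a backtracking identification phase inherited from the exact algorithm, and a separate $\epsilon$-accurate estimation pass. You also correctly anticipate that the hard part is controlling the leakage of the adaptive aliasing filters (the paper's version of this is the near-isometry Lemma~\ref{filter-robust-multidim}, namely $\sum_{v\in\leaves(T)}|\wh G_v(\bm\xi)|^2=1$ for every $\bm\xi$, used with random sample locations rather than with randomized filters). However, your Step~2 contains a concrete gap that would break the result as stated.

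You propose to estimate the recovered support $S$ to error $\sqrt{\epsilon/k}\|\wh x_{-k}\|_2$ using ``the standard $B$-bucket hashing / median-of-means scheme from~\cite{hikp12a,IK,NSW19}.'' This choice defeats the purpose of the theorem. The $\ell_\infty$-box bucket filters of~\cite{hikp12a} have time-domain support $(\poly\log N)^d$, i.e.\ exponential in the dimension, while~\cite{IK,NSW19} are dimension-independent in \emph{sample complexity} but have running time $\widetilde\Omega(N)$, which is not $\widetilde O(k^3/\epsilon)$ in general. No off-the-shelf scheme gives simultaneously dimension-independent and sublinear-time estimation. The paper's estimation primitive is instead built from the same adaptive aliasing filters used for identification (Algorithm~\ref{alg:high-dim-Est-robust}), and its correctness rests precisely on the new near-isometry property that bounds the tail leakage in expectation over random samples.

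Moreover, even with adaptive aliasing filters the naive estimation you describe does not hit the claimed sample bound. Estimating a frequency at a leaf $v$ of the splitting tree of $S$ to error $\sqrt{\epsilon/k}\cdot\mu$ costs $\widetilde O(2^{w_T(v)}\cdot k/\epsilon)$ samples, and since the splitting tree of $k$ arbitrary frequencies can have $\Omega(k)$ leaves of weight $\Theta(\log k)$, the per-frequency sum is $\widetilde O(k^3/\epsilon)$ — a factor $k$ over the target. Closing this gap is what the paper's \emph{lazy estimation} technique is for (Lemma~\ref{lemma:kraft-ISCheap} and the \textsc{ExtractCheapSubset} subroutine): one repeatedly finds a subset $L$ of leaves with $\max_{v\in L}2^{w_T(v)}=\widetilde O(|L|)$ and estimates all of $L$ in one batch with a common per-measurement budget of $\widetilde O(k/(\epsilon|L|))$, so the batch costs $\widetilde O(k|L|/\epsilon)$ samples and the total telescopes to $\widetilde O(k^2/\epsilon)$. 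Finally, your clean separation of ``identify first, estimate at the end'' is not quite how the algorithm is organized: intermediate constant-accuracy estimates (and the removal of estimated leaves from the working tree) are needed \emph{during} the recursion to keep filter supports bounded, and the paper handles the consequent accuracy-mismatch across recursion levels via its \emph{multi-scale estimation} scheme — each budget-$b$ subproblem estimates to coarser precision $\mu/\sqrt{b}$ and the parent re-marks and re-estimates. Without lazy and multi-scale estimation your identification phase alone would already exceed the sample budget, because $\|\wh\chi\|_0$ can reach $\Theta(k)$ inside each \textsc{HeavyTest} call.
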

This theorem is restated as Theorem~\ref{thm:core} in \Cref{sec:robust_first} followed by the proof. We re-iterate that even though the noise model we consider might seem restrictive, it turns out to be quite challenging requiring whole new constellation of ideas. The starting point here is the observation that the adaptive aliasing filters constructed by~\cite{kapralov2019dimension} in fact form an orthogonal basis (see Lemma~\ref{lem:gram} in Section~\ref{sec:filters_new}), and while the norms of individual filters in the family are not the same, the sum of their squares is equal to $1$ at every point in time domain (see Lemma~\ref{filter-robust-multidim} in Section~\ref{sec:filters_new}). The combination of these new facts allows us to argue noise stability of our algorithm in Section~\ref{sec:robust_first}.

Additionally, in Section~\ref{sec:barriers} we explain how we are led to consider this particular notion of high SNR regime, and why handling lower SNR is a hard barrier for algorithms which explore a pruned Cooley-Tukey FFT computation tree (which is also the only known class of algorithms that enables sublinear and dimension-independent recovery). The discrepancy between the running time and sample complexity provided by Theorem~\ref{thrm:near-quad-robust} is due to the fact that we used non-uniform Sparse Fourier Transform to subtract recovered frequencies from time domain in our algorithm, which requires quadratic time as per Theorem~\ref{thm:lower_bound}.

\paragraph{Experimental Evaluation.} Lastly, we present our experimental evaluation in Section~\ref{sec:experiments}, where we compare our method to the highly optimized software packages such as FFTW and SFFT 2.0.
The source code of our implementation is available at \url{https://bitbucket.org/michaelkapralov/sfft-experiments}.

\section{Technical overview.}
\label{sec:tree_problem}

In this section we first present (in Section~\ref{sec:near-isometry}) a new near-isometry property of adaptive aliasing filters of~\cite{kapralov2019dimension}, which underlies our robust high dimensional Sparse FFT algorithm. We then present (in Section~\ref{sec:tree-exp}) an abstract formulation of the Sparse Fourier transform algorithms which work based on these adaptive aliasing filters as an abstract \emph{Tree Exploration Problem}. Such a formulation allows us to present the key ideas behind our quadratic time dimension-independent Sparse FFT algorithm in a concise way, avoiding unnecessary Fourier analytic formalism.  The formal connection the tree exploration problem and the adaptive aliasing filter-based Sparse FFT is presented in \cref{sec:exact-k-sparse-reduction}. 

\subsection{A Near-Isometry Property of Adaptive Aliasing Filters.}
\label{sec:near-isometry}

Recall that given a signal $x: [n]^d \rightarrow \mathbb{C}$, the execution of the FFT algorithm produces a binary tree, referred to as $\tfull_N$. The root of $\tfull_N$ corresponds to the universe $[n]^d$, while the children of the root correspond to $\left[n/2 \right]\times [n]^{d-1}$; note that FFT recurses by peeling off the least significant bit. Every node $v$ has a \emph{label} ${\bm f}_v \in \ZZ_{n}^d$ associated to it, defined according to the following rules.

\begin{enumerate}
\item The root has label ${\bm f}_{\text{root}} = (\underbrace{0,0,\ldots,0}_{d~\mathrm{entries}})$, and corresponds to the universe $[n]^d$.
\vspace{-8pt}
\item The children $v_\lef,v_\righ$ of a node $v$ which corresponds to the universe $[n/2^l]\times [n]^{d'}$, with $0 \leq d' \leq d-1, 0\leq l \leq \log n-1$, have the following properties. Both correspond to universe $[n/2^{l+1}] \times [n]^{d'}$, and $v_\righ$ has label $\bm{f}_{v_\righ} = \bm{f}_v $, while $v_\lef$ has label $\bm{f}_{v_\lef} = \bm{f}_v + (\underbrace{0,0, \ldots ,0}_{d'}, 2^{l} ,\underbrace{0,0,\ldots,0}_{d-d'-1})$.
\vspace{-8pt}
\item The children of a node $v$ corresponding to universe $[1]\times[n]^{d'}$ with $d'>0$, are $v_\lef, v_\righ$, corresponding to universe $[n/2]\times [n]^{d'-1}$ and have labels $\bm{f}_{v_\righ} = \bm{f}_v$ and $\bm{f}_{v_\lef} = \bm{f}_v + (\underbrace{0,0,\ldots ,0}_{d' -1 },1,\underbrace{0,0,\ldots,0}_{d-d'})$ respectively.
\vspace{-8pt}
\item A node $v$ corresponding to universe $[1]$ is called a \emph{leaf} in $\tfull_N$.
\end{enumerate}
The above rules create a binary tree of depth $\log N$, which corresponds to the FFT computation tree. The labels of the leaves of $\tfull_N$ represent the set $[n]^d$ of all possible frequencies of any signal $x: [n]^d \rightarrow \mathbb{C}$ in the Fourier domain. We demonstrate $\tfull_N$ that corresponds to the $2$-dimensional FFT computation on universe $[4] \times [4]$ in Figure~\ref{Tfull-labeling}. 
Subtrees $T$ of $\tfull_N$ can be defined as usual. For every node $v \in T$, the \emph{level} of $v$, denoted by $l_T(v)$, is the distance from the root to $v$. 
We denote by $\leaves(T)$ the set of all leaves of tree $T$, and for every $v\in \leaves(T)$, its \emph{weight} $w_T(v)$ \emph{with respect to} $T$ is the number of ancestors of $v$ in tree $T$ with two children. The levels (distances from the root) on which the aforementioned ancestors lie will be called $\Anc(v,T)$. Furthermore, the sub-path of $v$ with respect to $T$ will be the children of the aforementioned ancestors which are not ancestors of $v$. Additionally, for a node $v \in T$ we denote the subtree of $T$ rooted at $v$ by $T_v$.

The following definition will be particularly important for our algorithms.
\begin{definition}[Frequency cone of a leaf of $T$]
For every subtree $T$ of $\tfull_N$ and every node $v\in T$, we define the {\em frequency cone of $v$} with respect to $T$ as,
\[	\subtree_T(v):=\left\{ \ff_u : \text{ for every leaf } u \text{ in subtree of }\tfull_N \text{ rooted at }v \right\}.\]
Furthermore, we define $\supp{(T)} : = \bigcup_{u\in \leaves(T)} \subtree_T(u)$.
\end{definition}

The \emph{splitting tree} of a set $S \subseteq [n]^d$ is the subtree of $\tfull_N$ that contains all nodes $v \in \tfull_N$ such that $S \cap \subtree_{\tfull_N}(v) \neq \emptyset$.

The main technical innovation of \cite{kapralov2019dimension} is the introduction of adaptive aliasing filters, a new class of filters that allow to isolate a given frequency from a given set of $k$ other frequencies using $O(k)$ samples in time domain and in $O(k \log N)$ time -- see Section~\ref{sec:filters_old} for a more detailed account of this prior work.

\begin{definition}[$(v, T)$-isolating filter, see Definition~\ref{def:v-t-isolating}] \label{def:v-t-isolating-overview}
	Consider a subtree $T$ of $\tfull_N$, and a leaf $v$ of $T$. A filter $G:[n]^d \to \C$ is called \emph{$(v,T)$-isolating} if the following conditions hold:
	\begin{itemize}
		\item For all $\bm f\in \subtree_T(v)$, we have $\wh{G}(\bm f)=1$.
		\item For every $\bm f'\in \bigcup_{\substack{ u \in \leaves(T) \\u \neq v }}\subtree_T(u) $, we have $\wh{G}(\bm f')=0$.
	\end{itemize}
\end{definition}

As shown in~\cite{kapralov2019dimension}, for a given tree $T$ and a node $v$ one can construct isolating filters $G$ such that $\|G\|_0 = O(2^{w_T(v)})$, and $\wh{G}(\bm f)$ is computable in $\widetilde{O}(1)$ time (see also Lemma~\ref{lem:isolate-filter-highdim}). The sparsity of $G$ in time domain, i.e. $\|G\|_0$, corresponds to the number of accesses to $x$ needed in order to get our hands on $(\wh{G} \cdot \wh{x})_{\bm f}$ for a fixed $\bm f$.

Unfortunately, as we have already pointed out, the algorithm in~\cite{kapralov2019dimension} works only for exactly $k$-sparse signals, and also demands cubic time and sample complexity. Our new toolkit shows that all three limitations can be remedied (though not completely simultaneously). The key observation underlying our new techniques is a new near-isometry property of adaptive aliasing filters.

\paragraph{Collectively, adaptive aliasing filters act as near-isometries.}

Adaptive aliasing filters as used in~\cite{kapralov2019dimension} are particularly effective for \emph{non-obliviously} isolating elements of the head with respect to each other. However, in standard sparse recovery tasks, one desires control of the tail energy that participates in the measurement. This is a relatively easy (or at least well-understood) task in Sparse Fourier schemes which operate via $\ell_\infty$-box filters~\cite{hikp12a,hikp12b,ikp14,IK,k17}, but a non-trivial task using adaptive aliasing filters. The reason is that the tail via the latter filtering is hashed in a \emph{non-uniform} way. The hashing depends on the arithmetic structure of the elements used to construct the filters, as well as their arithmetic relationship with the elements in the tail. This non-uniformity is essentially the main driving reason for the ``exactly $k$-sparse'' assumption in~\cite{kapralov2019dimension}. Our starting point is the observation that for every tree $T \subseteq \tfull_N$, the $(v,T)$-isolating filters for $v \in \leaves(T)$, satisfy the following orthonormality condition in dimension one, see subsection~\ref{sec:filters_onedim}.
\begin{lemma}(Gram Matrix of adaptive aliasing filters in $d=1$)
Let $T \subseteq \tfull_n$, let $G_v$ be the $(v,T)$-isolating filter of leaf $v\in \leaves(T)$, as per \eqref{eq:isolat-filter-1d}. Let $v$ and $v'$ be two distinct leaves of $T$. 
Then,  {\bf (1)} $\|\wh{G}_v \|_2^2 := \sum_{ \xi \in [n]} |\wh{G}_v(\xi) |^2 = n \cdot 2^{-w_T(v)}$
and {\bf (2)}  the adaptive aliasing filters corresponding to $v$ and $v'$ are orthogonal, i.e. 
$\langle \wh G_v, \wh G_{v'} \rangle :=  \sum_{ \xi \in [n]} \wh G_v(\xi) \cdot \overline{\wh G_{v'}(\xi)} = 0.$
\end{lemma}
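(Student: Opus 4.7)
\begin{proofsketch}
The plan is to use the explicit $1$D construction of the $(v,T)$-isolating filter from \eqref{eq:isolat-filter-1d}: $G_v$ is a convolution of $w := w_T(v)$ length-$2$ aliasing kernels, one per splitting ancestor of $v$ in $T$. If these ancestors sit at levels $l_1 < l_2 < \cdots < l_w$, the filter takes the product form
\[
\wh{G}_v(\xi) \;=\; \prod_{j=1}^{w} \frac{1 + \tau_j^{v}\, e^{-2\pi i \xi / 2^{l_j+1}}}{2},
\]
where each $\tau_j^v \in \mathbb{C}$ has unit modulus and encodes both the side (left/right) of $v$ at its $j$-th splitting ancestor and the label of that ancestor. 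In the time domain, $G_v$ is supported on the $2^w$-point set $S_v := \{\sum_{j=1}^{w} \epsilon_j\, n/2^{l_j+1} : \epsilon \in \{0,1\}^w\}$ (distinct since each $n/2^{l_j+1}$ is a distinct power of $2$), with $|G_v(t)| = 2^{-w}$ at every $t \in S_v$.

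Part (1) is then a one-liner: $\|G_v\|_2^2 = 2^w \cdot 2^{-2w} = 2^{-w}$, so by Parseval $\|\wh{G}_v\|_2^2 = n\, \|G_v\|_2^2 = n\cdot 2^{-w_T(v)}$.

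For Part (2), I would again use Parseval, $\langle \wh{G}_v, \wh{G}_{v'}\rangle = n\, \langle G_v, G_{v'}\rangle$, and expand both convolutions in the time domain. Writing $L_v := \{l_1^v, \ldots, l_{w_v}^v\}$ and $L_{v'}$ analogously, the inner product becomes
\[
\langle G_v, G_{v'}\rangle \;=\; 2^{-w_v - w_{v'}} \sum_{\epsilon, \epsilon'} \Bigl(\prod_{j} (\tau_j^v)^{\epsilon_j}\Bigr)\Bigl(\prod_{k} \overline{(\tau_k^{v'})^{\epsilon_k'}}\Bigr)\cdot \mathbf{1}\bigl[\, t_\epsilon = t_{\epsilon'}\,\bigr],
\]
with $t_\epsilon := \sum_j \epsilon_j\, n/2^{l_j^v + 1}$. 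The crucial combinatorial fact is that each $n/2^{l+1} = 2^{\log n - l - 1}$ is a distinct power of $2$, so two $\{0,1\}$-combinations of them coincide if and only if their coefficients match termwise. This forces $\epsilon_l = \epsilon_l'$ on $L_v \cap L_{v'}$ and $\epsilon_l = 0$ on the symmetric difference $L_v \triangle L_{v'}$, after which the sum telescopes into the single product
\[
\langle G_v, G_{v'}\rangle \;=\; 2^{-w_v - w_{v'}} \prod_{l \in L_v \cap L_{v'}} \bigl( 1 + \tau_l^v\, \overline{\tau_l^{v'}} \bigr).
\]

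The final step is tree-combinatorial: let $u^*$ be the lowest common ancestor of $v$ and $v'$ in $T$, at level $l^*$. Since $u^*$ has two children in $T$ (one in the subtree toward $v$, one toward $v'$), it is a splitting ancestor of both, so $l^* \in L_v \cap L_{v'}$. By the LCA property $v$ and $v'$ sit on opposite sides at $u^*$, which flips the side-encoding of $\tau_{l^*}^v$ relative to $\tau_{l^*}^{v'}$ while the label-encoding is the same node $u^*$; hence $\tau_{l^*}^v = -\tau_{l^*}^{v'}$, so the $l=l^*$ factor equals $1 - |\tau_{l^*}^{v'}|^2 = 0$ and the entire product vanishes. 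The main delicate points are (i) the distinct-powers-of-$2$ observation that collapses the double sum into a product, and (ii) verifying from the explicit formula in \eqref{eq:isolat-filter-1d} that $\tau^v$ really flips sign precisely at the LCA; given these, both claims follow.
\end{proofsketch}
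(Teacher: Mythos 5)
Your proof is correct, and it rests on the same two structural facts the paper's proof uses—uniqueness of binary expansions (distinct powers of $2$ in the exponents) and the sign flip at the lowest common ancestor, encoded by $f_{v'} - f_v \equiv 2^{l^*} \pmod{2^{l^*+1}}$—but you route the computation through the time domain via Parseval, whereas the paper stays in frequency domain and sums over $\xi$ directly. In the paper, the double sum over subsets $S\subseteq\Anc(v,T)$ and $S'\subseteq\Anc(v',T)$ is split into the $S\neq S'$ terms (each killed by summing a nontrivial root of unity over $\xi\in[n]$, which is the frequency-side avatar of your distinct-powers-of-$2$ observation) and the $S=S'$ terms (killed by pairing each $S$ with $S\cup\{l_T(u)\}$ for $u$ the LCA); you instead collapse the double sum directly in the time domain and then re-factor what remains as the single product $\prod_{l\in L_v\cap L_{v'}}(1+\tau_l^v\,\overline{\tau_l^{v'}})$, which vanishes because the $l^*$ factor equals $1+(-1)=0$. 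Your closing step is arguably cleaner than the pairing: the ``one dead factor kills the product'' structure is explicit and avoids grouping terms by hand. One small slip worth fixing: in the paper's \eqref{eq:isolat-filter-1d} the $\xi$-dependence is $e^{+2\pi i\xi/2^{\ell+1}}$, not $e^{-2\pi i\xi/2^{l_j+1}}$ as you wrote; with the correct sign your time-domain support set becomes $\{-\sum_j\epsilon_j\,n/2^{l_j+1}\bmod n\}$, but since distinct $\{0,1\}$-combinations of distinct powers of $2$ remain distinct after negation mod $n$, nothing else in your argument changes.
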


This already postulates that adaptive aliasing filters are relatively well-behaved: for a signal $x$ with tree $T$ all leaves of which have roughly the same weight, it must be the case that $x\mapsto \{ \langle \wh G_v,\wh x \rangle \}_{v \in \leaves(T)}$ is a near-orthonormal transformation. Of course, this is too much to ask in general. The crucial property that we will make use of is captured in the following Lemma, see Subsection~\ref{sec:filters-d}.

\begin{lemma}(see Lemma~\ref{filter-robust-multidim})
	Consider a tree $T \subseteq T_N^{full}$. For every leaf $v$ of $T$ we let $\wh{G}_v$ be a Fourier domain $(v,T)$-isolating filter. Then for every ${\bm\xi} \in [n]^d$,
$\sum_{ v\in \leaves(T)} |\wh{G}_v({\bm\xi})|^2 = 1.$
\end{lemma}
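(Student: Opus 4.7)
The plan is to proceed by induction on the size of the tree $T$, exploiting the recursive structure of the adaptive aliasing filters whose one-dimensional form is recalled just above in the excerpt (and constructed in general in Section~\ref{sec:filters_new}). In the base case $T$ consists of a single leaf $v$, reached perhaps through a chain of single-child nodes, so the isolating constraints only require $\wh G_v\equiv 1$ on $\subtree_T(v)=\supp(T)$, and the construction produces a filter whose Fourier transform has unit modulus at every frequency (it is, up to a phase, a single time-domain impulse). Hence the single summand $|\wh G_v(\bm\xi)|^2$ already equals $1$ at every $\bm\xi\in[n]^d$.

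For the inductive step, let $u$ be the branching node of $T$ closest to the root and denote by $T_L, T_R$ the two subtrees hanging off the children of $u$, so that $\leaves(T)=\leaves(T_L)\sqcup\leaves(T_R)$. I would extract from the construction of Section~\ref{sec:filters_new} the factorization
\[ \wh G_v(\bm\xi)=c_L(\bm\xi)\,\wh G_v^{T_L}(\bm\xi)\ \text{for }v\in\leaves(T_L),\quad \wh G_v(\bm\xi)=c_R(\bm\xi)\,\wh G_v^{T_R}(\bm\xi)\ \text{for }v\in\leaves(T_R), \]
where the one-step splitting coefficients at $u$ take the form
\[ c_L(\bm\xi)=\tfrac12\bigl(1+e^{2\pi i\langle\bm s,\bm\xi\rangle/n}\bigr),\qquad c_R(\bm\xi)=\tfrac12\bigl(1-e^{2\pi i\langle\bm s,\bm\xi\rangle/n}\bigr), \]
for the adaptive shift $\bm s$ chosen at $u$ to make $c_L\equiv 1$ on $\supp(T_L)$ and $c_L\equiv 0$ on $\supp(T_R)$ (symmetrically for $c_R$). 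An elementary expansion of $\tfrac14\bigl|1\pm e^{i\theta}\bigr|^2$ gives the pointwise partition of unity $|c_L(\bm\xi)|^2+|c_R(\bm\xi)|^2=1$ at every $\bm\xi$. Combining the factorization with the induction hypothesis applied to the strictly smaller trees $T_L$ and $T_R$ then yields
\[ \sum_{v\in\leaves(T)}\bigl|\wh G_v(\bm\xi)\bigr|^2 = |c_L(\bm\xi)|^2\!\!\sum_{v\in\leaves(T_L)}\!\!\bigl|\wh G_v^{T_L}(\bm\xi)\bigr|^2 + |c_R(\bm\xi)|^2\!\!\sum_{v\in\leaves(T_R)}\!\!\bigl|\wh G_v^{T_R}(\bm\xi)\bigr|^2 \;=\; |c_L(\bm\xi)|^2+|c_R(\bm\xi)|^2 \;=\; 1, \]
closing the induction.

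The main technical obstacle is formalizing the recursive factorization at the top branching node $u$: one must unpack the adaptive aliasing construction enough to exhibit $\wh G_v$ as the product of a single shift-based splitting coefficient at $u$ and an isolating filter built on the appropriate subtree. In higher dimensions the split at $u$ may involve a bit of any coordinate, and the shift $\bm s$ is chosen globally based on the arithmetic of $\supp(T_L)$ and $\supp(T_R)$ in order to enforce the desired isolation on $c_L, c_R$; verifying that such an $\bm s$ exists and induces coefficients of the form $\tfrac12(1\pm e^{i\theta})$ is the Fourier-analytic bookkeeping part of the argument, and it is exactly what lets the elementary identity $|c_L|^2+|c_R|^2=1$ propagate through the induction. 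Once this factorization is in hand, the remainder of the proof is just the one-line display above combined with the unit-modulus base case.
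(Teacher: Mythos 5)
Your proof is correct, and it takes a genuinely different route from the paper's. The paper's argument is an induction on the dimension $d$: the base case $d=1$ (Lemma~\ref{filter-robust-1dim}) is handled by expanding $\sum_v|\wh G_v(\xi)|^2$ as a double sum over subsets of branching ancestors and then cancelling, at each branching node $u$, via Kraft's equality applied to the two subtrees hanging off $u$; the dimension step peels off the first coordinate using the tensor-product factorization $\wh G_v(\bm\xi)=H_{p_v}(\xi_1)\cdot Q_v(\xi_2,\dots,\xi_d)$ of the multidimensional filter. Your argument inducts instead on the size of $T$, peeling off a single branching node at a time. Since the filter of Lemma~\ref{lem:isolate-filter-highdim} is, by construction, a product of one factor $\tfrac12\bigl(1+e^{i\theta_\ell}\bigr)$ per branching ancestor $\ell$ of $v$ (compare \eqref{eq:isolat-filter-1d} and its tensoring), the factorization at the top branching node $u$ is immediate: the two branches receive coefficients $\tfrac12(1+z)$ and $\tfrac12(1-z)$ for a common unit-modulus $z=z(\bm\xi)$, because the labels of $u$'s two children differ by exactly $2^{\ell_0}$ modulo $2^{\ell_0+1}$, so $|c_L(\bm\xi)|^2+|c_R(\bm\xi)|^2=1$ pointwise. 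Your single induction therefore subsumes both the $d=1$ base case and the $d-1\to d$ step of the paper's proof, avoiding both the subset-summation computation and the explicit appeal to Kraft's equality, which is a genuine simplification. Two things to state more carefully in a polished write-up: (i) from \eqref{eq:isolat-filter-1d} the splitting factor contains $\xi-f$, not $\xi$, in the exponent, so $c_L$ carries a constant phase $e^{-2\pi i f_L/2^{\ell_0+1}}$ from the label $f_L$ of $u$'s left child and is of the form $\tfrac12\bigl(1+e^{i\phi}e^{2\pi i\langle\bm s,\bm\xi\rangle/n}\bigr)$ rather than exactly the expression you display; this constant phase does not affect the partition-of-unity identity. (ii) Your inductive hypothesis must be stated for subtrees of $\tfull_N$ rooted at arbitrary nodes, not only at the root, since $T_L$ and $T_R$ hang off children of $u$; because the filter formula depends only on the absolute levels of the branching ancestors, the generalization is harmless, but it should be made explicit so that the induction closes.
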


Using standard arguments, the above gives the following Lemma.

\begin{lemma}
For $z: [n]^d \rightarrow \mathbb{C} $, let $z^{\rightarrow \bm a}$ be the cyclic shift of $z$ by $a$, i.e. $z^{\rightarrow {\bm a}}(\bm f) := z(\bm f - \bm a )$, where the subtraction happens modulo $n$ in every coordinate. For a tree $T \subseteq \tfull_N$, 

\[	\mathbb{E}_{\bm a \sim U_{[n]^d}} \left[\sum_{ v \in \leaves(T)}  |\langle \wh{G_{v}}, \wh{z^{\rightarrow {\bm a}}} \rangle|^2 \right] = \|z\|_2^2,	\]

i.e. on expectation over a random shift the total collection of filters is an isometry.

\end{lemma}

The above property is the key new observation that underlies our analysis, but several other technical ideas are needed to obtain our robust result -- a more detailed overview is given in Section~\ref{sec:techniques}. We also note that the our ultimate robust algorithm does not achieve the standard $\ell_2/\ell_2$ sparse recovery guarantees, which allow for recovery of a good approximation to the signal if the energy of the top $k$ coefficients is larger than the energy of the tail. Instead, we show that recovery is possible when every one of the top $k$ coefficients dominates the cumulative energy of the tail of the signal. It is weaker, but one must note that several recent works on Fourier sparse recovery are only known to tolerate inverse polynomial amounts of noise~\cite{DBLP:conf/nips/HuangK15,DBLP:conf/stoc/Moitra15}, so our robustness guarantee appears to be a strong first step.

\subsection{A Tree Exploration Problem.}\label{sec:tree-exp}
We now present an abstract formulation of the Sparse Fourier transform algorithms which work based on the adaptive aliasing filters of~\cite{kapralov2019dimension} as an abstract Tree Exploration Problem. We will lay down the formal connection between this problem and the adaptive aliasing filter-based Sparse FFT in Section 9. For now, we focus on the tree problem without any reference to Fourier transforms.

\paragraph{The setup.} In this problem we are given a full binary tree $\tfull_N$ with $N$ nodes where each leaf of this tree has a (potentially complex) number written on it, known as the \emph{value} of the leaf. Suppose that $T$ is an instance of $\tfull_N$ with at most $k$ of its leaves having non-zero values. The goal of the tree exploration problem is to find those $k$ leaves and estimate their corresponding values. 
It has been formally shown in \cite{kapralov2019dimension} that there is a bijective correspondence between any $k$-sparse $\wh{x}:[n]^d \to \C$ ($\|\wh x \|_0 \le k$) and such tree $T$ with $k$ non-zero valued leaves, so the Sparse FFT problem can be formulated as learning the tree. We will show this formally in \cref{sec:exact-k-sparse-reduction}.

\begin{definition}[$\leaves$ and $\hleaves$] \label{def:hleaves-def} For a node $v \in T$ we let $\leaves(v)$ be the leaves of $\tfull_N$ which are the descendants of $v$ (including $v$ itself in case $v$ is a leaf). We let $\hleaves(v)$ denote the leaves in $\leaves(v)$ with non-zero values.
\end{definition}

For a set of vertices $S \subset \tfull_N$ we will denote by $T(S)$ the subtree of $\tfull_N$ with minimum number of vertices containing $S$ and the root.

\begin{definition}[Weight of a vertex] For a binary tree $T$, and a vertex $v \in T$, the weight $w_T(v)$ of $v$ is equal to the number of ancestors of $v$ in $T$ with two children. For a set $S\subseteq \tfull_N$ and $v \in \tfull_N$, we denote $w_S(v) := w_{T(S \cup \{ v\})}(v)$ for simplicity.
\end{definition}

It turns out that this weight function is subadditive for a fixed $v$.
\begin{lemma}\label{lem:weight_subadditive}
For any two sets $S_1, S_2 \subseteq \tfull_N$ and $v \in \tfull_N$,
\[
    w_{S_1 \cup S_2}(v) \leq w_{S_1}(v) + w_{S_2}(v)
\]
\end{lemma}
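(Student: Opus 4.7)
The plan is to unpack the definition of $T(S \cup \{v\})$ and express $w_S(v)$ as the cardinality of a set that behaves well under union. Recall that $T(S \cup \{v\})$ is the minimal subtree of $\tfull_N$ containing the root together with every node in $S \cup \{v\}$; equivalently, it is the union in $\tfull_N$ of the root-to-node paths for all nodes in $S \cup \{v\}$. In particular, a node $u$ belongs to $T(S \cup \{v\})$ iff some descendant of $u$ (possibly $u$ itself) lies in $S \cup \{v\}$.

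Now fix an ancestor $u$ of $v$ in $\tfull_N$. The child of $u$ on the root-to-$v$ path is automatically in $T(S \cup \{v\})$ because it has $v$ as a descendant. Hence $u$ has \emph{two} children in $T(S \cup \{v\})$ iff the other child of $u$ --- call it $u^\perp$, the sibling of the on-path child --- also lies in $T(S \cup \{v\})$. Since the subtree of $\tfull_N$ rooted at $u^\perp$ does not contain $v$, this is equivalent to the condition that the subtree rooted at $u^\perp$ contains some element of $S$. Writing $D(u) \subseteq \tfull_N$ for the set of descendants of $u^\perp$ (the ``off-path subtree at $u$ relative to $v$''), I would define
\[
A_S(v) := \bigl\{\, u : u \text{ is a strict ancestor of } v \text{ in } \tfull_N \text{ and } D(u) \cap S \neq \emptyset \,\bigr\}.
\]
By the discussion above, $w_S(v) = |A_S(v)|$.

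The key observation is that $D(u)$ depends only on $u$ and $v$, not on $S$, so for any two sets $S_1,S_2$,
\[
D(u) \cap (S_1 \cup S_2) \neq \emptyset \quad\Longleftrightarrow\quad D(u) \cap S_1 \neq \emptyset \ \text{ or }\ D(u) \cap S_2 \neq \emptyset.
\]
This immediately yields $A_{S_1 \cup S_2}(v) = A_{S_1}(v) \cup A_{S_2}(v)$, and therefore
\[
w_{S_1 \cup S_2}(v) = |A_{S_1}(v) \cup A_{S_2}(v)| \le |A_{S_1}(v)| + |A_{S_2}(v)| = w_{S_1}(v) + w_{S_2}(v),
\]
as claimed.

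The only subtle point to double-check is the behavior at boundary cases (e.g.\ when $v \in S_1$ or $v \in S_2$, or when $v$ equals the root), but these are harmless: the characterization of ``$u$ has two children in $T(S \cup \{v\})$'' via nonempty intersection of $D(u)$ with $S$ goes through verbatim, since $v \notin D(u)$ for any ancestor $u$ of $v$. So I do not anticipate a real obstacle here --- the lemma is a clean consequence of the set-theoretic identity for $A_S(v)$.
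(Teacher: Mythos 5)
Your proof is correct and takes essentially the same approach as the paper's: the paper phrases the argument in terms of edges of the three trees (the off-path edge at a branching ancestor of $v$ in $T(S_1\cup S_2\cup\{v\})$ must already appear in $T(S_1\cup\{v\})$ or $T(S_2\cup\{v\})$), while you phrase it via the equivalent characterization that the off-path subtree at $u$ meets $S_1\cup S_2$ iff it meets $S_1$ or $S_2$. Your reformulation through the set $A_S(v)$ is clean and correct; in fact you establish the stronger identity $A_{S_1\cup S_2}(v)=A_{S_1}(v)\cup A_{S_2}(v)$ rather than just the one inclusion the inequality needs, but this is the same underlying observation.
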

You can find the proof in \Cref{appndx-weight-subadditive}.

The tree $T$ is unknown to us and we can explore it indirectly only using two primitives; $\ZeroTest$, which can answer queries about whether $\hleaves(v)$ is empty, and $\Estimate$, which can estimate the value of a leaf. In this section we are not concerned about the internal working of these primitives and treat them as oracles. The implementation of these routines in the context of Sparse FFT is given in \cref{sec:exact-k-sparse-reduction}.
In order to design efficient versions of these primitives to virtually prune the tree $\tfull_N$ and avoid operating on $\tfull_N$ as a whole, we need to introduce two extra parameters, $\found$ and $\excl$. 

\paragraph{The $\found$ and $\excl$ parameters.} $\found$ is an associative array of already recovered leaves with some estimates for their values. We say that the estimates in $\found$ are correct if for all leaves $v$ in $\found$, $\found(v)$ is equal to the value of $v$ in tree $T$. 
We denote by $|\found|$ the number of leaves with non-zero estimates in $\found$. For two associative arrays, $\found_1$ and $\found_2$, $\found_1 + \found_2$ denotes their union. In our applications we will never take a union of two arrays with intersecting key sets. Our algorithm will virtually subtract the values in $\found$  from the corresponding leaves of tree $T$, essentially deleting them if the estimates are correct.

$\excl$ is a subset of nodes of $\tfull_N$. Our algorithm will virtually delete all subtrees with roots in $\excl$ from the tree. For technical reasons, the procedure only accept the set excluded in the form of a tree $T(\excl)$. For simplicity, we equate the set $\excl$ and its tree $T(\excl)$.
One can pictorially see examples of these notions in \Cref{fig:explanation_tree}.
This motivates the following definition:

\begin{definition}\label{def:isolated-found-excl}
	A vertex $v \in \tfull_N$ is \emph{isolated} by $\found$ and $\excl$ if
	\[
		\leaves(v) \cap \leaves(\excl) =\emptyset
	\]
	 and for all leaves $\ell$ of $\tfull_N$ not in $\leaves(v)$, either the value of $\ell$ is zero, the estimate $\found(\ell)$ for the value of $\ell$ is correct, or $\ell \in \leaves(\excl)$.
\end{definition}

Now we are ready to present the interfaces of $\ZeroTest$ and $\Estimate$ as well as their runtime.
\begin{assumption} \label{assmptn-zerotest-estimate}
There exist procedures $\ZeroTest$ and $\Estimate$ with the following properties,
\begin{enumerate}
\item  $\ZeroTest(\found, \excl, v, b)$, where $b$ is a positive integer, representing budget, and $v$ is a vertex in $\tfull_N$. This routine checks if there are any leaves with non-zero value in the subtree of $v$. More formally, if $v$ is isolated by $\found$ and $\excl$ and 
$|\hleaves(v)| \leq b$,
then the routine returns True if for every $\ell \in \leaves(v)$ either the value of $\ell$ is zero or the estimate $\found(\ell)$ for the value of $\ell$ is correct, and returns False otherwise; if $v$ is not isolated, it can return either True or False. 
The runtime of $\ZeroTest$ is $ \wt{O}\left(2^{w_\excl(v)}b + |\found|\cdot b  \right)$.

\item $\Estimate(\found, \excl,\ell)$, where $\ell$ is a leaf of $\tfull_N$. This routine estimates the value of $\ell$ correctly if $\ell$ is isolated by $\found$ and $\excl$. If not, the estimate is arbitrary.

The time complexity of $\Estimate$ is $\wt{O}\left(2^{w_{\excl}(\ell)} + |\found|\right)$.
\end{enumerate}
The above running time bounds still hold regardless of the correctness of the inputs.
\end{assumption}

\begin{figure}
	\vspace{-8pt}
	\centering
		\scalebox{.8}{
			\begin{tikzpicture}[level/.style={sibling distance=60mm/#1,level distance = 1.5cm}]
			\node [arn_l, label=\large{$r$}] (z){}
			child {node [arn_l] {}edge from parent [draw=black,very thin]
				child {node [arn_l] {}edge from parent [draw=black,very thin]
					child {node [arn_l, fill=none, label=below:1] {}edge from parent [draw=black,very thin]}
					child {node [arn_l, diamond, label=below:2] {}edge from parent [draw=black,very thin]}
				}
				child {node [arn_l] {}edge from parent [draw=black,very thin]
					child {node [arn_l, diamond, label=below:3] (a) {}edge from parent [draw=black,very thin]}
					child {node [arn_l, fill=none, label=below:4] {}edge from parent [draw=black,very thin]}
				}
			}
			child { node [arn_l] {} edge from parent [draw=black,line width=2.5pt]
				child {node [arn_l,  regular polygon,regular polygon sides=3, label=\Large{$v$}, very thin] {}edge from parent [draw=black,line width=2.5pt]
					child {node [arn_l, label=below:5, very thin] {}edge from parent [draw=black,very thin]}
					child {node [arn_l, label=below:6, very thin] (b) {}edge from parent [draw=black,very thin]}}
				child {node [arn_l, label=\Large{$u$}, very thin] {}edge from parent [draw=black,line width=2.5pt]
					child {node [arn_l, fill=none, label=below:7, very thin] (c){}edge from parent [draw=black,very thin]}
					child {node [arn_l, label=below:8, very thin] {} edge from parent [draw=black,very thin]}
				}
			};
			
			\draw[draw=black, ->] (-3.1,-5.6) -- (-3.5,-5.2);
			\draw[draw=black, ->] (-2.9,-5.6) -- (-2.5,-5.2);
			\node [] at (-3,-5.8) [label=center:\Large{$\found$}]	{};

			\draw[draw=black, ->] (0,-2.3) -- (1.3,-3);
			\node [] at (0,-2) [label=center:\Large{$\excl$}]	{};

			\draw[draw=black, ->] (5.5,-3) -- (4.8,-3);
			\node [] at (5.4,-3) [label=right:\Large{$u$ is isolated}]	{};
			
			\draw[draw=black, ->, >=stealth] (3.5,-0.15) -- (1.5,-0.5);
			\node [] at (4.5,0) [label=center:\Large{tree $T$}]	{};

			\node [] at (5,-2) [label=right:\Large{Set $S = \{ u, v \}$}]	{};
			
			\end{tikzpicture}
		}
		\par
		\caption{Example of a tree $\tfull_8$. The leaves are labeled by integers $1$ to $8$. The unfilled leaves have zero associated value, hence $\hleaves(u) = \{ 8 \}$, $\leaves(u) = \{ 7, 8 \}$. Also the diamond-shaped leaves $2$ and $3$ are assumed to lie in the $\found$ together with their correct estimates. Thus if the triangle-shaped node $v$ is in the set $\excl$, then $u$ is isolated by $\found$ and $\excl$. Thick edges represent the tree $T$, and the set of its leaves is $S = \{u, v\}$. If we pick $u$ as the next vertex in the vanilla algorithm, the set $\excl$ will again contain only the vertex $v$.}
\label{fig:explanation_tree}
\end{figure}

\paragraph{The vanilla algorithm in~\cite{kapralov2019dimension}.} Using the above translation of Sparse FFT into a tree learning problem, the algorithm in~\cite{kapralov2019dimension} does the following. At all times it maintains a tree $T \subseteq \tfull_N$ and a set $\found$ of leaf nodes with their \emph{perfect} estimates such that the following invariants hold:
 \[
	\found  \subseteq \hleaves(\mathrm{root}),
\]  
\[
	 \hleaves(\mathrm{root}) \setminus \found \subseteq \leaves(T).
\]
Initially, $T$ contains only the root of the tree and $\found = \emptyset$.

While $T$ is not empty, the algorithm picks a leaf $u \in T$ with the smallest weight $w_T(u)$. It now needs to find a set $\excl$ such that $u$ would be isolated by $\found$ and  $\excl$. As it turns out, it is enough for $\excl$ to contain the children of nodes in $T$ on the path from $u$ to the root, except for the nodes that are on this path themselves --- see Fig.~\ref{fig:explanation_tree} for an illustration. At every point the algorithm calls 
$\ZeroTest(\found, \excl, u, k)$ to test whether the subtree rooted at $u$ contains a non-recovered leaf, in order to avoid exploring empty subtrees.  The budget of \ZeroTest~ is chosen to be $k$ in order to avoid false negatives, i.e. never miss a non-empty subtree.  This is the main inefficiency in~\cite{kapralov2019dimension} which we address here, obtaining a quadratic time algorithm. Before outlining our main algorithmic technique, which allows us to handle frequent false negatives in \ZeroTest~ by a novel error correction mechanism, we note that quadratic time is a natural barrier for any algorithm that iteratively recovers the input signal. Indeed, suppose that the algorithm has recovered a constant fraction of coefficients and recurses on the rest. The most common approach here is to subtract the recovered elements from time domain samples, reducing to the same problem with a smaller number of coefficients --  but this requires nearly quadratic time by our lower bound (see Theorem~\ref{thm:lower_bound} in \cref{sec:lb}).

\subsection{Obtaining Almost Quadratic Runtime via Hierarchical Error Correction.} \label{sec:almost-quadratic-overview}
Our main insight is that not all calls to $\ZeroTest$ need to succeed. Instead, one can try to assign varying budgets to the nodes to be explored and then perform \emph{hierarchical error correction} in order to detect errors in exploration which are caused by the failures of $\ZeroTest$ due to incorrect budget assignments. We explain the main underlying idea next.

Suppose that we optimistically explore a subtree of $\tfull_N$ with a budget $b\ll k$ for $\ZeroTest$. If the budget assigned to the subtree was correct, i.e. number of heavy leaves in that subtree are no larger than $b$, then we correctly recover the leaves locally and the resulting speed-up will be a multiplicative $k/b$ factor. 
On the other hand, if the assigned budget was wrong, exploration could be misguided and estimates could be incorrect. 
The idea is that this error can be detected at an ancestor of the subtree if we assign that ancestor a large enough budget so that the invocation of $\ZeroTest$ does not get fooled. As a sanity check, note that this is definitely the case for the root, where we use a budget of size $k$. 
Once the error is detected at an ancestor of the subtree in question, the algorithm will re-explore that subtree with increased budget. Roughly speaking, the algorithm tries to learn the correct budget of each subtree by performing \emph{backtracking}: guess a budget, explore the subtree, determine whether it is wrong upon backtracking to an ancestor and subsequently increase the budget to explore that subtree, so on so forth. In this approach there are two things that need to be carefully balanced. On the one hand, one needs to use the smallest possible budgets, close to the actual sizes of the corresponding subtrees (so that calls to $\ZeroTest$ are cheap) and on the other hand one needs to control the amount of backtracking the algorithm performs; a smaller budget leads to a larger number of required backtracking steps. A careful analysis reveals that, roughly speaking, increasing the budget by a factor of $1/\alpha$ for $\alpha := 2^{-2\sqrt{ \log k \cdot \log \log N}}$ whenever an adjustment is needed ensures nearly quadratic runtime. 

The pseudocode of the recursive recovery algorithm is presented in \Cref{alg:exact_sparse_simple}. The algorithm is passed budget $s$, which is assumed to be the sparsity of the subtree of $v$. If the sparsity is low it defaults to the cubic algorithm of \cite{kapralov2019dimension}. Otherwise it starts the inner loop in lines \ref{line:inner_loop_start} to \ref{line:inner_loop_end}, where it explores the tree. It maintains the tree $T$ containing all of unexplored heavy leaves of $v$ as a set $S$ of its leaves. On each iteration of inner loop, the algorithms picks a vertex $z$ from $S$ with minimum weight in line~\ref{line:pick_min_leaf}, for which it first checks if there is any heavy leaves in $\leaves(z)$. If not, it discards the vertex and continues to the next iteration. Otherwise, it tries to guess that the sparsity of each child of $z$ are at most $\alpha \cdot s$ and runs itself recursively on both children of $z$ with a this decreased budget in lines~\ref{line:exact-recursive-call1} and~\ref{line:exact-recursive-call2}. For each child it then checks using \ZeroTest{} in lines~\ref{line:exact_zerotest1} and~\ref{line:exact_zerotest2} if the guess was correct and the values were found correctly, they get added to $\found_{out}$, otherwise the corresponding child is added to $S$. Finally, if the $z$ is the leaf of the $\tfull_N$, the algorithm runs \Estimate{} on it instead and then adds the recovered value to $\found_{out}$.

\begin{algorithm}[!h]
	\caption{$\textsc{ExactSparseRecovery}(\found, \excl, v, s, k)$}
	\label{alg:exact_sparse_simple}
	\begin{algorithmic}[1]
	\If { $s \leq 1/\alpha$} \Comment{This is the base case --- run the qubic time algorithm}\label{line:svs1alpha}
		\State $\found_{out} \leftarrow \textsc{SlowExactSparseRecovery}(\found,\excl, v, s)$\label{line:base-case-slow-call}
		\State \textbf{if} $|\found_{out} | \leq s$ \textbf{return} $\found_{out}$ \textbf{else} return $\emptyset$
	\EndIf
	\State $\found_{out} \leftarrow \emptyset$	,  $S \leftarrow \left\{v\right\}$ ,  $\steps \gets 1$ \Comment{Construct a subtree tree of $\tfull_N$ rooted at $v$,}\\
	\Comment{with $S$ as the set of leaves (so $S$ is initialized as $\{v\}$)}
	\Repeat \label{line:inner_loop_start}

		\State $z \leftarrow$ vertex in $S$ with the minimum weight with respect to $S$ \label{line:pick_min_leaf}
		\State $S = S \setminus \{z\}$,  $\steps \leftarrow \steps +1$
		\State $\excl' \gets \excl \cup S$,  $\found' \gets \found + \found_{out}$\label{line:exclpfoundp}
		\State \textbf{if} {$\textsc{ZeroTest}(\found', \excl',z, s)$} \textbf{then continue} \label{line:quadratic_first_simple_zero_test}\\
		\Comment{No heavy leaves in the subtree of $z$, so we remove it}

		\If {$z$ is a leaf in $\tfull_N$}
			\State $\found_{out}(z) \leftarrow \textsc{Estimate}(\found',\excl', z)$ \label{line:quadratic_first_simple_estimate}
			\State \textbf{continue} \label{line:quadratic_estimate_continue}
		\EndIf\\
		\State $s_{desc} \gets \min(\alpha \cdot s, k - |\found'|)$ \label{line:exact_simple_found_bound}
		\Comment{$s_{desc}$ is the budget for the children of $z$}
		\State $z_{\lef},z_{\righ} \leftarrow$ left and right child of $z$ in $\tfull_N$ respectively.\\
		\hspace*{\fill} \linebreak \Comment{Try to estimate the values in the subtrees of the children of $z$ with a smaller budget $s_{desc}$}
		\State $\found_{\lef} \gets \textsc{ExactSparseRecovery}(\found' ,\excl' \cup\{z_{\righ}\},z_{\lef},s_{desc}, k)$ \label{line:exact-recursive-call1}
		\State $\found_\righ \gets \textsc{ExactSparseRecovery}(\found' ,\excl' \cup\{z_{\lef\}},\,z_{\righ},s_{desc}, k)$	    \label{line:exact-recursive-call2}\\
		\hspace*{\fill} \linebreak \Comment{Check if the values were correctly recovered}	
		\State $\mathrm{IsZero}_{\lef} \leftarrow \textsc{ZeroTest}(\found' + \found_\lef,\excl' \cup\{z_{\righ}\},z_{\lef}, s)$		\label{line:exact_zerotest1}
		\State $\mathrm{IsZero}_{\righ} \leftarrow \textsc{ZeroTest}(\found' + \found_\righ, \excl' \cup\{z_{\lef}\},z_{\righ},s)$	\label{line:exact_zerotest2}\\
		\State 
		 \hspace*{\fill} \linebreak\Comment{If the estimates appear correct under current budget, save them}
		 \hspace*{\fill} \linebreak\Comment{Otherwise, we must increase the budget for the child, so we add it to the set $S$}
		
		\State {\bf If~} {$\mathrm{IsZero}_{\lef}$} {\bf~then~} $\found_{out} \leftarrow \found_{out} + \found_\lef${\bf~else~}$S \gets S \cup \{z_\lef \}$
		\State {\bf If~} {$\mathrm{IsZero}_{\righ}$} {\bf~then~}$\found_{out} \leftarrow \found_{out} + \found_\righ$ {\bf~else~}$S \gets S \cup \{z_\righ \}$\\
		
	\Until { $S = \emptyset$,  $\steps > \frac{6 \log N}{\alpha}$ or $|\found_{out}| > s$}~\label{line:until_simple}\label{line:inner_loop_end}\\
	
	\If {$S = \emptyset$, $\steps \leq \frac{6 \log N}{\alpha}$ and $|\found_{out}| \leq s$}
	\State \Return $\found_{out}$ 
	\Else
	\State \Return $\emptyset$
	\EndIf
	\end{algorithmic}
\end{algorithm}

\subsection{Analysis of $\textsc{ExactSparseRecovery}$.}
In this section we shall present analysis of \cref{alg:exact_sparse_simple}. The key idea is that instead of using a large budget that is sufficient for $\ZeroTest$ to succeed every time, we try to explore the subtrees with a lower budget and then check with a larger budget whether the subtrees have been recovered correctly or not.

To do the analysis we will need the correctness guarantee for the base \Cref{alg:cubic-exact}.

Recall that $b$ is the input parameter of $\textsc{SlowExactSparseRecovery}$.
\begin{theorem}[Correctness of \Cref{alg:ksparsefft_simple}]
	\label{thm:kexactrecovery_simple_corr}
	If $|\hleaves(v)| \leq b$ and $v$ is isolated by $\found$ and $\excl$, then the procedure $\textsc{SlowExactSparseRecovery}$ returns the correct estimates for all $\hleaves(v)$.
\end{theorem}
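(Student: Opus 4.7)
The plan is to establish correctness by maintaining two loop invariants of $\textsc{SlowExactSparseRecovery}$ (the vanilla algorithm outlined at the end of \Cref{sec:tree-exp}), and then reading off the conclusion at termination. Let $\found_{out}^{(t)}$ denote the running estimate array built by the algorithm after $t$ iterations and $T^{(t)} \subseteq \tfull_N$ the maintained subtree, with $S^{(t)} := \leaves(T^{(t)})$; the effective found-array used inside the algorithm is $\found^{(t)} := \found + \found_{out}^{(t)}$, and for the vertex $z \in S^{(t)}$ chosen next, $\excl^{(t)} := \excl \cup (S^{(t)} \setminus \{z\})$. The two invariants to maintain are:
(I1) every leaf of $\hleaves(v)$ either appears in $\found_{out}^{(t)}$ with the correct estimate, or is a descendant of a unique vertex of $S^{(t)}$;
(I2) the chosen vertex $z$ is isolated by $(\found^{(t)}, \excl^{(t)})$ in the sense of \Cref{def:isolated-found-excl}.

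The base case $t=0$ is immediate: $S^{(0)} = \{v\}$ and $\found_{out}^{(0)} = \emptyset$, so (I1) is trivial, and (I2) reduces to the hypothesis that $v$ is isolated by $(\found, \excl)$. For the inductive step, the precondition $|\hleaves(v)| \leq b$ propagates down the tree: $|\hleaves(z)| \leq b$ for every $z \in T^{(t)}$, so by (I2) the call $\ZeroTest(\found^{(t)}, \excl^{(t)}, z, b)$ satisfies the precondition of \Cref{assmptn-zerotest-estimate} and returns the correct boolean. I then case-split on the algorithm's action: (a) if $\ZeroTest$ returns True, every leaf in $\leaves(z)$ is either zero-valued or already correctly estimated in $\found^{(t)}$, so removing $z$ from $S$ preserves (I1); (b) if $\ZeroTest$ returns False and $z$ is a leaf of $\tfull_N$, then by (I2) and \Cref{assmptn-zerotest-estimate} the call $\Estimate(\found^{(t)}, \excl^{(t)}, z)$ returns the true value of $z$, after which adding $z$ to $\found_{out}^{(t)}$ and removing it from $S$ preserves (I1); (c) if $\ZeroTest$ returns False and $z$ is internal, $z$ is replaced in $S$ by its two children $z_\lef, z_\righ$, and (I1) is preserved because $\leaves(z) = \leaves(z_\lef) \sqcup \leaves(z_\righ)$. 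In all three cases, invariant (I2) at the next iteration follows from the fact that $\excl^{(t+1)}$ contains every leaf of $T^{(t+1)}$ other than the vertex chosen next, while the input $\found$ is still correct on its domain and the newly inserted entries of $\found_{out}^{(t+1)}$ are correct by (b).

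Correctness at termination now follows by applying (I1) at loop exit, i.e., when $S^{(t)}$ becomes empty: at that point (I1) forces $\hleaves(v) \subseteq \found_{out}^{(t)}$ with all estimates correct, which is exactly the conclusion of the theorem. Termination itself comes from a simple potential argument: an expansion of an internal node into its two children can only occur when $\ZeroTest$ returned False, i.e., when there is an unrecovered heavy leaf in the subtree of $z$; charging each expansion to such a leaf along a root-to-leaf path and using $|\hleaves(v)| \le b$, the number of expansions is bounded by $O(b \log N)$, while every non-expansion iteration strictly decreases $|S|$.

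The main obstacle is the careful bookkeeping for the isolation invariant (I2) across an expansion step: one must verify that after replacing $z$ by $z_\lef, z_\righ$ (or simply removing $z$), the next chosen vertex $z'$ is still isolated under the updated exclusion set. The key structural fact driving this is that $S^{(t)}$ is always the leaf set of a rooted subtree of $\tfull_N$ anchored at $v$, so $\bigsqcup_{u \in S^{(t)}} \leaves(u)$ is a disjoint cover of the portion of $\leaves(v)$ not yet inserted into $\found_{out}^{(t)}$; combined with the initial isolation of $v$ by $(\found, \excl)$, this ensures that every leaf outside $\leaves(z')$ is accounted for by either $\excl$, $\found^{(t+1)}$, or a sibling in $S^{(t+1)}$, which precisely matches \Cref{def:isolated-found-excl}.
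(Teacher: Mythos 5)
Your proof is correct and follows essentially the same route as the paper: both maintain the invariant that $\found_{out}$ is always correct while unrecovered heavy leaves stay under the current frontier, both derive isolation of the chosen vertex from that invariant, and both bound the iteration count by charging expansion steps to the at most $b\log N$ nodes with nonempty $\hleaves$. A small cosmetic mismatch with the pseudocode in your case (b) --- \Cref{alg:ksparsefft_simple} branches on whether $z$ is a leaf of $\tfull_N$ \emph{before} any call to $\ZeroTest$, so $\Estimate$ is not guarded by a False $\ZeroTest$ --- does not affect the argument, since isolation (your (I2)) alone justifies $\Estimate$'s correctness.
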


The proof can be found in \Cref{appndx-slow-tree-explor}.

\begin{theorem}[Correctness of \cref{alg:exact_sparse_simple}] \label{thm:exact_sparse_corr}
Consider a call to  primitive \textsc{ExactSparseRecovery}($\found$, $\excl$, $v, s, k$) for any vertex $v \in \tfull_N$, budget $s \leq k$ and sets $\found$ and $\excl$ such that $v$ is isolated by them (see \Cref{def:isolated-found-excl}) and $\found \cap \leaves(v)$ \footnote{The algorithm works even without this requirement, however it somewhat simplifies the proof. }. If the sparsity of subtree rooted at $v$ is less than the allowed budget, that is $|\hleaves(v)| \leq s$, then  \textsc{ExactSparseRecovery} returns a correct estimate for every leaf in $\hleaves(v)$. In particular, if $v = \mathrm{root}$, $\excl = \emptyset$, $\found = \emptyset$ and $|\hleaves(r)| \leq s$, the procedure correctly recovers the entire tree.
\end{theorem}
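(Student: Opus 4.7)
I would prove the theorem by strong induction on the budget parameter $s$. The base case $s \leq 1/\alpha$ (line~\ref{line:svs1alpha}) is immediate from \Cref{thm:kexactrecovery_simple_corr}: under the hypotheses that $v$ is isolated by $(\found,\excl)$ and $|\hleaves(v)|\leq s$, the call on line~\ref{line:base-case-slow-call} returns correct estimates for all of $\hleaves(v)$, and since $|\found_{out}|\leq |\hleaves(v)|\leq s$ the subsequent guard passes and $\found_{out}$ is returned.

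For the inductive step ($s>1/\alpha$) I would maintain throughout the inner loop (lines~\ref{line:inner_loop_start}--\ref{line:inner_loop_end}) the following invariants: (I1) every entry of $\found_{out}$ is a correct estimate; (I2) every $\ell\in\hleaves(v)$ is either a key of $\found_{out}$ or lies in $\leaves(z)$ for some $z\in S$; (I3) the vertices of $S$ are the leaves of a subtree of $\tfull_N$ rooted at $v$ and are therefore pairwise descendant-disjoint. All three are trivial at initialization ($S=\{v\}$, $\found_{out}=\emptyset$). Fix any iteration and let $z$ be the vertex popped on line~\ref{line:pick_min_leaf}. Combining (I1)--(I3) with the input isolation of $v$ and the hypothesis $\found\cap\leaves(v)=\emptyset$, any leaf $\ell\notin\leaves(z)$ is either outside $\leaves(v)$ (handled by the input isolation), already correctly estimated in $\found_{out}$, or contained in $\leaves(z')$ for some $z'\in S\setminus\{z\}$ and hence in $\leaves(\excl')$ for $\excl'$ as defined on line~\ref{line:exclpfoundp}. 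Thus $z$ is isolated by $(\found',\excl')$, and since $|\hleaves(z)|\leq|\hleaves(v)|\leq s$ the budget given to \ZeroTest{} on line~\ref{line:quadratic_first_simple_zero_test} is sufficient; an analogous argument covers the \Estimate{} of line~\ref{line:quadratic_first_simple_estimate} when $z$ is a leaf of $\tfull_N$, and shows that $z_\lef$ is isolated by $(\found',\excl'\cup\{z_\righ\})$ (and symmetrically for $z_\righ$), so the recursive calls on lines~\ref{line:exact-recursive-call1}--\ref{line:exact-recursive-call2} have valid isolated inputs with budget $s_{desc}\leq\alpha s<s$ to which the inductive hypothesis can be applied whenever $|\hleaves(z_\lef)|\leq s_{desc}$.

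The most delicate point, and the main obstacle I anticipate, is the handling of the verification \ZeroTest{} calls on lines~\ref{line:exact_zerotest1}--\ref{line:exact_zerotest2} when the recursive call was made with an insufficient budget, i.e.\ when $|\hleaves(z_\lef)|>s_{desc}$: the inductive hypothesis then says nothing about $\found_\lef$, which could in principle be arbitrary. The key observation is that isolation of $z_\lef$ by $(\found'+\found_\lef,\excl'\cup\{z_\righ\})$ is determined entirely by leaves \emph{outside} $\leaves(z_\lef)$, so whatever (possibly incorrect) entries $\found_\lef$ places on keys inside $\leaves(z_\lef)$ cannot break isolation. Since moreover $|\hleaves(z_\lef)|\leq|\hleaves(v)|\leq s$, the budget handed to the verifying \ZeroTest{} is sufficient, and by \Cref{assmptn-zerotest-estimate} this call returns \textsc{True} if and only if every heavy leaf of $z_\lef$ has a correct estimate in $\found'+\found_\lef$; because $\found'\cap\leaves(z_\lef)=\emptyset$, this is equivalent to $\found_\lef$ being itself a correct and complete list of estimates for $\hleaves(z_\lef)$. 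Consequently the algorithm absorbs $\found_\lef$ into $\found_{out}$ only when it is certifiably correct, preserving (I1); otherwise it reinserts $z_\lef$ into $S$ with (I2) still intact. A short potential-style bookkeeping argument (which I would factor out into the runtime analysis) shows that under the hypothesis $|\hleaves(v)|\leq s$ the loop must exit with $S=\emptyset$ within the $\steps\leq 6\log N/\alpha$ iteration budget; at that point (I2) forces $\found_{out}$ to contain a correct estimate for every leaf in $\hleaves(v)$, and $|\found_{out}|=|\hleaves(v)|\leq s$ so the final guard passes and $\found_{out}$ is returned, closing the induction.
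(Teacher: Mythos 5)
Your proposal follows essentially the same approach as the paper's proof: strong induction on the budget $s$ with a nested inductive argument over iterations of the inner loop, maintaining the invariants that $\found_{out}$ is correct and complete together with $S$. You correctly identify and handle the one subtle point — that a recursive call made with insufficient budget may return garbage, but the verifying $\ZeroTest$ catches this because isolation of $z_\lef$ is determined only by leaves \emph{outside} $\leaves(z_\lef)$, so corrupt entries of $\found_\lef$ (which by the algorithm's structure all have keys inside $\leaves(z_\lef)$) cannot break the precondition of the test; the paper makes this same argument, just somewhat more tersely. The only thing you defer (deliberately) is the termination bound showing $S=\emptyset$ within $6\log N/\alpha$ steps; the paper includes this in the correctness proof via a short averaging argument — only nodes $z$ with $|\hleaves(z)|\geq\alpha s$ and their children are ever inserted into $S$, of which there are at most $\log N/\alpha$ and $2\log N/\alpha$ respectively — and you would want to include it (or explicitly invoke it) here rather than relegate it entirely to the runtime section, since without it you cannot conclude the loop exits with $S=\emptyset$ rather than by exceeding the $\steps$ cap.
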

\begin{proof}
	We will show correctness by induction on the budget $s$. The {\bf base case} is provided by $s \leq 1/\alpha$. In that case the procedure calls \textsc{SlowExactSparseRecovery} (see line~\ref{line:svs1alpha} and line~\ref{line:base-case-slow-call}), and thus the output is correct by Theorem~\ref{thm:kexactrecovery_simple_corr}.
	We now provide the {\bf inductive step}. 

	Now, we show by induction on the number of iterations of the \textbf{repeat} loop in line~\ref{line:inner_loop_start} that for a fixed $s$ the returned frequencies are correct. The set $S$ is the set of leaves of the tree $T$, described earlier in \Cref{sec:almost-quadratic-overview}. We will show that the following invariants hold: 
	\begin{description}
	\item[{\bf (1)}] $\found_{out} \subseteq \hleaves(v)$ and the estimated values in $\found_{out}$ are all correct
	\item[{\bf (2)}] $\hleaves(v) \setminus \found_{out} \subseteq \leaves(S)$\footnote{Recall the definition of $\hleaves$ -- Definition~\ref{def:hleaves-def}.}
	\end{description}
	Then the correctness follows from the fact that $S = \emptyset$ at the end of execution.

	Consider an iteration of the \textbf{repeat} loop where the above invariants hold and let $z$ be the vertex extracted from $S$ in line~\ref{line:pick_min_leaf}. By the inductive hypothesis, $z$ is isolated by $\found'$ and $\excl'$ (see line~\ref{line:exclpfoundp}), and by the theorem assumption, $|\hleaves(v)| \leq s$. Therefore, the calls to $\ZeroTest$ and $\Estimate$ in lines \ref{line:quadratic_first_simple_zero_test} and \ref{line:quadratic_first_simple_estimate} return correct answers. Therefore, in case $\hleaves(z)$ is empty the node $z$ can be removed from $S$, or in case $z$ is a leaf in $\tfull_N$ $\Estimate$ returns a correct estimate for $z$ and it again can be removed from $S$ and the invariants still hold.

	Otherwise, the algorithm recursively calls itself on $z_\lef$ and $z_\righ$. Assume that we virtually add $z_\lef$ and $z_\righ$ to $S$, and remove $z$ from it.
	Then the invariants still hold. We will only discuss the correctness of operations with $z_{\lef}$, since they are symmetric. $z_{\lef}$ is isolated by $\found'$ and $\excl'\cup \{z_\righ\}$, since $z$ is isolated by $\found'$ and $\excl'$. 
	Therefore, if $|\hleaves(z_\lef)| \leq \alpha s$, $\found_\lef$ will contain correct estimates of $\hleaves(z_\lef)$, by the inductive hypothesis. Because the algorithm doesn't know if that is the case, it checks whether the recovered values are correct or not in line~\ref{line:exact_zerotest1} by running $\ZeroTest$. $\mathrm{IsZero}_{\lef}$ would be True only if $\hleaves(z_\lef)$ were correctly recovered. 
	From this fact it follows that if $\hleaves(z_\lef)$ were correctly recovered, we can remove $z_\lef$ from $S$ and update $\found$ by adding $\found_\lef$ to it without violating the invariants, and if not, we can just discard $\found_\lef$.
	Notice that under the theorem's assumption on sparsity and by invariant $1$ it can never happen that $|\found_{out}| > s$.

	Finally, notice that if $|\hleaves(z)| \leq \alpha \cdot s$, the subtree of $z$ will be completely recovered by a recursive call. Therefore, only nodes with $|\hleaves(z)| \geq \alpha \cdot s$ and their children get added to $S$. By an averaging argument, the number of such nodes is at most $\frac{\log N}{ \alpha}$, thus, the maximum number of nodes that would ever be added to $S$ is $3 \frac{\log N}{ \alpha}$. Because at each iteration at least one vertex is removed from $S$, $\steps \leq \frac{6\log N }{\alpha}$ and at the end of the loop $S = \emptyset$.
\end{proof}

We finish this section with the runtime analysis of \cref{alg:exact_sparse_simple}. 
To do it, we will need to use the correctness guarantee for \Cref{alg:cubic-exact}, proof of which can be found in \Cref{appndx-slow-tree-explor}.

\begin{theorem}[Running time of \Cref{alg:ksparsefft_simple}]
	\label{thm:kspase_simple_runtime}
	If  $\leaves(v) \cap \leaves(\excl) = \emptyset$, the runtime of 
	 $\textsc{SlowExactSparseRecovery}$ is bounded by 
		$\wt{O} \left( |\found| \cdot b^2 + 2^{w_{\excl}(v)} \cdot b^3 \right)$.
\end{theorem}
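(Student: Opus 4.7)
The runtime of \Cref{alg:ksparsefft_simple} is dominated by its calls to $\ZeroTest$ and $\Estimate$, so my plan is to (i) bound the total number of loop iterations, (ii) bound the weight $w_{\excl\cup S}(z)$ that enters the per-call cost of Assumption~\ref{assmptn-zerotest-estimate}, and (iii) sum up the contributions.

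\textbf{Iteration count and size of $S$.} Let $S$ denote the evolving frontier of the exploration tree. Each iteration either removes the chosen $z$ from $S$ (when $\ZeroTest$ reports emptiness, or when $z$ is a leaf of $\tfull_N$ so $\Estimate$ is invoked), or replaces $z$ by its two children; the latter only occurs when $\hleaves(z)\neq\emptyset$. Hence every branching expansion happens at a node containing at least one of the at most $b$ heavy leaves of $v$, so all ever-branched nodes lie inside the union of the $\le b$ root-to-heavy-leaf paths in the subtree of $v$, a set of size $O(b\log N)$. Consequently the number of branching steps is $O(b\log N)$, the frontier satisfies $|S|=\wt{O}(b)$ throughout the execution, and the total number of $\ZeroTest$ and $\Estimate$ calls is $\wt{O}(b)$.

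\textbf{Weight accounting.} Write $\excl' := \excl\cup S$ for the exclusion set used inside the iteration that pops $z$. Since $\leaves(v)\cap\leaves(\excl)=\emptyset$, no element of $\excl$ is a descendant of $v$, so in the tree $T(\excl\cup\{z\})$ the only child of $v$ is the one leading towards $z$, and likewise every node strictly between $v$ and $z$ has exactly one child. It follows that $w_\excl(z)=w_\excl(v)$. Applying the subadditivity Lemma~\ref{lem:weight_subadditive},
\[
    w_{\excl'}(z)\;\le\; w_\excl(z)+w_S(z)\;=\;w_\excl(v)+w_S(z).
\]
Because the algorithm always pops the minimum-weight leaf of $S$ and the compressed binary tree underlying $T(S)$ has exactly $|S|$ leaves, Kraft's inequality gives $w_S(z)\le\log|S|=O(\log(b\log N))$, whence $2^{w_{\excl'}(z)}\le 2^{w_\excl(v)}\cdot\wt{O}(b)$.

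\textbf{Summing up.} By Assumption~\ref{assmptn-zerotest-estimate}, each $\ZeroTest(\found',\excl',z,b)$ call costs
\[
    \wt{O}\bigl(2^{w_{\excl'}(z)}\cdot b \;+\; |\found'|\cdot b\bigr)\;=\;\wt{O}\bigl(2^{w_\excl(v)}\cdot b^{2} \;+\; (|\found|+b)\cdot b\bigr),
\]
using $|\found'|\le|\found|+|\found_{out}|\le|\found|+b$. Multiplying by the $\wt{O}(b)$ iteration bound yields a total $\ZeroTest$ cost of $\wt{O}\bigl(2^{w_\excl(v)}\cdot b^{3}+|\found|\cdot b^{2}\bigr)$, the additive $b^{3}$ being absorbed since $2^{w_\excl(v)}\ge 1$. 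The at most $b$ calls to $\Estimate$ each cost $\wt{O}\bigl(2^{w_\excl(v)}\cdot b + |\found|+b\bigr)$ by the same bounds and contribute a strictly lower-order term. Combining yields the claimed $\wt{O}\bigl(|\found|\cdot b^{2}+2^{w_\excl(v)}\cdot b^{3}\bigr)$.

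\textbf{Main obstacle.} The delicate step is the weight accounting: the naive bound $w_S(z)=O(\log N)$ would multiply $2^{w_\excl(v)}$ by a factor of $N$ and destroy the estimate. Two ingredients rescue the bound and must both be invoked: the isolation hypothesis $\leaves(v)\cap\leaves(\excl)=\emptyset$, which kills any branching contribution from $\excl$ along the path from $v$ down to $z$ and yields the sharp identity $w_\excl(z)=w_\excl(v)$; and the minimum-weight selection rule together with Kraft's inequality, which replaces the depth $\log N$ by $\log|S|=\wt{O}(1)$ in the extra contribution $w_S(z)$.
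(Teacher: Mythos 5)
Your weight accounting, per-call cost bounds, and summation are all sound and match the paper's proof in spirit. However, your iteration-count argument has a real gap. You argue that a node $z$ is replaced by its children ``only \ldots when $\hleaves(z)\neq\emptyset$,'' and from that you deduce that all branchings lie on the $\le b$ root-to-heavy-leaf paths, giving $O(b\log N)$ iterations. That step relies on $\ZeroTest$ returning a correct answer when $\hleaves(z)=\emptyset$. But the hypotheses of Theorem~\ref{thm:kspase_simple_runtime} are only $\leaves(v)\cap\leaves(\excl)=\emptyset$; they include neither ``$v$ is isolated by $\found$ and $\excl$'' nor ``$|\hleaves(v)|\le b$.'' Under Assumption~\ref{assmptn-zerotest-estimate}, when these preconditions fail, $\ZeroTest$ may return an arbitrary answer, and in particular it may falsely report non-emptiness on every query, in which case your combinatorial bound does not hold. (This is not a hypothetical corner case: \textsc{ExactSparseRecovery} deliberately calls \textsc{SlowExactSparseRecovery} with an optimistic budget that can be wrong, and the runtime bound must survive that.) The correct way to bound the number of iterations is to invoke the explicit safety check at line~\ref{line:promise_nonodes_simple} of Algorithm~\ref{alg:ksparsefft_simple}, which caps $\steps$ at $6b\log N$ unconditionally, and which is precisely what the paper's proof uses. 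Once you replace your correctness-dependent iteration bound by this deterministic cap, the rest of your proof goes through unchanged: $|S|$ and $|\found_{out}|$ are $\wt O(b)$ by the same cap, Kraft averaging gives $w_S(z)\le\log|S|$, the hypothesis $\leaves(v)\cap\leaves(\excl)=\emptyset$ gives $w_\excl(z)=w_\excl(v)$, and the summation yields $\wt O(|\found|\cdot b^2 + 2^{w_\excl(v)}\cdot b^3)$.
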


\begin{theorem}[Running time of \textsc{ExactSparseRecovery}] \label{thm:exact_sparse_time}
	Let $\log N > 6 \sqrt{5}$ and $k \leq N$. The running time of $\textsc{ExactSparseRecovery}(\emptyset, \emptyset, \mathrm{root}, k, k)$ is bounded by $\wt{O}(k^2 \cdot 2^{8\sqrt{\log k \log \log N}})$.
\end{theorem}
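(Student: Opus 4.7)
The approach is to analyze the recursion tree of \textsc{ExactSparseRecovery} calls and track how three quantities evolve with recursion depth: the budget $s$, the size $|\found|$, and the weight $W = 2^{w_\excl(v)}$ appearing in the runtimes of \ZeroTest and \Estimate (Assumption~\ref{assmptn-zerotest-estimate}). Write $\alpha = 2^{-2\sqrt{\log k \cdot \log\log N}}$ and $Q = \sqrt{\log k\,\log\log N}$, so $1/\alpha = 2^{2Q}$. First I would observe that since each recursive call multiplies $s$ by $\alpha$ at line~\ref{line:exact_simple_found_bound} and the base case is entered whenever $s \le 1/\alpha$, the recursion depth is bounded by $D = \lceil \log k / (2Q)\rceil = O(\sqrt{\log k / \log\log N})$. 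By the counting argument already given in the proof of Theorem~\ref{thm:exact_sparse_corr}, the \textbf{repeat} loop of any single call executes at most $6\log N/\alpha$ iterations and each iteration spawns at most two recursive calls, so the recursion tree has branching factor at most $12\log N/\alpha$, and the total number of calls at depth $i$ is $N_i \le (12\log N/\alpha)^i$. The invariant $|\found| \le k$ is immediate from the condition on line~\ref{line:exact_simple_found_bound}.

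Next I would bound the weight parameter $W_i$ occurring at a depth-$i$ call. At the root $W_0 = 1$ since $\excl = \emptyset$. For a recursive call on a child $z_{\mathrm{ch}}$ of some $z \in S$, the new exclusion set is $\excl'' = \excl \cup S \cup \{z_{\mathrm{sibling}}\}$, and by the subadditivity of $w$ (Lemma~\ref{lem:weight_subadditive}) applied a constant number of times,
\[
w_{\excl''}(z_{\mathrm{ch}}) \le w_{\excl}(v) + w_{S}(z) + O(1).
\]
Since $z$ is chosen with minimum weight in $S$ and any binary tree on $m$ leaves has a leaf of weight $\le \log m$, we get $2^{w_S(z)} \le |S| \le 3\log N/\alpha$, where the $|S|$ bound was already established in the proof of Theorem~\ref{thm:exact_sparse_corr}. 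Exponentiating gives $W_{i+1} \le W_i \cdot O(\log N/\alpha)$ and consequently $W_i \le (O(\log N/\alpha))^i$. An analogous argument shows that the \ZeroTest and \Estimate calls performed \emph{inside} a depth-$i$ call act on a vertex with $2^{w(\cdot)} \le W_i \cdot O(\log N/\alpha)$.

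Plugging these bounds into Assumption~\ref{assmptn-zerotest-estimate}, the work a single depth-$i$ call performs outside its recursive descendants is at most
\[
\wt{O}\!\left(\tfrac{\log N}{\alpha} \cdot \left(W_i \cdot \tfrac{\log N}{\alpha} \cdot s_i + k \cdot s_i\right)\right), \qquad s_i = k\alpha^i,
\]
while by Theorem~\ref{thm:kspase_simple_runtime} a base-case call costs $\wt{O}(k/\alpha^2 + W_D/\alpha^3)$. Summing over all calls at each depth and then over depths, the key identities $(1/\alpha)^D = \Theta(k)$, $(\log N)^D = 2^{Q/2 + o(Q)}$, and $W_D \le (O(\log N/\alpha))^D = k \cdot 2^{O(Q)}$ turn every summand into an expression of the form $k^{a} \cdot 2^{O(Q)}$ with $a \le 2$. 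The dominant contribution is the base-case work $N_D \cdot W_D/\alpha^3 = k^2 \cdot 2^{O(Q)}$, and a careful accounting of the constants in the exponent yields the claimed $\wt{O}(k^2 \cdot 2^{8Q})$.

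The main obstacle is the combinatorial bookkeeping: the weight factor $W_i$, the branching factor $O(\log N/\alpha)$, and the $1/\alpha^3$ term in the base-case cost all compound multiplicatively, so one must verify that $\alpha = 2^{-2\sqrt{\log k \log\log N}}$ is precisely the sweet spot at which the $(1/\alpha)^{\Theta(D)} = \mathrm{poly}(k)$ factor from the branching is balanced by the $\alpha^D$ reduction in budget, leaving only a $2^{O(\sqrt{\log k \log\log N})}$ overhead on top of $k^2$. The hypothesis $\log N > 6\sqrt{5}$ is used to ensure $D \ge 1$ and that the polylogarithmic factors hidden in $\wt{O}(\cdot)$ remain subleading compared with $2^{O(Q)}$.
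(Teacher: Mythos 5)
Your proposal takes essentially the same route as the paper: track the three quantities $s$, $W=2^{w_{\excl}(v)}$, and $|\found|$ across the recursion, bound the branching factor by $O(\log N/\alpha)$, bound the per-call growth of $W$ by $O(\log N/\alpha)$ via Lemma~\ref{lem:weight_subadditive} and Kraft averaging, and then balance the compounded factors against the geometric decay of the budget. The only presentational difference is that the paper encodes this as two explicit recurrences $T_1[s,W,l]$ and $T_2[s,l]$ (the ``two pools'') and proves a closed-form bound by induction, whereas you sum directly over depths; these are equivalent accountings of the same recursion tree. Two small inaccuracies worth noting: (i) line~\ref{line:exact_simple_found_bound} only guarantees $|\found'|\leq k+\wt{O}(\alpha s)$, not $|\found|\leq k$ outright (since both children can contribute and $\found_{out}$ can slightly overshoot before the loop exit condition triggers), but $\wt{O}(k)$ suffices; (ii) the assertion that the base-case term strictly dominates is not established --- in fact all depths contribute $k^2\cdot 2^{O(\sqrt{\log k\,\log\log N})}$ --- but since every summand is of this form and there are only $D\cdot N_D$ calls, the final bound holds regardless.
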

\begin{proof}
	Recall that $\alpha := 2^{-2\sqrt{ \log k \cdot \log \log N}}$.
First, we make several observations:
\begin{itemize}
	\item Each call to \textsc{ExactSparseRecovery} (Algorithm~\ref{alg:exact_sparse_simple}) returns $\found_{out}$ of size at most $s$.
	\item At most $6\frac{ \log N}{\alpha}$ vertices are inserted in $S$ in a single invocation of \textsc{ExactSparseRecovery}, and there the number of recursive calls to \textsc{ExactSparseRecovery} is bounded by $12 \frac{\log N}{\alpha}$.
	\item Similarly, at all times during an invocation of \textsc{ExactSparseRecovery} one has $|\found_{out}| \leq \alpha s \cdot \frac{12 \log N}{\alpha} = \wt{O}(s)$. Also, because $s_{desc} \leq k - |\found'|$ from line \ref{line:exact_simple_found_bound}, it is guaranteed that $|\found'| \le k+ \widetilde{O}(\alpha s)$.
	\item For each recursive call in lines~\ref{line:exact-recursive-call1} and \ref{line:exact-recursive-call2} as well as the call to \textsc{SlowExactSparseFFT}, we have $\leaves(v) \cap \leaves(\excl) = \emptyset$. This means that the precondition of Theorem~\ref{thm:kspase_simple_runtime} is satisfied and that for each picked $z$, 
	\[
		w_{\excl \cup S}(z) \leq w_\excl(z) + w_S(z) \leq w_\excl(z) + \log(6 \log N / \alpha).
	\]
    where the first inequality follows from \Cref{lem:weight_subadditive}.
\end{itemize}

Recall that bound our basic setup an invocation of $\ZeroTest$ takes $\wt{O}(2^{w_\excl(z)}\cdot b + |\found|\cdot b)$ time,
an invocation of $\Estimate$ takes $\wt{O}(2^{w_\excl(z)} + |\found|)$ time and an invocation of $\textsc{SlowExactSparseFFT}$ takes
$\wt{O}(2^{w_\excl(z)}\cdot b^3 + |\found|\cdot b^2)$ time respectively. We will separately bound the time dependent on $|\found|$ and $2^{w_\excl(v)}$.
Formally, we say that there are two runtime pools, first and second, and the aforementioned primitives spend $\wt{O}(2^{w_\excl(v)}\cdot b)$, $\wt{O}(2^{w_\excl(v)})$ and $\wt{O}(2^{w_\excl(v)}\cdot b^3)$ from the first pool, and $\wt{O}(|\found|\cdot b)$, $\wt{O}(|\found|)$ and $\wt{O}(|\found|\cdot b^2)$ from the second one, respectively. We now bound the sizes of both pools.
Using the notation $A := \frac{6 \log N}{\alpha}$, we have

\paragraph{The first pool.}

Let $T_1[s, W, l]$ be the time the procedure spends from the first pool where for convenience of notation $W = 2^{w_{\excl}(v)}$ and $l$ is the distance to the node $v$ from the root. Notice that the algorithm makes at most $A$ iterations, since at most $A$ vertices are added to $S$. That also means that $|S| \leq A$. Also notice that for each picked $z$, since $z$ is the vertex with the smallest weight in $S$, by \Cref{lem:weight_subadditive} and by \Cref{lem:kraft_averaging}
\[
	2^{w_{\excl \cup S}(z)} \leq 2^{w_{\excl}(z)} \cdot 2^{w_{S}(z)} \leq W \cdot A.
\]
Notice that because we only call \Estimate{} in line~\ref{line:quadratic_first_simple_estimate}, there always a call to \ZeroTest{} in line~\ref{line:quadratic_first_simple_zero_test} that precedes it. Since they are called with the same set of parameters, by Assumption~\ref{assmptn-zerotest-estimate} the time to run \ZeroTest{} dominates that of \Estimate{}. Similarly, it is easy to see that the time to perform all other operations except for recursive calls is also dominated by \ZeroTest{}, so there exists an $f = \wt{O}(1)$ such that,
\begin{itemize}
\item If $l = \log N$, then we are in the case where $v$ is a leaf of $\tfull_N$. The algorithm makes only one iteration of the \textbf{repeat} loop in line~\ref{line:inner_loop_start} where it executes \Estimate{} and stops at line~\Cref{line:quadratic_estimate_continue}. The \ZeroTest{} call takes time $\wt{O}(2^{w_{\excl}(v)}s) = \wt{O}(s W) \leq s W f$, so
	\[T_1[s, W, l] \leq s W f.\] 
\item Else, if $s \leq 1/\alpha$, then we are in the base case where \Cref{alg:cubic-exact} is called in line~\ref{line:base-case-slow-call}. By Theorem~\ref{thm:kspase_simple_runtime} it takes time $\wt{O}(s^3 2^{w_{\excl}(v)}) \leq s^3 W f$ from the first pool to run it, so
	\[T_1[s, W, l] \leq s^3 W f.\]
\item Else the algorithm proceeds with the recursive mode of operation. As was discussed before, it makes at most $A$ iterations of the \textbf{repeat} loop where it runs \ZeroTest{} and calls itself recursively in lines~\ref{line:exact-recursive-call1} and~\ref{line:exact-recursive-call2}. Now \ZeroTest{} is called with set $\excl$ being equal to $\excl \cup S$, so its runtime from the first pool is bounded by $WA s f$. For each recursive call, similarly, the set $\excl$ becomes $\excl \cup S \cup \{z'\}$, where $z'$ is the other child of $z$ (see lines~\ref{line:exact-recursive-call1} and~\ref{line:exact-recursive-call2}), hence the new $W$ is upper bounded by $2 W A$. The distance $l$ also increases by at least $1$. Finally, by observing that $T_1[s, W, l]$ is monotonically non-decreasing with respect to its parameters, we get the following formula
	\[ T_1[s, W, l] \leq A (W A s f + 2 T_1[\alpha \cdot s, 2 W \cdot A, l + 1]). \]
\end{itemize}

We can now show by induction that 
\begin{equation}\label{eq:ind-bound}
	T_1[s, W, l] \leq (5 \alpha A^2)^{\frac{\log s}{\log 1/\alpha}} W \cdot s \cdot f / \alpha^2.
\end{equation}
The base case corresponds to $s \leq 1 / \alpha $ or $l = \log N$ for which we get by the above inequalities that the runtimes respectively are
$s^3 W f \leq s Wf / \alpha^2$ and $s W f$, both of which are not greater than the right hand side of \Cref{eq:ind-bound}.

Suppose now that for $s > 1/\alpha$ and $l < \log N$, the inductive hypothesis holds for smaller values of $s$ or larger values of $l$. Then $\beta \geq 1$ and 
\begin{equation}\label{eq:rec}
	T_1[s, W, l]  \leq A (W A s f + 2 T_1[\alpha s, 2 W \cdot A, l + 1]),
\end{equation}
where we have 
\begin{equation*}
\begin{split}
T_1[\alpha s, 2 W \cdot A, l + 1]&\leq 2(5 \alpha A^2)^{\frac{\log \alpha s}{\log 1/\alpha}} W A \cdot \alpha s\cdot f / \alpha^2\\
&\leq 2(5 \alpha A^2)^{\frac{\log s}{\log 1/\alpha}-1} W A \cdot s \cdot f / \alpha\\
\end{split}
\end{equation*}
by the inductive hypothesis. Substituting this bound into~\eqref{eq:rec}, we get
\begin{align*}
T_1[\alpha s, 2 W \cdot A, l + 1]	&\leq  A (W A s f + 4(5 \alpha A^2)^{\frac{\log s}{\log 1/\alpha}-1} W A \cdot s \cdot f / \alpha)  \\
	& = W s f (A^2 + (5\alpha A^2)^{\frac{\log s}{\log 1/\alpha} - 1} \cdot 4A^2 / \alpha)\\
	&= W s f (\alpha^2 A^2 + (5\alpha A^2)^{\frac{\log s}{\log 1/\alpha} - 1} \cdot 4\alpha A^2) / \alpha^2 \\
	&\leq W s f (\alpha^2 A^2 + (5\alpha A^2)^{\frac{\log s}{\log 1/\alpha} - 1} \cdot 4\alpha A^2) / \alpha^2 \\
	&\leq W s f (\alpha^2 A^2 + (5\alpha A^2)^{\frac{\log s}{\log 1/\alpha} - 1} \cdot 4\alpha A^2) / \alpha^2 \\	
	& \leq (5\alpha A^2)^{\beta} Wsf / \alpha^2,
\end{align*}
where in the last transition we used the fact that $(5\alpha A^2)^{\frac{\log s}{\log 1/\alpha} - 1}= (5(6\log N)^2/\alpha)^{\frac{\log s}{\log 1/\alpha} - 1}\geq 5^{\frac{\log s}{\log 1/\alpha} - 1}\geq 1$ (the latter bound holds since $\frac{\log s}{\log 1/\alpha} - 1\geq 0$, as $s\geq 1/\alpha$ in the inductive step). This completes the inductive step, establishing~\eqref{eq:ind-bound}. Substituting the values for $\alpha, W, f$ and $A$ and simplifying, we get
\begin{equation}\label{eq:t1-bound-root}
\begin{split}
	T_1[k, 1, 0] & \leq (5 \alpha A^2)^{\frac{\log k}{\log 1/\alpha}} W \cdot k \cdot f / \alpha^2\\
	&\leq (5 \alpha A^2)^{\frac{\log k}{\log 1/\alpha}} 2^{w_{\excl}(v)} \cdot k \cdot f / \alpha^2\\
	&=\wt{O}( (5 \alpha A^2)^{\frac{\log k}{\log 1/\alpha}} k / \alpha^2),
\end{split}
\end{equation}
where we used the fact that $2^{w_{\excl}(v)}=1$ when $v$ is the root of $\tfull_N$.

It remains to upper bound $(5\alpha A^2)^{\frac{\log k}{\log 1/\alpha}} = (5(6 \log N)^2/\alpha)^{\frac{\log k}{\log 1/\alpha}}$. We  bound the logarithm of this value:
\begin{equation*}
\begin{split}
	\frac{\log k}{\log 1/\alpha} \log (5(6 \log N)^2/ \alpha) & \leq \frac{\log k}{\log 1/\alpha} (\log ( \log^4 N) + \log (1/\alpha))\\
	& \leq 4 \frac{\log k \log \log N}{\log 1/\alpha} + \log k \\
	&\leq 2 \sqrt{\log k \log \log N} + \log k.
\end{split}
\end{equation*}
Substituting this into~\eqref{eq:t1-bound-root} and recalling that $\alpha = 2^{-2\sqrt{ \log k \cdot \log \log N}}$, we get
$$
T_1[k, 1, 0]=\wt{O}(k^2\cdot 2^{8\sqrt{\log k \log \log N}})
$$
as required.

\paragraph{The second pool.} We bound $|\found'|$ by $\wt{O}(k)$. Let $T_2[s, l]$ denote the upper bound on the runtime from the second pool, where $l$ is the distance from $v$ to root. Again, the runtime is dominated by the call to $\ZeroTest$, so for some $f = \wt{O}(1)$ the following relations hold:
\begin{itemize}
	\item If $l = \log N$, then
		$T_2[s, l] \leq k f$.
	\item Else, if $s \leq 1 / \alpha$, then
	$T_2[s, l] \leq k s^2 f \leq k f / \alpha^2$.
	\item Else,
		$T_2[s, l] \leq A (k s f + 2 T_2[\alpha s, l + 1])$.
\end{itemize}
Similarly to the first pool, one can show by induction that
	$T_2[s, l] \leq (3 \alpha A)^\beta k s f / \alpha^2$,
where $\beta = \lfloor \frac{\log s}{\log 1/\alpha} \rfloor$. Using the fact that 
	$(3 \alpha A)^\beta k s f \leq 2^{2 \sqrt{\log k \log \log n}}$
we have $T_2[k, 0] = \wt{O}(k^2 \cdot 2^{8\sqrt{\log k \log \log n}})$.
Summing up runtimes of both pools yields total runtime of $\wt{O}(k^2 \cdot 2^{8\sqrt{\log k \log \log n}})$.

Finally, the time spent on maintaining $\excl'$ is negligible, since, similarly to \textsc{SlowExactSparseRecovery}, it can be constructed once at the beginning of the algorithm, and on each iteration we modify it by adding and removing a constant number of vertices to or from it. Hence, by the same proof as in Theorem~\ref{thm:kspase_simple_runtime} the used time is $\wt{O}(s + w_{\excl}(v))$.

\end{proof}

\newpage
\section{Preliminaries and Notations.} \label{sec:prelim}

\paragraph{Fourier transform basics.}
We will often identify $[n]^d \to \C$ with $\C^{n^d}$ for convenience and use the two interchangeably depending on the context. 

\begin{definition}[Fourier transform] \label{def:ft}
For any positive integers $d$ and $n$, the \emph{Fourier transform} of a signal $x \in \C^{n^d}$ is denoted by $\wh x$, where 
$\wh x_\ff = \sum_{\tt \in [n]^d} x_\tt e^{-2\pi i \frac{\ff^\top \tt}{n}}$ for any $\ff \in [n]^d$. Here $\ff^\top \tv = \sum_{q=0}^{d-1} f_q t_q$.
\end{definition}
Recall that by Parseval's theorem we have $\|\wh x\|_2^2 = n^d \cdot \| x\|_2^2$. Furthermore, recall the convolution-multiplication duality $\wh{(x \star y )} = \wh x \cdot \wh y$,
where $x \star y \in C^{n^d}$ is the convolution of $x$ and $y$ and defined by the formula
$(x \star y )_{\tv} = \sum_{\bm{\tau} \in [n]^d} x_{\bm{\tau}}\cdot y_{(\tv - \bm\tau \mod n)}$ for all $\tv \in [n]^d$, where the modulus is taken coordinate-wise. We will also need the following well-known theorem on the Fourier subsampled matrices.

\begin{theorem}(Restricted Isometry Property of Subsampled Fourier Matrices,~\cite[Theorem 3.7]{haviv2017restricted})
\label{RIP-thrm}
Let $q = \Theta(s \log ^3 N)$. Then with high probability in $N$, the time domain points $\left\{x_{\bm{t}} \right\}_{\bm{t} \in Q}$ for a random multiset $Q \subseteq [n]^d$ with $q$ uniform samples are sufficient to $(1\pm\epsilon)$-approximate the energy of all $s$-sparse vectors $\wh{x}$, where $\epsilon>0$ is some absolute constant. Formally, simultaneously for all $s$-sparse vectors: $\frac{N^2}{q}\sum_{\bm{t} \in Q} |x_{\bm{t}}|^2 \in \left[ (1-\epsilon)\|\wh{x}\|_2^2, (1+\epsilon)\|\wh{x}\|_2^2\right]$.
\end{theorem}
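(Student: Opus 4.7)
The plan is to recast the assertion as a standard restricted isometry property (RIP) statement for a subsampled Fourier matrix. Let $M \in \C^{q\times N}$ denote the submatrix of the $N\times N$ discrete Fourier matrix whose rows are indexed by the random multiset $Q$. By the inverse Fourier formula we have $x_\tv = \frac{1}{N}(M\wh x)_\tv$, so $\frac{N^2}{q}\sum_{\tv\in Q}|x_\tv|^2 = \frac{1}{q}\|M\wh x\|_2^2$. Thus the theorem is equivalent to showing that $\frac{1}{\sqrt q}M$ acts as a $(1\pm\epsilon)$-isometry on the set of $s$-sparse vectors with high probability in $N$.

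First I would handle a single sparsity pattern. Fix $S\subseteq[n]^d$ with $|S|=s$ and let $M_S$ be the corresponding column restriction of $M$. The $q$ rows of $M_S$ are i.i.d.\ copies of a random row $\rr\in\C^s$ whose entries are unit-modulus roots of unity. Since the columns of the full $N\times N$ DFT are pairwise orthogonal with equal norm, $\E[\rr\rr^*]=I_s$, hence $\E\bigl[\tfrac{1}{q}M_S^*M_S\bigr]=I_s$. A matrix Bernstein inequality applied to $\tfrac{1}{q}\sum_{\tau=1}^q \rr_\tau \rr_\tau^*$, using the crude bound $\|\rr\rr^*\|_{\mathrm{op}}\le s$, yields
\[
\Pr\!\bigl[\,\|\tfrac{1}{q}M_S^*M_S - I_s\|_{\mathrm{op}} > \epsilon\,\bigr] \le 2s\,\exp(-c\epsilon^2 q/s).
\]
A union bound over the $\binom{N}{s}\le N^s$ possible supports then yields the RIP with constant $\epsilon$ provided $q = \Theta(s\log^2 N)$, which is one logarithmic factor short of the claim.

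To recover the sharper $q = \Theta(s\log^3 N)$ scaling stated in the theorem, I would replace the naive $\epsilon$-net by a Dudley/generic chaining bound on $\sup_{\wh x\in D_s}\bigl|\tfrac{1}{q}\|M\wh x\|_2^2-\|\wh x\|_2^2\bigr|$, where $D_s$ denotes the $s$-sparse unit sphere. The chaining estimate rests on controlling the $\gamma_2$-functional of $D_s$ in the metric induced by the rows of $M$, which in turn follows from the uniform $\ell_\infty\to\ell_2$ bound on subsampled DFT rows combined with an iterated entropy argument in the spirit of Rudelson--Vershynin and its subsequent refinement by Haviv--Regev.

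The main obstacle is precisely this last step: shaving the extra $\log N$ factor that the union-bound approach leaves behind requires the delicate chaining/majorizing measure analysis of Haviv--Regev, which cannot be reproduced by elementary matrix concentration alone. Since the RIP is used only as a black box in the present paper, I would simply invoke the Haviv--Regev bound for the final constant rather than redoing the chaining argument from scratch.
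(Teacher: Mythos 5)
The paper does not prove this statement at all: it imports it verbatim from Haviv--Regev~\cite{haviv2017restricted} as a black box, and your closing paragraph correctly reflects this. Since the theorem is only ever used as a citation, your decision to defer to the source rather than reprove it is exactly the paper's approach.

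However, the intermediate sketch of the ``naive'' argument contains a genuine computational error. For a fixed support $S$ of size $s$, matrix Bernstein gives a failure probability of roughly $2s\exp(-c\epsilon^2 q/s)$, and union bounding over $\binom{N}{s}\le N^s$ supports then requires $c\epsilon^2 q/s\gtrsim s\log N$, i.e.\ $q\gtrsim s^2\log N/\epsilon^2$ --- \emph{quadratic} in $s$, not $\Theta(s\log^2 N)$ as you wrote. The phrase ``one logarithmic factor short of the claim'' also has the direction reversed: the theorem is a sufficiency statement, so a smaller $q$ would be a \emph{stronger} result, not a weaker one; if a union bound really did give $s\log^2 N$, there would be nothing to improve. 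The Rudelson--Vershynin and Haviv--Regev chaining machinery is what brings the union-bound $s^2\log N$ down to nearly linear in $s$ (at the cost of extra $\log$ factors and a far more delicate $\gamma_2$ estimate), not what patches an almost-correct $s\log^2 N$ up to $s\log^3 N$. This does not affect your ultimate conclusion to cite~\cite{haviv2017restricted}, but the motivating story for why chaining is needed should be corrected.
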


\subsection{Notation for Manipulating FFT Computation Trees.}
\label{sec:manipulate_FFT}

Recall that given a signal $x: [n]^d \rightarrow \mathbb{C}$, the execution of the FFT algorithm produces a binary tree, referred to as $\tfull_N$. The root of $\tfull_N$ corresponds to the universe $[n]^d$, while the children of the root correspond to $\left[n/2 \right]\times [n]^{d-1}$; note that FFT recurses by peeling off the least significant bit. Every node $v$ has a \emph{label} ${\bm f}_v \in \ZZ_{n}^d$ associated to it, defined according to the following rules.

\begin{enumerate}
\item The root has label ${\bm f}_{\text{root}} = (\underbrace{0,0,\ldots,0}_{d~\mathrm{entries}})$, and corresponds to the universe $[n]^d$.
\vspace{-8pt}
\item The children $v_\lef,v_\righ$ of a node $v$ which corresponds to the universe $[n/2^l]\times [n]^{d'}$, with $0 \leq d' \leq d-1, 0\leq l \leq \log n-1$, have the following properties. Both correspond to universe $[n/2^{l+1}] \times [n]^{d'}$, and $v_\righ$ has label $\bm{f}_{v_\righ} = \bm{f}_v $, while $v_\lef$ has label $\bm{f}_{v_\lef} = \bm{f}_v + (\underbrace{0,0, \ldots ,0}_{d'}, 2^{l} ,\underbrace{0,0,\ldots,0}_{d-d'-1})$.
\vspace{-8pt}
\item The children of a node $v$ corresponding to universe $[1]\times[n]^{d'}$ with $d'>0$, are $v_\lef, v_\righ$, corresponding to universe $[n/2]\times [n]^{d'-1}$ and have labels $\bm{f}_{v_\righ} = \bm{f}_v$ and $\bm{f}_{v_\lef} = \bm{f}_v + (\underbrace{0,0,\ldots ,0}_{d' -1 },1,\underbrace{0,0,\ldots,0}_{d-d'})$ respectively.
\vspace{-8pt}
\item A node $v$ corresponding to the universe $[1]$ is called a \emph{leaf} in $\tfull_N$.
\end{enumerate}
The above rules create a binary tree of depth $\log N$, which corresponds to the FFT computation tree. The labels of the leaves of $\tfull_N$ represent the set $[n]^d$ of all possible frequencies of any signal $x: [n]^d \rightarrow \mathbb{C}$ in the Fourier domain. We demonstrate $\tfull_N$ that corresponds to the $2$-dimensional FFT computation on universe $[4] \times [4]$ in Figure~\ref{Tfull-labeling}. 
Subtrees $T$ of $\tfull_N$ can be defined as usual. For every node $v \in T$, the \emph{level} of $v$, denoted by $l_T(v)$, is the distance from the root to $v$. 
We denote by $\leaves(T)$ the set of all leaves of tree $T$, and for every $v\in \leaves(T)$, its \emph{weight} $w_T(v)$ \emph{with respect to} $T$ is the number of ancestors of $v$ in tree $T$ with two children. The levels (distances from the root) on which the aforementioned ancestors lie will be called $\Anc(v,T)$. Furthermore, the sub-path of $v$ with respect to $T$ will be the children of the aforementioned ancestors which are not ancestors of $v$. Additionally, for a node $v \in T$ we denote the subtree of $T$ rooted at $v$ by $T_v$.

The following definition will be particularly important for our algorithms.
\begin{definition}[Frequency cone of a leaf of $T$]\label{def:iso-t-highdim}
For every subtree $T$ of $\tfull_N$ and every node $v\in T$, we define the {\em frequency cone of $v$} with respect to $T$ as,
\[	\subtree_T(v):=\left\{ \ff_u : \text{ for every leaf } u \text{ in subtree of }\tfull_N \text{ rooted at }v \right\}.\]
Furthermore, we define $\supp{(T)} : = \bigcup_{u\in \leaves(T)} \subtree_T(u)$.
\end{definition}

The \emph{splitting tree} of a set $S \subseteq [n]^d$ is the subtree of $\tfull_N$ that contains all nodes $v \in \tfull_N$ such that $S \cap \subtree_{\tfull_N}(v) \neq \emptyset$.

\section{Techniques and Comparison with the Previous Technology.}
\label{sec:techniques}

This section is devoted to highlighting the differences between previous work and our technical contributions.

\subsection{Previous Techniques.}

Most previous sublinear-time Sparse Fourier transform algorithms~\cite{GMS,hikp12a,k16,k17} rely on emulating the hashing of signal $\wh{x}$ by picking a structured set of samples (in low dimensions, the samples correspond to arithmetic progressions) and processing them with the help of bandpass filters, i.e. functions which approximate the $\ell_\infty$ box in frequency domain and are simultaneously sparse in time domain. However, while those filters are particularly efficient in low dimensions, their performance deteriorates when the number of dimensions increases: indeed, a $d$-dimensional $\ell_\infty$ box has $2^{d}$ faces, and hence this approach suffers inevitably from the curse of dimensionality. On the other hand, an unstructured collection of $O(k \cdot \poly(\log N))$ samples~\cite{CTao,NSW19} suffice, showing that the sample complexity is dimension-independent; the cost that one needs to pay, however, is $\Omega(N)$ running time.

To (partially) remedy the aforementioned state of affairs, the approach of~\cite{kapralov2019dimension} departs from both the aforementioned approaches, and performs pruning in the Cooley-Tukey FFT computation graph, in a way that suffices for recovery of \emph{exactly} $k$-sparse vectors. Recall that as we explained in \cref{sec:tree_problem}, the exact Sparse FFT problem can be translated to a tree exploration problem.
What makes the exploration possible and is the main technical innovation of \cite{kapralov2019dimension} is the introduction of adaptive aliasing filters, a new class of filters that allow to isolate a given frequency from a given set of $k$ other frequencies using $O(k)$ samples in time domain and in $O(k \log N)$ time. Those filters are revised in Section~\ref{sec:filters_old}.

\begin{definition}[$(v, T)$-isolating filter, see Definition~\ref{def:v-t-isolating}]
	Consider a subtree $T$ of $\tfull_N$, and a leaf $v$ of $T$. A filter $G:[n]^d \to \C$ is called \emph{$(v,T)$-isolating} if the following conditions hold:
	\begin{itemize}
		\item For all $\bm f\in \subtree_T(v)$, we have $\wh{G}(\bm f)=1$.
		\item For every $\bm f'\in \bigcup_{\substack{ u \in \leaves(T) \\u \neq v }}\subtree_T(u) $, we have $\wh{G}_{v}(\bm f')=0$.
	\end{itemize}
\end{definition}

As shown in~\cite{kapralov2019dimension}, for a given tree $T$ and a node $v$ one can construct isolating filters $G$ such that $\|G\|_0 = O(2^{w_T(v)})$, and $\wh{G}(\bm f)$ is computable in $\widetilde{O}(1)$ time (see also Lemma~\ref{lem:isolate-filter-highdim}). The sparsity of $G$ in time domain, i.e. $\|G\|_0$, corresponds to the number of accesses to $x$ needed in order to get our hands on $(\wh{G} \cdot \wh{x})_{\bm f}$ for a fixed $\bm f$.

As was shown in \cref{sec:tree_problem}, the FFT tree exploration proceeds using two primitives that satisfy Assumption~\ref{assmptn-zerotest-estimate} and both of these primitives can be efficiently constructed given the above filters. 
The first one is a primitive for performing a \emph{zero test} on a subtree, i.e., checking whether $\wh{x}_{\subtree(v)} \equiv 0$. This check can be performed efficiently using a (deterministic) collection of $O(k \log^3 N)$ samples which satisfy the Restricted Isometry Property (RIP) of order $k$; its pseudocode, named $\textsc{ZeroTest}$, is depicted in Algorithm~\ref{alg:ZeroTest}. The sample complexity of $\textsc{ZeroTest}$ is then 
\[	O(2^{w_T(v)} \cdot k \cdot \poly(\log N)),	\]
namely, one needs to multiply the time domain support size of the isolating filter $G$ with the number of samples needed to satisfy \textsc{RIP} of order $k$. 
The second primitive is used when $\ell$ is a leaf in $\tfull_N$, i.e. a node at depth $\log N$, in which case the algorithm needs to \emph{estimate} $\wh{x}_{\bm{f}_{\ell}}$ using the $(\ell,T)$-isolating filter, see Algorithm~\ref{alg:estimate_freq} for a pseudocode. This requires only $O\left( 2^{w_T(\ell)} \right)$ samples.

Unfortunately, as we have already pointed out, the algorithm in~\cite{kapralov2019dimension} works only for exactly $k$-sparse signals, and also demands cubic time and sample complexity. Our new toolkit shows that all three limitations can be remedied (though not completely simultaneously).

We also mention that a modified version of \cite{Mansour95} can be employed to recover exactly $k$-sparse signals in $\widetilde{O}(k^3)$ time. The algorithm presented in \cite{Mansour95} performs breadth-first search in the Cooley-Tukey FFT computation graph, rather than exploring by picking the lowest weight leaf. Opposed to~\cite{kapralov2019dimension}, the algorithm in~\cite{Mansour95} uses Dirac comb filters to learn all the non-empty frequency cones in the same level at once. However, the techniques in that paper cannot go beyond cubic time for $k$-sparse signals, and as can be seen in \cite[Section~6]{Mansour95}, extending the result to robust signals pays a multiplicative signal-to-noise ratio factor on top of $k^3$.

\subsection{Our Techniques.}

Our first technique is a way to traverse the Cooley-Tukey FFT computation graph in almost quadratic time complexity. This was presented in detail in \cref{sec:tree_problem}. Here we give a quick summary of our FFT tree exploration.
\paragraph{FFT backtracking.} The first crucial observation is that the vanilla FFT traversal algorithm given in \cite{kapralov2019dimension} performs a \emph{zero test} with \textsc{RIP} of order $k$ to decide whether a subtree contains a non-zero frequency, and this might be unnecessary. Indeed, if we are at a node $v$ for which $\|\wh{x}_{\subtree(v)}\|_0 = O(1)$, i.e. there are at most $O(1)$ elements in $\subtree(v)$, we only need to perform \textsc{RIP} of order $O(1)$. Thus, maybe there is a way to approximately learn $\|\wh{x}_{\subtree(v)}\|_0$, for nodes $v$ explored during the execution of the algorithm, and perform a \emph{low-budget} zero test accordingly?

We have demonstrated that this intuition is correct in \cref{sec:tree_problem}. The idea is to assign varying budgets to the nodes to be explored and then perform the hierarchical error detection in order to detect errors in the exploration which are caused by the failures of $\ZeroTest$ due to incorrect budget assignments. 
The algorithm maintains at all times a subtree $T$, as well as a vector $\wh{\chi}$, such that $\supp(\wh{x} - \wh{\chi})\subseteq \cup_{u \in T} \subtree(u)$, and $\supp(\wh{\chi}) \subseteq \supp(\wh{x})$. 
The algorithm explores the tree by considering values $b_1,b_2\ldots$, corresponding to the possible assumptions on the sparsity of $\wh{x}_{\subtree(v)}$, for some node $v$ picked during the execution of the algorithm. For a parameter $\alpha<1$ we use thresholds $b_0 := k, b_1 := \alpha k, b_2 := \alpha^2 k, \ldots, b_{ \frac{\log k} {\log (1/\alpha)} } = O(1)$. 
Our algorithm recursively explores various subtrees $T_v$ with some budget $b:=b_j$, i.e. under the assumption $\|\wh{x}_{\subtree(v)}\|_0 \leq b$. The algorithm maintains a subtree $T_v$, initialized at $\{v\}$ and proceeds by picking the minimum weight node $z \in T_v$ and considering the two children of $z$, let them be $z_\lef,z_\righ$.
Then, it runs itself recursively on $T_{z_\lef},T_{z_\righ}$ with budget $b_{j+1} = \alpha b$. When the recursive calls return, yielding candidate vectors $\wh{\chi}_\lef,\wh{\chi}_\righ$, it performs a zero test on each of $z_\lef,z_\righ$ with \textsc{RIP} of order $b$, in order to check whether $\wh{x}_{\subtree(z_\lef)} - \wh{\chi}_\lef$ is the all zeros vector (similarly for the right child). If the zero test on $z_\lef$ is $\false$, we add $z_\lef$ to $T_v$; similarly for $z_\righ$. If both zero tests are $\true$, then we remove $z$. This continues either until $T_v = \emptyset$ or until the number of nodes that have ever been inserted in $T_v$ becomes too large (in particular if there is $\Omega(b / \alpha)$ leaves). In the first case, the algorithm returns the found vector, otherwise it returns the all zeros vector, since insertion of too many nodes into $T_v$ means that we have underestimated the sparsity of $\wh{x}_{\subtree(v)}$, as we argued in \cref{sec:tree_problem}. 

Upon performing a call with arguments a node $v$ and a budget $b$, it could be the case that $\|\wh{x}_{\subtree(v)}\|_0 \leq b$ does not hold; however, this misassumption is not detected by that call, and a vector which is not equal to $\wh{x}_{\subtree(v)}$ is returned to the above recursion level. Nevertheless, although undetectable at the time, this discrepancy will be detected in some recursion level above, where we make use of higher budget; definitely at the very first level where we perform \textsc{RIP} of order $k$. 
We proved the correctness of the above process in \cref{sec:tree_problem} using induction on the tree.

\paragraph{Robust Algorithm.} Our tree exploration technique works well for solving the exact Sparse FFT problem. For designing a robust algorithm we need a collection of new techniques in addition to the FFT backtracking. In what follows we explain the techniques needed for \emph{robustifying} our Sparse FFT algorithm.

First of all, in the robust case we should substitute $\textsc{ZeroTest}$ with an analogous $\textsc{HeavyTest}$ routine. The role of this routine is to determine whether $\|(\wh{x} - \wh{\chi})_{\subtree(v)}\|_2 \geq \|\wh{\eta}\|_2$, where $v$ is any node that appears during the execution of the algorithm. If the latter inequality holds, this means that there are elements of the head of $\wh{x}$ inside $\subtree(v)$ that are yet to be recovered. Pseudocode for this routine is presented in Algorithm~\ref{alg:zero-test}, and the guarantees of this routine are spelled out in Lemma~\ref{lem:guarantee_heavy_test}. The algorithm is very similar to $\textsc{ZeroTest}$, with the difference that we now need to take a collection of random samples, since a deterministic collection of samples sastisfying \textsc{RIP} does not suffice to control the non-sparse component, i.e. the contribution of the tail under filtering. Furthermore, what is demanded is a control on how a $(v,T)$-isolating filter $\wh{G}$ acts on $\wh{x}_{ \cup_{u \in T\setminus\{v\}}\subtree(u) }$, i.e. on parts of the signal living inside frequency cones which $u$ is \emph{not} isolated from. In words, one would like to appropriately control the energy of $\left(\wh{G} \cdot \wh{x}_{ \cup_{u \notin T}\subtree(u) }\right)$, where $\cdot$ corresponds to element-wise vector multiplication.
\newline

\paragraph{Collectively, adaptive aliasing filters act as near-isometries.}

Adaptive aliasing filters are particularly effective for \emph{non-obliviously} isolating elements of the head with respect to each other. However, in standard sparse recovery tasks, one desires control of the tail energy that participates in the measurement. This is a relatively easy (or at least well-understood) task in Sparse Fourier schemes which operate via $\ell_\infty$-box filters~\cite{hikp12a,hikp12b,ikp14,IK,k17}, but a non-trivial task using adaptive aliasing filters. The reason is that the tail via the latter filtering is hashed in a \emph{non-uniform} way. The hashing depends on the arithmetic structure of the elements used to construct the filters, as well as their arithmetic relationship with the elements in the tail. This non-uniformity is essentially the main driving reason for the ``exactly $k$-sparse'' assumption in~\cite{kapralov2019dimension}. Our starting point is the observation that for every tree $T \subseteq \tfull_N$, the $(v,T)$-isolating filters for $v \in \leaves(T)$, satisfy the following orthonormality condition in dimension one, see subsection~\ref{sec:filters_onedim}.
\begin{lemma}(Gram Matrix of adaptive alliasing filters in $d=1$)
Let $T \subseteq \tfull_n$, let $G_v$ be the $(v,T)$-isolating filter of leaf $v\in \leaves(T)$, as per \eqref{eq:isolat-filter-1d}. Let $v$ and $v'$ be two distinct leaves of $T$. 
Then, 
\begin{enumerate}
\item \[	\|\wh{G}_v \|_2^2 := \sum_{ \xi \in [n]} |\wh{G}_v(\xi) |^2 = \frac{n}{2^{w_T(v)}}.	\]

\item (cross terms) the adaptive aliasing filters corresponding to $v$ and $v'$ are orthogonal, i.e. 
\[	\langle \wh G_v, \wh G_{v'} \rangle :=  \sum_{ \xi \in [n]} \wh G_v(\xi) \cdot \overline{\wh G_{v'}(\xi)} = 0.	\]
\end{enumerate}

\end{lemma}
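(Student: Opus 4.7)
\begin{proofsketch}
The plan is to reduce both parts to a single structural observation about the one-dimensional $(v,T)$-isolating filter from~\eqref{eq:isolat-filter-1d}: the Fourier transform $\wh{G}_v:[n] \to \C$ is a $\{0,1\}$-indicator function on all of $[n]$. More precisely, I expect $\wh{G}_v(\xi) = 1$ if and only if, for every branching ancestor of $v$ in $T$ at depth $l$, the $l$-th bit of $\xi$ agrees with the corresponding left/right choice in $v$'s root-to-leaf path (bit $1$ for left, bit $0$ for right), with no constraint placed on the bits of $\xi$ at depths of non-branching ancestors. Exactly $w_T(v)$ bits of $\xi$ are thus fixed, so $|\supp(\wh{G}_v)| = n / 2^{w_T(v)}$.

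Granted this, both parts of the lemma are immediate. For Part~$(1)$, since $\wh{G}_v$ is a $\{0,1\}$-valued indicator,
$\sum_{\xi \in [n]} |\wh{G}_v(\xi)|^2 = |\supp(\wh{G}_v)| = n/2^{w_T(v)}$.
For Part~$(2)$, let $u$ be the lowest common ancestor of $v$ and $v'$ in $T$. Since $v \neq v'$, the node $u$ must be a branching node of $T$, and $v$ and $v'$ descend into its opposite children. Hence the bit of $\xi$ at depth $l(u)$ is forced to take opposite values in $\supp(\wh{G}_v)$ and $\supp(\wh{G}_{v'})$, so the two supports are disjoint in $[n]$. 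Consequently $\wh{G}_v(\xi) \cdot \overline{\wh{G}_{v'}(\xi)} = 0$ pointwise for every $\xi \in [n]$, and summing yields $\langle \wh{G}_v, \wh{G}_{v'}\rangle = 0$.

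The only nontrivial step is verifying the structural claim about $\wh{G}_v$, which I would do by induction on $w_T(v)$ using the explicit recursive form in~\eqref{eq:isolat-filter-1d}. Intuitively, $G_v$ is a suitably modulated Dirac comb in time domain whose period is a power of two determined by the branching ancestors of $v$ in $T$ and whose modulation encodes $v$'s sequence of left/right choices along the path. Taking the Fourier transform converts the time-domain comb into a comb in frequency (by Poisson summation), which the modulation then shifts to produce precisely the claimed $\{0,1\}$-indicator of the bit-constraint set. The main obstacle is handling the non-branching depths of $v$ in $T$ carefully: these levels contribute no factor to $G_v$'s construction, and one must check that their bits in $\xi$ are indeed left free by the resulting comb-and-modulation pattern rather than aliased away. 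This can be carried out by peeling off branching factors one at a time in order of depth, matching each factor to the bit constraint it introduces.
\end{proofsketch}
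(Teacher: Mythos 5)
Your argument rests on a structural claim that is simply false: the Fourier-domain filter $\wh{G}_v$ from \eqref{eq:isolat-filter-1d} is \emph{not} a $\{0,1\}$-valued indicator on all of $[n]$ unless $T$ is the full tree $\tfull_n$. Each factor in \eqref{eq:isolat-filter-1d} has the form $\tfrac{1}{2}\bigl(1 + e^{2\pi i (\xi - f)/2^{\ell+1}}\bigr)$, which takes values in $\{0,1\}$ only when $\xi - f$ is a multiple of $2^\ell$; for general $\xi$ it is a complex number of magnitude strictly between $0$ and $1$. Concretely, if $T$ has a single branching node at level $\ell=1$ (say $n=8$, $v$ labeled $f_v = 2$, $w_T(v)=1$), then
\begin{equation*}
\wh{G}_v(1) = \tfrac{1}{2}\bigl(1 + e^{-i\pi/2}\bigr) = \tfrac{1}{2}(1-i),
\end{equation*}
which is neither $0$ nor $1$. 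The isolating property only forces $\wh{G}_v$ to be $1$ on $\subtree_T(v)$ and $0$ on the other frequency cones of $T$; on frequencies in $[n]\setminus\supp(T)$ the filter is unconstrained and generically takes intermediate values.

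This gap breaks both of your conclusions. In Part~(1), $\|\wh{G}_v\|_2^2 = |\supp(\wh{G}_v)|$ fails because the squared magnitudes at the ``off-cone'' frequencies are strictly between $0$ and $1$ and still contribute. In Part~(2), $\supp(\wh{G}_v)$ and $\supp(\wh{G}_{v'})$ are generally \emph{not} disjoint (in the example above they intersect at every odd $\xi$), so the pointwise-vanishing argument collapses. The claimed sum/orthogonality identities still hold, but they emerge from algebraic cancellation when summing over $\xi$, not from pointwise disjointness or a $\{0,1\}$ structure.

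The paper's actual argument is quite different. For Part~(1) it expands $|\wh{G}_v(\xi)|^2$ as a sum indexed by pairs of disjoint subsets of $\Anc(v,T)$; all non-constant terms are complex exponentials in $\xi$ with non-integer frequency, which sum to zero over $[n]$, and the constant term alone yields $n/2^{w_T(v)}$. For Part~(2) it expands $\langle\wh{G}_v,\wh{G}_{v'}\rangle$ into a double sum over subsets $S\subseteq\Anc(v,T)$, $S'\subseteq\Anc(v',T)$ and splits into the terms with $S\ne S'$ (which vanish by the same roots-of-unity cancellation over $\xi$) and terms with $S=S'$ (which pair up and cancel via the sign flip at the lowest common ancestor, since $f_{v'}-f_v \equiv 2^{l_T(u)} \pmod{2^{l_T(u)+1}}$). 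You do correctly identify the LCA sign flip as the key structural fact for orthogonality, but it must be combined with summation over $\xi$, not deployed as a pointwise support-disjointness argument. If you want to repair your proof, the place to start is to verify your proposed ``Poisson summation'' picture explicitly against \eqref{eq:isolat-filter-1d}: you will find that the non-branching depths do not ``leave bits free'' in a $\{0,1\}$ sense but instead introduce genuine non-integer complex phases that must be summed out.
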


This already postulates that adaptive aliasing filters are relatively well-behaved: for a tree $T$ all leaves of which have roughly the same weight, it must be the case that $x\mapsto\quad \{ \langle \wh G_v,\wh x \rangle \}_{v \in \leaves(T)}$ is a near-orthonormal transformation. Of course, this is too much to ask in general. The crucial property that we will make use of is captured in the following Lemma, see Subsection~\ref{sec:filters-d}.

\begin{lemma}(see Lemma~\ref{filter-robust-multidim})
	Consider a tree $T \subseteq T_N^{full}$. For every leaf $v$ of $T$ we let $\wh{G}_v$ be a Fourier domain $(v,T)$-isolating filter. Then for every ${\bm\xi} \in [n]^d$,
\[ \sum_{ v\in \leaves(T)} |\wh{G}_v({\bm\xi})|^2 = 1.\]
\end{lemma}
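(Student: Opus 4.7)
The plan is to observe that the identity reduces to a pointwise statement: at any fixed $\bm\xi \in [n]^d$, exactly one term in the sum contributes, and that contribution equals $1$.

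First, I would verify the partition property: $\{\subtree_T(u) : u \in \leaves(T)\}$ partitions $[n]^d$. This follows because $T$ is a subtree of $\tfull_N$ rooted at the root, so every leaf of $\tfull_N$ (whose label ranges over all of $[n]^d$) has a unique ancestor lying in $\leaves(T)$ — namely the first node along the root-to-leaf path that has no further descendants in $T$. Translating to frequency cones, every $\bm\xi \in [n]^d$ belongs to $\subtree_T(u)$ for exactly one $u \in \leaves(T)$. A short induction on the depth of $T$ (or simply unfolding Definition~\ref{def:iso-t-highdim}) confirms this.

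Next, fix $\bm\xi \in [n]^d$ and let $v^\star \in \leaves(T)$ be the unique leaf with $\bm\xi \in \subtree_T(v^\star)$. By the first bullet of Definition~\ref{def:v-t-isolating-overview}, $\wh{G}_{v^\star}(\bm\xi) = 1$, so $|\wh{G}_{v^\star}(\bm\xi)|^2 = 1$. For every other $v \in \leaves(T) \setminus \{v^\star\}$, we have $\bm\xi \in \subtree_T(v^\star) \subseteq \bigcup_{u \neq v} \subtree_T(u)$, so the second bullet of Definition~\ref{def:v-t-isolating-overview} gives $\wh{G}_v(\bm\xi) = 0$, hence $|\wh{G}_v(\bm\xi)|^2 = 0$. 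Summing over $v \in \leaves(T)$ therefore yields exactly $1$, as claimed.

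There is no substantive obstacle here: the lemma is essentially a tautology once one observes that an isolating filter is fully determined on $\supp(T) = [n]^d$ by its defining conditions, and that the defining conditions force a Kronecker-delta behavior across the partition indexed by $\leaves(T)$. The only thing one must be a bit careful about is that the cones really do cover $[n]^d$ (not merely $\supp(T)$ for some strict subset), which is automatic because every node of $\tfull_N$ has all $[n]^d$-labels living beneath it when one follows the branching rules given in Section~\ref{sec:manipulate_FFT}.
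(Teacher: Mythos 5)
Your proof has a fundamental gap: the cones $\{\subtree_T(u) : u \in \leaves(T)\}$ do \emph{not} partition $[n]^d$ in general. They only partition $\supp(T) := \bigcup_{u\in\leaves(T)}\subtree_T(u)$, which is a strict subset of $[n]^d$ whenever $T$ has a node with only one child. Your claim that ``every leaf of $\tfull_N$ has a unique ancestor lying in $\leaves(T)$'' fails precisely at such nodes: if $z$ is an ancestor of a $\tfull_N$-leaf $\ell$ and $z$ has a single $T$-child $z'$ that is \emph{not} on the path to $\ell$, then $z$ is the last $T$-node on the path to $\ell$ but is not a leaf of $T$, so $\ell$ has no ancestor in $\leaves(T)$. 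Trees with single-child nodes are the generic case throughout this paper --- this is the entire reason the weight $w_T(v)$ (number of two-child ancestors) is defined rather than just the depth, and why the paper carries the notation $\supp(T)$ and repeatedly sums over $[n]^d\setminus\supp(T)$. Consequently your argument says nothing about $\bm\xi\notin\supp(T)$, and the definition of a $(v,T)$-isolating filter also says nothing about such $\bm\xi$: the two bullets of Definition~\ref{def:v-t-isolating-overview} only constrain $\wh G_v$ on $\supp(T)$. In fact the lemma, read as a statement about an \emph{arbitrary} $(v,T)$-isolating filter, is simply false: one could take a filter that is $1$ on $\subtree_T(v)$, $0$ on the other cones, and equal to $1000$ everywhere else.

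What saves the lemma is that it is really a statement about the \emph{specific} construction of Lemma~\ref{lem:isolate-filter-highdim}, given by the explicit product formula in equation~\eqref{eq:isolat-filter-1d} (and its tensorization). Off $\supp(T)$ those filters take genuine intermediate complex values. For a concrete instance, take $n=8$, $d=1$, and $T$ consisting of the root, its left child $L$, and $L$'s two children with labels $3$ and $1$. Then $\supp(T)=\{1,3,5,7\}$, and at $\xi=0$ one computes $\wh G_{v_3}(0)=\tfrac12(1+i)$ and $\wh G_{v_1}(0)=\tfrac12(1-i)$, so each $|\wh G_v(0)|^2=\tfrac12$ and the sum is $1$ --- but your argument would predict each term is $0$. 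The paper's proof (Lemma~\ref{filter-robust-1dim} in 1D, extended by tensoring) expands $|\wh G_v(\xi)|^2$ via the product formula into a sum over subsets of $\Anc(v,T)$, exploits a sign cancellation at each branching point of $T$, and invokes Kraft's equality $\sum_{v}2^{-w_T(v)}=1$ to pick up the surviving constant term. This algebraic structure is the content of the lemma; it is not a tautology.
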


Using standard arguments, the above gives the following Lemma.

\begin{lemma}
For $z: [n]^d \rightarrow \mathbb{C} $, let $z^{\rightarrow \bm a}$ be the cyclic shift of $z$ by $a$, i.e. $z^{\rightarrow {\bm a}}(\bm f) := z(\bm f - \bm a )$, where the subtraction happens modulo $n$ in every coordinate. For a tree $T \subseteq \tfull_N$, 

\[	\mathbb{E}_{\bm a \sim U_{[n]^d}} \left[\sum_{ v \in \leaves(T)}  |\langle \wh{G_{v}}, \wh{z^{\rightarrow {\bm a}}} \rangle|^2 \right] = \|z\|_2^2,	\]

i.e. on expectation over a random shift the total collection of filters is an isometry.

\end{lemma}

Thus, although the tail is hashed in a way that is dependent on the head of the signal, what we can prove is that in expectation over a random shift the total amount of noise is controllable. Using the last property we can ensure that $\textsc{HeavyTest}$ in the high-SNR regime we consider i) does not introduce \emph{false~positives}, i.e. does not engage in exploration in subtrees that contain no sufficient amount of energy, and ii) prevents false negatives. Guarantee i) translates to a bound on the running time of the algorithm, while ii) ensures correct execution of the algorithm. Note that due to the explorative nature of algorithm and the fact that missing a heavy element \emph{increases the total noise in the system} (since we stop isolating with respect to it afterwards, it contributes as noise in subsequent measurements), accumulation of false negatives can totally destroy the guarantees of our approach. We note that this phenomenon of the tail not hashed independently of the signal occurs also in one-dimensional continuous Sparse Fourier Transform~\cite{ps15}, although for a very different reason; in their setting handling such an irregularity is significantly easier, mostly due to the fact that errors do not accumulate as in our explorative algorithm.

\paragraph{Identification and estimation are interleaved.} In contrast to more standard sparse recovery tasks where usually identification and estimation can be decoupled, our algorithm needs to have a precise way to perform estimation upon identification of a coordinate. That happens due to the explorative nature of our algorithm, which does not allow us to perform estimation at the very end. This is relatively easy in the exactly $k$-sparse case, but in the robust case, due to the presence of noise it is much more challenging. Whenever we identify a frequency and isolate it from the other head elements, we can pick $\tilde{O}(k)$ random samples and estimate it up to $1/\sqrt{k}$ fraction of the tail energy. Although this precision is sufficient for our algorithm to go through, it would lead us to an undesirable \emph{cubic} sample complexity in total. The next two techniques are introduced in order to handle this situation.

\paragraph{Lazy Estimation.}

One additional crucial difference between the exactly $k$-sparse case and the robust case is estimation. In the former, when we had a tree $T$ and the minimim-weight leaf $v \in T$ was also a leaf in $\tfull_N$, we needed $\widetilde O(2^{w_T(v)})$ samples in order to perfectly estimate $\wh{x}_{\bm f_v}$. However, in the robust case, perfect estimation is impossible, and as is usual in sparse recovery tasks, we should estimate it up to additive error $O\left(\frac{1}{\sqrt{k}}\|\wh{\eta}\|_2\right)$ (recall that we write $\wh{x} = \wh w + \wh \eta$, where $\eta$ is the tail of the signal). One way to achieve this type of guarantee is to take $\widetilde O(k)$ random samples from $G_v \star x$, where $G_v$ is the $(v,T)$-isolating filter. This would yield $\widetilde O( k \cdot 2^{w_T(v)})$ samples for estimation, a $k$ factor worse than what is needed in the exactly $k$-sparse case. In total, the sample complexity (and running time) would be $k$ times more expensive, getting us back to $\widetilde O(k^3)$.

Let's see how it is possible to shave the aforementioned multiplicative $k$ factor in the sample complexity. Imagine that upon finding such a leaf $v$, our algorithm \emph{does not} estimate it immediately, but rather decides to postpone estimation for later. Instead, it marks it as a fully identified frequency, without removing it from $T$ and proceeds in exploring $T$ further. From now on, instead of picking the lowest weight leaf in $T$ at any time, it picks the lowest weight \emph{unmarked} leaf in $T$. Of course, it could be the case that this rule causes the leaf picked to have weight much more than $\log k$, significantly increasing the cost of filtering. Consider however the following strategy. While the minimum weight unmarked leaf in $T$ has weight at most $\log k + 2$, we pick and it and continue exploring. Whenever the aforementioned condition does not hold, the total Kraft mass\footnote{For a tree $T$ and a set $S\subseteq \leaves(T)$ we shall refer to the quantity $\sum_{v \in S}2^{-w_T(v)}$ as the Kraft mass occupied by $S$ in $T$, or just the Kraft mass of $S$ if it is clear from context.} occupied by the \emph{marked} leaves in $T$ is at least $1-k \cdot \frac{1}{2k}  = \frac 1 2 $. When this happens, we show that we can extract a large subset of the marked nodes, see Lemma~\ref{lemma:kraft-ISCheap}, which can be well-estimated \emph{on average} using only a polylogarithmic number of samples. This suffices for the $\ell_2/\ell_2$ guarantee, and furthermore reduces the number of marked nodes (and hence the Kraft mass occupied by marked nodes) causing our algorithm to proceed without increasing the cost of filtering. A more involved demonstration of this idea appears in section~\ref{sec:robust_first}.

\paragraph{Multi-scale Estimation.} The lazy estimation technique presented above can estimate $k$ heavy frequencies of $\wh x$ up to average additive error of $O\left(\frac{\|\wh \eta\|_2}{\sqrt{k}}\right)$ using quadratic samples only if we use the vanilla tree exploration strategy which always picks the lowest weight unmarked leaf of tree $T$ and explores its children. This exploration strategy ensures that leaves get identified and consequentky \emph{marked} in ascending weight order. Thus, there will be a point where the Kraft mass occupied by marked leaves is sufficiently large (recall that marked leaves have weight bounded by $\log k+2$). 
However, as we already mentioned, the tree exploration employed in \cite{kapralov2019dimension} results in cubic sample complexity even in the exactly $k$-sparse case. 
On the other hand, our new exploration strategy (FFT backtracking) does not necessarily guarantee that the identified leaves will have large Kraft mass and bounded weight at the same time. 

To make both lazy estimation and backtracking tree exploration techniques work together and achieve near quadratic total sample complexity, we devise a multi-scale estimation scheme. Our estimation strategy is to estimate every heavy frequency not once, but multiple times, each time to a different accuracy. More precisely, let's assume we are exploring a node $v \in T$ under the assumption that $\|\wh x_{\subtree(v)}\|_0 \le b$, and this assumption is correct. For every found frequency $\bm f$, we estimate $\wh{x}(\bm f)$, to precision $\frac{\|\wh \eta\|_2}{\sqrt{b}}$ instead of $\frac{\|\wh \eta\|_2}{\sqrt{k}}$, which would be the standard thing to do. However, sticking to this error precision will not give the desired $\ell_2/\ell_2$ guarantee: for small $b$, it blows up the error by a factor of $\sqrt{\frac{k}{b}}$, and it could be that all $\bm f \in \supp(\wh x)$ are estimated in a low-budget subproblem, due to recursion. Nevertheless, we can use these coarse-grained estimates to \emph{only locate} the support of $\wh{x}$ inside a subtree, and return it to the parent subproblem, i.e. to the above recursion level. The parent subproblem will mark those recovered frequencies, ignore their values, and continue its execution normally (pick the lowest leaf, perform lazy estimation etc). At some point, when the Kraft mass occupied by the parent subproblem is large enough, those frequencies will be estimated up to \emph{higher} precision, i.e. $\frac{\|\wh \eta\|_2}{\sqrt{b/\alpha}}$. When it finishes execution, it will return those elements to the above recursion level, so on so forth. This type of argumentation can be used to glue together lazy estimation and FFT backtracking. An illustration of this idea takes place in Section~\ref{sec:robust_sec}.

\begin{figure*}[t!]
	\centering
	\scalebox{.65}{
		\begin{tikzpicture}[level/.style={sibling distance=110mm/#1,level distance = 2cm}]
			\node [arn] (z){}
			child {node [arn] (a){}edge from parent [electron]
				child {node [arn] (b){}
					child [sibling distance=30mm]{node [arn] (c){}
						child[sibling distance=15mm]{node [arn] (cl){}}
						child[sibling distance=15mm]{node [arn] (cr){}}
					}
					child [sibling distance=30mm]{node [arn] (d){}
						child[sibling distance=15mm]{node [arn] (dl){}}
						child[sibling distance=15mm]{node [arn] (dr){}}
					}
				}
				child {node [arn] (e){}
					child [sibling distance=30mm]{node [arn] (f) {}
						child[sibling distance=15mm]{node [arn] (fl) {}}
						child[sibling distance=15mm]{node [arn] (fr) {}}
					}
					child [sibling distance=30mm]{node [arn] (g){}
						child[sibling distance=15mm]{node [arn] (gl) {}}
						child[sibling distance=15mm]{node [arn] (gr) {}}
					}
				}
			}
			child { node [arn] (h){}edge from parent [electron]
				child {node [arn] (i){}
					child [sibling distance=30mm]{node [arn] (j){}
						child[sibling distance=15mm]{node [arn] (il){}}
						child[sibling distance=15mm]{node [arn] (ir){}}
					}
					child [sibling distance=30mm]{node [arn] (k) {}
						child[sibling distance=15mm]{node [arn] (kl){}}
						child[sibling distance=15mm]{node [arn] (kr){}}
					}
				}
				child {node [arn] (l){}
					child [sibling distance=30mm]{node [arn] (m){}
						child[sibling distance=15mm]{node [arn] (ml){}}
						child[sibling distance=15mm]{node [arn] (mr){}}
					}
					child [sibling distance=30mm]{node [arn] (n){} 
						child[sibling distance=15mm]{node [arn] (nl){}}
						child[sibling distance=15mm]{node [arn] (nr){}}
					}
				}
			};
			
			\node []	at (z.north)	[label=right:\Large{$(0,0) \qquad\qquad\qquad\qquad\qquad\qquad\qquad\qquad~~ \to {\text{universe }\bf [4] \times [4]}$}]	{};
			
			\node []	at (a)	[label=left:\Large{$(0,1)$}]	{};
			\node []	at (b)	[label=left:\Large{$(0,3)$}]	{};
			\node []	at (c)	[label=left:\Large{$(1,3)$}] {};
			\node []	at (cl)	[label=below:\Large{$(3,3)$}] {};
			\node []	at (cr)	[label=below:\Large{$(1,3)$}] {};
			\node []	at (d)	[label=left:\Large{$(0,3)$}] {};
			\node []	at (dl)	[label=below:\Large{$(2,3)$}] {};
			\node []	at (dr.west)	[label=below:\Large{$(0,3)$}] {};
			\node []	at (e)	[label=right:\Large{$(0,1)$}] {};
			\node []	at (f)	[label=left:\Large{$(1,1)$}] {}; 
			\node []	at (fl.east)	[label=below:\Large{$(3,1)$}] {}; 
			\node []	at (fr)	[label=below:\Large{$(1,1)$}] {}; 
			\node []	at (g)	[label=left:\Large{$(0,1)$}] {}; 
			\node []	at (gl)	[label=below:\Large{$(2,1)$}] {}; 
			\node []	at (gr.west)	[label=below:\Large{$(0,1)$}] {}; 
			\node []	at (h)	[label=right:\Large{$(0,0)\qquad\qquad~~~~~ \to {\text{universe }\bf[2] \times [4]}$}] {}; 
			\node []	at (i)	[label=left:\Large{$(0,2)$}] {}; 
			\node []	at (j)	[label=left:\Large{$(1,2)$}] {}; 
			\node []	at (il.east)	[label=below:\Large{$(3,2)$}] {}; 
			\node []	at (ir)	[label=below:\Large{$(1,2)$}] {}; 
			\node []	at (k)	[label=left:\Large{$(0,2)$}] {}; 
			\node []	at (kl)	[label=below:\Large{$(2,2)$}] {}; 
			\node []	at (kr.west)	[label=below:\Large{$(0,2)$}] {}; 		
			\node []	at (l)	[label=right:\Large{$(0,0) \to {\text{universe }\bf [1] \times [4]}$}] {}; 
			\node []	at (m)	[label=left:\Large{$(1,0)$}] {}; 
			\node []	at (ml.east)	[label=below:\Large{$(3,0)$}] {}; 
			\node []	at (mr)	[label=below:\Large{$(1,0)$}] {}; 
			\node []	at (n)	[label=left:\Large{$(0,0)$}] {}; 
			\node []	at (n)	[label=right:\Large{$~~~~~~ \to {\text{universe }\bf [2]}$}] {}; 
			\node []	at (nl)	[label=below:\Large{$(2,0)$}] {}; 
			\node []	at (nr)	[label=below:\Large{$(0,0)$}] {}; 
			\node[] at (nr)[label=right:\Large{$~ \to {\text{universe }\bf [1]}$}] {};

			\node [] at (-6,0) [label=right:\LARGE${\tfull_{16}}$]	{};
			\draw[draw=black,very  thick, ->] (-4.2,-0.1) -- (-2,-0.6);
			
		\end{tikzpicture}
	}

	\caption{An example of the FFT binary tree $\tfull_{N}$ with $n=4$ and dimension $d=2$, (thus $N=16$). The universe corresponding to the nodes at each level of the tree is shown on the right side and the labeles of each node appears next to it.} \label{Tfull-labeling}
\end{figure*}

\subsection{Explanation of the barriers faced.}
\label{sec:barriers}

\paragraph{Discussion on the limits of the explorative approach, or why the quadratic barrier is impenetrable.} On a high level, the explorative approach we take maintains a vector $\wh{\chi}$ such that $\supp(\wh{\chi}) \subseteq \supp(\wh{x})$ at all times\footnote{In fact, this is an oversimplification of our approach (as well as slightly inaccurate), but for the sake of discussion let us assume that this is the case.}. Whenever the algorithm reaches a leaf $v \in\tfull_N$ (see definitions in the Preliminaries Section), it estimates it and adds it to $\wh{\chi}$. Subsequently, it proceeds by trying to recover the residual vector $\wh{x}-\wh{\chi}$. Now, imagine that we have recovered a constant fraction, say $1/10$, of $\wh x$, and want to proceed further in order to recover the remaining part of $x$, i.e. $\wh x - \wh \chi$, which is an $\Omega(k)$-sparse vector. In order even to test whether $\wh x - \wh \chi$ is the zero vector, we need to pick a set of $\Omega(k)$ random samples, satisfying for example the Restricted Isometry Property of order $k$, from $x - \chi$. In turn, this means that we need to compute the values $\chi_{\bm t}$ for all $\bm t$ in the aforementioned collection of random samples, and subtract them from the corresponding values of $x$. Since both $\supp(\wh \chi)$ and the samples needed for \textsc{RIP} are in principle unstructured sets of size $\Omega(k)$, the computation of the relevant $\chi_{\bm t}$ is exactly the classical non-equispaced Fourier transform, for which no strongly subquadratic algorithm in available. We explain this unavailability by providing a quadratic lower bound on this task based on the well-established Orthogonal Vectors hypothesis, see Theorem~\ref{thm:lower_bound}. This also provides evidence that the quadratic time barrier is the limit of our explorative approach. Indeed, at all times we need to decide whether to explore a subtree or not by testing whether $\wh x - \wh \chi$ is the zero vector projected on that subtree. Since subtracting the effect of $\wh \chi$ from the measurements, i.e. evaluating $\chi$ on an unstructured set of samples, cannot be done in strictly subquadratic time unless OVH fails, a subquadratic algorithm for exactly $k$-sparse FFT by traversing a pruned Cooley-Tukey FFT computation tree would most likely yield a subquadratic algorithm for the Orthogonal Vectors problem.

\paragraph{Discussion on the high-SNR regime.} We shall illustrate a potential scenario where we might miss most frequencies in the head of the signal if we run our algorithm on an input signal that is not in the high-SNR regime. Note that throughout the exploration algorithm, we always maintain a set of nodes, such that the union of the frequency cones of those nodes covers the head of the signal. The frequencies which are not covered are essentially treated as \emph{noise}, and we do not isolate with respect to them. Due to the fact that the adaptive aliasing filters hash the noise in a non-uniform way, it could be that our \textsc{HeavyTest} primitive misclassifies a subtree as ``frequency-inactive'', i.e. no head element inside it, although it contains one. In such a scenario, it is natural to abandon exploration inside the subtree. This would cause the noise in the system to increate by the magnitude of the missed head element (since we shall not isolate with respect to it anymore). Subsequently, this can potentially lead to a chain reaction, leading to successively missing head elements, and successively increasing the noise in the system, ending up to not recovering anything. 
However, our \textsc{HeavyTest} primitive is strong and ensures that we never miss a heavy frequency of signals that are not in the high-SNR regime as long as we perform oversampling by a factor $k$.
 
 On the other hand, note that in order to achieve the $\ell_2/\ell_2$ guarantee on signals that are not in high-SNR regime, we need to set the threshold of \textsc{HeavyTest} to $1/k$ fraction of the tail norm as opposed to the tail norm. Hence, another conceivable bad scenario is that, with such low threshold, the tail of the signal can make some frequency-inactive cones to appear heavy, introducing false positives. This can blow up the running time of the algorithm to super-polynomial in $k$.\newline

\paragraph{Discrepancy between the runtime of our robust algorithm and its sample complexity.} The only way we know how to perform dimension-independent estimation is via random sampling, as implemented in the \textsc{HeavyTest} routine. If we perform standard (non-lazy estimation) this would yield an additional multiplicative $k$ factor, as claimed in the first paragraph of Techniques III. Remedying this via lazy estimation shaves the multiplicative $k$ factor from the sample complexity, but does not do so in the running time. In particular, we run again into the same issue of subtracting $\wh \chi$ from the buckets (which corresponds to an unstructured set of samples), i.e. the solution of a non-equispaced Fourier transform instance. As we've proven a quadratic time lower bound for the latter problem, this indicates that this discrepancy is most likely unavoidable with this approach.

\section{Roadmap.}

The roadmap of this paper is the following. We follow an incremental approach, trying to introduce the techniques one by one, to the extent that is possible. In Section~\ref{sec:filters_old} we revise adaptive aliasing filters from~\cite{kapralov2019dimension}. In Section~\ref{sec:kraft} we give the facts related to Kraft's inequality which we are going to use throughout our algorithms. In \cref{sec:exact-k-sparse-reduction} we formally prove that the exact Sparse FFT problem can be translated and reduced to the tree exploration problem and prove our first main result, i.e., Theorem~\ref{thm:exact_quadratic}.
In Section~\ref{sec:lb} we give the conditional lower bound on non-equispaced Fourier transform. In Section~\ref{sec:filters_new}, the new structural properties of adaptive aliasing filters are inferred. In section~\ref{sec:robust_first} we introduce our first robust Sparse Fourier transform algorithm, illustrating techniques II-III and partly technique I. Lastly, in Section~\ref{sec:robust_sec} we obtain our final robust Sparse FT algorithm, which uses techniques I-IV. For that reason, the algorithm is presented last.

\section{Machinery from Previous work: Adaptive Aliasing Filters.}
\label{sec:filters_old}
In this section, we recall the class of adaptive aliasing filters that were introduced in \cite{kapralov2019dimension}. These filters form the basis of our sparse recovery algorithm. For simplicity, we begin by introducing the filters in one-dimensional setting and then show how they naturally extend to the multidimensional setting (via tensoring). 

\subsection{One-dimensional Fourier transform.}\label{sec:filters-1d}
Our algorithm extensively relies on binary partitioning the frequency domain. In $d=1$, the following definitions are the one-dimensional analogues (special cases) of the ones in \cref{sec:manipulate_FFT}. We re-iterate them here, for completeness.
The following is a re-interpretation of the \emph{splitting tree} of a set in dimension $1$.
\begin{definition}[Splitting tree] \label{def:splittree}
For every $S \subseteq [n]$, the \emph{splitting tree} $T=\tree(S, n)$ of a set $S$ is a binary tree that is the subtree of $\tfull_n$ that contains, for every $j\in [\log n]$, all nodes $v \in \tfull_n$ at level $j$ such that $\left\{ f \in S : f \equiv f_v \pmod{2^j} \right\} \neq \emptyset $. 
\end{definition}


Our Sparse FFT algorithm requires a filter $G$ that satisfies a refined isolating property due to the fact that throughout the execution of the algorithm, the identity of $\supp (\wh{x})$ is only partially known.  The following is a re-interpretation of the frequency cone of a node in dimension $1$. 
\begin{definition}[Frequency cone of a leaf of $T$]\label{def:iso-t}
Consider a subtree $T$ of $\tfull_n$, and vertex $v\in T$ which is at level $l_T(v)$ from the root, the {\em frequency cone of $v$ with respect to $T$} is defined as,
\[	\subtree_T(v):=\left\{ f_u : \text{ for every leaf } u \text{ in subtree of }\tfull_n \text{ rooted at }v \right\}.\]
\end{definition}

Note that under this definition, the frequency cone of a vertex $v$ of $T$ corresponds to the subtree rooted at $v$ when $T$ is embedded inside $\tfull_n$. Next we present the definition of an isolating filter, introduced in \cite{kapralov2019dimension}.
\begin{definition}[$(v, T)$-isolating filter] \label{def:v-t-isolating}
Consider a subtree $T$ of $\tfull_n$, and leaf $v$ of $T$, a filter $G:[n] \to \C^n$ is called \emph{$(v,T)$-isolating} if the following conditions hold:
	\begin{itemize}
		\item For all $f\in \subtree_T(v)$, we have $\wh{G}(f)=1$.
		\item For every $f'\in \bigcup_{\substack{ u \in \leaves(T) \\u \neq v }}\subtree_T(u) $, we have $\wh{G}_{v'}(f')=0$.
	\end{itemize}
\end{definition}
Note that in particular, for all signals $x \in \C^n$ with $\supp(\wh{x}) \subseteq \bigcup_{u\in \leaves(T)} \subtree_T(u)$ and $t\in[n]$,
\[ \sum_{j\in[n]} x(j) G_v(t-j) = \frac{1}{n} \sum_{f \in \subtree_T(v)}\wh{x}_f e^{2\pi i \frac{f t}{n}}. \]

The main technical construction of~\cite{kapralov2019dimension} is captured by the following Lemma.

\begin{lemma}[Filter properties, \cite{kapralov2019dimension}]\label{lem:filter-isolate}
Let $n$ be an integer power of two, $T$ a subtree of $\tfull_n$, $v$ a leaf in $T$. Let $f := f_v$ be the label of node $v$. Then the filter $G_v : [n] \rightarrow \mathbb{C}$ with Fourier Transform

\begin{equation}\label{eq:isolat-filter-1d}
\wh{G}_v(\xi) = \frac{1}{2^{w_T(v)}}\prod_{\ell \in \Anc(v,T)} \left( 1+ e^{2\pi i \frac{(\xi - f)}{2^{\ell+1}} } \right),
\end{equation}

is a $(v,T)$-isolating filter. Furthermore, 

\begin{itemize} 
\item $|\supp(G_v)| = 2^{w_T(v)}$, and the filter $G$ can be constructed in $O(2^{w_T(v)} + \log n)$ time (in the time domain).
\item Computing $\wh{G}_v(\xi)$ for $\xi \in [n]$ can be done in $O( \log n)$ time.	
\end{itemize}

\end{lemma}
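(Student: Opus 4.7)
The plan is to verify each of the four components separately---both isolation conditions, the time-domain sparsity, the construction time, and the per-point evaluation cost---by exploiting the factored form of $\wh{G}_v$ together with convolution--multiplication duality. The one key observation I will lean on repeatedly is that each factor $\tfrac{1}{2}\bigl(1 + e^{2\pi i(\xi - f)/2^{\ell+1}}\bigr)$ equals $1$ exactly when $\xi \equiv f \pmod{2^{\ell+1}}$ and equals $0$ exactly when $\xi \equiv f + 2^{\ell} \pmod{2^{\ell+1}}$.

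For the first isolation condition, I will argue that if $\xi \in \subtree_T(v)$ then $\xi$ lies in the frequency cone of $v$ in $\tfull_n$, so $\xi \equiv f \pmod{2^{l_T(v)}}$; since every $\ell \in \Anc(v, T)$ satisfies $\ell < l_T(v)$, we automatically get $\xi \equiv f \pmod{2^{\ell+1}}$, each factor evaluates to $1$, and the $1/2^{w_T(v)}$ prefactor cancels the number of factors exactly. For the second condition, given $\xi \in \subtree_T(u)$ with $u \neq v$, I will take $p$ to be the lowest common ancestor of $u$ and $v$ in $T$ and set $\ell := l_T(p)$; since $p$ branches toward both $v$ and $u$, $\ell$ lies in $\Anc(v,T)$. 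The two children of $p$ at level $\ell+1$ have labels differing by exactly $2^\ell$, so the frequency cones of the two children sit in different residue classes modulo $2^{\ell+1}$, forcing $\xi - f \equiv 2^\ell \pmod{2^{\ell+1}}$ and killing the factor at level $\ell$. Identifying the correct separating ancestor is the only mildly delicate point; everything else is routine.

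For the time-domain claims, I will recognise each factor $\tfrac12(1 + e^{2\pi i(\xi - f)/2^{\ell+1}})$ as the Fourier transform of a two-point signal $h_\ell$ supported on $\{0,\, -n/2^{\ell+1} \bmod n\}$ with values $\{1/2,\, \tfrac12 e^{-2\pi i f/2^{\ell+1}}\}$. By convolution--multiplication duality, $G_v$ is then the convolution of the $h_\ell$ over $\ell \in \Anc(v,T)$, so its support sits at the subset-sum points $\bigl\{-n \sum_{\ell \in A} 2^{-\ell-1} \bmod n : A \subseteq \Anc(v,T)\bigr\}$. Since the exponents $\ell+1$ are distinct integers in $[\log n]$, uniqueness of binary expansion guarantees that these $2^{w_T(v)}$ points are pairwise distinct modulo $n$, so $|\supp G_v|=2^{w_T(v)}$ exactly. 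To construct $G_v$ I will accumulate the convolutions one $h_\ell$ at a time; step $i$ exactly doubles the running support and costs $O(2^i)$, giving $O(2^{w_T(v)}+\log n)$ in total once the initial $O(\log n)$ walk from $v$ to the root to enumerate $\Anc(v,T)$ is included.

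For per-point evaluation in frequency domain, once $\Anc(v,T)$ is precomputed the bound is immediate: computing $\wh G_v(\xi)$ amounts to multiplying $|\Anc(v,T)| = w_T(v)\leq \log n$ explicit complex numbers. The main obstacle in the entire argument is really just the second isolation case---singling out the right branching ancestor $p$ and reading off the residue of $\xi - f$ modulo $2^{\ell+1}$ at that level---the remaining claims reduce to convolution--multiplication duality and the binary-expansion argument for non-collision of the support.
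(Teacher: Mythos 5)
Your proof is correct, and the paper itself does not prove this lemma---it cites it from \cite{kapralov2019dimension}---so your proposal supplies an argument where the paper supplies only a pointer. Your approach (residue-class analysis of the factored form for the isolation properties, and convolution--multiplication duality plus the uniqueness of binary expansion for the time-domain sparsity) is the standard one for adaptive aliasing filters and lines up with the commented-out \textsc{FilterPreProcess}/\textsc{FilterTime}/\textsc{FilterFrequency} pseudocode that survives in the source.

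Two small points worth tightening if you write this out in full. First, when you invoke the LCA $p$ of $u$ and $v$ in $T$, spell out why $p$ has two children \emph{in $T$} (and not merely in $\tfull_n$): both $u$ and $v$ are leaves of $T$, so the two branches of $p$ leading to them are edges of $T$, which puts $l_T(p) \in \Anc(v,T)$ as needed. Second, when you argue $|\supp(G_v)| = 2^{w_T(v)}$ exactly, you should note that because the $2^{w_T(v)}$ candidate support points $\bigl\{-\sum_{\ell \in A} n/2^{\ell+1} \bmod n : A \subseteq \Anc(v,T)\bigr\}$ are pairwise distinct integers in $(-n, 0]$, each point of the Minkowski sum receives a contribution from exactly one tuple of nonzero terms, so no cancellation can occur and the support size is exact rather than merely at most $2^{w_T(v)}$. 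You implicitly use this but it is the crux of the equality. Otherwise the argument is complete.
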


\subsection{$d$-dimensional Fourier transform.}
In this subsection, we present the extension of adaptive aliasing filters to higher dimensions (by tensoring). It was shown in \cite{kapralov2019dimension} that multidimensional construction of these filters is extremely efficient and incurs no loss in the dimensionality.  

\begin{definition}[Multidimensional $(v, T)$-isolating filter]\label{def:v-t-isolating-highdim}
For every subtree $T$ of $\tfull_N$ and vertex $v\in T$, a filter $G_v\in \C^{n^d}$ is called \emph{$(v,T)$-isolating} if $\wh{G}_v({\ff})=1$ for every $\ff\in \subtree_T(v)$ and $\wh{G}_v({\ff'})=0$ for every $\ff'\in \supp{(T)} \setminus \subtree_T(v)$.
	
	In particular, for every signal $x \in \C^{n^d}$ with $\supp(\wh{x}) \subseteq \supp{(T)}$ and for all $\tv\in[n]^d$,
	\[ \sum_{\jj\in[n]^d} x(\jj) G_v(\tv-\jj) = \frac{1}{N} \sum_{\ff \in \subtree_T(v)}\wh{x}_{\ff} e^{2\pi i \frac{\ff^T \tv}{n}}. \]
\end{definition}

We need the following lemma which is the main result of this section and shows that isolating filters can be constructed efficiently.
\begin{lemma}[Construction of a multidimensional isolating filter --  Lemma 4.2 of \cite{kapralov2019dimension}]\label{lem:isolate-filter-highdim}
Let $T$ of $\tfull_N$, and consider $v \in \leaves(T)$. There exists a deterministic construction of a $(v,T)$-isolating filter $G_v$ such that 
\begin{enumerate}
\item $|\supp{(G_v)}| = 2^{w_T(v)}$.
\item $G_v$ can be constructed in time $O\left(2^{w_T(v)} + \log N\right)$.
\item For any frequency $\bm\xi \in [n]^d$, $\wh G_v(\bm\xi)$, i.e. the Fourier transform of $G_v$ at frequency $\bm\xi$, can be computed in time $O(\log N)$.
\end{enumerate}
\end{lemma}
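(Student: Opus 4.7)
The plan is to reduce the $d$-dimensional construction to $d$ independent one-dimensional constructions (invoking Lemma~\ref{lem:filter-isolate}) and then tensor the resulting filters.

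First, I decompose the path from the root to $v$ in $\tfull_N$ into $d$ consecutive blocks of $\log n$ edges each. By the FFT labeling rules in Section~\ref{sec:manipulate_FFT}, the $q$-th block of $\log n$ levels peels off the $q$-th coordinate. Let $u_0 = \text{root}, u_1, \ldots, u_d = v$ be the endpoints of these blocks, and let $T_q$ be the subtree of $T$ between $u_{q-1}$ and level $q \log n$, re-labeled so that each node inherits the $q$-th coordinate of its original $d$-dimensional label. Then $T_q$ is (isomorphic to) a subtree of $\tfull_n$, and $u_q$ is a leaf of $T_q$. For each $q$, I invoke Lemma~\ref{lem:filter-isolate} on $(T_q, u_q)$ to obtain a one-dimensional $(u_q, T_q)$-isolating filter $G_q: [n] \to \C$ whose time-domain support has size $2^{w_{T_q}(u_q)}$ and whose Fourier values can be queried in $O(\log n)$ time. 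I then define $G_v:[n]^d \to \C$ as the tensor product $G_v(\jj) := \prod_{q=1}^d G_q(j_q)$, so by separability of the Fourier transform, $\wh G_v(\bm \xi) = \prod_{q=1}^d \wh G_q(\xi_q)$.

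The main work is verifying the isolating property of Definition~\ref{def:v-t-isolating-highdim}. For any $\bm \xi \in \subtree_T(v)$, the descent rules imply $\xi_q \in \subtree_{T_q}(u_q)$ for each $q$, so $\wh G_q(\xi_q)=1$ by the 1D guarantee, and the product equals $1$. Conversely, for $\bm\xi \in \subtree_T(u)$ with $u \in \leaves(T) \setminus \{v\}$, the paths from the root to $u$ and to $v$ must first diverge at some level $\ell$; the block $q$ containing $\ell$ then forces $\xi_q$ to lie in a sibling frequency cone of $T_q$ that is disjoint from $\subtree_{T_q}(u_q)$, so by the 1D isolating property $\wh G_q(\xi_q)=0$ and the product vanishes. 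This is the only nontrivial step; the remaining bookkeeping is straightforward.

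For the complexity bounds, the crucial identity is $w_T(v) = \sum_{q=1}^d w_{T_q}(u_q)$, which holds because each branching ancestor of $v$ in $T$ lies in exactly one block. Hence the support size is $\prod_q |\supp(G_q)| = 2^{\sum_q w_{T_q}(u_q)} = 2^{w_T(v)}$. For construction time, the $d$ one-dimensional preprocessing calls (preparing static data structures for each $G_q$) cost $O(d \log n) = O(\log N)$, and materializing the full tensor-product support uses $O(\sum_q 2^{w_{T_q}(u_q)} + \log N) = O(2^{w_T(v)} + \log N)$. Finally, querying $\wh G_v(\bm \xi)$ reduces to multiplying $d$ one-dimensional query results, each obtained in $O(\log n)$ time, for a total of $O(d \log n) = O(\log N)$. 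The tensor decomposition is the key obstacle only insofar as one must carefully check that the block-wise relabelings preserve both the isolating property and the additive weight decomposition; once these are established, properties (1)–(3) follow immediately.
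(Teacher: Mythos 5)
Your proposal is correct and matches the tensor-product construction used in the cited work \cite{kapralov2019dimension}, which is also the structure this paper leans on implicitly (see the inductive decomposition in the proof of Lemma~\ref{filter-robust-multidim}, which splits the $d$-dimensional filter into a $1$D factor for the first $\log n$ levels and a $(d-1)$-dimensional factor for the rest). The key ingredients you identify --- blockwise relabeling, the multiplicative isolating property via the LCA argument, and the additive weight identity $w_T(v)=\sum_q w_{T_q}(u_q)$ --- are exactly the right ones.

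One loose end: you write that the path from the root to $v$ decomposes into $d$ blocks of $\log n$ edges each, which implicitly assumes $l_T(v)=d\log n$, i.e.\ that $v$ is a leaf of $\tfull_N$. The lemma only requires $v\in\leaves(T)$, so $v$ may sit at depth $\ell< d\log n$. In that case the decomposition stops at the block $q^*$ containing $v$, and for $q>q^*$ one should take the trivial filter $G_q=\delta$ (Dirac delta in time, so $\wh G_q\equiv 1$); the paper handles precisely this boundary case with the $p_v$ case split in the proof of Lemma~\ref{filter-robust-multidim}. With that convention your weight identity, support bound, and complexity bounds all go through unchanged, so this is a bookkeeping omission rather than a flaw in the approach.
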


\section{Kraft-McMillan inequality and averaging claims.}
\label{sec:kraft}

For our needs, we are going to make use of the following standard claim from coding theory, referred to as Kraft's or Kraft-McMillan inequality. The most general version is an inequality, but in the case of binary trees (complete codes in coding theory vocabulary), it becomes an equality.
\begin{theorem}[Kraft's equality]
\label{thm:kraft_eq}
Let $T\subseteq \tfull_N$, it holds that

\[	\sum_{u \in \leaves(T)} 2^{-w_T(u)} = 1.		\]
\end{theorem}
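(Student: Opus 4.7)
The plan is to prove Kraft's equality by induction on the number of nodes of $T$. The key conceptual idea is that, although $w_T(u)$ is not the depth of $u$ in $T$ (edges where a node has a single child do not contribute to the weight), the weight function mimics the depth in the \emph{contracted} tree obtained by suppressing all single-child vertices; in that contracted tree every internal node has exactly two children, and standard Kraft's equality applies.

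Concretely, I would induct on $|V(T)|$. In the \textbf{base case}, $T$ consists of a single node, which is its own unique leaf, has no ancestors in $T$, and hence weight $0$; the sum equals $2^0 = 1$. For the \textbf{inductive step}, let $r$ be the root of $T$ and split on the branching at $r$:
\begin{itemize}
\item If $r$ has only one child $c$ in $T$, let $T'$ be the subtree of $T$ rooted at $c$. The leaf sets of $T$ and $T'$ coincide. Moreover, for any leaf $u$ of $T$, the ancestors with two children in $T$ and in $T'$ are identical, because $r$ itself is not a two-children ancestor. Hence $w_T(u) = w_{T'}(u)$ and the inductive hypothesis applied to $T'$ yields the claim.
\item If $r$ has two children $c_\lef, c_\righ$ in $T$, let $T_\lef, T_\righ$ be the corresponding subtrees. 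Then $\leaves(T) = \leaves(T_\lef) \sqcup \leaves(T_\righ)$, and for every leaf $u$ in $T_i$ we have $w_T(u) = w_{T_i}(u) + 1$, since $r$ is now a two-children ancestor contributing one extra unit to the weight. Therefore
\[ \sum_{u \in \leaves(T)} 2^{-w_T(u)} = \tfrac{1}{2}\!\!\sum_{u \in \leaves(T_\lef)} 2^{-w_{T_\lef}(u)} + \tfrac{1}{2}\!\!\sum_{u \in \leaves(T_\righ)} 2^{-w_{T_\righ}(u)} = \tfrac{1}{2} + \tfrac{1}{2} = 1, \]
where the second equality invokes the inductive hypothesis on $T_\lef$ and $T_\righ$.
\end{itemize}

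I do not anticipate a serious obstacle: the only subtlety is tracking that single-child nodes on the root-to-leaf paths never contribute to the weight, which is exactly the content of Definition of $w_T$ and is what makes the $+1$ account correct in the two-children case. The argument is essentially the classical proof of Kraft's equality for complete prefix codes, transported to the abstract subtree setting used throughout the paper.
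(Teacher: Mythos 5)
Your proof is correct. The paper does not actually supply a proof of Theorem~\ref{thm:kraft_eq} — it is presented as a standard fact from coding theory — so there is no paper argument to compare against. Your induction on $|V(T)|$ with the case split on the degree of the root is exactly the classical proof of Kraft's equality for complete prefix codes, and you correctly identify and handle the one subtlety specific to this setting: single-child nodes do not increment the weight, so the inductive step on a single-child root reduces to the child's subtree with unchanged weights, while a two-children root adds exactly one unit of weight to every leaf, producing the $\tfrac12 + \tfrac12$ accounting. One small point worth making explicit when writing this up: the induction hypothesis must be stated for subtrees rooted at an arbitrary node of $\tfull_N$ (not only at the root of $\tfull_N$), since $T'$, $T_\lef$, $T_\righ$ are rooted at descendants; the weight function $w_{T'}$ is still well-defined there and the argument goes through unchanged.
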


For a tree $T \subseteq \tfull_N$ and a set $S\subseteq \leaves(T)$, we shall refer to the \emph{Kraft mass} of $S$ with respect to $T$ as the quantity $\sum_{u \in S} 2^{-w_T(u)}$.

We shall frequently use the following straightforward Lemma, which we shall refer to as Kraft averaging. This ideas has appeared in~\cite{kapralov2019dimension}.

\begin{lemma}[Kraft averaging]
\label{lem:kraft_averaging}
Let $T\subseteq \tfull_N$, with $L$ leaves. Then there exists a $u^\ast \in \leaves(T)$ such that $w_T(u^\ast) \leq \log_2 L$.
\end{lemma}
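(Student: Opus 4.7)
The plan is to derive this directly from Kraft's equality (Theorem~\ref{thm:kraft_eq}) via a one-line averaging argument. By Theorem~\ref{thm:kraft_eq}, the leaves of $T$ satisfy
\[
\sum_{u \in \leaves(T)} 2^{-w_T(u)} = 1.
\]
Since the sum on the left has exactly $L$ terms and equals $1$, by a pigeonhole/averaging argument at least one term must be at least $1/L$; that is, there exists $u^\ast \in \leaves(T)$ with $2^{-w_T(u^\ast)} \geq 1/L$. Taking $\log_2$ of both sides and negating gives $w_T(u^\ast) \leq \log_2 L$, which is the desired bound.

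There is no real obstacle here: the lemma is essentially a restatement of ``the minimum of $L$ nonnegative numbers summing to $1$ is at most $1/L$,'' combined with the monotonicity of $x \mapsto 2^{-x}$. The only thing worth being careful about is that Theorem~\ref{thm:kraft_eq} as stated is an equality (the tree is full/binary complete along its realized leaves), so the inequality $\sum 2^{-w_T(u)} \geq 1$ that powers the averaging step is immediate and not merely the one-sided Kraft inequality one sees in coding theory.
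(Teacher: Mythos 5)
Your proof is correct and is the intended argument: the paper presents this lemma as a direct, unstated consequence of Kraft's equality (Theorem~\ref{thm:kraft_eq}), and your pigeonhole/averaging step is exactly the one-line justification. Nothing is missing.
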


The following fine-grained version of Kraft averaging is an indispensable building block of our lazy estimation technique, and constitutes one of the important departures from the approach in~\cite{kapralov2019dimension}. The reader may postpone reading it at the moment, since its first usage will be in section~\ref{sec:robust_first}. Neverthless, we decided to keep all the claims regarding Kraft's inequality in a separate section, for compactness reasons.

\begin{lemma}[Fine-grained Kraft Averaging]\label{lemma:kraft-ISCheap}
	Consider a subtree $T$ of $\tfull_N$ and a positive integer $b$ such that $|\leaves(T)| \le b$. Let $S:= \left\{ v\in \leaves(T): 2^{w_T(v)} \le 2b \right\}$, i.e. the leaves of $T$ with weight at most $\log_2(2b)$. Then there exists a subset $L \subseteq S$ such that 
\[ \frac{\max_{v\in L} 2^{w_T(v)} }{|L|} \le \frac{1}{\theta}, \]
where $\theta \le \frac{1}{4+ 2\log_2b}$.
\end{lemma}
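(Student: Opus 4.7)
The plan is to exploit Kraft's equality (\Cref{thm:kraft_eq}) in two stages: first to show that $S$ carries most of the Kraft mass of $\leaves(T)$, and then to use an averaging argument across weight classes inside $S$ to find a single weight level that is simultaneously heavily populated and not too deep.

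First, I would observe that $\leaves(T)\setminus S$ consists of leaves of weight strictly greater than $\log_2(2b)$, i.e.\ with $2^{-w_T(v)}<\frac{1}{2b}$. Since $|\leaves(T)\setminus S|\le |\leaves(T)|\le b$, the Kraft mass of $\leaves(T)\setminus S$ is strictly less than $b\cdot \frac{1}{2b}=\frac12$. Combined with $\sum_{v\in \leaves(T)}2^{-w_T(v)}=1$ from Kraft's equality, this gives
\[
\sum_{v\in S} 2^{-w_T(v)} \;>\; \frac{1}{2}.
\]

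Next, I would partition $S$ by weight: for each integer $i\in\{0,1,\dots,1+\lfloor\log_2 b\rfloor\}$ let $S_i:=\{v\in S:\, w_T(v)=i\}$. There are at most $2+\log_2 b$ such weight classes, and
\[
\sum_{i=0}^{1+\lfloor\log_2 b\rfloor} |S_i|\cdot 2^{-i} \;=\; \sum_{v\in S}2^{-w_T(v)} \;>\; \frac{1}{2}.
\]
By averaging over the $\le 2+\log_2 b$ terms on the left, there exists an index $i^\ast$ with
\[
|S_{i^\ast}|\cdot 2^{-i^\ast} \;\ge\; \frac{1}{2(2+\log_2 b)} \;=\; \frac{1}{4+2\log_2 b}.
\]

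I would then set $L:=S_{i^\ast}$. By construction all elements of $L$ have the same weight $i^\ast$, so $\max_{v\in L}2^{w_T(v)}=2^{i^\ast}$, and the previous inequality rearranges to $|L|\ge \frac{2^{i^\ast}}{4+2\log_2 b}$, i.e.
\[
\frac{\max_{v\in L}2^{w_T(v)}}{|L|} \;\le\; 4+2\log_2 b \;=\; \frac{1}{\theta}
\]
for $\theta=\frac{1}{4+2\log_2 b}$, which is the required bound. There is no real obstacle here beyond being careful with the off-by-one in the number of weight buckets; the content of the lemma is essentially just Kraft's equality plus a two-line pigeonhole, and the sparsity hypothesis $|\leaves(T)|\le b$ is only used in the first step to lower-bound the Kraft mass of $S$.
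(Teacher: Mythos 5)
Your proof is correct and follows essentially the same route the paper takes: lower-bound the Kraft mass of $S$ by $1/2$ using $|\leaves(T)|\le b$ and Kraft's equality, then partition $S$ into weight classes and apply pigeonhole over the $O(\log b)$ buckets. The only difference is that you spell out the first step (the $>\tfrac12$ Kraft-mass bound for $S$), which the paper merely asserts with "one can show"; your argument there is the intended one.
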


Informally (but somewhat imprecisely), the claim postulates that for any subtree $T$ of $\tfull_N$ with $|\leaves(T)| = k$, there exist either 1 node of weight $1$, or $2$ nodes of weight of $2$, or \ldots at least $2^j/ \log k $ nodes of weight $j$, or $\ldots k / \log k$ nodes of weight $\log k$.  We now proceed with its proof.
\begin{proof}
	First note that one can show the preconditions of claim imply that $\sum_{u \in S} 2^{-w_T(u)} \ge \frac{1}{2}$.
	For every $j = 0, 1, \dots \lceil \log_2(2b) \rceil$, let $L_j$ denote the subset of $S$ defined as $L_j:= \{ u: u\in S, w_T(u) = j\}$. We can write,
	\[ \sum_{u \in S} 2^{-w_T(u)} = \sum_{j=0}^{\lceil \log_2(2b) \rceil} \frac{|L_j|}{2^j}\]
	Therefore by the assumption of the claim, we have that there must exist a $j \in \{0, 1, \dots \lceil \log_2(2b) \rceil\}$ such that $\frac{|L_j|}{2^j} \ge \frac{1}{2\lceil \log_2(2b) \rceil}$. Because $\theta \le \frac{1}{4+2\log_2|S|}$, there must exist a set $L \subseteq S$ such that $|L| \ge \theta \cdot \max_{v\in L} 2^{w_T(v)}$.
\end{proof}

\section{Translation of Exactly $k$-sparse FFT to Tree Exploration.} \label{sec:exact-k-sparse-reduction}

This section is devoted to solving the exact Sparse FFT problem through translation and reduction of this problem to the tree exploration problem detailed in \cref{sec:tree_problem} and invoking \cref{alg:exact_sparse_simple}. 
Recall that in this problem, we try to recover the \emph{values} written on the leaves of a full binary tree $\tfull_N$ using two procedures, \textsc{ZeroTest} and \textsc{Estimate} which satisfy the properties given in Assumption~\ref{assmptn-zerotest-estimate}. 
For solving the Sparse FFT problem we let the full binary tree $\tfull_N$ be defined as per \cref{sec:manipulate_FFT} and each of its leaf values be the Fourier coefficients of the input signal $x$ associated with the frequency labels of the corresponding leaves. Given this tree construction we can implement the procedures \textsc{ZeroTest} and \textsc{EstimateFreq} in a similar fashion to \cite{kapralov2019dimension}.

\begin{algorithm}[!h]
	\caption{$\textsc{ZeroTest}(T, x, \wh{\chi}, v, s)$} \label{alg:ZeroTest}
	\begin{algorithmic}[1]
		\State $\bm f := \bm f_v$
		\State $G_v \leftarrow $ the $(v,T)$ isolating filter as in 
		\cref{lem:isolate-filter-highdim}
		\State $\bm{\textsc{RIP}}_s :=$ a set of $O(s\log^3 N)$ samples, which suffice for $s$-\textsc{RIP} condition, see Theorem~\ref{RIP-thrm}
		\State $h_{\bm f}^{\Delta} \leftarrow \sum_{\bm \xi \in [n]^d } \left(e^{2\pi \frac{\bm \xi^\top \Delta}{n}} \cdot \wh{\chi}(\bm \xi) \cdot  \wh{G}_v(\bm \xi)\right)$, for all $\Delta \in \bm{\textsc{RIP}}_s$
		\State $H_{\bm f}^{\Delta} \leftarrow  \sum_{\bm j \in [n]^d} x(\bm j)G_v(\Delta-\bm j) - h^{\Delta}_{\bm j}$, for all $\Delta \in \bm{\textsc{RIP}}_s$
		\If { $\sum_{\Delta \in \bm{\textsc{RIP}}_s } |H_{\bm f}^\Delta|^2 =0 $ }
		\State {\bf return} $\true$
		\Else
		\State {\bf return} $\false$
		\EndIf
	\end{algorithmic}
	
\end{algorithm}

\begin{algorithm}[!h]
	\caption{$\textsc{EstimateFreq}(T, x, \wh{\chi}, v)$} \label{alg:estimate_freq}
	\begin{algorithmic}[1]
		\State $f := \bm f_v$
		\State $G_v \leftarrow$ the $(v,T)$ isolating filter as in \cref{lem:isolate-filter-highdim}
		\State $h_{\bm f} \leftarrow \sum_{ \xi \in [n]^d } \left( \wh{\chi}(\bm \xi) \cdot  \wh{G}_v(\bm \xi)\right)$
		\State Return $N \cdot \sum_{\bm j \in [n]^d} x(\bm j)G_v(-\bm j) - h_{\bm j}$
	\end{algorithmic}
\end{algorithm}

For detailed proofs we refer the reader to~\cite{kapralov2019dimension}. First we present the performance guarantee of the procedure \textsc{EstimateFreq} given in \cref{alg:estimate_freq} which can be proved by the filter isolation properties given in \cref{lem:isolate-filter-highdim}.
\begin{lemma}(Estimation)\label{lem:perfect_estimation}
	For any signals $x,\wh{\chi}$, any tree $T\subseteq \tfull_N$ such that $\supp(\wh{x} - \wh{\chi}) \subseteq \supp{(T)}$, and any leaf $\ell \in T$ which is also a leaf of $\tfull_N$, the procedure $\textsc{EstimateFreq}(T, x, \wh{\chi}, \ell)$ returns $(\wh{x} - \wh{\chi} )(\bm f_\ell)$. Furthermore, the routine requires
	\begin{itemize}
		\item $O\left(2^{w_T(\ell)}\right)$ sample complexity, and
		\item $\widetilde{O}( \|\wh{\chi}\|_0 +2^{w_T(\ell)} )$ running time.
	\end{itemize}
\end{lemma}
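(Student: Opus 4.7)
\begin{proofsketch}
The plan is to argue correctness by applying Parseval/convolution-multiplication duality, then invoke the isolation property of $G_v$ to collapse the resulting sum, and finally read off the sample and runtime bounds from Lemma~\ref{lem:isolate-filter-highdim}.

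First I would rewrite the quantity $N \cdot \sum_{\bm j \in [n]^d} x(\bm j)\, G_v(-\bm j)$ in frequency domain. Since $\sum_{\bm j} x(\bm j)\, G_v(-\bm j)$ is exactly $(x \star G_v)(\bm 0)$, the inverse Fourier transform identity (or equivalently convolution-multiplication duality evaluated at time $\bm t = \bm 0$) gives
\[
  N \cdot \sum_{\bm j \in [n]^d} x(\bm j)\, G_v(-\bm j) \;=\; \sum_{\bm\xi \in [n]^d} \wh{x}(\bm\xi)\, \wh{G}_v(\bm\xi).
\]
Subtracting the quantity $h_{\bm f} = \sum_{\bm\xi} \wh{\chi}(\bm\xi)\, \wh{G}_v(\bm\xi)$ that the algorithm computes in frequency domain from $\wh{\chi}$, the returned value equals
\[
  \sum_{\bm\xi \in [n]^d} \bigl(\wh{x} - \wh{\chi}\bigr)(\bm\xi)\, \wh{G}_v(\bm\xi).
\]

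Next I would apply the $(v,T)$-isolation property (Definition~\ref{def:v-t-isolating-highdim}). Since $\ell$ is a leaf of both $T$ and $\tfull_N$, its frequency cone $\subtree_T(\ell)$ is the singleton $\{\bm f_\ell\}$. Therefore $\wh{G}_v(\bm f_\ell) = 1$ while $\wh{G}_v(\bm f') = 0$ for every other $\bm f' \in \supp(T) \setminus \{\bm f_\ell\}$. Combined with the hypothesis $\supp(\wh{x} - \wh{\chi}) \subseteq \supp(T)$, only the $\bm\xi = \bm f_\ell$ term survives in the sum above, and it yields exactly $(\wh{x} - \wh{\chi})(\bm f_\ell)$, proving correctness.

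For the resource bounds, Lemma~\ref{lem:isolate-filter-highdim} gives $|\supp(G_v)| = 2^{w_T(\ell)}$ and allows constructing $G_v$ in time $O(2^{w_T(\ell)} + \log N)$; evaluating $\sum_{\bm j} x(\bm j)\, G_v(-\bm j)$ therefore requires only $O(2^{w_T(\ell)})$ samples of $x$ and the same time up to $\poly(\log N)$ factors. For $h_{\bm f}$, one iterates over the $\|\wh{\chi}\|_0$ nonzero entries of $\wh{\chi}$ and, by the same lemma, evaluates $\wh{G}_v(\bm\xi)$ per entry in $O(\log N)$ time, contributing $\widetilde{O}(\|\wh{\chi}\|_0)$. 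Summing gives sample complexity $O(2^{w_T(\ell)})$ and running time $\widetilde{O}(\|\wh{\chi}\|_0 + 2^{w_T(\ell)})$. There is no real obstacle here: the whole argument is essentially the Parseval identity followed by plugging in the isolation property, and the bookkeeping for runtime is standard given Lemma~\ref{lem:isolate-filter-highdim}.
\end{proofsketch}
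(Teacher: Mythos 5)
Your proof is correct and follows exactly the approach the paper intends: the paper defers the detailed argument to \cite{kapralov2019dimension} but notes the lemma follows from the filter isolation properties in Lemma~\ref{lem:isolate-filter-highdim}, which is precisely what you use. Your derivation via convolution-multiplication duality at $\bm t = \bm 0$, the observation that $\subtree_T(\ell)$ is a singleton when $\ell$ is a leaf of $\tfull_N$, and the resource accounting from $|\supp(G_v)| = 2^{w_T(\ell)}$ and the $O(\log N)$-time evaluation of $\wh G_v$ are all the right pieces.
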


Next we present the performance guarantee of the procedure \textsc{ZeroTest} given in \cref{alg:ZeroTest} which was also proved in~\cite{kapralov2019dimension}.
\begin{lemma}(Testing whether a subtree is empty, see also~\cite[Lemma~7]{kapralov2019dimension})
	\label{lem:zerotest}
	For any signals $x,\wh{\chi}$, any tree $T\subseteq \tfull_N$ such that $\supp(\wh{x} - \wh{\chi}) \subseteq \supp{(T)}$, and any leaf $v \in T$, if $ \|\left(\wh{x}- \wh{\chi}\right)_{\subtree_T(v)}\|_0 \leq s$, then $\textsc{ZeroTest}(T, x, \wh{\chi}, v, s)$ determines correctly whether $\wh{x}_{\subtree_T(v)} = \wh{\chi}_{\subtree_T(v)}$ or not. Iff $\wh{x}_{\subtree_T(v)} = \wh{\chi}_{\subtree_T(v)}$, the primitive returns $\true$.
	Furthermore, the routine requires
	\begin{itemize}
		\item $O(2^{w_T(v)} \cdot |\textsc{RIP}_s|)$ sample complexity, and
		\item $\widetilde{O}\left(\|\wh{\chi}\|_0 \cdot |\textsc{RIP}_s| + 2^{w_T(v)}\cdot |\textsc{RIP}_s|\right)$ running time~\footnote{The $2^{w_T(v)}\cdot |\textsc{RIP}_s|$ correspond to the number of accesses on $x$, and $\|\wh{\chi}\|_0 \cdot |\textsc{RIP}_s|$ corresponds to the time needed to subtract $\wh{\chi}$ from the measurements. Lemma~7 in~\cite{kapralov2019dimension} has an additional third component, which corresponds to the time needed to prepare the isolating filter $\wh{G}_v$. It is not hard to see that this third component can always be bounded by $O(\log N)$, and hence can be safely ignored.}.
	\end{itemize}
	Recall that $\textsc{RIP}_s$ is a set of samples satisfying $s$-\textsc{RIP}, see Theorem~\ref{RIP-thrm}, and $|RIP_s| = O(s \log^3 N)$.
\end{lemma}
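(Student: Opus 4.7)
The plan is to reduce the correctness of $\textsc{ZeroTest}$ to an application of the convolution-multiplication duality followed by Theorem~\ref{RIP-thrm} on the $s$-sparse residual, and the runtime to a direct accounting of the operations in Algorithm~\ref{alg:ZeroTest}.

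First, I would observe that the quantity $\sum_{\bm j \in [n]^d} x(\bm j) G_v(\Delta - \bm j)$ is by definition $(x \star G_v)(\Delta)$, and similarly $h_{\bm f}^\Delta$ is (up to the standard $1/N$ normalization in the inverse DFT) exactly $(\chi \star G_v)(\Delta)$, since by the convolution theorem $(\chi \star G_v)(\Delta)=\frac{1}{N}\sum_{\bm\xi\in[n]^d}\wh\chi(\bm\xi)\wh G_v(\bm\xi)e^{2\pi i\bm\xi^\top\Delta/n}$. Therefore, up to a global constant, $H_{\bm f}^\Delta$ equals $((x-\chi)\star G_v)(\Delta)=\frac{1}{N}\sum_{\bm\xi}(\wh x-\wh\chi)(\bm\xi)\wh G_v(\bm\xi)e^{2\pi i\bm\xi^\top\Delta/n}$. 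Now I would invoke the two defining properties of the $(v,T)$-isolating filter from Definition~\ref{def:v-t-isolating-highdim} together with the hypothesis $\supp(\wh x-\wh\chi)\subseteq\supp(T)$: outside $\subtree_T(v)$ but inside $\supp(T)$ we have $\wh G_v\equiv 0$, and on $\subtree_T(v)$ we have $\wh G_v\equiv 1$. Hence $H_{\bm f}^\Delta$ is proportional to $\sum_{\bm\xi\in\subtree_T(v)}(\wh x-\wh\chi)(\bm\xi)\,e^{2\pi i\bm\xi^\top\Delta/n}$, i.e., it is the value at $\Delta$ of the inverse DFT of the vector $(\wh x-\wh\chi)_{\subtree_T(v)}$.

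Next I would use the sparsity assumption $\|(\wh x-\wh\chi)_{\subtree_T(v)}\|_0\leq s$ to apply the RIP guarantee of Theorem~\ref{RIP-thrm}: the measurements $\{H_{\bm f}^\Delta\}_{\Delta\in \textsc{RIP}_s}$ satisfy $\sum_{\Delta\in\textsc{RIP}_s}|H_{\bm f}^\Delta|^2\in[(1-\epsilon),(1+\epsilon)]\cdot C\cdot\|(\wh x-\wh\chi)_{\subtree_T(v)}\|_2^2$ for an absolute constant $C>0$. In particular the sum of squares vanishes if and only if $(\wh x-\wh\chi)_{\subtree_T(v)}=0$, i.e., $\wh x_{\subtree_T(v)}=\wh\chi_{\subtree_T(v)}$. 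This establishes the iff correctness claim, with the algorithm returning $\true$ precisely in the equal case.

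For the complexity bounds, I would account for the cost of the two families of quantities. Each $H_{\bm f}^\Delta$ requires evaluating the convolution of $x$ with a filter $G_v$ whose support has size $|\supp(G_v)|=2^{w_T(v)}$ (Lemma~\ref{lem:isolate-filter-highdim}); doing this for every $\Delta\in\textsc{RIP}_s$ yields $O(2^{w_T(v)}\cdot|\textsc{RIP}_s|)$ samples of $x$ and the same number of arithmetic operations. Each $h_{\bm f}^\Delta$ is a sum over $\supp(\wh\chi)$, where for every nonzero $\bm\xi\in\supp(\wh\chi)$ the value $\wh G_v(\bm\xi)$ can be computed in $\wt O(1)$ time by Lemma~\ref{lem:isolate-filter-highdim}, so the total cost is $\wt O(\|\wh\chi\|_0\cdot|\textsc{RIP}_s|)$. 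Finally, constructing $G_v$ itself costs $O(2^{w_T(v)}+\log N)$, which is absorbed in the stated bounds. The dominant obstacle in the whole proof is the identification step where one must verify that the convolution with $G_v$ exactly realizes the projection onto $\subtree_T(v)$; once this is done, everything reduces to a black-box application of the RIP theorem and a straightforward operation count.
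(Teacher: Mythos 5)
Your proof is correct and reconstructs exactly what the paper (which defers to Lemma 7 of \cite{kapralov2019dimension}) intends: the identification $H_{\bm f}^\Delta \propto ((x-\chi)\star G_v)(\Delta)$, projection onto $\subtree_T(v)$ via the isolating-filter properties under the hypothesis $\supp(\wh x - \wh\chi)\subseteq\supp(T)$, then Theorem~\ref{RIP-thrm} on the $s$-sparse residual for the iff, with the operation count matching the stated sample and time bounds. Your parenthetical remark about the $1/N$ normalization is apt --- Algorithm~\ref{alg:ZeroTest} as written omits the $N\cdot$ factor that appears explicitly in line of \textsc{HeavyTest}, so your reading of the algorithm is the intended one.
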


Given these primitives we can reduce the Sparse FFT problem to the tree exploration problem. More specifically, for any given signal $x \in \C^{n^d}$ we let $\tfull_N$ be a binary tree wth $N = n^d$ leaves such that the values of each leaf is the Fourier coefficient of the signal $x$ at the frequency which corresponds to the lable of that leaf. The only catch here is that the tree exploration algorithm we developed in \cref{sec:tree_problem} relies on functions $\ZeroTest$ and $\Estimate$ which satisfy the properties given in Assumption~\ref{assmptn-zerotest-estimate}, i.e., take as inputs $\found$ and $\excl$ but the primitives in \cref{alg:ZeroTest} and \cref{alg:estimate_freq} operate on a tree $T$ and a signal $\wh{\chi} \in \C^{n^d}$. 
We show that in fact any $\found$ and $\excl$ can be very efficiently translated to a tree $T$ and a signal $\wh{\chi}$. Using this translation, we can just invoke \cref{alg:exact_sparse_simple} to solve the exact $k$-Sparse FFT problem in almost quadratic time and thus, prove Theorem~\ref{thm:exact_quadratic}.

\begin{theorem}[Theorem~\ref{thm:exact_quadratic}, restated]
	\label{thm:exact_sparseft}
	The sparse Fourier transform problem with an exactly $k$-(Fourier sparse) signal $x : [n]^d \to \C$, i.e., $\|\wh x\|_0 \le k$ can be solved in 
	\[m = \widetilde O\left( k^2 \cdot 2^{8\sqrt{ \log k \cdot \log \log N}} \right)\] time, deterministically.
\end{theorem}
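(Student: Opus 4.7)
The plan is to reduce the exact $k$-Sparse FFT problem to the abstract tree exploration problem of Section~\ref{sec:tree_problem} and then invoke Theorems~\ref{thm:exact_sparse_corr} and~\ref{thm:exact_sparse_time} on $\textsc{ExactSparseRecovery}$. The reduction proceeds by identifying each leaf $\ell$ of $\tfull_N$ with the frequency $\bm f_\ell \in [n]^d$ and declaring its value to be $\wh x_{\bm f_\ell}$. Under this identification, the assumption $\|\wh x\|_0 \le k$ translates exactly into $|\hleaves(\mathrm{root})| \le k$, so recovering $\wh x$ is equivalent to learning the values of the (at most $k$) heavy leaves of $\tfull_N$.

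The main technical step is to realize the abstract oracles $\ZeroTest$ and $\Estimate$, which take as input the sets $\found$ and $\excl$, in terms of the Fourier-domain primitives $\textsc{ZeroTest}$ (Algorithm~\ref{alg:ZeroTest}) and $\textsc{EstimateFreq}$ (Algorithm~\ref{alg:estimate_freq}), which take as input a tree $T \subseteq \tfull_N$ and a candidate vector $\wh\chi \in \C^{n^d}$. The natural dictionary is: set $\wh\chi$ to be supported on the frequencies $\{\bm f_\ell : \ell \in \mathrm{keys}(\found)\}$ with $\wh\chi_{\bm f_\ell} = \found(\ell)$, and set $T := T(\excl \cup \{v\})$, where $v$ is the vertex on which the abstract oracle is queried. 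The isolation condition $\leaves(v)\cap \leaves(\excl) = \emptyset$ guarantees that $v$ is a leaf of $T$, while the remainder of the isolation condition implies $\supp(\wh x - \wh\chi) \subseteq \supp(T) = \bm f_{\leaves(v)} \cup \bm f_{\leaves(\excl)}$, which is precisely the precondition required by Lemmas~\ref{lem:perfect_estimation} and~\ref{lem:zerotest}. By the definition of the weight function, $w_T(v) = w_\excl(v)$, so the bounds from those lemmas collapse to $\widetilde O(2^{w_\excl(v)} \cdot b + |\found| \cdot b)$ for $\ZeroTest$ and $\widetilde O(2^{w_\excl(v)} + |\found|)$ for $\Estimate$, matching Assumption~\ref{assmptn-zerotest-estimate} exactly.

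With this correspondence in place, we simply invoke $\textsc{ExactSparseRecovery}(\emptyset, \emptyset, \mathrm{root}, k, k)$. Theorem~\ref{thm:exact_sparse_corr} yields correct estimates for every leaf in $\hleaves(\mathrm{root})$, which translates back to the correct values $\wh x_{\bm f_\ell}$; since unrecovered leaves correspond to zero Fourier coefficients, this gives all of $\wh x$. Theorem~\ref{thm:exact_sparse_time} then bounds the total runtime by $\widetilde O(k^2 \cdot 2^{8\sqrt{\log k \log \log N}})$, yielding the claim.

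The main obstacle I anticipate is not a correctness issue but a bookkeeping one: naively reconstructing $T(\excl \cup \{v\})$ and $\wh\chi$ from scratch at every call could blow up the runtime. However, across the recursion in \cref{alg:exact_sparse_simple} the sets $\excl$ and $\found$ change by a bounded number of insertions/deletions per step, and the isolating filter of Lemma~\ref{lem:isolate-filter-highdim} depends only on the root-to-$v$ path in $T$ (of length at most $w_\excl(v) + O(\log N)$). Maintaining these structures incrementally therefore fits within the budget already charged by Assumption~\ref{assmptn-zerotest-estimate}, exactly as observed in the last paragraph of the proof of Theorem~\ref{thm:exact_sparse_time}, and no additional slack in the final runtime is needed.
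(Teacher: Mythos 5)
Your proposal is correct and follows essentially the same route as the paper's proof of Theorem~\ref{thm:exact_sparseft}: identify leaf values with Fourier coefficients, translate $(\found,\excl)$ into $(\wh\chi,T)$, check that isolation of $v$ by $\found$ and $\excl$ yields the support-containment preconditions of Lemmas~\ref{lem:perfect_estimation} and~\ref{lem:zerotest}, and then invoke Theorems~\ref{thm:exact_sparse_corr} and~\ref{thm:exact_sparse_time}. The only noteworthy divergence is cosmetic: you set $T:=T(\excl\cup\{v\})$ whereas the paper builds the smaller tree consisting of the root-to-$v$ path together with the off-path children whose subtrees meet $\excl$. The two choices give the same branching ancestors of $v$, hence the same filter and the same weight $w_T(v)=w_\excl(v)$ (your choice makes this an identity by definition rather than an observation, which is slightly cleaner); just be aware that your stated equality $\supp(T)=\bm f_{\leaves(v)}\cup\bm f_{\leaves(\excl)}$ requires that $\excl\cup\{v\}$ be an antichain in $\tfull_N$, which does hold for every call \cref{alg:exact_sparse_simple} actually makes, but is worth noting since for a general $\excl$ containing ancestor--descendant pairs only the containment $\supp(T)\subseteq\bm f_{\leaves(v)}\cup\bm f_{\leaves(\excl)}$ survives, which would go the wrong way.
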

\begin{proof}
	We prove the theorem by defining an appropriate binary tree $\tfull_N$ for any given $k$-Sparse signal $x \in \C^{n^d}$ and then invoking \cref{alg:exact_sparse_simple} on $\tfull_N$ and then transforming the output to get the Fourier transform $\wh{x}$. The first part is straightforward because for any given signal $x \in \C^{n^d}$ we let $\tfull_N$ be a binary tree wth $N = n^d$ leaves such that the values of each leaf $\ell \in \leaves(\tfull_N)$ equals $\wh{x}(\ff_v)$. Because, $\|\wh{x}\|_0 \le k$, we can readily see that $|\hleaves(\tfull_N)| \le k$. 
	
	Next we have to show how to invoke \cref{alg:exact_sparse_simple} to learn this tree efficiently. As was assumed in \cref{sec:tree_problem}, for this algorithm to operate correctly it needs to have access to two primitives $\ZeroTest$ and $\Estimate$ which satisfy the properties given in Assumption~\ref{assmptn-zerotest-estimate}. We show that the primitives given in \cref{alg:ZeroTest} and \cref{alg:estimate_freq} can be modified to satisfy the conditions of Assumption~\ref{assmptn-zerotest-estimate}.

	According to Assumption~\ref{assmptn-zerotest-estimate}, these primitives take as input $\found$ and $\excl$, however, \cref{alg:ZeroTest} and \cref{alg:estimate_freq} take as input a tree $T$ and signals $x, \wh{\chi}$. The signal $x$ is just the input signal in time domain and we can feed it to these procedures without any modifications. The signal $\wh{\chi}$ is going to be constructed from $\found$ very efficiently in time $O(|\found|)$ as follows,
	\[ \wh{\chi}_{\ff} := \begin{cases}
	\found(\ell) & \text{if } \ff = \ff_\ell \text{ for some leaf }\ell \in \leaves(\tfull_N)\\
	0 & \text{otherwise}
	\end{cases}. \]
	We can also construct the tree $T$ from $\excl$ efficiently as follows. We consider the path $p$ in $\tfull_N$ from node $v$ to the root. First we start with $T = p$. Then we iterate over every node $u \in \tfull_N$ which is a child of a node that belongs to the path $p$ we check whether $\leaves(u) \cap \excl \neq \emptyset$ and if so we will add $u$ to tree $T$. Thus, this tree can be constructed in time $\wt{O}(|\excl|)$.
	
	Now we show that with the above translation of $\found$ and $\excl$ to $\wh{\chi}$ and tree $T$, \cref{alg:ZeroTest} and \cref{alg:estimate_freq} satisfy the conditions of Assumption~\ref{assmptn-zerotest-estimate}. First note that the tree $T$ that was constructed above is a subtree of $T(\excl)$ therefore,
	\[ \supp(T(\excl)) \subseteq \supp(T). \]
	Using the above inequality and the way we defined the signal $\wh{\chi}$ we have that, if a node $v$ is isolated by $\found$ and $\excl$ as per \cref{def:isolated-found-excl} we have the following,
	\[ \supp(\wh{x} - \wh{\chi}) = \hleaves(\tfull_N) \setminus \leaves(\found) \subseteq \supp(T(\excl)) \subseteq \supp(T). \]
	Furthermore, under the assumption that node $v$ is isolated by $\found$ and $\excl$ we have
	\[ \|\left(\wh{x}- \wh{\chi}\right)_{\subtree_T(v)}\|_0 \le | \hleaves(v) |. \]
	Thus, using the above two inequalities we can invoke \cref{lem:zerotest} to conclude that $\ZeroTest(T, x, \wh{\chi}, v, b)$ given in \cref{alg:ZeroTest} satisfies the conditions of the first part of Assumption~\ref{assmptn-zerotest-estimate}. Note that the way we constructed the tree $T$ implies that $w_\excl(v) = w_T(v)$, and also from the construction of $\wh{\chi}$ one can easily see $|\found| = \| \wh{\chi}\|_0$. Therefore, the runtime of $\ZeroTest(T, x, \wh{\chi}, v, b)$ matches the desired runtime in Assumption~\ref{assmptn-zerotest-estimate}.
	
	The above argument also implies that for a leaf $\ell \in \tfull_N$ if $\ell$ is isolated by $\found$ and $\excl$ then $T, x, \wh{\chi}, \ell$ satisfy the preconditions of \cref{lem:perfect_estimation}, thus $\textsc{EstimateFreq}(T, x, \wh{\chi}, \ell)$ given in \cref{alg:estimate_freq} satisfies the conditions of the second part of Assumption~\ref{assmptn-zerotest-estimate}. Also the runtime of this procedure matches the desired runtime in Assumption~\ref{assmptn-zerotest-estimate}.
	
	Therefore, \Cref{alg:exact_sparse_simple} is applicable with our proposed translations, and by Theorem~\ref{thm:exact_sparse_corr} and Theorem~\ref{thm:exact_sparse_time}, this procedure finds $\wh{x}$ perfectly and outputs correct estimates of the frequencies in time $\widetilde O\left( k^2 \cdot 2^{8\sqrt{ \log k \cdot \log \log N}} \right)$.
\end{proof}


\section{Lower Bound on Non-Equispaced Fourier Transform.}
\label{sec:lb}

The main result of this section is the following theorem.


\begin{theorem}(Detailed version of Theorem~\ref{thm:lower_bound})\label{thm:detailed_lb}
For every $c>0$ larger than an absolute constant and every $\delta>0$ there exists $c'>0$ and $\delta'>0$ such that if for all $\e\in (0, 1/2)$, for all $N$ a power of two and all $k\leq 2^{c' (\log N)^{1/3}}$ there exists an algorithm that solves the $1$-dimensional non-equispaced Fourier Transform problem on universe size $N$, sparsity $k$ in time $k^{2-\delta'} \poly(\log(N/\epsilon))$, then there exists an algorithm which solves $\textsc{OV}_{k,d}$ with $d=c\log k$ in time $k^{2-\delta}$.
\end{theorem}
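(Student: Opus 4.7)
The plan is to reduce $\textsc{OV}_{k,d}$ with $d=c\log k$ to $d+1$ calls of the hypothetical Non-Equispaced Fourier Transform algorithm, each on a $k$-sparse $1$-dimensional signal over a universe of size $N=2^{\poly(\log k)}$ with $k$ query frequencies. I will follow the outline already given in the proof sketch of Theorem~\ref{thm:lower_bound}, filling in the embedding, the algebraic step that extracts the orthogonality count $Z_0^{(j)}$, and the parameter balance that keeps everything inside the hypothesis window.

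\textbf{Embedding and extraction of moments.} Given $A=\{a_0,\dots,a_{k-1}\}$ and $B=\{b_0,\dots,b_{k-1}\}\subseteq\{0,1\}^d$, I would fix an integer $M>d$, an integer $q\ge 1$, and a power-of-two $N$ with $M^{dq+1}\mid N$, and set
\[
t_j := \sum_{r\in[d]} a_j(r)\,M^{rq},\qquad f_j := \sum_{r\in[d]} b_j(r)\,\frac{N}{M^{rq+1}}.
\]
Let $x\in\{0,1\}^N$ be the indicator vector of $\{t_0,\dots,t_{k-1}\}$, a $k$-sparse signal. Expanding,
\[
\frac{m f_j\, t_\ell}{N}\;=\;\sum_{r,r'\in[d]} a_\ell(r)\,b_j(r')\,m\,M^{(r-r')q-1},
\]
where the $r>r'$ terms are integers (so they vanish inside $e^{-2\pi i(\cdot)}$), the $r=r'$ terms sum to $m\langle a_\ell,b_j\rangle/M$, and the $r<r'$ terms are bounded in absolute value by $m d^2/M^{q+1}$. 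Writing $Z_r^{(j)}:=|\{\ell:\langle a_\ell,b_j\rangle=r\}|$, this yields
\[
\wh x_{m f_j}\;=\;\sum_{r=0}^{d} Z_r^{(j)}\,e^{-2\pi i m r/M}\;\pm\;O\!\big(k d^3/M^{q+1}\big).
\]

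\textbf{Recovering $Z_0^{(j)}$ and answering OV.} For each fixed $j$ the $d+1$ quantities above are the image of $Z^{(j)}$ under the $(d+1)\times(d+1)$ Vandermonde matrix $V_{m,r}=e^{-2\pi i m r/M}$; the choice $M=2(d+1)$ makes $V$ a submatrix of an invertible DFT, with condition number $\poly(d)$. I would then call the hypothetical NEFT algorithm in $d+1$ batches of $k$ queries (one batch per $m\in\{0,\dots,d\}$), asking for precision $\epsilon$ small enough that the returned $\epsilon\|\wh x\|_2=\epsilon\sqrt{Nk}$ noise, inflated by $\poly(d)$ through $V^{-1}$, stays below $1/4$. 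Together with the $O(kd^3/M^{q+1})$ cross-term error, this recovers every $Z_r^{(j)}$ as the correct integer by rounding; checking whether some $Z_0^{(j)}$ is positive decides $\textsc{OV}_{k,d}$. The post-processing is one Vandermonde inversion per $j$, contributing $k\cdot\poly(d,\log k)$.

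\textbf{Parameter balance; the main obstacle.} Two constraints are in tension: (i) $M^{q+1}\gg k d^{O(1)}$, so that the cross-term error is safely below the rounding margin, and (ii) $M^{dq+1}\le N$ with $N$ constrained by the hypothesis window $k\le 2^{c'(\log N)^{1/3}}$, i.e., $\log N$ must lie at most $O((\log k)^3)$. Taking $M=2(d+1)$ and $q=\Theta(\log k/\log d)$ satisfies (i) and yields $dq\log M=O(d\log k)=O(\log^2 k)$, which sits comfortably inside that window, so I would pad $N$ up to the smallest power-of-two with $\log N=\Theta((\log k/c')^3)$. With $\log(1/\epsilon)=\poly(\log k)$, the total cost is $(d+1)\cdot k^{2-\delta'}\poly(\log(N/\epsilon))+k\cdot\poly(d,\log k)=k^{2-\delta}$ after absorbing polylogarithmic factors, contradicting OVH. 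The one genuinely delicate step is this quantitative balance between the lower bound on $M^{q+1}$ (forced by error control at sparsity $k$) and the upper bound on $M^{dq+1}$ (forced by the hypothesis on $N$); everything else is elementary algebra and a standard Vandermonde/DFT computation.
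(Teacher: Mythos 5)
Your route is genuinely different from the paper's. The paper makes a \emph{single} batch of $k$ queries $\wh{x}_{f_0},\dots,\wh{x}_{f_{k-1}}$ and extracts all $d$ power sums $V_{j,h}=\sum_\ell\langle a_\ell,b_j\rangle^h$ from the one number $\wh{x}_{f_j}$ by choosing $M = k\,d^{\Theta(d)}$ so huge that the Taylor coefficients $(-2\pi i/M)^h V_{j,h}/h!$ sit at wildly separated magnitude scales and can be peeled off iteratively (Claims~\ref{claim:fourier-taylor-series} and~\ref{cla:sdfoasdj}); only then does it pass to the Vandermonde system in $Z_r^{(j)}$. You instead keep $M=\Theta(d)$ and replace the scale-separation trick by making $d+1$ separate NEFT batches at the dilated frequencies $m f_j$, $m=0,\dots,d$, turning the recovery of each $Z^{(j)}$ into a direct $(d+1)\times(d+1)$ Fourier--Vandermonde inversion. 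This is a valid alternative decomposition, and the extra factor of $d+1=O(\log k)$ calls is harmless.

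The one substantive error is the claim that with $M=2(d+1)$ the matrix $V_{mr}=e^{-2\pi i mr/M}$, $m,r\in\{0,\dots,d\}$, has condition number $\poly(d)$. It does not: this is the top-left $(d{+}1)\times(d{+}1)$ block of the $2(d{+}1)$-point DFT, i.e.\ the nodes $e^{-2\pi i r/M}$ occupy only half the circle with angular gap $2\pi/M$, which is below the $1/(d{+}1)$ separation threshold for Vandermonde well-conditioning; by the discrete prolate spheroidal / Slepian spectrum (or Moitra's super-resolution bound, or Barnett's result on contiguous Fourier submatrices) the smallest singular value decays like $2^{-\Theta(d)}$, so $\kappa(V)=2^{\Theta(d)}$. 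Your proof is nevertheless repairable in two ways. First, the blow-up is absorbable: $d=c\log k$ gives $\kappa=k^{\Theta(c)}$, so you only need $\log(1/\epsilon)=\Theta(c\log k+\log(Nk))=O(\log N)$ and a correspondingly larger $q=\Theta(c\log k/\log d)$ to suppress the cross-term error, and the hypothesis already grants $\poly(\log(N/\epsilon))=\poly(\log N)=\poly(\log k)$ overhead. Second, and cleaner, take $M=d+1$: then $V$ is the \emph{full} $(d{+}1)$-point DFT, hence $V^*V=(d{+}1)I$ and $\kappa(V)=1$; the worry that $m f_j$ might exceed $N$ is moot because $\wh{x}$ is $N$-periodic in the frequency index, so one simply queries $m f_j \bmod N$, and $\langle a_\ell,b_j\rangle\in\{0,\dots,d\}$ is a complete residue system mod $d{+}1$ so no aliasing occurs. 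Either fix closes the gap; the parameter balance you flagged as the delicate point is indeed fine with $\log N=\Theta(\log^3 k)$ after padding, matching the paper's constraint window.
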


As also mentioned in the abstract of this paper, this answers one of the subproblems of Problem 21 from IITK Workshop on Algorithms for Data Streams, Kanpur 2006. Additionally, the following proof facilities gives also the lower bound on sparse multipoint evaluation, i.e. Theorem~\ref{thm:multipoint}.

\begin{proof}
Given an Orthogonal Vectors instance, we shall appropriately construct a non-equispaced Fourier transform instance, such that an algorithm for the non-equispaced Fourier transform with strongly subquadratic running time in $k$ implies a strongly subquadratic time algorithm for the Orthogonal Vectors problem.

  Let $A = \{a_0,\ldots,a_{k-1}\},B = \{b_0,\ldots,b_{k-1}\} \subseteq \{0,1\}^d$ be the input to an $\textsc{OV}_{k,d}$ instance with $d=c\log k$. We denote by $a_j(r)$ the $r$-th coordinate of vector $a_j \in A$. We first pick sufficiently large integers $N,M,q$ that are powers of 2 such that $M = k d^{C_1 d}$, $q = C_2 d$, and $N = M^{2dq}$, where $C_1,C_2$ are sufficiently large absolute constants.

Next, we define for $j \in [k]$:
  \begin{align*}
    t_j := \sum_{r\in[d]} a_j(r) \cdot M^{r q}, \qquad
    f_j := \sum_{r\in [d]} b_j(r) \cdot \frac{N}{M^{r q +  1}}, 
  \end{align*}
  and set $F = \{f_0,\ldots,f_{k-1}\}, T = \{t_0,\ldots,t_{k-1}\}$. Furthermore, we define vector $x \in \mathbb{C}^N$ such that $x_t =1$ if $t \in T$, and $0$ otherwise, and we pick $\epsilon = \frac{1}{N}$. Thus, to transform our initial $\textsc{OV}_{k,d}$ instance to an instance of non-equispaced Fourier transform, we show that from additive $\epsilon \|\wh{x}\|_2$-approximations of $\wh{x}_{f_0},\ldots,\wh{x}_{f_{k-1}}$ we can infer whether $(A,B)$ contains a pair of orthogonal vectors. It then follows that an algorithm for non-equispaced Fourier transform running in time $k^{2-\delta'} \poly(\log(N/\epsilon))$ would imply a strongly subquadratic time algorithm for Orthogonal Vectors.
  
  \medskip
Our first claim postulates that $\wh{x}_{f_j}$ corresponds to summing up $\exp\left( -2\pi i \cdot \frac 1M \langle a_\ell, b_j \rangle \right)$ for all $\ell \in [k]$, up to error terms in the exponent.

\begin{claim} \label{cla:intermediate__0}
For every $j \in [k]$ it holds that
	\[		\wh{x}_{f_j} = \sum_{\ell\in [k]} \exp\left( -2\pi i \cdot \left(\tfrac 1  M \langle a_\ell, b_j \rangle + \xi_{\ell,j} \right) \right),			\]
for a real number $\xi_{\ell,j}$ satisfying
	\[	|\xi_{\ell,j}| \leq {d \choose 2} M^{-q-1}.	\]
\end{claim}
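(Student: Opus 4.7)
\begin{proofsketch}
The plan is a direct computation: expand the Fourier transform using the explicit forms of $t_\ell$ and $f_j$, and separate the exponent into an integer piece (which disappears), the main piece proportional to the inner product, and a small tail that will become $\xi_{\ell,j}$.

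First, since $x$ is the indicator of $T$, by definition of the Fourier transform we have
\[
\wh{x}_{f_j} \;=\; \sum_{\ell\in[k]} \exp\!\left(-2\pi i \cdot \tfrac{f_j \, t_\ell}{N}\right).
\]
Substituting the definitions of $f_j$ and $t_\ell$ and distributing, I get
\[
\frac{f_j \, t_\ell}{N} \;=\; \sum_{r,r'\in[d]} a_\ell(r')\, b_j(r)\cdot M^{(r'-r)q - 1}.
\]
The plan is to split this double sum into three parts according to the sign of $r'-r$.

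For $r' > r$, the exponent $(r'-r)q - 1 \ge q - 1$ is nonnegative (for $q$ a large enough power of two), so each such term is an integer multiple of $1$; hence these contribute a multiple of $2\pi$ to the argument of the exponential and can be dropped. For $r' = r$, the term is $a_\ell(r)\, b_j(r)\cdot M^{-1}$, and summing over $r$ yields exactly $\frac{1}{M}\langle a_\ell, b_j\rangle$, giving the main term in the claim. For $r' < r$, the term is bounded in absolute value by $M^{(r'-r)q-1} \le M^{-q-1}$, and there are ${d\choose 2}$ such pairs; defining
\[
\xi_{\ell,j} \;:=\; \sum_{r'<r} a_\ell(r')\, b_j(r)\cdot M^{(r'-r)q-1}
\]
and using $a_\ell(r'), b_j(r)\in\{0,1\}$ gives $|\xi_{\ell,j}| \le \binom{d}{2}\, M^{-q-1}$, as required.

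There is no real obstacle here; the whole argument is a case analysis on whether the product $M^{r'q} \cdot M^{-rq-1}$ is an integer, a fractional main term, or a small error. The only thing to double-check is that $M$ and $q$ are chosen large enough (powers of two, $q \ge 1$) for the ``integer'' case to give a true integer, which is guaranteed by the choice of parameters in the theorem's setup. The rest of the reduction (extracting the inner product counts $Z_r$ via a Vandermonde inversion) will use this claim as its starting point, but that is outside the scope of this statement.
\end{proofsketch}
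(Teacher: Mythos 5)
Your proof is correct and follows essentially the same route as the paper's: write $\wh{x}_{f_j}$ as a sum over $T$, expand $f_j t_\ell / N$ into a double sum indexed by coordinate pairs, and split by the sign of the difference of the indices — the pairs giving a nonnegative power of $M$ drop out as integers, the diagonal pairs give the $\frac{1}{M}\langle a_\ell, b_j\rangle$ main term, and the $\binom{d}{2}$ remaining pairs each contribute at most $M^{-q-1}$ to the error $\xi_{\ell,j}$. The only difference is a cosmetic swap of the roles of $r$ and $r'$; your side remark about needing $q \ge 1$ (rather than "large enough") is the precise sufficient condition and is of course met by the setup.
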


\begin{proof}
Fix $j \in [k]$ and note that
\begin{align*}
&\wh{x}_{f_j} = \sum_{t \in T} \exp\left(-2\pi i \frac{f_j t }{N}\right) \\ 
&=\sum_{\ell\in[k]} \exp\left(-\frac{2\pi i }{N} \cdot  \left( \sum_{r'\in[d]} a_\ell(r) \cdot M^{r q} \right) \cdot \left(\sum_{r\in[d]} b_j(r') \cdot \frac{N}{M^{r' q +  1}} \right)\right) \\
&=\sum_{\ell\in[k]} \exp\left(-2\pi i \cdot \sum_{(r,r') \in [d] \times [d]} a_\ell(r) b_j(r') \cdot M^{(r-r')q -1} \right) \\
&=\sum_{\ell\in[k]} \prod_{(r,r')\in [d] \times [d]} \exp\left(-2\pi i \cdot a_\ell(r) b_j(r') \cdot M^{(r-r')q -1} \right)
\end{align*}

We now investigate the exponents of the complex exponentials, namely $a_\ell(r) b_j(r') \cdot M^{(r-r')q -1}$ for $\ell \in [k]$ and $(r,r') \in [d] \times [d]$. In particular, we find that:
\begin{enumerate}
\item For any pair $(r,r')$ with $r > r'$, we have $(r-r')q - 1 \ge 0$, meaning that the corresponding exponent is an integer multiple of $2\pi i$. In turn, the corresponding term in the product contributes $1$, so it can be ignored.
\item For any pair $(r,r')$ with $r < r'$ we have $(r-r')q-1 \le -q-1$. 
  For a fixed $\ell$, there are ${d \choose 2}$ such products, and hence their total contribution to the exponent of the $\ell$-th summand is at most ${d \choose 2} M^{-q-1}$ (in absolute value). 
\item The pairs $(r,r')$ with $r=r'$ contribute to the exponent of the $\ell$-th summand the term $-2\pi i \cdot M^{-1} \sum_{r \in [d]} a_\ell(r) b_j(r) = -2\pi i \cdot M^{-1} \langle a_\ell, b_j \rangle$. 
\end{enumerate}
Putting everything together we arrive at the proof of the claim.
\end{proof}

  In the remainder of this proof we write
  \[ V_{j,h} := \sum_{\ell \in [k]} \langle a_\ell, b_j \rangle^h. \]

  Next, we perform a series expansion and error analysis on the exponential function to obtain:

\begin{claim}\label{claim:fourier-taylor-series}
For every $j \in [k]$ it holds that
	\[		\wh{x}_{f_j} = \xi'_j + \sum_{h \ge 0} \big( -\tfrac{2\pi i}M \big)^h \tfrac {1}{h!} \cdot V_{j,h}, \]
for a complex number $\xi'_j$ satisfying
	\[	|\xi'_j| \leq M^{-q}. \]
\end{claim}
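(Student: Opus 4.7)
My plan is to derive the series expansion directly from Claim~\ref{cla:intermediate__0} by (i) peeling off the small perturbation $\xi_{\ell,j}$ inside the exponent, and (ii) Taylor expanding the remaining clean exponential whose exponent depends linearly on $\langle a_\ell, b_j\rangle$. First, I would factor
\[
\exp\!\left(-2\pi i\left(\tfrac{1}{M}\langle a_\ell,b_j\rangle + \xi_{\ell,j}\right)\right)
= \exp\!\left(-\tfrac{2\pi i}{M}\langle a_\ell,b_j\rangle\right)\cdot \exp(-2\pi i\,\xi_{\ell,j}).
\]
Using $|e^{iy}-1|\le |y|$ together with Claim~\ref{cla:intermediate__0}'s bound $|\xi_{\ell,j}|\le \binom{d}{2}M^{-q-1}$, the second factor is $1+O(d^2 M^{-q-1})$, and each clean factor has modulus $1$. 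Summing over $\ell\in[k]$ the accumulated error has magnitude at most $k\cdot 2\pi\binom{d}{2}M^{-q-1}\le M^{-q}$ once I plug in $M=k d^{C_1 d}$ with $C_1$ a sufficiently large absolute constant (this is where the freedom in choosing $M$ pays off).

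Next, I would Taylor expand the clean exponential. Since $\langle a_\ell,b_j\rangle\in\{0,1,\ldots,d\}$ is bounded and $M$ is huge, the series
\[
\exp\!\left(-\tfrac{2\pi i}{M}\langle a_\ell,b_j\rangle\right)=\sum_{h\ge 0}\Bigl(-\tfrac{2\pi i}{M}\Bigr)^h \frac{1}{h!}\,\langle a_\ell,b_j\rangle^h
\]
converges absolutely, so I may swap the sum over $\ell$ with the sum over $h$ without any convergence issue, yielding
\[
\sum_{\ell\in[k]}\exp\!\left(-\tfrac{2\pi i}{M}\langle a_\ell,b_j\rangle\right)=\sum_{h\ge 0}\Bigl(-\tfrac{2\pi i}{M}\Bigr)^h \frac{1}{h!}\,V_{j,h},
\]
which is exactly the main term of the claim.

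Combining the two steps gives
\[
\wh{x}_{f_j}=\xi'_j+\sum_{h\ge 0}\Bigl(-\tfrac{2\pi i}{M}\Bigr)^h \frac{1}{h!}\,V_{j,h},
\]
where $\xi'_j$ collects the perturbation errors from step~(i), hence $|\xi'_j|\le k\cdot 2\pi\binom{d}{2}M^{-q-1}\le M^{-q}$ by the parameter choice.

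I do not expect a real obstacle here: the factoring is algebraic, the Taylor expansion is standard, and the only quantitative input is the parameter balance $k\binom{d}{2}/M \ll 1$ which is built into the definition of $M$. The mildest subtlety to be careful about is that the error $\xi'_j$ appears \emph{outside} the sum as a single complex number (not a real one as in Claim~\ref{cla:intermediate__0}), which is why its modulus, rather than its real/imaginary parts, is what needs bounding; this is handled cleanly by $|e^{iy}-1|\le |y|$.
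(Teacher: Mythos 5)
Your proof is correct and follows essentially the same route as the paper's: peel off the perturbation $\xi_{\ell,j}$ using the bound $|e^{iy}-1|\le|y|$, sum the errors over $\ell$ to get $|\xi'_j|\le 2\pi k\binom{d}{2}M^{-q-1}\le M^{-q}$ via the choice $M=kd^{C_1d}$, and then Taylor-expand the clean exponential and interchange the (absolutely convergent) sums. The only cosmetic difference is that you phrase the error bookkeeping via an explicit factorization of the exponential, whereas the paper writes the same bound as $\exp(-2\pi i(a+b))=\exp(-2\pi ia)+\xi'_{a,b}$; these are the same computation.
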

\begin{proof}
  Let $a,b$ be real numbers. 
  Starting from the basic fact $|\exp(-2 \pi i b) - 1| \le 2 \pi |b|$, we obtain $\exp(-2\pi i (a+b)) = \exp(-2\pi i a) + \exp(-2\pi i a) (\exp(-2\pi i b) - 1) = \exp(-2\pi i a) + \xi'_{a,b}$ with $|\xi'_{a,b}| \le 2 \pi |b|$. 
  In particular, with notation as in Claim~\ref{cla:intermediate__0}, we have
  \[ \exp\left( -2\pi i \cdot \left(\tfrac 1  M \langle a_\ell, b_j \rangle + \xi_{\ell,j} \right) \right) = \exp\left( -2\pi i \cdot \tfrac 1  M \langle a_\ell, b_j \rangle \right) + \xi'_{\ell,j}, \]
  with $|\xi'_{\ell,j}| \le 2 \pi |\xi_{\ell,j}| \le 2 \pi {d \choose 2} M^{-q-1} \le M^{-q}$. 
  
  Summing over all $\ell \in [k]$ now yields
  \[		\wh{x}_{f_j} = \sum_{\ell\in [k]} \exp\left( -2\pi i \cdot \left(\tfrac 1  M \langle a_\ell, b_j \rangle + \xi_{\ell,j} \right) \right) = \xi'_j + \sum_{\ell\in [k]} \exp\left( -2\pi i \cdot \tfrac 1  M \langle a_\ell, b_j \rangle \right),			\]
  with $|\xi'_j| \le 2 \pi k {d \choose 2} M^{-q-1}$. Using that $M = k d^{C_1 d}$ for a sufficiently large constant $C_1>0$, we obtain $|\xi'_j| \le M^{-q}$.
  
  Finally, we use the series expansion of $\exp(.)$ to obtain
  \[ \wh{x}_{f_j} = \xi'_j + \sum_{h \ge 0} \big( -\tfrac{2\pi i}M \big)^h \tfrac {1}{h!} \cdot \sum_{\ell \in [k]} \langle a_\ell, b_j \rangle^h. \]
\end{proof}

We now show that in our expression for $\wh{x}_{f_j}$ the summands $\big( -\tfrac{2\pi i}M \big)^h \tfrac {1}{h!} \cdot V_{j,h}$ lie sufficiently far apart, so that each summand can be reconstructed from an approximation of $\wh{x}_{f_j}$.

\begin{claim} \label{cla:sdfoasdj}
Let $j \in [k],\, H \in [d]$, and let $\widetilde{x}_{f_j}$ be an additive $\epsilon \|\wh{x}\|_2$ approximation of $\wh{x}_{f_j}$. Then 
	\[		\widetilde{x}_{f_j} - \sum_{h=0}^{H-1} \big( -\tfrac{2\pi i}M \big)^h \tfrac {1}{h!} \cdot V_{j,h} = \big( -\tfrac{2\pi i}M \big)^H \tfrac {1}{H!} \cdot \left( V_{j,H} + \xi''_{j,H} \right), \]
for a complex number $\xi''_{j,H}$ satisfying
	\[	|\xi''_{j,H}| < 1/3. \]
\end{claim}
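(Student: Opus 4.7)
The plan is to combine Claim~\ref{claim:fourier-taylor-series} with the approximation guarantee to express $\xi''_{j,H}$ as an explicit sum of three error contributions, and then bound each of them using the parameter choices $M = k d^{C_1 d}$, $q = C_2 d$, $N = M^{2dq}$, $\epsilon = 1/N$. Writing $\widetilde{x}_{f_j} = \wh{x}_{f_j} + \eta_j$ with $|\eta_j| \le \epsilon \|\wh{x}\|_2$ and substituting the series expansion from Claim~\ref{claim:fourier-taylor-series}, subtracting the partial sum and dividing by $(-2\pi i/M)^H/H!$ will yield the exact identity
\[
\xi''_{j,H} \;=\; \frac{H!}{(-2\pi i/M)^H}\left(\eta_j + \xi'_j + \sum_{h \ge H+1} \bigl(-\tfrac{2\pi i}{M}\bigr)^h \tfrac{1}{h!} V_{j,h}\right).
\]
So the whole task reduces to showing each of the three scaled terms on the right is much less than $1/9$ in modulus.

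Next I would bound the tail of the series. Using $|\langle a_\ell, b_j\rangle| \le d$, we get $|V_{j,h}| \le k d^h$; after multiplying by $H! M^H/(2\pi)^H$, the $h$-th term becomes at most $\frac{H!}{h!}\,k\,d^h\,(2\pi/M)^{h-H}$, whose leading ($h=H+1$) contribution is bounded by $\tfrac{2\pi k d^{H+1}}{(H+1) M} \le \tfrac{2\pi k d^{d+1}}{M}$. Using $M = k d^{C_1 d}$ this is at most $2\pi\, d^{-(C_1-1)d + 1}$, which is arbitrarily small if $C_1$ is a large enough absolute constant, and the remaining terms form a rapidly converging geometric-like series dominated by this one. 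For the term coming from $\xi'_j$, since $|\xi'_j|\le M^{-q}$ from Claim~\ref{claim:fourier-taylor-series}, the scaled bound is at most $\frac{H! M^H}{(2\pi)^H}\,M^{-q} \le (d M)^{d}\, M^{-q}$, and with $q = C_2 d$ chosen large this is $o(1)$. Finally, since $x$ is the indicator of $T$ with $|T|=k$, Parseval gives $\|\wh{x}\|_2 = \sqrt{N k}$, so $|\eta_j| \le \epsilon \sqrt{N k} = \sqrt{k/N}$; the scaled bound is $\frac{H! M^H}{(2\pi)^H}\sqrt{k/N} \le (d M)^d \sqrt{k/N}$, and because $N = M^{2dq}$ with $q = C_2 d$ this is once again vanishingly small.

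Summing the three bounds gives $|\xi''_{j,H}| < 1/3$ provided $C_1, C_2$ are chosen as sufficiently large absolute constants, which is exactly what the statement allows. The only mildly delicate step is verifying the geometric-tail bound on $\sum_{h\ge H+2}$, but the ratio between consecutive scaled terms is $\frac{d}{(h+1)}\cdot\frac{2\pi}{M} \ll 1$ as soon as $M \gg d$, so this poses no real difficulty. The main obstacle — and the only place the choice of $C_1, C_2$ matters — is balancing the three error sources against the amplification factor $H!/(2\pi/M)^H$, which can be as large as $(dM/2\pi)^d$; this is precisely why $M^q$ and $N$ are chosen to grow as large powers of $M$ relative to $H \le d$.
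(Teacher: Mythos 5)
Your proof is correct and takes essentially the same approach as the paper: decompose $\xi''_{j,H}$ into the three error contributions (approximation error $\eta_j$, the $\xi'_j$ from Claim~\ref{claim:fourier-taylor-series}, and the series tail $h > H$), then show each is negligible after scaling by $H! M^H/(2\pi)^H$, using the parameter choices to kill the amplification. The paper phrases the bounds the other way around (showing each unscaled error is at most $\tfrac{1}{10}\,(2\pi/M)^H/H!$), but the computation is identical.
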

\begin{proof}
  Note that by Parseval's identity, we have $\|\wh{x}\|_2 = \sqrt{N}\cdot \|x\|_2 = \sqrt{N \cdot k}$. Therefore, $|\widetilde{x}_{f_j} - \wh{x}_{f_j}| \le \epsilon \|\wh{x}\|_2 \le \epsilon \sqrt{N \cdot k} \le \sqrt{k/N}$ as $\epsilon = 1/N$. Since $N = M^{2dq}$ and $M \ge k$, we obtain $|\widetilde{x}_{f_j} - \wh{x}_{f_j}| \le M^{-(q-1)}$.
  
  Note that 
  \begin{align*} \bigg| \sum_{h > H} \big( -\tfrac{2\pi i}M \big)^h \tfrac {1}{h!} \cdot V_{j,h} \bigg| 
  \;&\le\; \sum_{h > H} \big( \tfrac{2\pi}M \big)^h \tfrac {1}{h!} \cdot \sum_{\ell \in [k]} \langle a_\ell, b_j \rangle^h
  \\ &\le\; \big( \tfrac{2\pi}M \big)^H \tfrac {1}{H!} \cdot \sum_{h > H} \big( \tfrac{2\pi}M \big)^{h-H} \cdot k \cdot d^h \\
  &= \; \big( \tfrac{2\pi}M \big)^H \tfrac {1}{H!} \cdot k d^H \cdot \sum_{h > H} \big( \tfrac{2\pi d}M \big)^{h-H}.
  \end{align*}
  Since $M$ is sufficiently larger than $d$, the latter sum can be bounded by $\frac{4 \pi d}M$, and hence
  \begin{align} \label{eq:ghhjsaf}
    \bigg| \sum_{h > H} \big( -\tfrac{2\pi i}M \big)^h \tfrac {1}{h!} \cdot V_{j,h} \bigg| 
  \;&\le\; \big( \tfrac{2\pi}M \big)^H \tfrac {1}{H!} \cdot \tfrac{4 \pi k d^{H+1}}M \;\le\; \frac{1}{10} \cdot \big( \tfrac{2\pi}M \big)^H \tfrac {1}{H!},
  \end{align}
  using the fact that $M = k d^{C_1 d}$ for a sufficiently large constant $C_1>0$ and $H \in [d]$.
  
  We now, using Claim~\ref{claim:fourier-taylor-series}, decompose:
  \begin{align*}
    &\widetilde{x}_{f_j} - \sum_{h=0}^{H-1} \big( -\tfrac{2\pi i}M \big)^h \tfrac {1}{h!} \cdot V_{j,h} \\
    &= \left( \widetilde{x}_{f_j} - \wh{x}_{f_j} \right) + \left( \wh{x}_{f_j} - \sum_{h=0}^{H-1} \big( -\tfrac{2\pi i}M \big)^h \tfrac {1}{h!} \cdot V_{j,h} \right)  \\
    &= \left( \widetilde{x}_{f_j} - \wh{x}_{f_j} \right) + \xi'_j + \big( -\tfrac{2\pi i}M \big)^H \tfrac {1}{H!} \cdot V_{j,H} + \sum_{h > H} \big( -\tfrac{2\pi i}M \big)^h \tfrac {1}{h!} \cdot V_{j,h}.
  \end{align*}
  Recall that $|\widetilde{x}_{f_j} - \wh{x}_{f_j}| \le M^{-(q-1)}$ and $|\xi'_j| \le M^{-q}$. We use $H \in [d]$ and our choice of $M = k d^{C_1 d}$ and $q = C_2 d$ for sufficiently large constants $C_1, C_2 > 0$ to conclude  that $M^{-(q-1)} \le \frac{1}{10} \cdot \big( \frac{2 \pi}M \big)^H \frac 1{H!}$. 
  Together with inequality (\ref{eq:ghhjsaf}), this gives
  \[		\widetilde{x}_{f_j} - \sum_{h=0}^{H-1} \big( -\tfrac{2\pi i}M \big)^h \tfrac {1}{h!} \cdot V_{j,h} = \big( -\tfrac{2\pi i}M \big)^H \tfrac {1}{H!} \cdot \left( V_{j,H} + \xi''_{j,H} \right), \]
for a complex number $\xi''_{j,H}$ with $|\xi''_{j,H}| < 1/3$.
\end{proof}

Repeatedly applying the above claim allows us to reconstruct the numbers $V_{j,0},\ldots,V_{j,d}$:
   \begin{claim}\label{claim_intermediate1}
Fix $j \in [k]$. Let $\epsilon = \frac{1}{N}$. Given an additive $\epsilon \|\wh{x}\|_2 =  \sqrt{k/N}$ approximation to $\wh{x}_{f_j}$ we can infer the exact values of 
\[V_{j,h} := \sum_{\ell\in[k]} \langle a_\ell, b_j \rangle^h,\] for any $h\in [d]$, in time $\poly(d, \log k)$.
  \end{claim}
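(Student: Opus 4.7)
\begin{proofsketch}
The proof proceeds by induction on $h$, using Claim~\ref{cla:sdfoasdj} as the core subroutine and exploiting the fact that $V_{j,h}$ is a non-negative integer (since $\langle a_\ell, b_j\rangle \in \{0,1,\ldots,d\}$ and the sum is over $\ell\in[k]$). At each step we will reduce the problem of computing $V_{j,h}$ to rounding a complex number whose distance from $V_{j,h}$ is strictly less than $1/3$, which uniquely determines the integer $V_{j,h}$.

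First, I would set up the base case $h = 0$. Applying Claim~\ref{cla:sdfoasdj} with $H = 0$ yields $\widetilde{x}_{f_j} = V_{j,0} + \xi''_{j,0}$ with $|\xi''_{j,0}| < 1/3$. Since $V_{j,0} = k$ is an integer, rounding the real part of $\widetilde{x}_{f_j}$ to the nearest integer returns $V_{j,0}$ exactly.

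For the inductive step, suppose that we have already computed $V_{j,0},V_{j,1},\ldots,V_{j,H-1}$ exactly for some $1 \le H \le d-1$. Then Claim~\ref{cla:sdfoasdj} gives
\[
  \widetilde{x}_{f_j} - \sum_{h=0}^{H-1} \left( -\tfrac{2\pi i}{M} \right)^h \tfrac{1}{h!} \cdot V_{j,h}
  \;=\; \left( -\tfrac{2\pi i}{M} \right)^H \tfrac{1}{H!} \cdot \bigl( V_{j,H} + \xi''_{j,H} \bigr),
\]
with $|\xi''_{j,H}| < 1/3$. I would compute the left-hand side, divide by $\left( -2\pi i/M \right)^H / H!$ (which is a fixed nonzero complex scalar depending only on $M$ and $H$), and round the result to the nearest integer. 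Since $V_{j,H}\in\mathbb{Z}_{\ge 0}$ and the complex perturbation $\xi''_{j,H}$ has magnitude strictly less than $1/3$, the rounded value is exactly $V_{j,H}$.

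For the running time: each of the $d$ iterations performs a constant number of arithmetic operations on complex numbers whose magnitudes are bounded by $\poly(k, M^d) = \exp(\poly(d,\log k))$ and can therefore be represented with $\poly(d,\log k)$ bits of precision (recalling $d = c\log k$ and $\log M = O(d \log d + \log k)$). Hence the total time is $\poly(d,\log k)$. The only subtlety worth highlighting is that $\xi''_{j,H}$ is complex rather than real, but its \emph{absolute value} is less than $1/3$, which is enough to make rounding the real part to the nearest integer unambiguous; this is the step most at risk of going wrong and is the main thing to verify carefully.
\end{proofsketch}
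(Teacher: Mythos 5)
Your proof is correct and follows essentially the same approach as the paper: iteratively apply Claim~\ref{cla:sdfoasdj}, divide out the known coefficient $(-2\pi i/M)^H/H!$, and round to the nearest integer using $|\xi''_{j,H}|<1/3$, with arithmetic carried out to $\poly(d,\log k)$ bits of precision. You supply slightly more detail than the paper (explicit base case, and the observation that a complex $\xi''$ with modulus below $1/3$ still permits unambiguous rounding of the real part), but the argument is the same.
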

  \begin{proof}
Suppose that we have already computed the sums $V_{j,h}$ for all $0 \le h < H$. Then we know the left hand side of Claim~\ref{cla:sdfoasdj}. Since $|\xi''_{j,H}| < 1/3$, there is a unique integer $V_{j,H} = \sum_{\ell\in[k]} \langle a_\ell, b_j \rangle^H$ that satisfies the equation in Claim~\ref{cla:sdfoasdj}. Hence, we can infer $V_{j,H}$. Therefore, we can iteratively compute $V_{j,0},V_{j,1},\ldots,V_{j,d-1}$.  

  Note that when evaluating expressions of the form $\big( -\tfrac{2\pi i}M \big)^h \tfrac {1}{h!}$, we can compute them up to precision $\epsilon$ in time $\poly(d,\log k)$, since it suffices to perform arithmetic on numbers with $\poly(d,\log k)$ digits. This yields another additive error in the same order of magnitude as in the proof of Claim~\ref{cla:sdfoasdj}. The same error analysis therefore shows that this precision is sufficient to compute the exact integers $V_{j,h}$. 
  \end{proof}
  
  The above claim postulates that we can infer the values $V_{j,h}$ for $h \in [d]$. We next show that these values allow us to determine whether there exists a pair of orthogonal vectors.

\begin{claim}\label{claim_intermediate2}
Given the values $V_h := V_{j,h}$ for all $h\in [d]$ and some fixed $j$, we can find out whether there exists an $\ell$ such that $\langle a_\ell,b_j \rangle = 0$, in time $\poly(d,\log k)$.
\end{claim}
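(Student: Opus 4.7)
The plan is to exhibit $Z_0 := |\{\ell \in [k] : \langle a_\ell, b_j\rangle = 0\}|$ as the solution of a $d \times d$ Vandermonde linear system whose right-hand side is the vector $(V_0, V_1, \ldots, V_{d-1})^\top$, and then to answer the orthogonality question by testing whether $Z_0 \ge 1$.

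The key observation is that $\langle a_\ell, b_j\rangle \in \{0, 1, \ldots, d\}$, so if we group the indices $\ell$ by the value $r$ taken by their inner product with $b_j$ and set $Z_r := |\{\ell \in [k] : \langle a_\ell, b_j\rangle = r\}|$, then
\[
V_h \;=\; \sum_{\ell \in [k]} \langle a_\ell, b_j\rangle^h \;=\; \sum_{r=0}^{d} r^h\, Z_r, \qquad h \in [d].
\]
This is almost the desired Vandermonde system, except that a priori we have $d$ equations in the $d+1$ unknowns $Z_0, \ldots, Z_d$. The main obstacle is therefore to eliminate one unknown in order to obtain a square system. I would handle this by treating the corner case $b_j = 1^d$ separately: if $b_j = 1^d$ then ``$\langle a_\ell, b_j\rangle = 0$'' is equivalent to ``$a_\ell = 0^d$'', which is decidable in $O(kd)$ time by a one-time preprocessing scan of $A$ that is independent of $j$; otherwise $|b_j|_1 \le d-1$ forces $\langle a_\ell, b_j\rangle \le d-1$ for every $\ell$, so $Z_d = 0$ and the identity above collapses to the square system $V = M Z$, where $M \in \mathbb{Z}^{d \times d}$ is the Vandermonde matrix $M_{h,r} = r^h$ on the distinct nodes $0, 1, \ldots, d-1$ and $Z = (Z_0, \ldots, Z_{d-1})^\top$.

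Since the nodes $0, 1, \ldots, d-1$ are distinct, $M$ is invertible, and I recover $Z_0$ as the first coordinate of $M^{-1} V$. The running-time check is routine and I do not expect a difficulty here: every $V_h$ is a non-negative integer bounded by $k \cdot d^d$ and hence has $O(\log k + d \log d) = \poly(d, \log k)$ bits; the inverse $M^{-1}$ has rational entries of $\poly(d)$ bit complexity (for instance, via Cramer's rule combined with the explicit Vandermonde determinant $\prod_{0 \le i < j \le d-1}(j - i)$, or simply via Gaussian elimination); and the matrix-vector product $M^{-1} V$ uses $O(d^2)$ arithmetic operations on numbers of $\poly(d, \log k)$ bits, for a total running time of $\poly(d, \log k)$. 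We then return ``there exists $\ell$ with $\langle a_\ell, b_j\rangle = 0$'' iff the computed integer $Z_0$ is at least $1$.
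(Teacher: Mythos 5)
Your proof takes the same Vandermonde-inversion route as the paper: group the indices $\ell$ by the value $r = \langle a_\ell, b_j\rangle$, write $V_h$ as a Vandermonde matrix applied to the vector $(Z_r)_r$, invert to read off $Z_0$, and decide by testing $Z_0 \ge 1$. The running-time accounting (entries of $\poly(d,\log k)$ bits, $\poly(d)$ arithmetic operations) is also the same; whether you invert via the adjugate as the paper does or via Cramer's rule / Gaussian elimination as you sketch is immaterial.

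However, you correctly caught a corner case the paper's proof elides. The inner product $\langle a_\ell, b_j\rangle$ can equal $d$ (when $a_\ell = b_j = 1^d$), so there are $d+1$ possible values $Z_0, \dots, Z_d$ but only $d$ known quantities $V_0, \dots, V_{d-1}$, and the paper's identity $V_h = \sum_{r=0}^{d-1} Z_r r^h$ fails whenever $Z_d > 0$. Your fix — noting that $\langle a_\ell, b_j\rangle \le |b_j|_1 \le d-1$ whenever $b_j \ne 1^d$, so $Z_d = 0$ except in the case $b_j = 1^d$, and that case reduces trivially to scanning $A$ for the zero vector — is clean and preserves the $\poly(d,\log k)$ bound. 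So your argument is correct, matches the paper's approach in spirit, and patches a genuine (if small) gap in the published proof.
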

\begin{proof}
This relies on the observation that we can write $V_h$ as 
\[V_h = \sum_{r=0}^{d-1} Z_r \cdot r^h\] 
for 
\[Z_r := \left|\{ \ell \in [k] \mid \langle a_\ell, b_j \rangle = r \}\right|.\]
In other words, the values $V_h$ are obtained from the values $Z_r$ by multiplication with a Vandermonde matrix. Since this $d\times d$ matrix is invertible and all elements of this matrix and $V_h$ are of value at most $k\cdot d^d$, we can infer the values $Z_r$ from the values $V_h$ in $\poly(d,\log k)$ time. Indeed, we can compute the inverse of this Vandermonde matrix multiplied by its determinant (so that the resulting matrix contains integer entries) using $\poly(d)$ operations on integers with $\poly(d,\log k)$ digits (each such operation takes $\poly(d, \log k)$ time). Multiplying the vector of $V_h$'s by this matrix yields $Z_r$'s multiplied by the determinant of the Vandermonde matrix, which can be computed and canceled using $\poly(d, \log k)$ operations by manipulating large integers with $\poly(d, \log k)$ number of digits. This yields the value $Z_0 = |\{ \ell \mid \langle a_\ell, b_j \rangle = 0 \}|$ and thus allows us to decide whether $b_j \in B$ is orthogonal to some vector in $A$. 
 \end{proof}

Using Claims~\ref{claim_intermediate1} and~\ref{claim_intermediate2} over all Fourier evaluations $\left\{\wh{x}_{f}\right\}_{f \in F}$ we can determine in time $k\cdot \poly(d, \log k)$ whether whether $(A,B)$ contains an orthogonal pair. Thus, for  $\delta\in (0, 1/2)$ an algorithm for non-equispaced Fourier transform running in time $k^{2-\delta'} \poly(\log(N/\epsilon))$ for $\e=1/N$, would imply the existence of a $k^{2-\delta'} \poly(d, \log k)$ time algorithm for $\textsc{OV}_{k,d}$, since $\log(N/\e)=2\log N=O(d^2) \log M=\poly(d, \log k)$ for any choice of constants $C_1, C_2>0$. 
For any constant $c>0$,  if dimension $d = c \log k$, this running time can be bounded by $O(k^{2 - \delta})$ as long as $\delta'\geq 2\delta$, contradicting the Orthogonal Vectors Hypothesis (Conjecture~\ref{ovc}). Finally, it remains to note that since $d=c\log k$ and 
$$
N=M^{2dq}=(k d^{C_1 d})^{2dq}=(c\log k)^{C_1C_2 c^3 \log^3 k},
$$
we have that $ 2^{ c'(\log N/\log\log N)^{1/3}} \le k \le 2^{ c''(\log N)^{1/3}} $ as long as $c'$ is sufficiently small as a function of $c, C_1, C_2$, and $c''$ is sufficiently large as required.

\end{proof}


\section{Robust analysis of adaptive aliasing filters.}
\label{sec:filters_new}

This section is devoted to our technical innovation regarding adaptive aliasing filters. This a delicate analysis of how the filters act on an arbitrary vector. Such a robustification will be useful in order to control the amount of energy a measurement receives from the elements outside of the head. The absence of the properties derived in this section constitutes the restriction that has driven the ``exactly $k$-sparse'' assumption in~\cite{kapralov2019dimension}.

\subsection{One-dimensional case.}
\label{sec:filters_onedim}	

We first develop the appropriate machinery for the one-dimensional case. Generalizing the idea to higher dimensions can be done using tensoring, as we shall show in the next subsection. We first present a standalone computation of the Gram matrix of adaptive aliasing filters corresponding to a specific tree $T \subseteq \tfull_n$.

\begin{lemma}(Gram Matrix of adaptive aliasing filters)\label{lem:gram}
Consider a tree $T \subseteq \tfull_n$, and two distinct leaves $v,v'$ of $T$. Let $G_v$ (resp. $G_{v'})$ be the $(v,T)$-isolating (resp. $(v',T)$-isolating) filter, as per \eqref{eq:isolat-filter-1d}. 
Then, 
\begin{enumerate}
\item (diagonal terms) the energy of the filter corresponding to $v$ is proportional to $2^{-w_T(v)}$. In particular,
\[	\|\wh{G}_v \|_2^2 := \sum_{ \xi \in [n]} |\wh{G}_v(\xi) |^2 = \frac{n}{2^{w_T(v)}}.	\]

\item (cross terms) the adaptive aliasing filters corresponding to $v$ and $v'$ are orthogonal, i.e. 
\[	\langle \wh G_v, \wh G_{v'} \rangle :=  \sum_{ \xi \in [n]} \wh G_v(\xi) \cdot \overline{\wh G_{v'}(\xi)} = 0.	\]
\end{enumerate}

\end{lemma}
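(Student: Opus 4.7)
\begin{proofsketch}
The plan is to tackle part (1) via Parseval's identity and part (2) by a direct Fourier-domain expansion that reduces to a product, which turns out to have a vanishing factor coming from the lowest common ancestor of $v$ and $v'$.

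For the diagonal bound (1), the approach I would take is to read the formula \eqref{eq:isolat-filter-1d} as a product of two-tap Fourier factors $\tfrac{1}{2}\bigl(1+e^{2\pi i(\xi-f_v)/2^{\ell+1}}\bigr)$, one per $\ell\in\Anc(v,T)$. Each factor inverts to a pair of Dirac deltas in time domain, each with coefficient $1/2$ and supported at shifts that are distinct multiples of $n/2^{\ell+1}$ for distinct $\ell$. Taking the convolution of these $w_T(v)$ factors yields a time-domain filter $G_v$ with exactly $2^{w_T(v)}$ nonzero coordinates, each of magnitude $1/2^{w_T(v)}$, so $\|G_v\|_2^2 = 1/2^{w_T(v)}$. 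Parseval's theorem then delivers $\|\wh{G}_v\|_2^2 = n/2^{w_T(v)}$.

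For the cross terms (2), my plan is to expand
\[
 \wh{G}_v(\xi)\,\overline{\wh{G}_{v'}(\xi)} \;=\; \frac{1}{2^{w_T(v)+w_T(v')}}\!\!\sum_{\substack{S\subseteq\Anc(v,T)\\ S'\subseteq\Anc(v',T)}}\!\!
 e^{-2\pi i(f_v A_S - f_{v'} A_{S'})}\, e^{2\pi i\,\xi(A_S - A_{S'})},
\]
where $A_S := \sum_{\ell\in S} 1/2^{\ell+1}$. Summing over $\xi\in[n]$ uses that $\sum_\xi e^{2\pi i \xi c} = n$ when $c\in\mathbb{Z}$ and $0$ otherwise; since $A_S, A_{S'}\in[0,1)$ are dyadic rationals with a unique binary expansion, the only surviving pairs are those with $A_S = A_{S'}$, which forces $S=S'\subseteq D:=\Anc(v,T)\cap\Anc(v',T)$. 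The survivors refactor into a single product:
\[
 \langle \wh G_v,\wh G_{v'}\rangle \;=\; \frac{n}{2^{w_T(v)+w_T(v')}}\prod_{\ell\in D}\Bigl(1 + e^{2\pi i(f_{v'}-f_v)/2^{\ell+1}}\Bigr).
\]

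The main obstacle, and the step that needs the combinatorial structure of $T$, is to show this product vanishes. Here I would use the lowest common ancestor $w$ of $v$ and $v'$ in $T$, sitting at some level $\ell^*$; since $v$ and $v'$ branch at $w$, the node $w$ has two children in $T$, so $\ell^*\in D$. By unwinding the labeling rules, $f_v$ and $f_{v'}$ agree in all bits at positions $<\ell^*$ (both inherit these bits from $f_w$), while they disagree at bit $\ell^*$ (determined by which child of $w$ each descends from). Consequently $(f_{v'}-f_v)\bmod 2^{\ell^*+1} = 2^{\ell^*}$, so the $\ell^*$-factor equals $1 + e^{i\pi}=0$, zeroing the whole product and proving orthogonality.
\end{proofsketch}
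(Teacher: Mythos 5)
Your proof is correct, and it departs from the paper's argument in a couple of clean ways. For the diagonal term, the paper stays entirely in Fourier domain: it expands $|\wh{G}_v(\xi)|^2$ into a double sum over disjoint subsets of $\Anc(v,T)$, observes that every cross term averages to zero over $\xi\in[n]$, and is left with $4^{-w_T(v)}\cdot 2^{w_T(v)}\cdot n$. You instead note that $G_v$ in time domain is a convolution of $w_T(v)$ two-tap filters at distinct power-of-two shifts $n/2^{\ell+1}$, so it has exactly $2^{w_T(v)}$ nonzero entries of modulus $2^{-w_T(v)}$, and then apply Parseval (in the paper's convention $\|\wh x\|_2^2 = n\|x\|_2^2$). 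This is shorter, and it reuses the support fact the paper already records in Lemma~\ref{lem:filter-isolate}. For the cross terms, both proofs first kill the $S\neq S'$ contributions by summing $\xi$ over $[n]$; the paper then shows the remaining $S=S'$ sum vanishes via a pairing argument, matching each $S\subseteq D$ not containing $\ell^*$ with $S\cup\{\ell^*\}$ and cancelling term by term. Your refactoring of $\sum_{S\subseteq D} e^{2\pi i (f_{v'}-f_v)A_S}$ back into the product $\prod_{\ell\in D}\bigl(1+e^{2\pi i(f_{v'}-f_v)/2^{\ell+1}}\bigr)$ is exactly the closed form of that pairing cancellation: the product vanishes because the LCA factor is $1+e^{\pm i\pi}=0$, which is the same sign-flip fact $(f_{v'}-f_v)\equiv 2^{\ell^*}\pmod{2^{\ell^*+1}}$ the paper uses. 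The product formulation is a bit more transparent and avoids the explicit pairing bookkeeping, but the underlying insight is identical.
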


\begin{proof}

We prove each bullet separately. Both bullets follow by symmetry considerations: cancellations that occur either by the fact that roots of unity cancel across a poset of a group, or by the sign change happening to specific complex exponentials at branching points of the tree $T$. The first one uses Kraft's equality.

\paragraph{Proof of Bullet 1.} Let $f:= f_v$ and $f': = f_{v'}$ denote the labels of $v$ and $v'$, respectively. By \eqref{eq:isolat-filter-1d}, we have  
\begin{align*}
|\wh{G}_v(\xi)|^2 &= 4^{-w_T(v)} \cdot  \prod_{ \ell \in \Anc(v,T)} \left( 1 + e^{2\pi  i \frac{\xi - f}{2^{\ell+1}}} \right) \cdot\left( 1 + e^{-2\pi  i \frac{ \xi-f}{2^{\ell+1}}} \right) \\
& =4^{-w_T(v)} \cdot  \prod_{ \ell \in \Anc(v,T)} \left(2 + e^{2\pi  i \frac{\xi - f}{2^{\ell+1}}} + e^{-2\pi  i \frac{\xi - f}{2^{\ell+1}}} \right) \\
&= 4^{-w_T(v)} \cdot \sum_{ \substack{S, T \subseteq \Anc(v,T)\\ S \cap T = \emptyset} }
 2^{|\Anc(v,T)| - |S \cup T|} \cdot e^{2\pi  i (\xi - f) \cdot \left( \sum_{ \ell \in S } \frac{1}{2^{\ell+1}} - \sum_{\ell \in T} \frac{1}{2^{\ell+1}} \right)}   \\
&= 4^{-w_T(v)} \cdot \left( 2^{ w_T(v)} + \sum_{ \substack{S, T \subseteq \Anc(v,T)\\ S \cap T = \emptyset, S \cup T \neq \emptyset }}  2^{w_T(v) - |S \cup T|} \cdot e^{2\pi  i (\xi - f) \cdot \left( \sum_{ \ell \in S } \frac{1}{2^{\ell+1}} - \sum_{\ell \in T} \frac{1}{2^{\ell+1}} \right)}  \right)
\end{align*}

Note that the expression $\mathrm{expr_{S,T}} =  \sum_{ \ell \in S } \frac{1}{2^{\ell+1}} - \sum_{\ell \in T} \frac{1}{2^{\ell+1}} $ inside the complex exponential can be $0$ if and only if $S = T$, which is precluded by the fact that $S \cap T = \emptyset, S \cup T \neq \emptyset$. Thus, this gives rise to the exponential $e^{ 2\pi i (\xi - f) \cdot \mathrm{expr_{S,T}}}$, which cancels out when summing over all $\xi$. Hence, we obtain that 
\begin{align*}
\sum_{ \xi \in [n]} |\wh{G}_v(\xi) |^2 = \sum_{ \xi \in [n] } 4^{-w_T(v)} \cdot \left( 2^{w_T(v)} + 0 \right) = \frac{n}{2^{w_T(v)}}.
\end{align*}

\paragraph{Proof of Bullet 2.} By \eqref{eq:isolat-filter-1d}, we have that 

\begin{align*}
\langle \wh{G}_v, \wh{G}_{v'} \rangle &= \sum_{\xi \in [n]} \wh{G}_v ( \xi ) \cdot \overline{\wh{G}_{v'} (\xi)} \\
&= \sum_{\xi \in [n]}\left( \frac{1}{2^{w_T(v)}} \prod_{\ell \in \Anc(v,T)} \left(1+e^{2\pi i \frac{\xi-f}{2^{\ell+1}}} \right)\right) \cdot \left(\frac{1}{2^{w_T(v')}} \prod_{\ell \in \Anc(v',T)} \left( 1+e^{-2\pi i \frac{\xi-f'}{2^{\ell+1}}} \right)\right) \\
&= 2^{-w_T(v) - w_T(v')}\cdot \sum_{\xi\in [n]} \sum_{\substack{S \subseteq \Anc(v,T)\\ S' \subseteq \Anc(v',T)}} e^{2\pi i (\xi -f ) \cdot \sum_{\ell \in S} \frac{1}{2^{\ell+1}} - 2\pi  i (\xi -f')\cdot \sum_{\ell \in S'} \frac{1}{2^{\ell+1}}} \\ 
&= 2^{-w_T(v) - w_T(v')} \cdot \sum_{\substack{S \subseteq \Anc(v,T)\\ S' \subseteq \Anc(v',T)}} \sum_{\xi \in [n]}   e^{2\pi i (\xi -f ) \cdot \sum_{\ell \in S} \frac{1}{2^{\ell+1}} - 2\pi  i (\xi -f')\cdot \sum_{\ell \in S'} \frac{1}{2^{\ell+1}}} \\
&:= 2^{-w_T(v) - w_T(v')}\cdot(A+B), 
\end{align*}
where $A$ is sum of the terms that satisfy $S \neq S'$, and $B$ is sum of terms satisfying $S=S'$. We will show that $A = B = 0 $ separately. The equality $A=0$ holds by a summation over all $\xi$ and the fact that roots of unity cancel across a poset of a subgroup, whereas the equality $B=0$ by a symmetry argument which exploits the sign change in the lowest common ancestor of $v$ and $v'$.

\paragraph{Computing $A$.} We will prove that if $S \neq S'$ then  
\[	\sum_{\xi \in [n]}  e^{2\pi i (\xi -f ) \cdot \sum_{\ell \in S} \frac{1}{2^{\ell+1}} - 2\pi  i (\xi -f')\cdot \sum_{\ell \in S'} \frac{1}{2^{\ell+1}}} = 0,	\]
which suffices to establish $A=0$. Note that
\begin{align*}
e^{2\pi i (\xi -f ) \cdot \sum_{\ell \in S} \frac{1}{2^{\ell+1}} - 2\pi  i (\xi -f')\cdot \sum_{\ell \in S'} \frac{1}{2^{\ell+1}}} = \\
e^{2\pi  i \xi \cdot (\sum_{\ell \in S} \frac{1}{2^{\ell+1}} - \sum_{\ell \in S'} \frac{1}{2^{\ell+1}} )} \cdot g,
\end{align*}
where $g = e^{2\pi i f' \cdot \sum_{\ell \in S'} \frac{1}{2^{\ell+1}} - 2\pi i f \cdot \sum_{\ell \in S} \frac{1}{2^{\ell+1}}}$ does not depend on $\xi$. Summing over all $\xi \in [n]$ and taking into account that $\sum_{\ell \in S} \frac{1}{2^{\ell+1}} - \sum_{\ell \in S'} \frac{1}{2^{\ell+1}}  \neq 0$ by the fact that $S \neq S'$, yields the desired result (the summation can also be viewed a summation of the roots of unity over $\frac{n}{2^{\mathrm{max}\{S\bigtriangleup S'\}}}$ copies of a poset of an additive subgroup of size $2^{\mathrm{max}\{S \bigtriangleup S'\}}$, where $\bigtriangleup$ denotes symmetric difference of sets).

\paragraph{Computing $B$.} This quantity contains only terms corresponding to $S = S'$. Note that in this case $S \subseteq \Anc(v,T) \cap \Anc(v',T)$, and we have
\begin{align*}
B = \sum_{ S \subseteq \Anc(v,T) \cap \Anc(v',T) } \sum_{\xi\in [n]} e^{2\pi i (f'-f) \cdot \sum_{\ell \in S} \frac{1}{2^{\ell+1}}} = \\
n \cdot \sum_{ S \subseteq \Anc(v,T) \cap \Anc(v',T) } e^{2\pi i (f'-f) \cdot \sum_{\ell \in S} \frac{1}{2^{\ell+1}}}.
\end{align*}
Let $u$ be the lowest common ancestor of $v,v'$ in tree $T$, i.e. the node on which the paths from the root to those two nodes split. Partition the powerset of $\Anc(v,T) \cap \Anc(v',T)$ to pair $\left(S, S \cup \{l_T(u)\} \right)$, where $l_T(u) \notin S$. We shall prove that
\[	e^{2\pi i(f'-f) \cdot \sum_{\ell \in S} \frac{1}{2^{\ell+1}}} + e^{2\pi i(f'-f) \cdot \sum_{\ell \in S \cup \{l_T(u)\}} \frac{1}{2^{\ell+1}}} = 0.	\]

Indeed, by definition of $u$ we have that $(f' - f) \equiv 2^{l_T(u)} \mod{2^{l_T(u)+1}}$, which in turn gives that $e^{2\pi  i (f '-f )\cdot \frac{1}{2^{l_T(u)+1}}} = e^{2 \pi  i \frac{2^{l_T(u)}}{2^{l_T(u)+1}}} = e^{ \pi i} = -1$. This gives 
\begin{align*}
e^{2\pi i(f'-f) \cdot \sum_{\ell \in S} \frac{1}{2^{\ell+1}}} + e^{2\pi i(f'-f) \cdot \sum_{\ell \in S \cup \{l_T(u)\}} \frac{1}{2^{\ell+1}}} = \\
e^{2\pi i(f'-f) \cdot \sum_{\ell \in S} \frac{1}{2^{\ell+1}}} \cdot \left( 1 + e^{2\pi  i (f'-f) \cdot \frac{1}{2^{l_T(u)+1}}}\right) = 0.
\end{align*}
Thus, we conclude that $B = 0$, which finishes the proof of this Lemma.
\end{proof}

The next lemma proves that for any tree $T$, the sum of squared values of adaptive aliasing filters corresponding to all leaves of $T$ is equal to $1$ at every frequency. The $(v,T)$-isolating filters for different leaves $v$ of $T$ can have very different behaviors and shapes in the Fourier domain, nevertheless, these filters collectively act as an isometry in the sense that the sum of their squared values is 1 everywhere in the Fourier domain.

\begin{lemma}(Total contribution of adaptive aliasing filters to one frequency)\label{filter-robust-1dim}
Consider a tree $T \subseteq \tfull_n$. For every leaf $v$ of $T$, let $G_v$ denote the $(v,T)$-isolating filter as per \eqref{eq:isolat-filter-1d}, then it holds that 
	\[	\forall \xi \in [n]: \sum_{v\in \leaves(T)} |G_v(\xi)|_2^2 =1 .	\]
\end{lemma}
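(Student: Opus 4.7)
The plan is to prove the identity by induction on the number of two-children nodes of $T$, using the elementary identity $\cos^2\theta+\sin^2\theta=1$ to collapse sibling leaves at each step. First I would rewrite $|\wh G_v(\xi)|^2$ in a convenient form: applying $|1+e^{i\theta}|^2=4\cos^2(\theta/2)$ to each factor in~\eqref{eq:isolat-filter-1d}, the prefactor $4^{-w_T(v)}$ exactly absorbs the $4^{w_T(v)}$ produced by the product, giving
\[
|\wh G_v(\xi)|^2 \;=\; \prod_{\ell\in \Anc(v,T)} \cos^2\!\bigl(\pi(\xi-f_v)/2^{\ell+1}\bigr).
\]
The reduced task is to show $\sum_{v\in \leaves(T)} \prod_{\ell\in\Anc(v,T)}\cos^2(\pi(\xi-f_v)/2^{\ell+1}) = 1$.

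The base of the induction has $T$ with no branching nodes, so $T$ is a root-to-leaf path with a unique leaf $v$; then $\Anc(v,T)=\emptyset$, the empty product is $1$, and the sum equals $1$. For the inductive step, I would pick a \emph{deepest} two-children node $u\in T$ at some level $\ell$. By the choice of $u$, its two children $u_L,u_R$ are roots of path-subtrees of $T$ terminating in leaves $v_L,v_R\in\leaves(T)$, respectively. Let $T'$ be obtained from $T$ by deleting the two paths below $u$, so that $u$ becomes a leaf of $T'$ and the number of two-children nodes drops by one. For every $v\in \leaves(T)\setminus\{v_L,v_R\}$ one has $\Anc(v,T')=\Anc(v,T)$, so the corresponding filter magnitudes coincide in $T$ and $T'$; likewise $\Anc(u,T')=\Anc(u,T)$.

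The key step is to compare $|\wh G_{v_L}(\xi)|^2$ and $|\wh G_{v_R}(\xi)|^2$ (computed in $T$) with $|\wh G_u(\xi)|^2$ (computed in $T'$). Because $v_L,v_R$ are descendants of $u_L,u_R$ in $\tfull_n$, the labelling rules give $f_{v_R}\equiv f_u\pmod{2^{\ell+1}}$ and $f_{v_L}\equiv f_u+2^{\ell}\pmod{2^{\ell+1}}$, while for every $\ell'\in \Anc(u,T)$ one has $\ell'<\ell$ and $f_{v_L}\equiv f_{v_R}\equiv f_u\pmod{2^{\ell'+1}}$. Since $\cos^2$ is $\pi$-periodic, all factors indexed by $\ell'\in\Anc(u,T)$ evaluate identically at $f_{v_L}$, $f_{v_R}$, or $f_u$, while the extra level-$\ell$ factor becomes $\cos^2(\pi(\xi-f_u)/2^{\ell+1})$ for $v_R$ and $\cos^2(\pi(\xi-f_u)/2^{\ell+1}-\pi/2)=\sin^2(\pi(\xi-f_u)/2^{\ell+1})$ for $v_L$. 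Setting $\theta := \pi(\xi-f_u)/2^{\ell+1}$ yields
\[
|\wh G_{v_L}(\xi)|^2+|\wh G_{v_R}(\xi)|^2 \;=\; |\wh G_u(\xi)|^2\cdot(\cos^2\theta+\sin^2\theta) \;=\; |\wh G_u(\xi)|^2,
\]
so the joint contribution of $\{v_L,v_R\}$ to the sum in $T$ matches the contribution of $u$ to the sum in $T'$. Invoking the inductive hypothesis on $T'$ gives a total of $1$.

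The main obstacle is the label bookkeeping: although $f_{v_L}$ and $f_{v_R}$ depend on the entire descending path (which may contain single-child turns at non-branching internal nodes of $T$ that shift the label in higher bits), only the residues modulo $2^{\ell'+1}$ for $\ell'\in \Anc(u,T)\cup\{\ell\}$ enter the relevant $\cos^2$ factors, and these residues are pinned down purely by which child of $u$ the leaf descends from. Once this congruence check is in hand, the inductive step collapses to a single invocation of $\cos^2+\sin^2=1$.
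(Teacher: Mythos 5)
Your proof is correct and takes a genuinely different route from the paper's. The paper first rewrites $|\wh G_v(\xi)|^2 = 2^{-w_T(v)}\prod_{\ell\in\Anc(v,T)}(1+\cos(2\pi(\xi-f_v)/2^{\ell+1}))$, expands the product as a sum over subsets $S\subseteq\Anc(v,T)$, and then handles the two pieces separately: the $S=\emptyset$ contribution sums to $1$ by Kraft's equality (Theorem~\ref{thm:kraft_eq}), and for each non-empty $S$ the terms are grouped by the branching node $u$ sitting at level $\max(S)$ and shown to cancel exactly, via the sign flip $\cos(2\pi(\xi-f)/2^{l_T(u)+1}) = -\cos(2\pi(\xi-f')/2^{l_T(u)+1})$ between the two sub-trees of $u$ together with a second application of Kraft's equality inside each sub-tree. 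Your argument instead writes $|\wh G_v(\xi)|^2$ as a product of $\cos^2$ factors and inducts on the number of branching nodes, collapsing at each step the two sibling leaves $v_L,v_R$ hanging below a deepest branching node $u$ into the single leaf $u$ of the contracted tree $T'$ via $\cos^2\theta+\sin^2\theta=1$. What your approach buys: it never needs Kraft's equality (a nice structural insight by itself), it avoids the $2^{w_T(v)}$-term subset expansion entirely, and the cancellation is strictly local (one pair of leaves at a time) rather than a global pairing across all subsets sharing a maximal element. The bookkeeping you flag as the main obstacle — that only the residues of $f_{v_L},f_{v_R}$ modulo $2^{\ell'+1}$ for $\ell'\in\Anc(u,T)\cup\{l_T(u)\}$ matter, and these are pinned down by which child of $u$ the leaf descends from, regardless of the single-child turns below $u$ — is handled correctly: those lower turns add multiples of $2^{l_T(u)+1}$ (and hence of $2^{\ell'+1}$ for all relevant $\ell'$) to the label, which the $\pi$-periodicity of $\cos^2$ absorbs. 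Both proofs serve equally well as the one-dimensional base case for the tensoring induction in Lemma~\ref{filter-robust-multidim}.
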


\begin{proof}
Fix $\xi \in [n]$. By \eqref{eq:isolat-filter-1d}, we have 
\begin{align*}
&\sum_{v\in \leaves(T)} |\wh{G_{v}}(\xi)|^2  = \\
&\sum_{ v\in \leaves(T)} 4^{-w_T(v)} \cdot \prod_{\ell \in \Anc(v,T)} \left|1+e^{2\pi  i (\xi - f_v)/2^{\ell+1}} \right|^2 = \\
&\sum_{ v\in \leaves(T)} 4^{-w_T(v)} \cdot  \prod_{\ell \in \Anc(v,T)} \left( 2 + e^{2\pi  i (\xi - f_v ) / 2^{\ell+1}} + e^{-2\pi  i (\xi - f_v)/2^{\ell+1}} \right) = \\
&\sum_{ v\in \leaves(T)} 2^{-w_T(v)} \cdot \prod_{\ell \in \Anc(v,T)} \left( 1 + \mathrm{cos}\left( 2\pi(\xi - f_v)/2^{\ell+1} \right) \right) = \\
&\sum_{v\in \leaves(T)} 2^{-w_T(v)} \sum_{ S \subseteq \Anc(v,T)}  \prod_{\ell \in S} \mathrm{cos}\left(2\pi \frac{\xi - f_v}{2^{\ell+1}} \right).
\end{align*}

Thus, it suffices to prove that for all $\xi \in [n]$

\begin{equation}\label{eq:totalcontr-onefreq}
	\sum_{v\in \leaves(T)} 2^{-w_T(v)} \sum_{ S \subseteq \Anc(v,T)} \prod_{\ell \in S} \mathrm{cos}\left(2\pi \frac{\xi - f_v}{2^{\ell+1}} \right) = 1.
\end{equation}

We will implicitly interchange the summation between $v$ and $S$ in \eqref{eq:totalcontr-onefreq} and carefully group terms together so that most of them cancel out, due to the sign change in each branching point. In particular, fix a branching point, i.e. a node $u \in T$ with two children. We will estimate the contribution of all sets $S$ such that $\mathrm{max}(S) = l_T(u)$ in \eqref{eq:totalcontr-onefreq}. Let $u_l$ be the left child of $u$ in $T$, and let $u_r$ be the right child of $u$ in $T$. Note that,
\[	\forall f \in \subtree_T(u_l), f' \in \subtree_T(u_r): f - f' \equiv 2^{l_T(u)} \mod{2^{l_T(u)+1}} .	\]
In turn, this implies that for any $\xi \in [n]$ and any two $f,f'$ as above we have: $(\xi - f) \equiv (\xi - f') + 2^{l_T(u)} \mod{2^{l_T(u)+1}}$, which gives the desired change in the branching point:
\[\mathrm{cos}\left(2\pi \frac{\xi - f}{2^{l_T(u)+1}}\right) = - \mathrm{cos}\left(2\pi \frac{\xi-f'}{2^{l_T(u)+1}}\right). \] 
Thus, if we let $T_r$ and $T_l$ denote the subtrees of $T$ rooted at $u_r$ and $u_l$, respectively, then the total contribution of a set $S$ that satisfies $\mathrm{max}(S) = l_T(u)$ and $S\subseteq \Anc(v,T)$ for some leaf $v$ of $T$ to \eqref{eq:totalcontr-onefreq} can be expressed as
\begin{align*}	
& \prod_{\ell \in S \setminus \{ l_T(u) \} } 
\mathrm{cos} \left(2 \pi \frac{\xi - f_u}{2^{\ell+1}} \right) \cdot \left(\sum_{v \in T_{r}} \frac{1}{2^{w_T(v)}} - \sum_{v \in T_{l}} \frac{1}{2^{w_T(v)}}\right) \\
&\qquad = \prod_{\ell \in S \setminus \{ \mathrm{br}\} } 
\cos \left(2 \pi \frac{\xi - f_u}{2^{\ell+1}} \right) \cdot 2^{ -w_T(u)} \left(\sum_{v \in T_{r}} \frac{1}{2^{w_{T_r}(v)}} - \sum_{v \in T_l} \frac{1}{2^{w_{T_l}(v)}}\right)  = 0.
\end{align*}
The latter holds since $\sum_{v \in T_{r}} \frac{1}{2^{w_{T_r}(v)}} = 1$ by Kraft's equality; similarly $\sum_{v \in T_{l}} \frac{1}{2^{w_{T_l}(v)}} = 1$.

Thus, we will get cancellation of the contribution of all non-empty sets $S$ by summing over all branching points. On the other hand, the contribution of the empty set $S=\emptyset$ is exactly $2^{-w_T(v)}$, for each leaf $v$. The sum of all those contributions is $1$, again by Kraft's equality, giving the lemma.

\end{proof}

\subsection{Extension to $d$ dimensions.}
\label{sec:filters-d}	

We are now ready to proceed with the generalization of the robustness properties of the adaptive aliasing filters given in~Section\ref{sec:filters_onedim} to high dimensions. The following lemma states that the isolating filters constructed in Lemma~\ref{lem:isolate-filter-highdim}, collectively for all leaves, preserve (in particular, do not increase) the energy of a signal.
\begin{lemma}\label{filter-robust-multidim}
	Consider a tree $T \subseteq T_N^{full}$. If for every leaf $v$ of $T$ we let $\wh{G}_v$ be the Fourier domain $(v,T)$-isolating filter constructed in Lemma~\ref{lem:isolate-filter-highdim}, then for every ${\bm\xi} \in [n]^d$,
\[ \sum_{ v\in \leaves(T)} |\wh{G}_v({\bm\xi})|^2 = 1.\]
\end{lemma}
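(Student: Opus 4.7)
My plan is to induct on the dimension $d$, with the base case $d=1$ being precisely Lemma~\ref{filter-robust-1dim} (which has already been established in the one-dimensional section). The inductive step will exploit the tensor-product structure of the multidimensional filter construction from Lemma~\ref{lem:isolate-filter-highdim} / Algorithm~\ref{alg:filter-multidim}: every $(v,T)$-isolating filter factors, chunk by chunk of $\log_2 n$ levels, as a product $\wh{G}_v(\bm\xi) = \prod_{q=1}^{q^*} \wh{G}_q^v(\xi_q)$ where each $\wh{G}_q^v$ is the $1$-dimensional isolating filter associated with the $q$-th block of the path from the root of $T$ to $v$.

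The central idea of the inductive step is to peel off the first block. Let $T^{(1)}$ denote the subtree of $T$ induced by the vertices at depth at most $\log_2 n$; this is naturally a subtree of $\tfull_n$, i.e.\ a one-dimensional tree. For each leaf $u_1$ of $T^{(1)}$ there are two cases: either $u_1$ is already a leaf of $T$ (in which case the construction yields $q^*=1$ and $\wh{G}_v(\bm\xi) = \wh{G}_1^{u_1}(\xi_1)$ depends only on $\xi_1$), or $u_1$ sits at depth exactly $\log_2 n$ and carries a nontrivial subtree $T^{u_1}\subseteq T$ hanging off it, which upon re-rooting is naturally a subtree of $\tfull_{n^{d-1}}$. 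In the latter case the factorization of the construction says precisely that $\wh{G}_v(\bm\xi) = \wh{G}_1^{u_1}(\xi_1)\cdot \wh{G}^{v}_{\ge 2}(\xi_2,\dots,\xi_d)$, where $\wh{G}^{v}_{\ge 2}$ is the $(d-1)$-dimensional isolating filter associated to leaf $v$ in the tree $T^{u_1}$. Therefore, grouping leaves of $T$ by the leaf $u_1$ of $T^{(1)}$ through which their root-to-leaf path passes, we obtain the decomposition
\[
\sum_{v\in\leaves(T)} |\wh{G}_v(\bm\xi)|^2 \;=\; \sum_{u_1\in\leaves(T^{(1)})} |\wh{G}_1^{u_1}(\xi_1)|^2\cdot W_{u_1}(\xi_2,\dots,\xi_d),
\]
where $W_{u_1}\equiv 1$ in the first case above and $W_{u_1}(\xi_2,\dots,\xi_d) = \sum_{v\in \leaves(T^{u_1})} |\wh{G}^v_{\ge 2}(\xi_2,\dots,\xi_d)|^2$ in the second case.

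To finish, I would apply the inductive hypothesis in dimension $d-1$ to each subtree $T^{u_1}$, which forces $W_{u_1}\equiv 1$ uniformly in all its arguments. The identity then collapses to $\sum_{u_1\in\leaves(T^{(1)})} |\wh{G}_1^{u_1}(\xi_1)|^2$, which equals $1$ by the one-dimensional statement Lemma~\ref{filter-robust-1dim} applied to $T^{(1)}\subseteq \tfull_n$.

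The main obstacle I anticipate is purely bookkeeping: verifying with care that the chunk decomposition in Algorithm~\ref{alg:filter-multidim} really does produce, for each $u_1\in\leaves(T^{(1)})$ of the second type, the honest $(d-1)$-dimensional isolating filter on the re-rooted tree $T^{u_1}$, and that the ancestor-with-two-children counts line up so that the $1$-dimensional filter $\wh{G}_1^{u_1}$ extracted from the first chunk agrees with the $(u_1,T^{(1)})$-isolating filter in the sense of \eqref{eq:isolat-filter-1d}. Once this alignment between the definitions in Section~\ref{sec:filters-1d} and the chunked multidimensional construction is made explicit, the statement follows from two applications of the one-dimensional identity — one at the bottom of the induction and one at the top chunk — with no further computation.
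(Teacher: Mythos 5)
Your proposal is correct and follows essentially the same route as the paper: induct on $d$, peel off the top $\log_2 n$ levels to form a one-dimensional tree $T^{(1)}$ (the paper's $T^*$), factor each filter as a product of a first-coordinate piece and a $(d-1)$-dimensional piece on the re-rooted subtree hanging off a leaf $u_1$ of $T^{(1)}$ (the paper's $p_v$ and $T_u$), and close with the inductive hypothesis on the inner sums and Lemma~\ref{filter-robust-1dim} on the outer one. The ``bookkeeping obstacle'' you flag is exactly what the paper handles by defining $p_v$, the projected labels of $T^*$ and $T_u$, and the convention $Q_v\equiv 1$ when the hanging subtree is empty.
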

\begin{proof}
The proof is by induction on the dimension $d$. 

\paragraph{Base of induction:} Lemma~\ref{filter-robust-1dim} precisely proves the inductive claim for $d=1$.

\paragraph{Inductive step:} Suppose that the inductive hypothesis holds for $d-1$ dimensional isolating filters. Given this inductive hypothesis, we want to prove that the inductive claim holds for $d$ dimensional filters. Let $T$ be a subtree of $T_N^{full}$, where $N=n^d$. For every leaf $v$ of tree $T$, let $v_0, v_1, \cdots v_l$ denote the path from root to $v$ where $v_0$ is the root and $v_l=v$. We let $p_v$ denote a vertex in $T$, defined as
\[ p_v := \begin{cases}
v_{\log_2n} & \text{ if } l_T(v)\ge \log_2n\\
v & \text{ otherwise}
\end{cases}. \]

Now, we construct the tree $T^*$ by making a copy of the tree $T$ and then removing every node which is at distance more than $\log_2n$ from the root. Let the nodes of $T^*$ be labeled by projecting the labels of $T$ to their first coordinate as follows,
\[ \text{for every node } u \in T^*: f_u = f_1, \text{ where }(f_1,f_2, \cdots f_d) \text{ is the label of $u$ in }T. \]
One can easily verify that the set $P := \{ p_v: v \in \leaves(T) \}$ specifies the set $\leaves(T^*)$. For every $u \in P$ let $H_u$ be a $(u,T^*)$-isolating filter, constructed as in Lemma \ref{lem:isolate-filter-highdim}.

Moreover, for every leaf $u \in P$ we define $T_u$ to be a copy of the subtree of $T$ which is rooted at $u$. We label the nodes of the tree $T_u$ by projecting the labels of $T$ to their last $d-1$ coordintates as follows,
\[ \text{for every node } z \in T_u: \ff_z = (f_2,f_3, \cdots f_d), \text{ where }(f_1,f_2, \cdots f_d) \text{ is the label of $u$ in }T. \]
For every leaf $v$ of $T$, let $\wh Q_v$ be the Fourier domain $(v,T_{p_v})$-isolating filter constructed in Lemma~\ref{lem:isolate-filter-highdim}. Note that in case $p_v = v$, the tree $T_{p_v}$ will be empty and by convention we define our $(v,T_{p_v})$-isolating filter to be $\wh Q_v \equiv 1$. Therefore, using these definitions, for every leaf $v \in \leaves(T)$, the $(v,T)$-isolating filter $\wh G_v$ constructed in Lemma~\ref{lem:isolate-filter-highdim} satisfies
\[ \wh G_v(\bm \xi) \equiv H_{p_v}(\xi_1) \cdot Q_v(\xi_2,\xi_3, \dots \xi_d), \]
for every $\bm \xi = (\xi_1, \xi_2, \dots, \xi_d) \in [n]^d$. 
Hence, we can write
\begin{align*}
\sum_{ v\in \leaves(T)} \left|\wh{G}_v({\bm\xi})\right|^2 &= \sum_{ v\in \leaves(T)} \left|H_{p_v}(\xi_1) \cdot Q_v(\xi_2,\xi_3, \dots \xi_d)\right|^2\\
&= \sum_{ u \in P} \sum_{\substack{v\in \leaves(T) \\ \text{s.t. } p_v = u}} |H_{u}(\xi_1)|^2 \cdot \left| Q_v(\xi_2,\xi_3, \dots \xi_d)\right|^2\\
&= \sum_{ u \in P} |H_{u}(\xi_1)|^2 \sum_{\substack{v\in \leaves(T) \\ \text{s.t. } p_v = u}}  \left| Q_v(\xi_2,\xi_3, \dots \xi_d)\right|^2.
\end{align*}
We proceed by proving that for every $u \in P$, $\sum_{\substack{v\in \leaves(T) \\ \text{s.t. } p_v = u}}  \left| Q_v(\xi_2,\xi_3, \dots \xi_d)\right|^2 = 1$. Recall that for every leaf $v \in \leaves(T)$, $Q_v$ is a $(v,T_{p_v})$-isolating filter, constructed in Lemma~\ref{lem:isolate-filter-highdim}. Therefore, for every leaf $v$ of $T$ such that $p_v=u$, $Q_v$ is indeed a $(v,T_{u})$-isolating filter as per the construction of Lemma~\ref{lem:isolate-filter-highdim}. Hence,
\[ \sum_{\substack{v\in \leaves(T) \\ \text{s.t. } p_v = u}}  \left| Q_v(\xi_2,\xi_3, \dots \xi_d)\right|^2 = \sum_{\substack{v\in \leaves(T_u) }}  \left| Q_v(\xi_2,\xi_3, \dots \xi_d)\right|^2. \]
Now we can invoke the inductive hypothesis because $T_u$ is a subtree of $T_{N'}^{full}$ where $N'=n^{d-1}$. therefore,
\[ \sum_{\substack{v\in \leaves(T) \\ \text{s.t. } p_v = u}}  \left| Q_v(\xi_2,\xi_3, \dots \xi_d)\right|^2 = \sum_{\substack{v\in \leaves(T_u) }}  \left| Q_v(\xi_2,\xi_3, \dots \xi_d)\right|^2 = 1. \]
Consequently, we have,
\[ \sum_{ v\in \leaves(T)} \left|\wh{G}_v({\bm\xi})\right|^2 = \sum_{ u \in P} |H_{u}(\xi_1)|^2 = \sum_{ u\in \leaves(T^*)} |H_{u}(\xi_1)|^2 = 1, \]
where the last equality follows because $H_u$ is a $(u,T^*)$-isolating filter as per the construction of Lemma \ref{lem:filter-isolate} and hence by Lemma~\ref{filter-robust-1dim}, $\sum_{ u\in \leaves(T^*)} |H_{u}(\xi_1)|^2 = 1$. This completes the inductive proof and ergo the Lemma.
\end{proof}

We readily find that the following corollary of the above lemma holds,
\begin{corollary}\label{infty-norm-bound-islating-filter}
	The Fourier domain isolating filter $\wh G$ constructed in Lemma~\ref{lem:isolate-filter-highdim} satisfies $\|\wh G\|_\infty \le 1 $.
\end{corollary}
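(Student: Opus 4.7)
The plan is very short: this corollary follows immediately from Lemma~\ref{filter-robust-multidim}. Fix any tree $T \subseteq \tfull_N$ and any leaf $v \in \leaves(T)$, and let $\wh G_v$ be the $(v,T)$-isolating filter produced by Lemma~\ref{lem:isolate-filter-highdim}. For every $\bm\xi \in [n]^d$, I would simply bound the single term $|\wh G_v(\bm\xi)|^2$ by the full sum over all leaves,
\[
	|\wh G_v(\bm\xi)|^2 \;\le\; \sum_{u \in \leaves(T)} |\wh G_u(\bm\xi)|^2 \;=\; 1,
\]
where the equality is Lemma~\ref{filter-robust-multidim}. Taking square roots and then the supremum over $\bm\xi$ yields $\|\wh G_v\|_\infty \le 1$, which is exactly the claim.

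There is no real obstacle here; the only thing to note is that Lemma~\ref{lem:isolate-filter-highdim} produces a filter $\wh G$ that arises as $\wh G_v$ for some subtree $T$ and leaf $v$, so Lemma~\ref{filter-robust-multidim} is directly applicable. If one wanted to be entirely self-contained and avoid even this one-line appeal, one could alternatively read the bound off the closed form (\ref{eq:isolat-filter-1d}) in the $d=1$ case, since each factor $\tfrac{1}{2}(1+e^{2\pi i (\xi-f)/2^{\ell+1}})$ has modulus at most $1$, and then tensor across coordinates using the decomposition $\wh G_v(\bm\xi) = \prod_{q} \wh G_q(\xi_q)$ used in the proof of Lemma~\ref{filter-robust-multidim}; but the one-line derivation via Lemma~\ref{filter-robust-multidim} is cleaner and is what I would actually write.
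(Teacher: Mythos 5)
Your proof is correct and matches the paper's reasoning: the paper presents this corollary as an immediate consequence of Lemma~\ref{filter-robust-multidim}, and your one-line derivation (a single squared term is bounded by the full sum, which equals $1$) is exactly the intended argument.
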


\newpage
\section{Robust Sparse Fourier Transform I.}
\label{sec:robust_first}

The section is devoted to proving our first result on robust Sparse Fourier transforms, which illustrates techniques II to IV and partially technique I. We first remind the reader about the high SNR regime we consider.

\paragraph{$k$-High SNR Regime.} A vector $x: [n]^d \to \C$ satisfies the $k$-high SNR assumption, if there exists vectors $w,\eta: [n]^d \to \C$ such that i) $\wh{x}=\wh{w} + \wh{\eta}$, ii) $\supp(\wh{w}) \cap \supp(\wh{\eta}) = \emptyset$, iii) $|\supp(\wh{w})| \leq k$ and iv) $|\wh{w}_f| \geq 3 \cdot \|\wh{\eta}\|_2$, for every $f \in \supp(\wh{w})$.
In the rest of this section we prove the following main theorem.

\begin{restatable}[Robust Sparse Fourier Transform]{theorem}{sfftrobust} \label{thm:core}
	Given oracle access to $x: [n]^d \to \C$ with $x= w+\eta$ in $k$-high SNR model and parameter $\epsilon >0$, we can find using 
\[m = \widetilde{O} \left( k^{7/3} + \frac{k^2}{\epsilon}\right)\]
 samples from $x$ and in $\widetilde{O}\left( \frac{k^3}{\epsilon} \right)$ time a signal $\wh{\chi}$ such that
	\[\|\wh{\chi} - \wh{x}\|_2^2\leq (1+\epsilon) \cdot \|\wh{\eta}\|_2^2, \]
with high probability in $N$.
\end{restatable}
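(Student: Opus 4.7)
The plan is to design and analyze a recursive tree-exploration algorithm that combines the near-isometry property of adaptive aliasing filters (Lemma~\ref{filter-robust-multidim}) with lazy batched estimation. The algorithm maintains at all times a subtree $T \subseteq \tfull_N$ and a partial estimate $\wh{\chi}$ with $\supp(\wh{w}-\wh{\chi}) \subseteq \supp(T)$; alongside $T$ it maintains a subset $M$ of \emph{marked} leaves (those known to contain heavy frequencies but not yet estimated). The main loop picks the minimum-weight \emph{unmarked} leaf $v$ of $T$, expands to its children $v_{\lef}, v_{\righ}$, and calls $\textsc{HeavyTest}$ on each child to decide whether to keep or prune. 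The HeavyTest primitive is implemented by averaging $\widetilde O(1)$ random cyclic shifts through the appropriate $(u,T)$-isolating filter, and Lemma~\ref{filter-robust-multidim} guarantees that across the whole tree of leaves the collection of filters is an isometry in expectation, so the expected noise mass per test is $\widetilde O(\|\wh{\eta}\|_2^2)$. The $k$-high-SNR hypothesis then makes a standard Chernoff/median-of-means argument give a one-sided test (no false negatives with high probability, and only a bounded number of false positives) when the threshold is set to a constant fraction of $\|\wh{\eta}\|_2$.

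When the recursion reaches a leaf of $\tfull_N$, we do not immediately estimate: we mark it and continue. As long as the minimum-weight unmarked leaf has weight at most $\log k + 2$, exploration is cheap; when this fails, the Kraft mass of $M$ has become $\geq 1/2$, so by Lemma~\ref{lemma:kraft-ISCheap} there is a large set $L \subseteq M$ of marked leaves whose maximum filter sparsity is at most $\widetilde O(|L|)$. For this $L$ we batch-estimate: draw $\widetilde O(1)$ random shifts, apply all $|L|$ isolating filters to each shifted sample set (cost $\widetilde O(|L|^2)$ samples, shared across $L$), and compute a mean estimator. By Lemma~\ref{filter-robust-multidim} and Parseval, this yields an \emph{on-average} error of $O(\|\wh{\eta}\|_2/\sqrt{|L|})$ per estimated coefficient, which summed over all batches gives the $\ell_2/\ell_2$ bound up to a $1+\epsilon$ factor once we increase the shift count of the final batch by a factor of $1/\epsilon$. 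After each batch we update $\wh{\chi}$, unmark the estimated leaves, and resume exploration.

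For correctness, I will inductively maintain the invariants (i) $\supp(\wh{\chi}) \subseteq \supp(\wh{w})$, (ii) heavy leaves not yet in $\wh{\chi}$ are covered by $T$, and (iii) cumulative additive error to $\wh{\chi}$ is controlled by the batched estimation guarantee. The high-SNR gap ensures that a correctly-isolated heavy coordinate is always detected by HeavyTest and never spuriously canceled during lazy estimation. The sample-complexity analysis decomposes as: (a) $\widetilde O(k)$ HeavyTest calls, each of cost $\widetilde O(2^{w_T(v)})$, summing (by Kraft averaging and the $\log k+2$ weight cap) to $\widetilde O(k^{2})$; (b) batched estimation contributes $\widetilde O(k^{7/3})$ because the balance between batch size $|L|$ and filter sparsity $\widetilde O(|L|)$ yields $\widetilde O(|L|^2)$ samples per batch, amortized over at most $\widetilde O(k^{1/3})$ batches using the Kraft-mass replenishment argument; (c) the final $1/\epsilon$-precision batch accounts for the $\widetilde O(k^2/\epsilon)$ term. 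The runtime bound $\widetilde O(k^3/\epsilon)$ comes from the fact that each HeavyTest and estimation call must subtract the running $\wh{\chi}$ from the time-domain samples it takes, and $\|\wh{\chi}\|_0 \leq k$, so each sample costs an extra $O(k)$ in arithmetic; this is where the multipoint-evaluation bottleneck of Theorem~\ref{thm:lower_bound} shows up.

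The main obstacle I expect is the \emph{cascade-of-errors} phenomenon highlighted in Section~\ref{sec:barriers}: a single false negative in HeavyTest converts a head coordinate into noise, which raises the effective $\|\wh{\eta}\|_2$ and can trigger further false negatives. Controlling this requires a careful worst-case (not expectation) bound on the noise that each HeavyTest sees, which is obtained by boosting each test to a sufficiently high success probability by repetition, and by showing that since $T$ has $\widetilde O(k)$ leaves throughout the execution, a union bound over all HeavyTest and estimation calls succeeds with high probability in $N$. A subsidiary technical point is that the batched estimator's error is measured against a random shift, so care is needed to condition on the (small) set of shifts used so as to preserve the high-SNR gap on the residual after each batch; this can be handled by refreshing randomness between batches, which does not affect the asymptotic sample complexity.
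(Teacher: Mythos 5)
Your proposal captures the lazy/batched estimation idea and the Kraft-mass trigger, both of which appear in the paper's proof (Algorithms~\ref{alg:ksparsefft-inner} and \ref{alg:sfftrobust}, Lemma~\ref{lemma:kraft-ISCheap}), and your high-level invariants match the paper's $P_1$--$P_4$. However, your algorithmic skeleton is a single-level exploration, and this is where the sample-complexity accounting breaks. You claim each $\HeavyTest$ uses $\wt O(1)$ random shifts and hence costs $\wt O(2^{w_T(v)})$ samples; this is not enough. The guarantee in Lemma~\ref{lem:guarantee_heavy_test} requires the per-trial sample count $m$ to scale linearly with the number $|S|$ of unrecovered heavy coordinates that can leak into the measurement (this comes from the RIP / variance bound, not from the isometry Lemma~\ref{filter-robust-multidim} alone). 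In a flat, single-level exploration the subtree under test can carry up to $\Theta(k)$ unrecovered heavy frequencies, so each $\HeavyTest$ needs $m=\wt O(k)$ shifts, giving total identification cost $\wt O(k^3)$ rather than your claimed $\wt O(k^2)$. Your item~(b) is also off: with $\sum |L| \le k$, a cost of $\wt O(|L|^2)$ per batch over $\wt O(k^{1/3})$ batches sums to at most $\wt O(k^2)$, never $\wt O(k^{7/3})$; and indeed in the paper the batched $\Estimate$ contributes $\wt O(k^2/\epsilon)$, while the $k^{7/3}$ term comes from identification, not estimation.

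The missing idea is the paper's two-level budget decomposition (what the overview calls multi-scale estimation, technique~IV): an outer loop $\textsc{RobustSparseFT}$ whose $\frontier$ has at most $k/b$ \emph{unmarked} leaves, each with $>b$ heavy descendants, and an inner $\textsc{RobustPromiseSFT}$ invoked with a small budget $b$ that (i) performs all of its $\HeavyTest$ calls with RIP order only $b$, so each costs $\wt O(2^{w}\cdot b)$ rather than $\wt O(2^{w}\cdot k)$, (ii) returns only coarse $\mu/\sqrt{b}$-precision estimates used purely for \emph{location}, and (iii) runs one expensive RIP-order-$k$ verification at the end to detect a wrong budget. The outer loop discards the coarse values, marks the recovered leaves, and only later re-estimates them lazily to the $\epsilon$-precision needed for the $\ell_2/\ell_2$ bound. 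Optimizing $b$ then gives $\wt O\big( (k/b)\cdot (k/b)(b^3+k)\big)=\wt O(k^2 b+k^3/b^2)$, minimized at $b=k^{1/3}$, yielding $\wt O(k^{7/3})$ for identification plus $\wt O(k^2/\epsilon)$ for the final estimation. Without this decomposition, each inner test pays the full RIP-order-$k$ price and you cannot get below cubic. You will need to split the exploration into a budgeted promise-recovery inner routine, make its internal $\HeavyTest$s cheap (order $b$), add a single order-$k$ sanity check at the end of each call, and keep only the \emph{locations} (not the coarse values) it returns.
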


For every tree $T$ and node $v \in T$, we let $\wh{x}_v$ be the vector $\wh{x}_{\subtree(v)}$, i.e. signal $\wh x$ supported on frequencies in the frequency cone of $v$ and zeroed out everywhere else.
At all times, for every $v \in T$, our algorithm maintains a signal $\wh \chi_v: [n]^d \to \C$ that is supported on $\subtree_{T}(v)$. This signal will serve as our estimate for $\wh{w}_v$. Initially, all these vectors are going to be $\{0\}^{n^d}$. The execution of our algorithm ensures that we can always keep sparse representations of those vectors. Parameters and variables $n, d$ and $N=n^d$ are treated as global. 

Furthermore, for any signal $y:[n]^d \to \C$ and parameter $\mu\ge0$ we define
\begin{equation}\label{def:head}
\head_\mu(y) := \left\{ \jj \in [n]^d:~ |y_\jj| \geq 3\mu\right\}.
\end{equation}
Under this notation, we are interested in recovering the set $\head_{\|\wh{\eta}\|_2}(\wh{x})$, as well as obtain accurate estimations for the values of $\wh x$ on frequencies in set $\head_{\|\wh{\eta}\|_2}(\wh{x})$. Using the notion of $\head_\mu(y)$, one can see that a signal $x$ is in the $k$-high SNR regime iff there exists a $\mu> 0$ such that $\left| \head_\mu(\wh x) \right| \le k$ and $\mu \ge \left\| \wh x - \wh x_{\head_\mu(\wh x)} \right\|_2$.

At all times, we keep a set $\Estimated$, corresponding to the coordinates in $\supp(\wh{w})$ that we have estimated. We define $L_v := \subtree_T(v) \cap \left( \supp(\wh{w}) \setminus \Estimated \right)$, which corresponds to the \emph{unestimated} coordinates in the support of $w$ that lie in the frequency cone of $v$.

Our main algorithm consists of an outer loop that we call \textsc{RobustSparseFFT} and an inner loop that we call \textsc{RobustPromiseSFT}. Our algorithm also makes use of an auxiliary primitive for estimating the values of located frequencies as well as a primitive for testing whether a signal is ``heavy'' (meaning that it contains a head element). In the rest of this section we first give the primitives \textsc{Estimate} and \textsc{HeavyTest} together with the guarantee on their performance. Then we present the main algorithm and prove its performance. The \textsc{HeavyTest} routine is analogous to $\textsc{ZeroTesT}$ from Section~6. However, the \textsc{RIP} property alone does not suffice (and hence we cannot pick a deterministic collection of samples). Instead, we use a random collection of samples, which suffices for upper bounding the contribution of the tail while simultaneously satisfying \textsc{RIP}.

\subsection{Computational Primitives for the Robust Setting.}

In this subsection we give some of the primitives that will be used in our algorithms. The proof of correctness of these primitives is postponed to subsection~\ref{sec:prove_correct_primitives}.

The very first primitive we present is \textsc{HeavyTest}, see Algorithm~\ref{alg:zero-test}. This primitive performs a test on the signal to detect whether a given frequency cone contains heavy elements or not. 
\begin{algorithm}[!h]
	\caption{Test whether $v$ is a frequency-active node, i.e. $ \|(\wh{x-\chi})_v \|_2 > 2 \|\wh{\eta}\|_2$}\label{alg:zero-test}
	\begin{algorithmic}[1]
		\Procedure{HeavyTest}{$x, \wh{\chi}, T, v, m, \theta$} 
		\State{$\ff \gets \ff_v$}

		\State{$(G_v,\wh{G}_v) \gets \textsc{MultiDimFilter}(T, v, n)$} \State \emph{//$(v,T)$-isolating filters as per Lemma~\ref{lem:isolate-filter-highdim}}
		
		\For{$z=1$ to $32 \log N$}
		\State{$\textsc{RIP}_m^z \gets$ Multiset of $m$ i.i.d. uniform samples from $[n]^d$} \State

		\State{$h^{z}_{\Delta} \gets \sum_{\bm\xi \in [n]^d} \left( e^{2\pi i \frac{\bm\xi^\top \Delta}{n}} \cdot \wh \chi({\bm\xi}) \cdot \wh G_v(\bm\xi)\right)$ for every $\Delta \in \textsc{RIP}_m^z$}\label{a7l6}

		\State{$H^{z} \gets \frac{1}{|\textsc{RIP}_m^z|} \sum_{\Delta \in \textsc{RIP}_m^z} \left| N\cdot \sum_{\jj \in [n]^d} G_v(\Delta-\jj) \cdot x({\jj})- h^{z}_{\Delta} \right|^2$}
		\EndFor

		\If{$\textsc{Median}_{z \in [32 \log N]} \left\{H^{z}\right\} \le \theta$} \State \emph{//$\theta = 5\|\wh{\eta}\|_2^2$.}\label{a7l8}
		\State{\textbf{return} False}
		
		\Else	
		\State{\textbf{return} True}
		
		\EndIf	
		\EndProcedure
	\end{algorithmic}
\end{algorithm}

\begin{lemma}[\textsc{HeavyTest} guarantee]\label{lem:guarantee_heavy_test}
	Consider signals $x , \wh{\chi}: [n]^d \to \C$ and an arbitrary subtree $T$ of $T_N^{full}$. For an arbitrary leaf $v$ of $T$, let $\wh y := \left(\wh x - \wh \chi\right) \cdot \wh G_v$, where $\wh G_v$ be the Fourier domain $(v,T)$-isolating filter constructed in Lemma~\ref{lem:isolate-filter-highdim}. Then the following statements hold, for any $\theta >0$:
	\begin{itemize}
		\item If there exists a set $S \subseteq [n]^d$ such that $\left\| \wh{y}_S \right\|_2^2 >\frac{11 \theta}{10} $, then \textsc{HeavyTest}$(x, \wh{\chi}, T, v,  m, \theta)$ (Algorithm~\ref{alg:zero-test}) outputs $\true$ with probability $1 - \frac{1}{N^{16}}$, provided that $m$ is a large enough integer satisfying 
		\[	m = \Omega \left( |S| \cdot \frac{\left\| \wh{y} \right\|_2^2}{\left\| \wh{y}_S \right\|_2^2} \cdot \log^2|S| \log N\right).\]
		\item If $\left\| \wh{y}\right\|_2^2 \le \theta/5$,
		then \textsc{HeavyTest} outputs $\false$ with probability $1 - \frac{1}{N^{5}}$.
		
		\item The sample complexity of this procedure is $\widetilde O\left(2^{w_T(v)} \cdot m\right)$.
		
		\item The runtime of the \textsc{HeavyTest} procedure is $\widetilde O \left( \|\wh{\chi}\|_0 \cdot m +  2^{w_T(v)} \cdot m\right)$.
	\end{itemize}
	
\end{lemma}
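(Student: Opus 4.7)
The plan is to first identify $H^z$ as an empirical mean estimator for $\|\wh y\|_2^2$. Using convolution--multiplication duality together with the identity for $h^z_\Delta$ in line~\ref{a7l6} of Algorithm~\ref{alg:zero-test}, one computes $N\cdot \sum_{\jj} G_v(\Delta-\jj) x(\jj) - h^z_\Delta = N\cdot y(\Delta)$, where $y$ is the time-domain signal whose Fourier transform is $\wh y = \wh G_v\cdot(\wh x-\wh\chi)$. Therefore $H^z = \frac{N^2}{|\textsc{RIP}_m^z|}\sum_{\Delta}|y(\Delta)|^2$, and by Parseval $\mathbb E[H^z] = N\|y\|_2^2 = \|\wh y\|_2^2$. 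The two bullets then become statements about how well the empirical mean of the i.i.d.\ samples $\{N^2|y(\Delta_i)|^2\}_{i=1}^m$ estimates $\|\wh y\|_2^2$.

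\textbf{Upper bound (second bullet).} When $\|\wh y\|_2^2\le \theta/5$, Markov gives $\Pr[H^z>\theta]\le \|\wh y\|_2^2/\theta\le 1/5$ for each independent trial. Since line~\ref{a7l8} takes the median over $32\log N$ trials, a standard Chernoff bound amplifies this to failure probability $N^{-5}$.

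\textbf{Lower bound (first bullet).} Decompose $\wh y = \wh y_S+\wh y_{\bar S}$ and write $H^z = H^z_S + H^z_{\bar S} + H^z_{\text{cross}}$, where the three summands correspond to $|y_S|^2$, $|y_{\bar S}|^2$, and $2\Re(y_S\overline{y_{\bar S}})$ respectively. For the sparse part, since $\wh y_S$ is $|S|$-sparse and $m=\Omega(|S|\log^3 N)$, Theorem~\ref{RIP-thrm} implies that with probability $1-o(1)$ one has $H^z_S\in [(1-\tfrac{1}{100})\|\wh y_S\|_2^2,(1+\tfrac{1}{100})\|\wh y_S\|_2^2]$. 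The part $H^z_{\bar S}\ge 0$ can only help, so the key is to bound $|H^z_{\text{cross}}|$. Since $\wh y_S$ and $\wh y_{\bar S}$ have disjoint supports, $\mathbb E[H^z_{\text{cross}}]=0$. A variance computation using $\|y_S\|_\infty\le \|\wh y_S\|_1/N\le \sqrt{|S|}\|\wh y_S\|_2/N$ and Cauchy--Schwarz shows $\mathrm{Var}[H^z_{\text{cross}}] = O\bigl(\tfrac{|S|}{m}\|\wh y_S\|_2^2\|\wh y\|_2^2\bigr)$; Chebyshev then implies $|H^z_{\text{cross}}|\le \tfrac{1}{100}\|\wh y_S\|_2^2$ with constant probability, precisely when $m=\Omega\bigl(|S|\cdot \|\wh y\|_2^2/\|\wh y_S\|_2^2\cdot\log^2|S|\log N\bigr)$. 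Combining these, $H^z\ge (1-\tfrac{1}{100})\|\wh y_S\|_2^2 - \tfrac{1}{100}\|\wh y_S\|_2^2 \ge \tfrac{98}{100}\cdot\tfrac{11}{10}\theta > \theta$ with constant probability per trial, and the median over $32\log N$ trials drives the failure probability down to $N^{-16}$.

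\textbf{Sample and time complexity.} Each of the $O(\log N)$ iterations draws $m$ i.i.d.\ time-domain samples $\Delta$, and each evaluation of $\sum_{\jj}G_v(\Delta-\jj)x(\jj)$ requires $|\supp(G_v)|=2^{w_T(v)}$ accesses to $x$ by Lemma~\ref{lem:isolate-filter-highdim}, yielding total sample complexity $\widetilde O(2^{w_T(v)}\cdot m)$. For the runtime, $h^z_\Delta$ is a sum over $\supp(\wh\chi)$ whose terms $\wh G_v(\bm\xi)$ are computable in $\widetilde O(1)$ time by Lemma~\ref{lem:isolate-filter-highdim}, giving $\widetilde O(\|\wh\chi\|_0\cdot m + 2^{w_T(v)}\cdot m)$.

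\textbf{Main obstacle.} The delicate step is the cross-term analysis: a naive bound on $\|y_{\bar S}\|_\infty$ by $\|\wh y_{\bar S}\|_1/N$ yields a spurious factor of $N$ (since $\wh y_{\bar S}$ is not sparse), and a naive Chebyshev on the full $H^z$ with variance $\|\wh y\|_2^4/m$ yields the wrong scaling $m\gtrsim (\|\wh y\|_2^2/\|\wh y_S\|_2^2)^2$. The right scaling $\|\wh y\|_2^2/\|\wh y_S\|_2^2$ arises only by separating the sparse part (which admits an RIP-based bound) from the cross-term (whose variance is linear, not quadratic, in $\|\wh y\|_2^2/\|\wh y_S\|_2^2$ thanks to the $|S|$-sparsity of one factor in the product $y_S\cdot\overline{y_{\bar S}}$).
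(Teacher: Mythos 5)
Your proposal is correct and takes essentially the same route as the paper's proof: you identify $H^z$ as the empirical energy estimator via convolution--multiplication duality, decompose $\wh y = \wh y_S + \wh y_{\bar S}$, control the sparse part by RIP, control the cross term by a zero-mean / second-moment / Chebyshev argument using $\|y_S\|_\infty \le \|\wh y_S\|_1/N \le \sqrt{|S|}\,\|\wh y_S\|_2/N$, drop the nonnegative $|y_{\bar S}|^2$ contribution, amplify by a median over $O(\log N)$ independent trials, and use Markov plus median amplification for the second bullet. Your remark about why a naive Chebyshev on the full $H^z$ would yield the wrong quadratic scaling in $\|\wh y\|_2^2/\|\wh y_S\|_2^2$ is exactly the point the paper's decomposition is designed to address.
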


Next, we present the second auxiliary primitive \textsc{Estimate} in Algorithm~\ref{alg:zero-test}.

\begin{lemma}[\textsc{Estimate} guarantee]\label{est-inner-lem}
	Consider signals signals $x,\wh \chi:[n]^d \to \C$, a subtree $T$ of $T_N^{full}$, and an integer parameter $m$. For a subset $S \subseteq \leaves(T)$, the procedure \textsc{Estimate}$(x,\wh \chi, T, S, m)$ (see Algorithm~\ref{alg:high-dim-Est-robust}) outputs $\left\{ \wh H_v \right\}_{v \in S}$ such that
	\[ \Pr\left[ \sum_{v \in S} \left| \wh H_v - (\wh{x-\chi})({\ff_v}) \right|^2 \le \frac{16}{m} \sum_{\bm\xi \in [n]^d \setminus \supp{(T)} } \left| (\wh{x-\chi})({\bm{\xi}}) \right|^2 \right] \ge 1 - \frac{|S|}{N^8}. \]
	The sample complexity of this procedure is $\widetilde O\left(m  \cdot \sum_{v \in S} 2^{w_T(v)} \right)$ and the runtime of the procedure is $\widetilde O\left( m \cdot \sum_{v \in S} 2^{w_T(v)} + |S| \cdot m\cdot  \|\wh \chi\|_0 \right)$.
\end{lemma}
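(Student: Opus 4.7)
My plan has four moves. First, I would define the estimator. For each $v \in S$, draw $m$ uniformly random time-domain samples $\Delta_1,\ldots,\Delta_m \in [n]^d$, compute the filtered-signal values $y_v(\Delta_j) := \sum_{\jj \in [n]^d} G_v(\Delta_j - \jj)\, x(\jj) - \tfrac{1}{N} h^{\Delta_j}_v$ where $h^{\Delta_j}_v = \sum_{\bm\xi} e^{2\pi i \bm\xi^\top \Delta_j / n}\, \wh\chi(\bm\xi)\, \wh G_v(\bm\xi)$, and average to form $\wh H_v := \tfrac{N}{m} \sum_{j=1}^{m} y_v(\Delta_j) \, e^{-2\pi i \ff_v^\top \Delta_j / n}$. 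By the convolution--multiplication duality, $y_v$ has Fourier transform $\wh G_v \cdot (\wh x - \wh\chi)$; combined with $\wh G_v(\ff_v) = 1$ this makes $\wh H_v$ an unbiased estimator of $(\wh{x-\chi})(\ff_v)$.

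Second, I would carry out the variance calculation. A single term $N\, y_v(\Delta)\, e^{-2\pi i \ff_v^\top \Delta/n}$ has second moment bounded by $N^2 \cdot \mathbb{E}_\Delta [|y_v(\Delta)|^2] = N \|y_v\|_2^2 = \|\wh G_v \cdot (\wh x - \wh\chi)\|_2^2$ by Parseval. Since $\wh G_v$ annihilates $\supp(T) \setminus \subtree_T(v)$ and equals $1$ at $\ff_v$, the ``variance contribution'' from the on-support part cancels with the squared mean, leaving
\[
\mathbb{E}\big[\,|\wh H_v - (\wh{x-\chi})(\ff_v)|^2\,\big] \le \frac{1}{m} \sum_{\bm\xi \notin \supp(T)} |\wh G_v(\bm\xi)|^2 \, |(\wh{x-\chi})(\bm\xi)|^2.
\]

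Third, I would boost each $\wh H_v$ to high probability by a standard median-of-means construction using $R = O(\log N)$ independent batches of $m$ samples (applied to real and imaginary components separately, which at worst inflates a constant). This yields, for each fixed $v \in S$,
\[
\Pr\!\left[\,|\wh H_v - (\wh{x-\chi})(\ff_v)|^2 > \frac{16}{m} \sum_{\bm\xi \notin \supp(T)} |\wh G_v(\bm\xi)|^2 \, |(\wh{x-\chi})(\bm\xi)|^2\,\right] \le \frac{1}{N^8},
\]
and a union bound over $v \in S$ gives probability $\ge 1 - |S|/N^8$ that the bound holds simultaneously. Summing over $v \in S$ and swapping the order of summation,
\[
\sum_{v \in S} |\wh H_v - (\wh{x-\chi})(\ff_v)|^2 \le \frac{16}{m} \sum_{\bm\xi \notin \supp(T)} \Big(\sum_{v \in S} |\wh G_v(\bm\xi)|^2\Big) |(\wh{x-\chi})(\bm\xi)|^2 \le \frac{16}{m} \sum_{\bm\xi \notin \supp(T)} |(\wh{x-\chi})(\bm\xi)|^2,
\]
where the inner sum is bounded by $1$ via the near-isometry property of Lemma~\ref{filter-robust-multidim} (since $S \subseteq \leaves(T)$). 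This is precisely the claimed guarantee.

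Finally, for the resource bounds, each of the $m R = \widetilde O(m)$ convolution samples $y_v(\Delta)$ costs $|\supp(G_v)| = 2^{w_T(v)}$ queries to $x$ (by Lemma~\ref{lem:isolate-filter-highdim}) plus an $\widetilde O(\|\wh\chi\|_0)$-time evaluation of $h^{\Delta}_v$ that reuses the $\widetilde O(1)$-per-frequency computation of $\wh G_v$; summing over $v \in S$ gives the stated $\widetilde O(m \sum_{v \in S} 2^{w_T(v)})$ sample complexity and $\widetilde O(m \sum_{v \in S} 2^{w_T(v)} + |S|\, m\, \|\wh\chi\|_0)$ runtime. The main obstacle I expect is the median-of-means boosting of complex-valued estimators while preserving the per-frequency variance decomposition exactly, because a naive ``aggregate variance then Markov'' approach would only give constant success probability and the crucial step of pulling $\sum_{v \in S} |\wh G_v(\bm\xi)|^2 \le 1$ through Lemma~\ref{filter-robust-multidim} must happen \emph{after} the union bound rather than inside an expectation.
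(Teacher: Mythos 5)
Your proposal is correct and follows essentially the same argument as the paper's proof: define the unbiased estimator via convolution--multiplication duality, compute the per-frequency variance $\frac{1}{m}\sum_{\bm\xi \notin \supp(T)}|\wh G_v(\bm\xi)|^2|(\wh{x-\chi})(\bm\xi)|^2$, boost by a coordinatewise median over $O(\log N)$ repetitions and Markov, union-bound over $v\in S$, and only then swap summation and invoke Lemma~\ref{filter-robust-multidim} to bound $\sum_{v\in S}|\wh G_v(\bm\xi)|^2\le 1$. You even correctly flag the one subtlety the paper also leans on, namely that the near-isometry bound must be applied after the union bound rather than inside a single expectation.
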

 
\begin{algorithm}[t]
	\caption{For $S \subseteq T$, estimates $\left(\wh{x} - \wh{\chi}\right)_S$ by isolating $S$ from every node in $T$.}\label{alg:high-dim-Est-robust}
	\begin{algorithmic}[1]
		
		\Procedure{Estimate}{$x,\wh \chi, T, S, m$}

		\For{$v \in S$}
		
		\State{${\ff} \gets {\ff}_v$}	 
		
		\State $(G_v,\wh{G}_v) \gets \textsc{MultiDimFilter}(T, v, n)$	
		\Comment{$(v,T)$-isolating filters as per Lemma~\ref{lem:isolate-filter-highdim}}
		
		\For{$z=1$ to $16 \log N$}
		\State $\textsc{RIP}_m^z \gets$ Multiset of $B$ i.i.d. uniform samples from $[n]^d$ 
		\State $h^{z}_v \gets \sum_{\Delta \in \textsc{RIP}_m^z} e^{-2\pi i \frac{\ff^\top \Delta}{n}}\sum_{\bm\xi \in [n]^d}  e^{2\pi i \frac{\bm\xi^\top \Delta}{n}} \cdot \wh \chi({\bm\xi}) \cdot \wh G_v(\bm\xi)$ \label{a8l7}

		\State $H^{z}_v \gets \frac{1}{|\textsc{RIP}_m^z|} \left(N \cdot \sum_{\Delta \in \textsc{RIP}_m^z} \left( e^{-2\pi i \frac{\ff^\top \Delta }{n}} \sum_{\jj \in [n]^d} G_v(\Delta -\jj) \cdot x({\jj})\right) - h^z_v \right)$
		\EndFor
		
		\State{$\wh{H}_v \gets \textsc{Median}_{z \in [16\log N]} \left\{H^{z}_v\right\}$} \label{a8l9} 
		\Comment {Median of real and imaginary parts separately}
		
		\EndFor
		
		\State \textbf{return} $ 
		\left\{\wh H_v \right\}_{v \in S} $
		
		\EndProcedure
		
	\end{algorithmic}
	
\end{algorithm}
Lastly, we need the following primitive whose objecive is to find a subset of identified leaves that are cheap to estimate \emph{on average}.
\begin{claim}[\textsc{ExtractCheapSubset} guarantee]\label{claim:findCheap}
	For every subtree $T$ of $\tfull_N$ and every subset $S\subseteq \leaves(T)$ that satisfies $\sum_{u \in S} 2^{-w_T(u)} \ge \frac{1}{2}$, the primitive \textsc{ExtractCheapSubset}$(T,S)$ (see bottom of Algorithm~\ref{alg:sfftrobust}) outputs a non-empty subset $L \subseteq S$ such that 
\[|L| \cdot \left(8 + 4 \log |S| \right) \ge \max_{v\in L} 2^{w_T(v)}.\]
\end{claim}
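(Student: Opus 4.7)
The plan is to prove the claim by a two-step pigeonhole argument on the weights of leaves in $S$. For each integer $j \ge 0$, let $L_j := \{v \in S : w_T(v) = j\}$ denote the set of leaves in $S$ of weight exactly $j$; then the Kraft mass hypothesis reads $\sum_{j \ge 0} |L_j| \cdot 2^{-j} \ge 1/2$. The set we will output is one of the $L_j$'s, specifically the one maximizing $|L_j| \cdot 2^{-j}$ (the primitive \textsc{ExtractCheapSubset} at the bottom of Algorithm~\ref{alg:sfftrobust} is presumably just bucketing $S$ by weight and returning the bucket with largest Kraft mass). Note that for $L = L_j$ we have $\max_{v \in L} 2^{w_T(v)} = 2^j$, so the target inequality becomes $|L_j|/2^j \ge 1/(8 + 4\log|S|)$.

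First I would cut off the ``heavy'' contribution to the Kraft mass. Let $J := \lceil 2 + \log_2 |S| \rceil$ and define $S_{\text{heavy}} := \{ v \in S : w_T(v) \ge J\}$. Since every $v \in S_{\text{heavy}}$ satisfies $2^{-w_T(v)} \le 1/(4|S|)$ and $|S_{\text{heavy}}| \le |S|$, we get
\begin{equation*}
\sum_{v \in S_{\text{heavy}}} 2^{-w_T(v)} \le |S| \cdot \frac{1}{4|S|} = \frac{1}{4}.
\end{equation*}
Combined with the hypothesis, this yields $\sum_{j = 0}^{J-1} |L_j| \cdot 2^{-j} \ge 1/4$.

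Next I apply pigeonhole over the (at most $J \le 2 + \log_2 |S|$) weight levels $j \in \{0,1,\ldots,J-1\}$ appearing on the left-hand side: there must exist some such $j^*$ for which
\begin{equation*}
\frac{|L_{j^*}|}{2^{j^*}} \ge \frac{1}{4J} \ge \frac{1}{8 + 4 \log_2 |S|}.
\end{equation*}
Taking $L := L_{j^*}$ (which is nonempty by this inequality) and observing that $\max_{v \in L} 2^{w_T(v)} = 2^{j^*}$ completes the proof. To conclude, I would verify that the pseudocode \textsc{ExtractCheapSubset}$(T,S)$ indeed returns such a level by, e.g., iterating over $v \in S$, computing $w_T(v)$, bucketing leaves by weight, and returning the bucket maximizing $|L_j|/2^j$ (which in particular dominates $1/(4J)$ by the averaging above).

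Since this argument is essentially a direct sharpening of Lemma~\ref{lemma:kraft-ISCheap} applied with the given Kraft-mass lower bound $1/2$, no step is really the ``main obstacle''; the only point where one needs to be careful is the choice of threshold $J \approx 2 + \log_2 |S|$, which is dictated by requiring that $S_{\text{heavy}}$ lose at most half of the available Kraft mass (so that at least $1/4$ remains to spread across $O(\log|S|)$ levels and yield the constants $8$ and $4$ in the stated bound).
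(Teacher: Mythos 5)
Your argument mirrors the paper's proof exactly: both truncate the heavy tail of $S$ by a threshold $\Theta(|S|)$ so that at least $1/4$ of the Kraft mass survives, and then pigeonhole over the $O(\log|S|)$ remaining weight levels to find the desired bucket $L_{j^*}$. One minor slip worth fixing: $J=\lceil 2+\log_2|S|\rceil \ge 2+\log_2|S|$ (not $\le$, as you wrote), so the closing step $\tfrac{1}{4J}\ge\tfrac{1}{8+4\log_2|S|}$ fails whenever $\log_2|S|$ is non-integer --- the paper's own proof has an analogous off-by-one in counting the levels $\{0,\dots,\lfloor\log(4|S|)\rfloor\}$, and in both cases the cure is simply to loosen the constant $8$ to (say) $12$, which is harmless to every downstream use of the claim.
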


\subsection{Main Algorithm.}
In this subsection we present our main sparse FFT algorithm. The algorithms consists of an outer loop and an inner loop. The outer loop, called \textsc{RobustSparseFT}, always maintains a vector $\wh{\chi}$ a tree $\frontier$ such that 
	\[ \head_\mu(\wh{x} - \wh{\chi}) \subseteq \cup_{u \in \frontier} \subtree(u). \]
At every point in time, we explore the frequency cones of the low-weight $\frontier$ by running the \textsc{RobustPromiseSFT} algorithm. For the pseudocodes of the routines \textsc{RobustPromiseSFT} and \textsc{RobustSparseFT}, see Algorithms~\ref{alg:ksparsefft-inner} and \ref{alg:sfftrobust}, respectively.


%


\begin{figure}[t]
	\centering
	\scalebox{.76}{
		\begin{tikzpicture}[level/.style={sibling distance=80mm/#1,level distance = 1.5cm}]
			\node [arn] (z){}
			child {node [arn] {}edge from parent [photon]
				child[draw=white]
				child{node [arn]{}}
			}
			child { node [arn] {} edge from parent [photon]
				child {node [arn_l] (v) {}
					child[sibling distance=55mm]{node [arn] {}edge from parent [solidline]
						child[sibling distance=45mm]{node [arn] (e1) {}
							child{node [arn] {}edge from parent [dashed]
								child{node [arn] {}}
								child{node [arn] {}}
							}
							child[draw=white]
						}
						child{node [arn] {}
							child{node [arn] {}
								child{node [draw=blue] {}}
								child[draw=white]
							}
							child{node [arn] {}
								child{node [draw=blue] {}}
								child{node [draw=blue] {}}
							}
						}
					}
					child[sibling distance=40mm]{node [arn] {}edge from parent [solidline]
						child[sibling distance=35mm]{node [circle, white, draw=red, fill=red, inner sep = 1.4] {}edge from parent [draw=red, thin, dashed]
							child[draw = white]
							child{node [circle, white, draw=red, fill=red, inner sep = 1.4]{}
								child{node [circle, white, draw=red, fill=red, inner sep = 1.4] (r1){}}
								child{node [circle, white, draw=red, fill=red, inner sep = 1.4] (r2){}}
							}
						}
						child[sibling distance=45mm]{node [arn] {}
							child{node [arn] (e2) {}
								child{node [arn] {} edge from parent [dashed]}
								child[draw=white]
							}
							child{node [arn] (e3) {}
								child{node [arn] {} edge from parent [dashed]}
								child[draw=white]
							}
						}
					}
				}
				child {node [arn] {}
					child{node [arn]{}}
					child{node [arn]{}}
				}
			};
		
		\node []	at (v.north)	[label=left: \LARGE{$v$}]	{};
		
		\node [] at (-6.9,0.1) [label=right:{\Large{\Path}}]	{};
		\draw[draw=black,very  thick, ->] (-4,0) -- (-1.8,-0.5);

		\node [] at (-5,-3.5) [label=right:{\Large{subtree $T$}}]	{};
		\draw[draw=black,very  thick, ->] (-2.5,-3.6) -- (-0.5,-4);
		
		\node [] at (e1) [label=left:{\Large{yet to be explored subtree}}]	{};
		\node [trrr] at (e1.north) [] {};

		\node [trr] at (e2.north) [] {};
		
		\node [trr] at (e3.north) [] {};
		
		\node [] at (-7,-7.8) [label=north:\Large{$\sident$ leaves}]	{}
		edge[->, thick, bend right=35] (-1.4,-9.3)
		edge[->, thick, bend right=38] (0.1,-9.3)
		edge[->, thick, bend right=40] (1.4,-9.3);
		
		\node [] at (-6,-11) [label=north:\Large{recovered \& subtracted}] {};
		\node [] at (-3.5,-11.3) [label=left:\Large{leaves (frequencies)}]	{}
		edge[->, bend right=20, thick, dashed, draw=red] (2.3,-9.3)
		edge[->, bend right=20, thick, dashed, draw=red] (3.4,-9.3);

		\draw[draw=black,thick] (-5,-9.25) rectangle ++(13,0.55);
		\foreach \x in {-3.85, -2.3, -0.4 , 1.1, 2.1, 3, 4.3, 5.7} \draw[draw=black,very thick] (\x,-8.7) -- (\x,-9.25);
		\node []	at (9,-11)	[label=left:\Large $\head \cap \subtree_\Path(v)$] {}
		edge[->, bend right=50, very thick] (8,-8.95);
		
		\end{tikzpicture}
	}
	\par
	
	\caption{Illustration of an instance of \textsc{RobustPromiseSFT} (Algorithm~\ref{alg:ksparsefft-inner}). This procedure takes in a tree $\Path$ (shown with thin edges) together with a leaf $v \in \leaves(\Path)$ and adaptively explores/constructs the subtree $T$ rooted at $v$ to find all heavy frequencies that lie in $\subtree_\Path(v)$. If $\head$ denotes the set of heavy frequencies, then the algorithm finds $\head \cap \subtree_\Path(v)$ by exploring $T$. Once the identity of a leaf is fully revealed, the algorithm adds that leaf to the set $\sident$. When the number of marked leaves grows to the point where marked frequencies can be estimated cheaply, our algorithm estimates them all in a batch, subtracts off the estimated signal, and removes all corresponding leaves from $T$.}\label{fig:robust-promise}
\end{figure}

\paragraph{Overview of \textsc{RobustPromiseSFT} (Algorithm\ref{alg:ksparsefft-inner}):} Consider an invocation of \textsc{RobustPromiseSFT}$(x, \wh\chi_{in}, \Path, v, b, k, \mu)$. Suppose that $\wh y := \wh x - \wh \chi_{in}$ is a signal in the $k$-high SNR regime, i.e., $\wh y$ has $k$ heavy frequencies and the value of each such heavy frequency is at least $3$ times higher than the tail's norm. More formally, let $\head \subseteq[n]^d$ denote the set of heavy (head) frequencies of $\wh y$ and suppose that $|\head| \le k$, and the tail norm of $\wh y$ satisfies $\| \wh y - \wh y_{\head}\|_2 \le \mu$ and additionally suppose that $|\wh y(\ff)| \ge 3\mu$ for every $\ff \in \head$. If $\Path$ fully captures the heavy frequencies of $\wh y$, i.e., $\head \subseteq \supp{(\Path)}$, and the number of heavy frequencies in frequency cone of node $v$ is bounded by $b$, i.e., $|\head \cap \subtree_{\Path}(v)| \le b$, then \textsc{RobustPromiseSFT} finds a signal $\wh \chi_v$ such that $\supp{(\wh\chi_v)} = \head \cap \subtree_{\Path}(v) := S$ and $\|\wh y_S - \wh \chi_{v}\|_2^2 \le \frac{\mu^2}{20}$. An example of the input tree $\Path$ is illustrated in Figure~\ref{fig:robust-promise} with thin solid black edges. Additionally, one can see node $v$ which is a leaf of $\Path$ in this figure.

Algorithm~\ref{alg:ksparsefft-inner} recovers heavy frequencies in the subree of $v$, i.e., $S = \head \cap \subtree_{\Path}(v)$, by iteratively exploring the subtree of $\Path$ rooted at $v$, which we denote by $T$, and simultaneously updating $\wh \chi_v$. We show an example of subtree $T$ at some iteration of our algorithm in Figure~\ref{fig:robust-promise} with thick solid edges. 
Our algorithm, in all iterations, maintains a subtree $T$ such that the frequency cone of each of its leaves contain at least one head element, i.e.,
\begin{equation}\label{eq:leaves-non-empty}
\text{for every }u\in \leaves(T): \subtree_{\Path \cup T}(u) \cap \head \neq\emptyset. 
\end{equation}
We demonstrate, in Figure~\ref{fig:robust-promise}, the leaves that correspond to set $S = \head \cap \subtree_{\Path}(v)$ via leaves at bottom level of the subtree rooted at $v$.
One can easily verify \eqref{eq:leaves-non-empty} in this figure by noting that the frequency cone of each leaf of $T$ contains at least one element from the set $\head$.
Additionally, at every iteration of the algorithm, the union of all frequency cones of subtree $T$ captures all heavy frequencies that are not recovered yet, i.e., 
\begin{equation}\label{eq:tree-contain-all-heavies}
S \setminus \supp{(\wh \chi_v)} \subseteq \supp{\left(\Path\cup T\right)}.
\end{equation}
In Figure~\ref{fig:robust-promise}, we show the set of fully recovered leaves (frequencies), i.e., $\supp{(\wh \chi_v)}$, using red thin dashed subtrees. These frequencies are subtracted from the residual signal $\wh y - \wh \chi_v$ and their corresponding leaves are removed from subtree $T$, as well. One can verify that condition~\ref{eq:tree-contain-all-heavies} holds in the example depicted in Figure~\ref{fig:robust-promise}.
Moreover, the estimated value of every frequency that is recovered so far, is accurate up to an average error of $\frac{\mu}{\sqrt{20b}}$. More precisely, in every iteration of the algorithm the following property is maintained,
 \begin{equation}\label{eq:estimates-are-fine}
\frac{\sum_{\ff \in \supp{(\wh \chi_v)}}| \wh y(\ff) - \wh \chi_v(\ff) |^2}{|\supp{(\wh \chi_v)}|} \le \frac{\mu^2}{20b}.
 \end{equation}

At the start of the procedure, subtree $T$ is initialized to be the leaf $v$, i.e., $T=\{v\}$. Moreover, we initialize $\wh\chi_v \equiv 0$. Trivially, these initial values satisfy \eqref{eq:leaves-non-empty}, \eqref{eq:tree-contain-all-heavies}, and \eqref{eq:estimates-are-fine}. 
The algorithm also keeps a subset of leaves denoted by $\sident$ that contains the leaves of $T$ that are fully identified, that is the set of leaves that are at the bottom level and hence there is no ambiguity in their frequency content. Initially $\sident$ is empty. We show the set of marked leaves in Figure~\ref{fig:robust-promise} using blue squares.
The algorithm operates by picking the \emph{unmarked} leaf of $T$ that has the smallest weight. Then the algorithm explores the children of this node by running \textsc{HeavyTest} on them to detect if any heavy frequencies lie in their frequency cone. If a child passes the \textsc{HeavyTest} the algorithm updates tree $T$ by adding that child to $T$. As soon as a leaf of $T$ gets to the bottom level and becomes a leaf of $\tfull_N$, the algorithm marks it, i.e., adds that leaf to the $\sident$ set. It can be seen in Figure~\ref{fig:robust-promise} that all marked leaves are at the bottom level of the tree. The marked leaves need not be explored  any further because they are at the bottom level and their frequency content is fully identified.
These operations ensure that the invariants~\eqref{eq:leaves-non-empty}, \eqref{eq:tree-contain-all-heavies}, and \eqref{eq:estimates-are-fine} are maintained. 

Once the size of set $\sident$ grows sufficiently, the algorithm estimates the values of the marked frequencies. More precisely, at some point, the size of $\sident$ will be comparable to the maximum weight of the leaves it contains, and when this happens, the values of all marked frequencies can be estimated cheaply. Hence, when $\sident$ is a cheap to estimate set of leaves, our algorithm esimates those frequencies in a batch up to an average error of $\frac{\mu}{20b}$, updates $\wh \chi_v$ accordingly and removes all estimated ($\sident$) leaves from $T$. This ensures that invariants~\eqref{eq:leaves-non-empty}, \eqref{eq:tree-contain-all-heavies}, and \eqref{eq:estimates-are-fine} are maintained. The estimated leaves are illustrated in Figure~\ref{fig:robust-promise} using red thin dashed subtrees. We also demontrate the subtrees of $T$ that contain $\head$ element and are yet to be explored by our algorithm using gray cones and dashed edges in Figure~\ref{fig:robust-promise}. The gray cone means that there are heavy elements in that frequency cone that need to be identified as that node has not reached the bottom level yet.

Finally, the algorithm keeps tabs on the runtime it spends and ensures that even if the input signal does not satisfy the preconditions for successful recovery, in particular if $|\head \cap \subtree_{\Path}(v)| > b$, the runtime stays bounded. Additionally, the algorithm performs a quality control by running a \textsc{HeavyTest} on the residual and if the recovered signal is not correct due to violation of some preconditions, it reflects this in its output.

\begin{algorithm}[!t]
	\caption{The Inner Loop of Sparse FFT Algorithm}\label{alg:ksparsefft-inner}
	\begin{algorithmic}[1]
	\Procedure{RobustPromiseSFT}{$x, \wh \chi_{in}, \Path, v, b, k, \mu$} 
	\State \Comment{$\mu$: upper bound on tail norm $\|\eta\|_2$}	
	\State $\wh \chi_{out} \gets \{ 0 \}^{n^d}$	
	\Comment{Sparse vector to approximate $(\wh{x}- \wh{\chi}_{in})_{\subtree_\Path(v)}$} 
	
	\State{$\sident \gets \emptyset$} 
	\Comment{Set of marked nodes to be estimated later}

	\State{Let $T$ denote the subtree of $\Path$ rooted at $v$ -- i.e., $T \gets \{v\}$}
	\Repeat
		\If {$|\leaves(T)| + \|\wh \chi_v\|_0 > b$} \label{a9l6}
		\State \textbf{return} $\left( \false, \{0\}^{n^d} \right)$ \label{a9l7} \Comment{Exit because budget of $v$ is wrong}
		\EndIf
		
		\If {$\sident \neq \emptyset$ and $\frac{ \left|\sident\right|}{ \max_{u\in \sident} 2^{w_{T}(u)} } \geq \frac{1}{4+2\log b}$ }  \label{a9l8}
			\State \Comment{The set of marked frequencies that are cheap to estimate on average}
			\State $\left\{ \wh H_u \right\}_{u \in \sident} \gets \textsc{Estimate}\left(x, \wh \chi_{in} + \wh \chi_{out}, \Path \cup T, \sident, \frac{368 b}{ |\sident|} \right)$ \label{a9l10}
			\For {$u \in \sident$}
				\State $\wh \chi_{out}(\ff_u) \gets \wh H_{u}$ 
				\State Remove node $u$ from $T$
			\EndFor
			\State $\sident \gets \emptyset$
			
		\State \textbf{continue}
		\EndIf	
		
		\State $z \gets \argmin_{u \in \leaves(T)\setminus \sident } w_{T}(u)$ \label{a9l16} 
		\Comment{Find the minimum weight unmarked leaf in $T$}
		
		\If{$z \in \leaves(\tfull_N)$} \label{a9l17}
		\Comment{Frequency $\ff_z$ and leaf $z$ are fully identified}
		
		\State $\sident \gets \sident \cup \{ z \}$\label{a9l18}
		
		\Else
		
		\State $z_\lef :=$ left child of $z$ and $z_\righ :=$ right child of $z$
		
		\State $T'\gets T \cup \left\{ z_\lef, z_\righ \right\}$ \label{a9l20} \Comment{Explore children of $z$}
		
		\State{$\text{Heavy}_\ell \gets \textsc{HeavyTest}\left(x, \wh\chi_{in} + \wh \chi_{v} , \Path \cup T', z_\lef , O(b\log^3 N), 6 \mu^2 \right)$} \label{a9l21}
		\State{$\text{Heavy}_r \gets\textsc{HeavyTest}\left(x, \wh\chi_{in} + \wh \chi_{v} , \Path \cup T', z_\righ , O(b\log^3 N), 6 \mu^2 \right)$} \label{a9l22}
		
		\If{$\text{Heavy}_\ell$} 
		
		\State $\text{Add } z_\lef \text{ as the left child of } z \text{ to tree } T$
		
		\EndIf
		
		\If{$\text{Heavy}_r$}
		\State $\text{Add } z_\righ \text{ as the right child of } z \text{ to tree } T$
		
		\EndIf
		
		\If{$z\neq v$ and both $\text{Heavy}_\ell$ and $\text{Heavy}_r $ are $\false$} \label{a9l27}
		\State \textbf{return} $\left( \false, \{0\}^{n^d} \right)$ \Comment{Exit because budget of $v$ is wrong}
		\EndIf
		
		\EndIf 
		
		\Until{$T$ has no leaves besides $v$}\label{a9l29}
	
	\If{$\textsc{HeavyTest}\left(x, \wh\chi_{in} + \wh \chi_{v} , \Path, v , O(k\log^3 N), 6 \mu^2 \right)$} \label{a9l30}
		\State \Comment{The number of heavy coordinates in $\subtree_{\Path}(v)$ is more than $b$}
		\State \textbf{return} $\left( \false, \{0\}^{n^d} \right)$
	\Else
		\State \textbf{return} $\left( \true, \wh \chi_{out} \right)$ 
	\EndIf

	\EndProcedure
	\end{algorithmic}
	
\end{algorithm}

\paragraph{Overview of Algorithm~\ref{alg:sfftrobust}:} Consider an invocation of \textsc{RobustSparseFT}$(x, k, \epsilon, \mu)$. Suppose that $\wh x$ is a signal in the $k$-high SNR regime, i.e., $\wh x$ has $k$ heavy frequencies and the value of each such heavy frequency is at least $3$ times higher than the tail's norm. More formally, let $\head \subseteq[n]^d$ denote the set of heavy (head) frequencies of $\wh x$ and suppose that $|\head| \le k$, and the tail norm of $\wh x$ satisfies $\| \wh x - \wh x_{\head}\|_2 \le \mu$ and additionally suppose that $|\wh x(\ff)| \ge 3\mu$ for every $\ff \in \head$. The primitive \textsc{RobustSparseFT} finds a signal $\wh \chi$ such that $\|\wh x - \wh \chi\|_2^2 \le (1+\epsilon)\mu^2$.

Algorithm~\ref{alg:sfftrobust} recovers heavy frequencies of the input signal $\wh x$, i.e., $\head$, by iteratively exploring the tree that captures the heavy frequencies, which we denote by $\frontier$, and simultaneously updating the proxy signal $\wh \chi$. At the begining of the procedure, tree $\frontier$ only consists of a root and will be dynamically changing throughout the execution of our algorithm. Moreover, $\wh \chi$ is initially zero.
The algorithm also maintains a subset of leaves denoted by $\sident$ that contains the leaves of $\frontier$ that are fully identified, that is the set of leaves that are at the bottom level and hence there is no ambiguity in their frequency content (there is exactly one element in frequency cone of marked leaves).
Tree $\frontier$, in all iterations of our algorithm, maintains the invariant that the frequency cone of each of its leaves contain at least one head element and furthermore the frequency cone of each of its \emph{unmarked} leaves contain at least $b+1$ head element, where $b=k^{1/3}$, i.e.,
\begin{equation}\label{invariant:freq-cone-load}
|\subtree_{\frontier}(v) \cap \head| \ge \begin{cases}
1 & \text{ for every }v \in \sident\\
b+1 & \text{ for every }v \in \leaves(\frontier)\setminus \sident
\end{cases}.
\end{equation}
Additionally, at every iteration of the algorithm, the union of all frequency cones of tree $\frontier$ captures all heavy frequencies that are not recovered yet, i.e., 
\begin{equation}\label{invariant:frontier-captures-support}
	\head \setminus \supp{(\wh \chi)} \subseteq \supp{(\frontier)}.
\end{equation}
The set of fully recovered leaves (frequencies), i.e., $\supp{(\wh \chi_v)}$, are subtracted from the residual signal $\wh x - \wh \chi$ by our algorithm and their corresponding leaves get removed from $\frontier$, as well.
Moreover, the estimated value of every frequency that is recovered so far, is accurate up to an average error of $\sqrt{\frac{\epsilon}{k}} \cdot \mu$. More precisely, in every iteration of the algorithm the following property is maintained,
\begin{equation}\label{invariant:estimation-error}
\frac{\sum_{\ff \in \supp{(\wh \chi)}}| \wh x(\ff) - \wh \chi(\ff) |^2}{|\supp{(\wh \chi)}|} \le \frac{\epsilon}{k} \cdot \mu^2.
\end{equation}

At the start of the procedure, $\frontier$ is initialized to only contain a root, i.e., $\frontier=\{\text{root}\}$. Moreover, we initialize $\wh\chi \equiv 0$. Trivially, these initial values satisfy \eqref{invariant:freq-cone-load}, \eqref{invariant:frontier-captures-support}, and \eqref{invariant:estimation-error}. 
Also the set of fully identified leaves $\sident$ is initially empty.
The algorithm explores $\frontier$ by picking the \emph{unmarked} leaf that has the smallest weight, let us call it $v$. Then the algorithm explores the children of this node by running \textsc{RobustPromiseSFT} on them to recover the heavy frequencies that lie in their frequency cone. We denote by $v_\lef$ and $v_\righ$ the left and right children of $v$. Let us consider exploration of the left child $v_\lef$, the right child is exactly the same. If the number of heavy frequencies in the frequency cone of $v_\lef$ is bounded by $b=k^{1/3}$, i.e., $|\head \cap \subtree_{\frontier\cup \{v_\lef,v_\righ\}}(v_\lef)| \le b$, then \textsc{RobustPromiseSFT} recovers every frequency in the set $\head \cap \subtree_{\frontier\cup \{v_\lef,v_\righ\}}(v_\lef)$ up to average error $\frac{\mu}{\sqrt{20b}}$. Note that this everage estimation error is not sufficient for achieving the invariant \eqref{invariant:estimation-error}, hence, instead of directly using the values that \textsc{RobustPromiseSFT} recovered and update $\wh \chi$ at the newly recovered heavy frequencies, our algorithm adds the leaves corresponding to the recovered set of frequencies, i.e., $\head \cap \subtree_{\frontier\cup \{v_\lef,v_\righ\}}(v_\lef)$, at the bottom level of $\frontier$ and marks them as fully identified (adds them to $\sident$). For achieving maximum efficinecy we employ a new \emph{lazy estimation} scheme, that is, the estimation of values of marked leaves is delayed until there is a large number of marked leaves and thus there exists a subset of them that is cheap to estimate. On the other hand, if the number of head elements in frequency cone of $v_\lef$ is more than $b$ then \textsc{RobustPromiseSFT} detects this and notifies our algorithms about it and our algorithm adds node $v_\lef$ to $\frontier$. These operations ensure that the invariants~\eqref{invariant:freq-cone-load}, \eqref{invariant:frontier-captures-support}, and \eqref{invariant:estimation-error} are maintained. 

Once the size of set $\sident$ grows sufficiently such that it contains a subset that is cheap to estimate, our algorithm estimates the values of the cheap frequencies. More precisely, at some point, $\sident$ will contains a non-empty subset $\scheap$ such that the values of all frequencies in $\scheap$ can be estimated cheaply and subsequently, our algorithm esimates those frequencies in a batch up to an average error of $\sqrt{\frac{\epsilon}{k}} \cdot \mu$, updates $\wh \chi$ accordingly and removes all estimated ($\scheap$) leaves from $\frontier$ and $\sident$. This ensures that invariants~\eqref{invariant:freq-cone-load}, \eqref{invariant:frontier-captures-support}, and \eqref{invariant:estimation-error} are maintained.

\begin{algorithm}[!t]
	\caption{Robust High-dimensional Sparse FFT Algorithm}\label{alg:sfftrobust}
	\begin{algorithmic}[1]
		\Procedure{RobustSparseFT}{$x, k, \epsilon, \mu$} 
		\Comment{$\mu$ is an upper bound on tail norm $\|\eta\|_2$}
		
		\State $\frontier \gets \{\text{root}\}$, $\ff_{\text{root}}\gets 0$
		\State $b \gets \lceil k^{1/3} \rceil$
		\State $\wh \chi \gets \{ 0 \}^{n^d}$
		\State $\sident \gets \emptyset$ \Comment{Set of fully identified leaves (frequencies)}

		\Repeat
			
			\If {$\sum_{u \in \sident} 2^{-w_\frontier(u)} \ge \frac{1}{2}$} \label{a10l7}
				\State{$\scheap \gets \textsc{ExtractCheapSubset}\left( \frontier, \sident \right)$} \label{a10l8}
				\State \Comment{Lazy estimation: We extract from the batch of marked leaves a subset that is cheap to estimate on average}
				\State $\left\{ \wh H_u \right\}_{u \in \scheap} \gets \textsc{Estimate}\left(x, \wh \chi , \frontier, \scheap, \frac{32 k}{\epsilon \cdot |\scheap|} \right)$ \label{a10l10}
				\For{$u \in \scheap$}
					\State{$\wh \chi(\ff_u) \gets \wh H_{u}$}
					\State Remove node $u$ from tree $\frontier$
				\EndFor
				\State $\sident \gets \sident \setminus \scheap$
				\State \textbf{continue}
			\EndIf
			
			\State $v \gets \argmin_{u\in \leaves(\frontier) \setminus \sident} w_\frontier(u)$ \label{a10l16}
			\State \Comment{pick the minimum weight leaf in $\frontier$ which is not in $\sident$}
			
			\State $v_\lef \leftarrow$ left child of $v$ and $v_\righ \leftarrow$ right child of $v$	
			\State $T \gets \frontier \cup \{v_\lef, v_\righ \}$ \label{a10l17}
			\State $(\correct_\lef, \wh{\chi}_\lef  )  \gets \textsc{RobustPromiseSFT} \left(x, \wh \chi, T, v_\lef, b, k, \mu\right)$ \label{a10l18}
			\State $(\correct_\righ, \wh{\chi}_\righ ) \gets \textsc{RobustPromiseSFT} \left(x, \wh \chi, T, v_\righ, b, k, \mu\right)$ \label{a10l19}

			\If{$\correct_\lef$} \label{a10l20} 
				\State $\forall \bm f \in \supp(\wh{\chi}_\lef)$, add the unique leaf corresponding to $\bm f$ to $\frontier$ and $\marked$
			\Else 
				\State Add $v_\lef$ to $\frontier$
			\EndIf	
			\If{$\correct_\righ$} 
			\small
				\State $\forall \bm f \in \supp(\wh{\chi}_\righ)$, add the unique leaf corresponding to $\bm f$ to $\frontier$ and $\marked$
			\normalsize
			\Else 
				\State Add $v_\righ$ to $\frontier$
			\EndIf

			\If{$\correct_\lef$ and $\correct_\righ$}
				\State Remove $v$ from $\frontier$
			\EndIf

		\Until{$\frontier$ has no leaves besides root} \label{a10l30}

		\State \textbf{return} $\wh{\chi}$
		
		\EndProcedure
		
		\Procedure{ExtractCheapSubset}{$T,S$}
		\State $L \gets \emptyset$
		\While{$|L| \cdot \left( 8 + 4 \log |S| \right) < {\max_{v\in L} 2^{w_T(v)}}$}
			\State $L \gets L \cup \left\{\argmin_{\substack{u \in S \setminus L}} w_{{T}}(u)\right\}$
		\EndWhile
		\State Return $L$
		\EndProcedure

	\end{algorithmic}
	
\end{algorithm}

\paragraph{Analysis of \textsc{RobustPromiseSFT}.}
First we analyze the runtime and sample complexity of primitive \textsc{RobustPromiseSFT} in the following lemma.

\begin{lemma}[\textsc{RobustPromiseSFT} -- Time and Sample Complexity]
	\label{promise_correctness-runtime}
	Consider an invocation of \textsc{RobustPromiseSFT} $(x, \wh \chi_{in}, \Path, v, b, \mu)$, where $\Path$ is a subtree of $\tfull_N$, $v$ is some leaf of $T$, $k$ and $b$ are integers with $k > b$, $\mu\ge0$, and $x,\wh \chi_{in} : [n]^d \to \C$. Then

\begin{itemize}
\item The running time of primitive is bounded by 
\[\widetilde{O}\left( \|\wh \chi_{in}\|_0 \cdot \left(b^2 + k \right) + bk + 2^{w_\Path(v)}\cdot \left(b^3 + k \right) \right).\]

\item The number of accesses it makes on $x$ is always bounded by 
\[\widetilde{O}\left(2^{w_\Path(v)}\cdot \left(b^3 + k\right) \right).\]
\end{itemize}
Furthermore, the output signal $\wh{\chi}_v$ always satisfies $\|\wh{\chi}_v\|_0 \le b$ and $\supp(\wh{\chi}_v) \subseteq \subtree_\Path(v)$.
\end{lemma}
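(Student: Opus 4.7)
The proof will split into three pieces: (1) the output structural guarantees, (2) an amortized bound on the number of iterations of the repeat loop, and (3) charging per-iteration work against the three pools appearing in the runtime. First, I would knock off the structural bounds. The support bound $\supp(\wh\chi_v) \subseteq \subtree_\Path(v)$ is immediate because $T$ is maintained as a subtree of $\Path \cup \tfull_N$ rooted at $v$ and $\wh\chi_{out}$ is modified only on coordinates $\ff_u$ for $u \in \leaves(T)$. The sparsity bound $\|\wh\chi_v\|_0 \le b$ is enforced directly by the guard on line~\ref{a9l6}, which aborts the procedure and returns the all-zero vector the moment $|\leaves(T)|+\|\wh\chi_v\|_0$ exceeds $b$.

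Next I would bound the number of iterations. At every iteration $|\leaves(T)|\le b$ by the same guard, and each iteration performs exactly one of three disjoint operations: marking a newly identified bottom leaf (line~\ref{a9l18}), expanding a chosen leaf via two \textsc{HeavyTest} calls on its children (lines~\ref{a9l21}--\ref{a9l22}), or flushing $\sident$ via a batch \textsc{Estimate} (lines~\ref{a9l8}--\ref{a9l10}). Since an accepted execution has $\|\wh\chi_v\|_0\le b$, the cumulative number of marked leaves ever flushed is at most $b$; the exit check on line~\ref{a9l27} precludes vacuous expansions, so every expansion introduces at least one new alive leaf and, after paying $\log N$ for tree depth, the total iteration count is $\widetilde O(b)$.

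For the per-iteration work I would partition the charges into three pools matching the statement. Each \textsc{HeavyTest} call on a child of $z$ uses budget $m=O(b\log^3 N)$; by Lemma~\ref{lem:guarantee_heavy_test} it costs $\widetilde O(2^{w_{\Path\cup T}(z)}\,b)$ samples and $\widetilde O((\|\wh\chi_{in}\|_0+\|\wh\chi_v\|_0)\,b+2^{w_{\Path\cup T}(z)}\,b)$ time. Using $2^{w_{\Path\cup T}(z)}\le 2^{w_\Path(v)}\cdot 2^{w_T(z)}$ together with the min-weight selection rule on line~\ref{a9l16} and Kraft averaging (Lemma~\ref{lem:kraft_averaging}) to bound $\sum_z 2^{w_T(z)}=\widetilde O(b)$ across the $\widetilde O(b)$ iterations, the cumulative \textsc{HeavyTest} cost becomes $\widetilde O(2^{w_\Path(v)}\,b^3)$ samples and $\widetilde O((\|\wh\chi_{in}\|_0+b)\,b^2+2^{w_\Path(v)}\,b^3)$ time. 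The single $k$-budget call on line~\ref{a9l30} contributes $\widetilde O(2^{w_\Path(v)}\,k+(\|\wh\chi_{in}\|_0+b)\,k)$, which is exactly the source of the $+k$ and $+bk$ summands. Each \textsc{Estimate} flush runs with $m=368b/|\sident|$, so by Lemma~\ref{est-inner-lem} it costs $\widetilde O(b\cdot \max_{u\in\sident}2^{w_T(u)})$ samples, and the trigger condition on line~\ref{a9l8} forces $\max_{u\in\sident}2^{w_T(u)}=\widetilde O(|\sident|)$; since the cumulative marked mass across all flushes is $\sum_i|\sident_i|\le b$, the total Estimate sample cost sums to $\widetilde O(b^2)$, which is absorbed into the $2^{w_\Path(v)}\,b^3$ pool, with analogous accounting for the time pool proportional to $\|\wh\chi_{in}\|_0$.

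The main obstacle will be the amortization of (2) and (3): individual calls on nodes of high weight $2^{w_T(z)}$ are expensive, and I must argue carefully that the min-weight selection rule, together with the invariant $|\leaves(T)|\le b$ and the trigger condition controlling $\max_{u\in\sident}2^{w_T(u)}$, forces $\sum_z 2^{w_T(z)}=\widetilde O(b)$ rather than the pessimistic $\widetilde O(b^2)$. This requires combining Kraft's equality with the fact that expansion always targets the unmarked leaf of smallest weight, so weights encountered along the execution are dominated by a small prefix of a Kraft-feasible ordering. A parallel subtlety is the flush schedule: even though some flushes may be triggered at weights far below $\log b$, the total Kraft mass removed equals the mass added by marking events, which is $\le 1$, giving the final amortization of the Estimate pool against the marking pool.
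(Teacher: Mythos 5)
Your structural and output-size bounds (part (1)), and your breakdown of iterations into marking/expansion/flushing (part (2)), align with the paper, which formalizes the iteration count via a potential $\phi_t := |\sident^{(t)}| + 2\log N\cdot\|\wh\chi_v^{(t)}\|_0 + \sum_{u\in\leaves(T^{(t)})} l_{T^{(t)}}(u)$ that increases by at least one per iteration and is capped at $O(b\log N)$. Your informal argument is essentially equivalent.

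The real issue is in part (3). You flag as the ``main obstacle'' the need to show $\sum_z 2^{w_T(z)} = \widetilde{O}(b)$, and you correctly anticipate this would be delicate. But this bound is both unnecessary and internally inconsistent with your own conclusion: multiplying a $\widetilde{O}(b)$ sum by the budget $m = \widetilde{O}(b)$ and by $2^{w_\Path(v)}$ would give $\widetilde{O}(2^{w_\Path(v)} b^2)$ \textsc{HeavyTest} samples, whereas you (and the lemma) state $\widetilde{O}(2^{w_\Path(v)} b^3)$. What the paper actually proves is the much easier \emph{per-iteration} bound $w_{T^{(t)}}(z) \le \log(2b)$: if $z$ (the minimum-weight unmarked leaf) had $2^{w_T(z)} > 2b$, then by Kraft's equality the unmarked Kraft mass would be below $1/2$, so the marked Kraft mass would exceed $1/2$, and Lemma~\ref{lemma:kraft-ISCheap} would produce a cheap-to-estimate subset of $\sident$, contradicting the fact that the trigger in line~\ref{a9l8} did not fire. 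Combining the per-iteration bound $2^{w_T(z)} \le 2b$ with the $O(b\log N)$ iteration count gives the ``pessimistic'' $\sum_z 2^{w_T(z)} = \widetilde{O}(b^2)$, which is exactly what produces the $\widetilde{O}(2^{w_\Path(v)} b^3)$ term in the statement. So you should replace the (probably false) claim about the aggregate sum with the per-iteration bound; this also removes the obstacle you flagged. A minor note: in your Estimate pool accounting you dropped a $2^{w_\Path(v)}$ factor (each flush costs $\widetilde{O}(2^{w_\Path(v)} b |\sident|)$ samples, summing to $\widetilde{O}(2^{w_\Path(v)} b^2)$), but since you absorb it into the $2^{w_\Path(v)} b^3$ pool anyway, that does not affect the final bound.
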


\begin{proof}
	First we prove that Algorithm~\ref{alg:ksparsefft-inner} terminates after a bounded number of iterations. In order to bound the number of iterations of \textsc{RobustPromiseSFT}, we use a potential function argument. Let $\wh \chi_v^{(t)}$ denote the signal $\wh \chi_v$ at the end of iteration $t$ of the algorithm. Furthermore, let $T^{(t)}$ denote the subtree $T$ at the end of $t^{th}$ iteration. Additionally, let $\sident^{(t)}$ and $\idnt^{(t)}$ denote the set $\sident$ (defined in Algorithm~\ref{alg:ksparsefft-inner}) at the end of iteration $t$.  

We prove that the algorithm always terminates after $O\left( b \cdot {\log N} \right)$ iterations. We prove this by contradiction. For any integer $t$, define the following potential function 
\[\phi_t := {\left|\sident^{(t)} \right|} + 2\log N \cdot \left\| \wh \chi_v^{(t)} \right\|_0 + \sum_{u\in \leaves\left(T^{(t)}\right)} l_{T^{(t)}}(u).\]
Towards contradiction, suppose that Algorithm~\ref{alg:ksparsefft-inner} does not terminate after ${4b \log  N}$ iterations. We show that the above potential function increases by at least $1$ at every iteration $2 \le t \le {4 b \log  N}$, i.e., $\phi_t \ge \phi_{t-1} + 1$. This is enough to conclude the termination of the algorithm because the if-statement in line~\ref{a9l6} ensures that $\left|\sident^{(t)} \right| \le \left|\leaves\left(T^{(t)}\right) \right| \le b$ and also $\left\| \wh \chi_v^{(t)} \right\|_0 \le b$, thus, $\phi_t = O(b \log N)$ for any $t$, which proves that algorithm terminates after $O(b\log N)$ iterations.

At any given iteration $t$ of the algorithm, there are 3 possibilities that can happen. We show that if any of these possibilities happen, then the potential function $\phi_t$ increases by at least $1$.
\paragraph{Case 1 -- the if-statement in line~\ref{a9l8} of Algorithm~\ref{alg:ksparsefft-inner} is True.}
In this case, the algorithm constructs $T^{(t)}$ by removing all leaves that are in the set $\sident^{(t-1)}$ from tree $T^{(t-1)}$ and leaving the rest of the tree unchanged.
Furthermore, the algorithm sets $\sident^{(t)} \gets \emptyset$. By construction, the level of the leaves that are in $\sident^{(t-1)}$ is at most $\log N$, thus
\[\sum_{u\in \leaves\left(T^{(t)}\right)} l_{T^{(t)}}(u) \ge \sum_{u\in \leaves\left(T^{(t-1)}\right)} l_{T^{(t-1)}}(u) - \log N \cdot \left|\sident^{(t-1)} \right| \] 

Additionally, in this case, the algorithm computes $\{\wh H_u\}_{u \in \sident^{(t-1)}}$ by running the procedure \textsc{Estimate} in line~\ref{a9l10} and then updates $\wh \chi_v^{(t)}(\ff_u) \gets \wh H_u$ for every $u \in \sident^{(t-1)}$ and $\wh \chi_v^{(t)}(\bm \xi) = \wh \chi_v^{(t-1)}(\bm \xi)$ at every other frequency $\bm \xi$. Therefore, $\left\| \wh \chi_v^{(t)} \right\|_0 = \left\| \wh \chi_v^{(t)} \right\|_0 + \left|\sident^{(t-1)} \right|$. Also, ${\left|\sident^{(t)} \right|} = 0$. Hence,
\[\phi_t - \phi_{t-1} \ge (\log N-1) \cdot \left| \sident^{(t-1)} \right| \ge 1,\]
where the inequality above holds because the if-statement in line~\ref{a9l8} of the algorithm is $\true$, ensuring that $ \sident^{(t-1)} \neq \emptyset $.

\paragraph{Case 2 -- the if-statement in line~\ref{a9l8} is False and if-statement in line~\ref{a9l17} is True.}
In this case, in line~\ref{a9l18}, the algorithm updates $\sident$ by adding the leaf $z$ to this set, i.e., $\sident^{(t)} \gets \sident^{(t-1)} \cup \{z\}$. Additionally, tree $T$ and signal $\wh \chi_v$ stay unchanged, i.e., $\wh \chi_v^{(t)} = \wh \chi_v^{(t-1)} $ and $T^{(t)} = T^{(t-1)} $. Therefore, in this case, $\phi_{t+1} - \phi_t = 1$. 

\paragraph{Case 3 -- both if-statements in lines~\ref{a9l8} and \ref{a9l17} are False.} 
In this case, either the algorithm terminates by the if-statement in line~\ref{a9l27}, which is exactly what we have assumed towards a contradiction that did not happen, or $\sum_{u\in \leaves\left(T^{(t)}\right)} l_{T^{(t)}}(u) \ge \sum_{u\in \leaves\left(T^{(t-1)}\right)} l_{T^{(t-1)}}(u) + 1$, while ${\left|\sident^{(t)} \right|} = \left|\sident^{(t-1)} \right|$ and $\left\| \wh \chi_v^{(t)} \right\|_0 = \left\| \wh \chi_v^{(t-1)} \right\|_0$ (since we assumed $t \ge 2$ and hence $z\neq v$). Thus, $\phi_{t+1} - \phi_t \ge 1$.

So far we have showed that at every iteration, under the {\bf cases} {\bf 1}, {\bf 2}, and {\bf 3}, the potential function $\phi_t$ increases by at least one. Now we show that, at every iteration, exactly one of these three cases happens and hence the algorithm never stalls. For the sake of contradiction suppose that at iteration $t$, the algorithm stalls. For this to happen, we must have that all leaves of $T^{(t-1)}$ are in the set $\sident^{(t-1)}$. By the if-statement in line~\ref{a9l6} of Algorithm~\ref{alg:ksparsefft-inner}, we are guaranteed that $|\sident^{(t-1)}| \le b$. Therefore, by Lemma~\ref{lemma:kraft-ISCheap}, there must exist a subset $\emptyset \neq L \subset \sident^{(t-1)}$ such that ${|L|} \ge \frac{1}{4+ 2\log b} \cdot \max_{u \in L} 2^{w_{T^{(t-1)}}(u)}$. 
Hence, it follows from the way our algorithm explores the nodes of the tree in an increasing order of weights, that there must exist some $t'<t$ such that $\emptyset \neq \sident^{(t'-1)} \subseteq \sident^{(t-1)}$ such that the if-statement in line~\ref{a9l8} becomes $\true$ on $\sident^{(t'-1)}$. Therefore, {\bf case 1} must have happened at iteration $t'$, resulting in emptying the set of identified frequencies, i.e., $\sident^{(t')} \gets \emptyset$. This would have resulted in $\sident^{(t'-1)} \nsubseteq \sident^{(t-1)}$ which is the contradiction we wanted. Therefore the algorithm never stalls and always exactly one of {\bf case 1}, {\bf 2}, and {\bf 3} happen.

We proved that $\phi_t$ must increase by at least $1$ at every iteration. Since $\phi_1 \ge 0$ and we assumed that the algorithm did not terminate after $q = {4b \log  N}$ iterations, this potential will have a value of at least $4b \log N -1$:
\[ \phi_q \ge 4b \log N - 1, \text{ where } q = {4b \log  N}. \]
On the other hand, since the if-statement in line~\ref{a9l6} ensures that the number of leaves of $T^{(t)}$ is always bounded by $b - \left\|\wh \chi_v^{(t)} \right\|_0$, the sum $\sum_{u\in \leaves\left(T^{(t)}\right)} l_{T^{(t)}}(u)$ is always bounded by $\left(b - \left\|\wh \chi_v^{(t)} \right\|_0\right) \cdot \log N$. Also, the size of the set $\sident^{(t)}$, which is a subset of $\leaves(T^{(t)})$, is always bounded by $b - \left\|\wh \chi_v^{(t)} \right\|_0$. 
This means that we must have $\phi_q \le b \cdot (\log N + 1) + (\log N - 1) \cdot \left\|\wh \chi_v^{(q)} \right\|_0$. The if-statement in line~\ref{a9l6} also ensures that $\left\|\wh \chi_v^{(q)} \right\|_0 \le b$ which implies that $\phi_q \le 2 b \cdot \log N$ which contradicts $\phi_q \ge 4b \cdot \log N -1$. This proves that the number of iterations of the algorithm must be bounded by $O\left( b \cdot {\log N}\right)$, guaranteeing termination of \textsc{RobustSparseFT}.
The termination quarantee along with the way our algorithm constructs $\wh \chi_v$ and the if-staement in line~\ref{a9l6}, imply that the output signal $\wh \chi_v$ always satisfies $\|\wh{\chi}_v\|_0 \le b$ and $\supp(\wh{\chi}_v) \subseteq \subtree_\Path(v)$.
Now we bound the running time and sample complexity of the algorithm.

\paragraph{Sample Complexity and Runtime:} First recall that we proved $\left\|\wh \chi_{v}^{(t)} \right\|_0 \le b$ for every iteration $t$. Additionally, the weight of the node $z$ at every iteration of the algorithm is bounded by $w_{T^{(t)}}(z) \le \log (2b)$. To see this, note that if at some iteration $t$, the set of identified frequencies (or leaves) that our algorithm keeps, $\sident^{(t)}$, is such that there exists a leaf $u \in \sident^{(t)}$ with $w_{T^{(t)}}(u) > \log (2b)$, then by Lemma~\ref{lemma:kraft-ISCheap}, $\sident^{(t)}$ contains a non-empty subset that is cheap to estimate. Thus, at some iteration $t'<t$, where $\emptyset \neq \sident^{(t')} \subset \sident^{(t)}$ holds, it must have been the case that the if-statement in line~\ref{a9l8} became $\true$ on $\sident^{(t')}$. If this happened, our algorithm would have estimated $\sident^{(t')}$ at iteration $t'$ and so we would have $\sident^{(t')} \cap \sident^{(t)} = \emptyset$ which is a contradiction.

Given the above inequalities, by Lemma~\ref{lem:guarantee_heavy_test}, time and sample complexities of every invocation of \textsc{HeavyTest} in lines~\ref{a9l21} and \ref{a9l22} of Algorithm~\ref{alg:ksparsefft-inner} are bounded by $\widetilde{O}\left( \|\wh \chi_{in}\|_0 \cdot b + 2^{w_\Path(v)}\cdot b^2 \right)$ and $\widetilde{O}\left( 2^{w_\Path(v)}\cdot b^2 \right)$, respectively. Also, since $\|\wh \chi_v\|_0 \le b$, the runtime and sample complexity of the \textsc{HeavyTest} in line~\ref{a9l30} of the algorithm are bounded by $\widetilde{O}\left( \|\wh \chi_{in}\|_0 \cdot k + bk + 2^{w_\Path(v)}\cdot k \right)$ and $\widetilde{O}\left( 2^{w_\Path(v)}\cdot k \right)$, respectively.
Thus, total sample and time complexity of all invocations of \textsc{HeavyTest} throughout the execution of our algorithm are bounded by $\widetilde{O}\left( 2^{w_\Path(v)}\cdot (b^3 + k) \right)$ and $\widetilde{O}\left( \|\wh \chi_{in}\|_0 \cdot (b^2 + k) + bk + 2^{w_\Path(v)}\cdot (b^3 + k) \right)$, respectively

Additionally, by Lemma~\ref{est-inner-lem}, the sample and time complexity of every invocation of \textsc{Estimate} in line~\ref{a9l10} of our algorithm are bounded by $\widetilde{O}\left( \frac{b \cdot 2^{w_\Path(v)}}{ \left| \sident^{(t-1)} \right|} \cdot \sum_{u \in \sident^{(t-1)}} 2^{w_{T^{(t-1)}}(u)}\right)$ and $\widetilde{O}\left( \frac{b \cdot 2^{w_\Path(v)}}{ \left| \sident^{(t-1)} \right|} \cdot \sum_{u \in \sident^{(t-1)}} 2^{w_{T^{(t-1)}}(u)} + b \cdot \|\wh \chi\|_0 \right)$, respectively. Because we run \textsc{Estimate} only when the if-statement in line~\ref{a9l8} holds true, the runtime and sample complexity of \textsc{Estimate} can be further upper bounded by $\widetilde{O}\left( \left| \sident^{(t-1)} \right| \cdot b \cdot 2^{w_\Path(v)} + b \cdot \|\wh \chi\|_0 \right)$ and $\widetilde{O}\left( \left| \sident^{(t-1)} \right| \cdot b \cdot 2^{w_\Path(v)} \right)$, respectively. Using the fact that 
\[\sum_{t: \text{ if-statement in line~\ref{a9l8} is }\true }\left| \sident^{(t-1)} \right| = \left\|\wh \chi_{v} \right\|_0 \le b,\] 
the total runtime and sample complexity of all invocations of \textsc{Estimate} in all iterations can be upper bounded by $\widetilde{O}\left( 2^{w_\Path(v)} \cdot {b}^2 + {b}^2 \cdot \|\wh \chi\|_0 \right)$ and $\widetilde{O}\left( 2^{w_\Path(v)} \cdot {b}^2 \right)$, respectively.
Therefore, by adding up the above contributions we can upper bound the total runtime and sample complexity by $\widetilde{O}\left( \|\wh \chi_{in}\|_0 \cdot \left(b^2 + k \right) + bk + 2^{w_\Path(v)}\cdot \left(b^3 + k \right) \right)$ and $\widetilde{O}\left( 2^{w_\Path(v)}\cdot \left(b^3 + k\right) \right)$
which completes the proof of the lemma.

\end{proof}

We are now in a position to present the main invariant of primitive \textsc{RobustPromiseSFT}.

\begin{lemma}[\textsc{RobustPromiseSFT} - Invariants]
\label{promise_correctness-invariants}
Consider the preconditions of Lemma~\ref{promise_correctness-runtime}. Let $\wh y := \wh  x - \wh \chi_{in}$ and $S := \subtree_\Path(v) \cap \head_\mu(\wh y)$, where $\head_\mu(\cdot)$ is defined as per \eqref{def:head}. If i) $ \head_\mu(\wh y) \subseteq \supp{(\Path)}$, ii) $\| \wh y - \wh y_{\head_\mu(\wh y)} \|_2^2 \le \frac{11 \mu^2}{10}$, and iii) $\left| S \right| \leq k$, then with probability at least $1 - \frac{1}{N^4}$, the output $\left( \mathrm{Budget}, \wh{\chi}_v \right)$ of Algorithm~\ref{alg:ksparsefft-inner} satisfies the following,
\begin{enumerate}
\item If $\left| S \right| \leq b$ then $\mathrm{Budget} =\true$, $\supp{(\wh \chi_v)} \subseteq S$, and $\left\|\wh{y}_{S} - \wh{\chi}_v\right\|_2^2 \leq \frac{\mu^2}{20} $;
\item If $\left| S \right| > b$ then $\mathrm{Budget}  =\false$ and $\wh \chi_v \equiv \{0\}^{n^d}$.
\end{enumerate}	

\end{lemma}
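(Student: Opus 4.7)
The plan is to first condition on a high-probability ``good event'' that captures the correctness of every randomized primitive invoked during the execution, and then argue structurally that, on this event, the stated invariants propagate. By Lemma~\ref{promise_correctness-runtime} the algorithm makes at most $\widetilde O(b)$ iterations, so it invokes $\textsc{HeavyTest}$ and $\textsc{Estimate}$ at most $\poly(\log N)\cdot b$ times. Each invocation fails with probability at most $1/N^5$ by Lemma~\ref{lem:guarantee_heavy_test} and Lemma~\ref{est-inner-lem}, so a union bound yields the required overall success probability $1-1/N^4$. From here on I reason deterministically, assuming every call succeeds.

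For \emph{both} parts I want to maintain the following loop invariants after each iteration $t$, where $T^{(t)},\sident^{(t)},\wh\chi_{out}^{(t)}$ denote the state at the end of iteration $t$:
(I1) every non-root leaf $u\in \leaves(T^{(t)})$ satisfies $\subtree_{\Path\cup T^{(t)}}(u)\cap \head_\mu(\wh y)\neq\emptyset$;
(I2) $S\setminus \supp(\wh\chi_{out}^{(t)})\subseteq \supp(\Path\cup T^{(t)})$;
(I3) $\supp(\wh\chi_{out}^{(t)})\subseteq S$ and the average squared estimation error on $\supp(\wh\chi_{out}^{(t)})$ is at most $\mu^2/(20b)$.
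For the inductive step I would split on the three cases identified in the runtime proof. The exploratory case uses the two $\textsc{HeavyTest}$ calls in lines~\ref{a9l21}--\ref{a9l22}: writing $\wh z = (\wh y-\wh\chi_{out}^{(t-1)})\cdot \wh G_{z_\lef}$ for the left child (symmetric for the right), the key point is that if $\subtree_{\Path\cup T'}(z_\lef)$ contains a head frequency $\bm f$, then $|\wh z(\bm f)|=|\wh y(\bm f)|\ge 3\mu$ by (I3) and the isolation property of $\wh G_{z_\lef}$, and Lemma~\ref{filter-robust-multidim} combined with Parseval gives $\|\wh z\|_2^2\le \|\wh y-\wh\chi_{out}^{(t-1)}\|_2^2$. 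This bounds the filter's effective ``oversampling ratio'' by $O(b)$, matching the sample complexity $O(b\log^3 N)$ so the first bullet of Lemma~\ref{lem:guarantee_heavy_test} fires with $S=\{\bm f\}$; conversely, if no head element lies in the cone then $\|\wh z\|_2^2\le \mu^2\le \theta/5$ by the near-isometry Lemma applied to the tail, so the second bullet fires. The estimation case uses Lemma~\ref{est-inner-lem} with $m=368b/|\sident^{(t-1)}|$, producing $\sum_{u\in\sident^{(t-1)}}|\wh H_u-\wh y(\ff_u)|^2\le \frac{|\sident^{(t-1)}|}{23b}\cdot\|(\wh y-\wh\chi_{out}^{(t-1)})_{\supp(T)^c}\|_2^2\le \frac{|\sident^{(t-1)}|\mu^2}{20b}$, which preserves (I3). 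The identification case (line~\ref{a9l17}) is trivial.

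For Part 1 ($|S|\le b$) I would show the budget-check in line~\ref{a9l6} never triggers: (I1) and (I2) together imply $|\leaves(T^{(t)})|+\|\wh\chi_{out}^{(t)}\|_0 \le |S|\le b$, because distinct non-root leaves of $T$ and distinct recovered frequencies correspond to disjoint subsets of $S$. Similarly the abort in line~\ref{a9l27} never triggers, because whenever we expand a node $z\neq v$ both its children jointly cover the head frequencies of the cone at $z$, which is nonempty by (I1); if both $\textsc{HeavyTest}$s returned False, Lemma~\ref{filter-robust-multidim} would give that the head elements in $\subtree_{\Path\cup T'}(z)$ contribute more than $\mu^2$ to $\|\wh y-\wh\chi_{out}\|_2$, contradicting the \textsc{HeavyTest} outputs. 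Hence the loop exits at line~\ref{a9l29} with $T=\{v\}$ and $\sident=\emptyset$, which with (I2) forces $S=\supp(\wh\chi_{out})$ and by (I3) yields $\|\wh y_S-\wh\chi_v\|_2^2 \le |S|\cdot \mu^2/(20b)\le \mu^2/20$. The final $\textsc{HeavyTest}$ on the whole cone at line~\ref{a9l30} then evaluates $(\wh y-\wh\chi_v)\cdot\wh G_v$ whose energy is at most $\mu^2/20+\|\wh y-\wh y_{\head_\mu}\|_2^2\le \theta/5$, so it returns False and the algorithm returns $(\true,\wh\chi_{out})$ as required.

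For Part 2 ($|S|>b$) I would argue that the algorithm cannot possibly exit through the $\true$-branch, because if it did then by (I2) we would have $S\subseteq \supp(\wh\chi_v)$, forcing $\|\wh\chi_v\|_0>b$, contradicting the bound of Lemma~\ref{promise_correctness-runtime}; alternatively, if the algorithm reached line~\ref{a9l30} with small $\|\wh\chi_v\|_0$ then a missed head element $\bm f\in S$ would satisfy $|(\wh y-\wh\chi_v)(\bm f)\cdot \wh G_v(\bm f)|\ge 3\mu-\mu/\sqrt{20}>2\mu$ and $\textsc{HeavyTest}$ at that line (with budget $k$ and threshold $6\mu^2$) would return True by the first bullet of Lemma~\ref{lem:guarantee_heavy_test}, forcing the algorithm to return $\false$ and set $\wh\chi_v\equiv 0$ as stipulated. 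The remaining escape routes (lines~\ref{a9l7} and \ref{a9l27}) also return $\false$ with $\wh\chi_v\equiv 0$ by construction.

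The main obstacle I expect is the noise analysis of $\textsc{HeavyTest}$ along the exploration: its guarantee in Lemma~\ref{lem:guarantee_heavy_test} is quantitative in the ratio $\|\wh y\|_2^2/\|\wh y_S\|_2^2$, and to drive the sample complexity down to $O(b\log^3 N)$ one must use the near-isometry Lemma~\ref{filter-robust-multidim} to bound the energy of the tail aliased into each bucket, not just the head contribution. Making this accounting tight everywhere, and combining it cleanly with the lazy-estimation error budget so that (I3) survives both the estimation rounds and the repeated re-invocations of $\textsc{HeavyTest}$ on residuals that include earlier estimation errors, is the delicate part of the argument.
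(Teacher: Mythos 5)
Your overall strategy mirrors the paper's: condition on the success of all randomized primitives via a union bound (the paper does this at the end, you do it up front), then prove loop invariants by induction and split into the same three cases. However, there are two genuine gaps. First, your invariants omit the disjointness condition that the paper records as the second half of its $P_4$: that $\supp(\wh\chi_{out}^{(t)})$ is disjoint from $\bigcup_{u\in\leaves(T^{(t)}),\,u\neq v}\subtree_{\Path\cup T^{(t)}}(u)$. Without this, the estimation-error accounting in the ``Case~1'' step does not go through --- the decomposition of $\sum_{\bm\xi\notin\supp(\Path\cup T')}|(\wh y-\wh\chi_{out})(\bm\xi)|^2$ into a tail term plus the previous estimation error requires exactly that the already-estimated frequencies are not sitting inside any live leaf cone; the same disjointness is also what makes your budget-check argument ``$|\leaves(T^{(t)})|+\|\wh\chi_{out}^{(t)}\|_0\le |S|$'' valid. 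You implicitly assume it, but it has to be carried along as an invariant and proved inductively.

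Second, the Part~2 argument is flawed in both of its branches. The first branch invokes (I2) when $|S|>b$, but the inductive proof of the invariants uses the $|S|\le b$ precondition essentially at every step (e.g.\ to keep the accumulated estimation error $\le\mu^2/20$ and to keep line~\ref{a9l6} from firing), so (I2) is simply not available in the $|S|>b$ case; only the unconditional structural guarantees of Lemma~\ref{promise_correctness-runtime} ($\supp(\wh\chi_v)\subseteq\subtree_\Path(v)$ and $\|\wh\chi_v\|_0\le b$) can be used, which is exactly what the paper restricts itself to. The fallback branch has a quantitative hole: invoking the first bullet of Lemma~\ref{lem:guarantee_heavy_test} with $S'=\{\bm f\}$ for a missed head element does not control the ratio $\|\wh y\|_2^2/\|\wh y_{S'}\|_2^2$, which can be much larger than $k$ if the head coefficients are large. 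The fix is the one the paper uses: take $S'=S\cup\supp(\wh\chi_v)$, which has size $\le k+b\le 2k$, and then the complementary energy is bounded by the tail ($\le\tfrac{11}{10}\mu^2$) so the ratio is $O(1)$ and the sample budget $O(k\log^3 N)$ suffices.
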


\begin{proof}
 We first analyze the algorithm under the assumption that the primitives \textsc{HeavyTest} and \textsc{Estimate} are replaced with more powerful primitives that succeeds deterministically. Hence, we assume that \textsc{HeavyTest} correctly tests the ``heavy'' hypothesis on its input signal with probability $1$ and also \textsc{Estimate} achieves the estimation guarantee of Lemma~\ref{est-inner-lem} deterministrically. With these assumptions in place, we prove that the lemma holds deterministically (with probability 1).
	We then establish a coupling between this idealized execution and the actual execution of our algorithm, leading to our result.
	
We prove the first statement of lemma by induction on the \emph{Repeat-Until loop} of the algorithm. Let $\wh \chi_v^{(t)}$ denote the signal $\wh \chi_v$ at the end of iteration $t$ of the algorithm. Furthermore, let $T^{(t)}$ denote the subtree $T$ at the end of $t^{th}$ iteration. Additionally, let $\sident^{(t)}$ denote the set $\sident$ (defined in Algorithm~\ref{alg:ksparsefft-inner}) at the end of iteration $t$.  
We prove that if the precondition of statement 1 (that is $|S| \le b$) together with i, ii and iii hold, then at every iteration $t=0,1,2, \dots$ of Algorithm~\ref{alg:ksparsefft-inner}, the following properties are maintained,
\begin{description}
	\item[$P_1(t)$] $S \setminus \supp\left(\wh \chi_v^{(t)}\right)  \subseteq \supp{\left(T^{(t)}\right)} := \bigcup_{u\in \leaves\left(T^{(t)}\right)} \subtree_{\Path \cup T^{(t)}}(u)$;
	\item[$P_2(t)$] For every leaf $u \neq v$ of subtree $T^{(t)}$, $\head_\mu(\wh y)\cap \subtree_{\Path \cup T^{(t)}}(u) \neq \emptyset$; 
	\item[$P_3(t)$] $\left\|\wh{y}_{S^{(t)}}-\wh{\chi}_v^{(t)}  \right\|_2^2 \leq \frac{\left|S^{(t)} \right|}{20b} \cdot  \mu^2$, where $S^{(t)} := \supp\left(\wh \chi_v^{(t)} \right) $;
	\item[$P_4(t)$] $S^{(t)} \subseteq S$ and $S^{(t)}  \cap \left(\bigcup_{\substack{u \in \leaves\left(T^{(t)}\right) \\ u \neq v}} \subtree_{\Path \cup T^{(t)}}(u)\right) = \emptyset$;
\end{description} 

The {\bf base of induction} corresponds to the zeroth iteration ($t=0$), at which point $T^{(0)}=\{v\}$ is a subtree of $\Path$ that solely consists of node $v$. Moreover, $\wh \chi_v^{(0)} \equiv 0$. Thus, statement $P_1(0)$ trivially holds by definition of set $S$. The statement $P_2(0)$ holds since there exists no leaf $u \neq v$ in $T^{(0)}$. Statements $P_3(0)$ and $P_4(0)$ hold because of the fact that $\wh \chi_v^{(0)} \equiv 0$.

We now prove the {\bf inductive step} by assuming that the inductive hypothesis, $P(t-1)$ is satisfied for some iteration $t-1$ of Algorithm~\ref{alg:ksparsefft-inner}, and then proving that $P(t)$ holds. 
First, we remark that if inductive hypotheses $P_2(t-1)$ and $P_4(t-1)$ hold true, then by the precondition of statement 1 of the lemma (that is $|S| \le b$) the if-statement in line~\ref{a9l6} of Algorithm~\ref{alg:ksparsefft-inner} is $\false$ and hence lines~\ref{a9l6} and \ref{a9l7} of the algorithm can be ignored in our analysis. 
We proceed to prove the induction by considering the three cases that can happen in iteration $t$:

\paragraph{Case 1 -- the if-statement in line~\ref{a9l8} of Algorithm~\ref{alg:ksparsefft-inner} is True.} 
In this case, the algorithm computes $\{\wh H_u\}_{u \in \sident^{(t-1)}}$ by running the procedure \textsc{Estimate} in line~\ref{a9l10} and then updates $\wh \chi_v^{(t)}(\ff_u) \gets \wh H_u$ for every $u \in \sident^{(t-1)}$ and $\wh \chi_v^{(t)}(\bm \xi) = \wh \chi_v^{(t-1)}(\bm \xi)$ at every other frequency $\bm \xi$. Therefore, if we let $L:= \left\{ \ff_u: u \in \sident^{(t-1)} \right\}$, then $S^{(t)} \setminus S^{(t-1)} = L$, by inductive hypothesis $P_4(t-1)$. By $P_3(t-1)$ along with Lemma~\ref{est-inner-lem} (its deterministic version that succeeds with probability 1), we find that
\small
\begin{align}
\left\| (\wh \chi_v^{(t)} - \wh y)_{S^{(t)}}\right\|_2^2 &= \left\| (\wh \chi_v^{(t)} - \wh y)_{S^{(t-1)}}\right\|_2^2 + \left\| (\wh \chi_v^{(t)} - \wh y)_{S^{(t)}\setminus S^{(t-1)}}\right\|_2^2\nonumber\\
&=\left\| (\wh \chi_v^{(t-1)} - \wh y)_{S^{(t-1)}}\right\|_2^2 + \left\| (\wh \chi_v^{(t)} - \wh y)_{L}\right\|_2^2\nonumber\\
&\le \frac{\left|S^{(t-1)} \right|}{20b} \mu^2 + \frac{\left|L \right|}{23b} \sum_{\bm\xi \in [n]^d \setminus \supp{\left(\Path \cup T^{(t-1)}\right)} } \left| \left(\wh{y}-\wh{\chi}_v^{(t-1)}\right)({\bm{\xi}}) \right|^2.\label{estimate-error}
\end{align}
\normalsize
Now we bound the second term above,
\small
\begin{align}
&\sum_{\bm\xi \in [n]^d \setminus \supp{\left(\Path \cup T^{(t-1)}\right)} } \left| \left(\wh{y}-\wh{\chi}_v^{(t-1)}\right)({\bm{\xi}}) \right|^2\nonumber\\ 
&\qquad= \sum_{\bm\xi \in [n]^d \setminus \supp{(\Path)} } \left| \wh{y}({\bm{\xi}}) \right|^2 + \sum_{\bm\xi \in \subtree_{\Path}(v) \setminus \supp{\left(T^{(t-1)}\right)}} \left| \left(\wh{y}-\wh{\chi}_v^{(t-1)}\right)({\bm{\xi}}) \right|^2\nonumber\\
&\qquad = \sum_{\bm\xi \in [n]^d \setminus \supp{(\Path)} } \left| \wh{y}({\bm{\xi}}) \right|^2 \nonumber\\
&\qquad\qquad + \sum_{\bm\xi \in \subtree_{\Path}(v) \setminus \left(\supp{\left(T^{(t-1)}\right)} \cup S^{(t-1)}\right)} \left| \wh{y}({\bm{\xi}}) \right|^2 + \left\|\wh{y}_{S^{(t-1)}} -\wh{\chi}_v^{(t-1)} \right\|_2^2\nonumber\\
&\qquad = \sum_{\bm\xi \in [n]^d \setminus \left(\supp{\left(\Path \cup T^{(t-1)}\right)} \cup S^{(t-1)} \right) } \left| \wh{y}({\bm{\xi}}) \right|^2 + \left\|\wh{y}_{S^{(t-1)}} -\wh{\chi}_v^{(t-1)} \right\|_2^2 \nonumber\\
&\qquad\le \sum_{\bm\xi \in [n]^d \setminus \head_\mu(\wh y)} \left| \wh{y}({\bm{\xi}}) \right|^2 + \left\|\wh{y}_{S^{(t-1)}} -\wh{\chi}_v^{(t-1)} \right\|_2^2 \text{~~~~~~~~~(by $P_1(t-1)$, precondition i and definition of $S$)}\nonumber\\
&\qquad \le \frac{23}{20} \cdot \mu^2 \text{~~~~~~~~~~~~~~~~~~~~~~~~~~~~~~~~~~~~~~~~~~~~~~~~~(by $P_3(t-1)$ and $P_4(t-1)$ and precondition $|S| \le b$)}.\nonumber
\end{align}
\normalsize
Therefore, by plugging the above bound back to \eqref{estimate-error} we find that,
\[\left\| (\wh \chi_v^{(t)} - \wh y)_{S^{(t)}}\right\|_2^2 \le \frac{\left|S^{(t-1)} \right|}{20b} \cdot \mu^2 + \frac{\left|L \right|}{23b} \cdot \left( \frac{23}{20} \mu^2 \right) = \frac{\left|S^{(t)} \right|}{20b} \cdot \mu^2,
\]
which proves the inductive claim $P_3(t)$. Moreover, $P_2(t-1)$ implies that $L \subseteq S$. Thus, the fact $S^{(t)} = S^{(t-1)} \cup L$ together with inductive hypothesis $P_4(t-1)$ as well as the construction of $T^{(t)}$ ($T^{(t)}$ is constructed by removing leaves of $\sident^{(t-1)}$ from tree $T^{(t-1)}$), imply $P_4(t)$. The construction of $T^{(t)}$ together with the fact that $|\subtree_{\Path \cup T^{(t-1)}}(u)| = 1 $ for every $u \in \sident^{(t-1)}$ give $P_1(t)$ and $P_2(t)$.

We now consider the other two cases. Let $z\in \leaves\left(T^{(t-1)}\right)$ be the smallest weight leaf chosen by the algorithm in line~\ref{a9l16}. 

\paragraph{Case 2 -- the if-statement in line~\ref{a9l8} is False and if-statement in line~\ref{a9l17} is True.}
In this case, in line~\ref{a9l18}, the algorithm updates $\sident$ by adding the leaf $z$ to this set, i.e., $\sident^{(t)} \gets \sident^{(t-1)} \cup \{z\}$. Additionally, in this case the tree $T$ and signal $\wh \chi_v$ stay unchanged, i.e., $\wh \chi_v^{(t)} = \wh \chi_v^{(t-1)} $ and $T^{(t)} = T^{(t-1)} $. Therefore, $P_1(t)$, $P_2(t)$, $P_3(t)$, and $P_4(t)$ all trivially hold because of the inductive hypothesis $P(t-1)$. 

\paragraph{Case 3 -- both if-statements in lines~\ref{a9l8} and \ref{a9l17} are False.}
In this case, the algorithm constructs tree $T'$ by adding leaves $z_\righ$ and $z_\lef$ to tree $T^{(t-1)}$ as right and left children of $z$ in line~\ref{a9l20}. Then we compute $\text{Heavy}_\ell$ and $\text{Heavy}_r$ in lines~\ref{a9l21} and \ref{a9l22} by running the primitive \textsc{HeavyTest} with inputs $\left(x, \wh \chi_{v}^{(t-1)} + \wh\chi_{in} , \Path \cup T', z_\lef , O(b \log^3N), 6 \mu^2 \right)$ and $\left(x, \wh \chi_{v}^{(t-1)} + \wh\chi_{in} , \Path \cup T', z_\righ , O(b \log^3N), 6 \mu^2 \right)$, respectively. There are two possibilities that can happen to each of $\text{Heavy}_\ell$ and $\text{Heavy}_r$. In the following we focus on analyzing $\text{Heavy}_\ell$, but $\text{Heavy}_r$ can be analyzed exactly the same way.

{\bf Possibility 1)} $\subtree_{\Path \cup T'}(z_\lef) \cap \head_\mu(\wh y) = \emptyset$. Note that, by construction of $T'$ we have
\small
\[\subtree_{\Path \cup T^{(t-1)}}(z) = \subtree_{\Path \cup T'}(z_\lef) \cup \subtree_{\Path \cup T'}(z_\righ).\]
\normalsize
Hence, by inductive hypothesis $P_4(t-1)$ we have,
\small
\begin{align*}
&\sum_{\bm\xi \in [n]^d \setminus \supp{(\Path \cup T')} } \left| \left(\wh{y}-\wh{\chi}_v^{(t-1)}\right)({\bm{\xi}}) \right|^2\\ 
&\qquad= \sum_{\bm\xi \in [n]^d \setminus \supp{(\Path)}} \left| \wh{y}({\bm{\xi}}) \right|^2\\ 
&\qquad\qquad+ \sum_{\bm\xi \in \subtree_{\Path}(v) \setminus \supp{\left( T^{(t-1)} \right)} } \left| \left(\wh{y}-\wh{\chi}_v^{(t-1)}\right)({\bm{\xi}}) \right|^2\\
&\qquad = \sum_{\bm\xi \in [n]^d \setminus \supp{(\Path)} } \left| \wh{y}({\bm{\xi}}) \right|^2 \\
&\qquad\qquad + \sum_{\bm\xi \in \subtree_{\Path}(v) \setminus \left(\supp{\left( T^{(t-1)} \right)} \cup S^{(t-1)}\right)} \left| \wh{y}({\bm{\xi}}) \right|^2 + \left\|\wh{y}_{S^{(t-1)}} -\wh{\chi}_v^{(t-1)} \right\|_2^2\\
&\qquad \le \sum_{\bm\xi \in [n]^d \setminus \left(\supp{\left(\Path \cup T' \right)} \cup S^{(t-1)}\right)} \left| \wh{y}({\bm{\xi}}) \right|^2 + \frac{\mu^2}{20},
\end{align*}
\normalsize
where the last inequality above follows by inductive hypotheses $P_3(t-1)$ and $P_4(t-1)$ and precondition $|S| \le b$.
Therefore, if $\wh G_\ell$ is a $(z_\lef,\Path\cup T')$-isolating filter as per the construction in Lemma~\ref{lem:isolate-filter-highdim}, then by Corollary~\ref{infty-norm-bound-islating-filter} along with the above inequality, we have
\small
\begin{align*}
\left\| \left(\wh y -\wh \chi_v^{(t-1)}\right) \cdot \wh G_\ell \right\|_2^2 &\le \left\| \wh y_{\subtree_{\Path \cup T'}(z_\lef)} \right\|_2^2 + \sum_{\bm\xi \in [n]^d \setminus \supp{(\Path \cup T')} } \left| \left(\wh{y}-\wh{\chi}_v^{(t-1)}\right)({\bm{\xi}}) \right|^2 \\
&\le \left\| \wh y_{\subtree_{\Path \cup T'}(z_\lef)} \right\|_2^2 + \sum_{\bm\xi \in [n]^d \setminus \left(\supp{(\Path \cup T')} \cup S^{(t-1)}\right)} \left| \wh{y}({\bm{\xi}}) \right|^2 + \frac{\mu^2}{20}\\
&\le \sum_{\bm\xi \in [n]^d \setminus \head_\mu(\wh y)} \left| \wh{y}({\bm{\xi}}) \right|^2 + \frac{\mu^2}{20}\\
&\le \frac{23}{20} \cdot \mu^2
\end{align*}
\normalsize
where the third line above follows from the assumption that $\subtree_{\Path \cup T'}(z_\lef) \cap \head_\mu(\wh y) = \emptyset$, inductive hypothesis $P_1(t-1)$, precondition i of the lemma together with the definition of set $S$. This proves that the precondition of the second claim of Lemma~\ref{lem:guarantee_heavy_test} holds and therefore by invoking this lemma (the deterministic version of it that succeeds with probability 1), we have that $\text{Heavy}_\ell$ in line~\ref{a9l21} of the algorithm is $\false$. 
Using a similar argument, if $\subtree_{\Path \cup T'}(z_\righ) \cap \head_\mu(\wh y)= \emptyset$, then $\text{Heavy}_r$ is $\false$. 

{\bf Possibility 2)} Suppose that $\subtree_{\Path \cup T'}(z_\lef) \cap \head_\mu(\wh y) \neq \emptyset$. If filter $\wh G_\ell$ is a $(z_\lef,\Path \cup T')$-isolating filter constructed in Lemma~\ref{lem:isolate-filter-highdim}, then by Corollary~\ref{infty-norm-bound-islating-filter} along with inductive hypothesis $P_4(t-1)$, 
\small
\begin{align*}
\left\| \left(\left( \wh y -\wh \chi_v^{(t-1)} \right) \cdot \wh G_\ell\right)_{[n]^d \setminus S} \right\|_2^2 &= \left\| \left(\wh y \cdot \wh G_\ell\right)_{[n]^d \setminus S} \right\|_2^2\\
&\le \left\| \wh y_{\subtree_{\Path \cup T'}(z_\lef) \setminus S} \right\|_2^2 + \sum_{\bm\xi \in [n]^d \setminus \left(\supp{(\Path \cup T')} \cup S\right) } \left| \wh{y}({\bm{\xi}}) \right|^2\\
&\le \left\| \wh y - \wh y _{\head_\mu(\wh y)} \right\|_2^2 \le \frac{11}{10} \cdot \mu^2. \text{~~~~~~~~(precondition ii)}
\end{align*}
\normalsize
Additionally,
\begin{align*}
\left\| \left(\left( \wh y -\wh \chi_v^{(t-1)} \right) \cdot \wh G_\ell\right)_{S} \right\|_2^2 &\ge \left\| \left( \wh y -\wh \chi_v^{(t-1)} \right)_{\subtree_{\Path \cup T'}(z_\lef) \cap S} \right\|_2^2\\ 
&= \left\| \wh y_{\subtree_{\Path \cup T'}(z_\lef) \cap S} \right\|_2^2 \ge 9\mu^2,
\end{align*}
which follows by the assumption $\subtree_{\Path \cup T'}(z_\lef) \cap \head_\mu(\wh y)\neq \emptyset$ along with the definition of $S$ and $\head_\mu(\cdot)$.
Hence, by the above inequalities and the precondition $|S| \le b$, we can invoke Lemma~\ref{lem:guarantee_heavy_test} to conclude that $\text{Heavy}_\ell$ in line~\ref{a9l21} of the algorithm is $\true$.
Using a similar argument, if $\subtree_{\Path \cup T'}(z_\righ) \cap \head_\mu(\wh y) \neq \emptyset$ then $\text{Heavy}_r$ is $\true$. 

Based on the above arguments, according to the values of $\text{Heavy}_\ell$ and $\text{Heavy}_r$, there are various cases that can happen. First, it cannot happen that $\text{Heavy}_\ell$ and $\text{Heavy}_r$ are both $\false$ unless $z = v$, by the inductive hypothesis $P(t-1)$. If $\text{Heavy}_\ell = \text{Heavy}_r = \false$ and $z = v$, the algorithm returns $\wh \chi_v^{(t)} \equiv \{0\}^{n^d}$ which satisfies all properties in $P(t)$. The second case corresponds to $\text{Heavy}_\ell =\false$ and $\text{Heavy}_r= \true$. In this case, tree $T^{(t)}$ is obtained from $T^{(t-1)}$ by adding $z_\righ$ as the right child of $z$. Therefore, by inductive hypothesis $P(t-1)$, all properties in $P(t)$ immediately hold. 
One can show that $P(t)$ holds in the case of $\text{Heavy}_\ell =\true$ and $\text{Heavy}_r=\false$ in exactly the same fashion. Finally, if both of $\text{Heavy}_\ell $ and $\text{Heavy}_r$ are $\true$, then tree $T^{(t)}$ is obtained by adding leaves $z_\righ$ and $z_\lef$ as right and left children of $z$ to tree $T^{(t-1)}$. It follows straightforwardly from the inductive hypothesis $P(t-1)$ that $P(t)$ holds. 

So far we have showed that under {\bf cases} {\bf 1}, {\bf 2}, and {\bf 3}, the property $P(t)$ is maintained. Recall that in the proof of Lemma~\ref{promise_correctness-runtime} we showed that, at every iteration, exactly one of these three cases happen and hence the algorithm never stalls.
This completess the induction and proves that properties $P(t)$ are maintained throughout the execution of Algorithm~\ref{alg:ksparsefft-inner}, assuming that preconditions i, ii, and iii of the lemma along with the precondition $|S| \le b$ hold. 

In Lemma~\ref{promise_correctness-runtime} we showed that Algorithm~\ref{alg:ksparsefft-inner} must terminate after some $q$ iterations. When the algorithm terminates, the condition of the \emph{Repeat-Until} loop in line~\ref{a9l29} of the algorithm must be True. Thus, when the algorithm terminates, at $q^{th}$ iteration, there is no leaf in subtree $T_v^{(q)}$ besides $v$ and as a consequence the set $\sident^{(q)}$ must be empty. This, together with $P_1(q)$ imply that the signal $\wh \chi_v^{(q)}$ satisfies,
\[ \supp{\left(\wh{\chi}_v^{(q)}\right)} = S = \subtree_\Path(v) \cap \head_\mu(\wh y) .\] 
Moreover, $P_3(q)$ together with precondition $|S|\le b$ imply that 
$$\left\|\wh{y}_S - \wh{\chi}_v^{(q)}\right\|_2^2 \leq  \frac{\left|S \right|}{20b} \cdot \mu^2 \le \frac{\mu^2}{20}.$$

Now we analyze the if-statement in line~\ref{a9l30} of the algorithm. The above equalities and inequalities on $\wh \chi_v^{(q)}$ imply that,
\begin{align*}
\left\| \left(\wh y - \wh \chi_v^{(q)} \right)_{\subtree_{\Path}(v)}\right\|_2^2 &= \left\| \wh y_{\subtree_{\Path}(v) \setminus S}\right\|_2^2 + \left\| \left(\wh y - \wh \chi_v^{(q)} \right)_S\right\|_2^2\\
&\le \left\| \wh y_{\subtree_{\Path}(v) \setminus \head_\mu(\wh y)}\right\|_2^2 + \frac{\mu^2}{20}.
\end{align*}
Therefore, if $\wh G_v$ is a Fourier domain $(v,\Path)$-isolating filter constructed in Lemma~\ref{lem:isolate-filter-highdim}, then by Corollary~\ref{infty-norm-bound-islating-filter} along with the above inequality, we have
\small
\begin{align*}
\left\| \left(\wh y - \wh \chi_v^{(q)} \right) \cdot \wh G_v \right\|_2^2 &\le \sum_{\bm\xi \in [n]^d \setminus \supp{(\Path)} } \left| \wh{y}({\bm{\xi}}) \right|^2 + \left\| \left(\wh y - \wh \chi_v^{(q)} \right)_{\subtree_{\Path}(v)}\right\|_2^2 \\ 
& \le \sum_{\bm\xi \in [n]^d \setminus \supp{(\Path)} } \left| \wh{y}({\bm{\xi}}) \right|^2 + \left\| \wh y_{\subtree_{\Path}(v) \setminus \head_\mu(\wh y)}\right\|_2^2 + \frac{\mu^2}{20} \\
& \le \left\| \wh y - \wh y_{\head_\mu(\wh y)}\right\|_2^2 + \frac{\mu^2}{20} \le \frac{23}{20}\cdot \mu^2.
\end{align*}
\normalsize
Thus, the preconditions of the second claim of Lemma~\ref{lem:guarantee_heavy_test} hold. So, we can invoke this lemma to conclude that the if-statement in line~\ref{a9l30} of the algorithm is $\false$ and hence the algorithm outputs $\left( \true , \wh \chi_v^{(q)} \right)$. This proves statement 1 of the lemma.

Now we prove the second statement of lemma. Suppose that preconditions i, ii, iii along with the precondition of statement 2 (that is $|S| > b$) hold. Lemma~\ref{promise_correctness-runtime} proved that the signal $ \wh \chi_v$ always satisfies $\supp{(\wh \chi_v)} \subseteq \subtree_{\Path}(v)$ and $\|\wh \chi_v\|_0 \le b$. Therefore, $S \setminus \supp{(\wh \chi_v)} \neq \emptyset$. Consequently, if $\wh G_v$ is a Fourier domain $(v,\Path)$-isolating filter constructed in Lemma~\ref{lem:isolate-filter-highdim}, then by definition of isolating filters we have
\begin{align*}
\left\| \left(\left( \wh y -\wh \chi_v \right) \cdot \wh G_v\right)_{S\cup \supp{(\wh \chi_v)}} \right\|_2^2 \ge \left\| \left( \wh y -\wh \chi_v \right)_{S\cup \supp{(\wh \chi_v)}} \right\|_2^2 \ge \left\| \wh y_{S \setminus \supp{(\wh \chi_v)}} \right\|_2^2 \ge 9\mu^2,
\end{align*}
which follows from the definition of $S$ and $\head_\mu(\cdot)$. On the other hand,
\begin{align*}
\left\| \left(\left( \wh y -\wh \chi_v\right) \cdot \wh G_v\right)_{[n]^d \setminus (S\cup \supp{(\wh \chi_v)})} \right\|_2^2 &= \left\| \left(\wh y \cdot \wh G_v\right)_{[n]^d \setminus (S\cup \supp{(\wh \chi_v)})} \right\|_2^2\\
&\le \left\| \left(\wh y \cdot \wh G_v\right)_{[n]^d \setminus S} \right\|_2^2\\
&\le \left\| \wh y_{\subtree_{\Path}(v) \setminus S} \right\|_2^2 + \sum_{\bm\xi \in [n]^d \setminus \supp{(\Path)} } \left| \wh{y}({\bm{\xi}}) \right|^2\\
&\le \left\| \wh y - \wh y _{\head_\mu(\wh y)} \right\|_2^2 \le \frac{11}{10} \cdot \mu^2. \text{~~~~~~~~~~~~~~~~(precondition ii)}
\end{align*}
Additionally note that $\left| S \cup \supp{(\wh \chi_v)} \right| \le k + b \le 2k$ by preconditions of the lemma and property of $\supp{(\wh \chi_v)}$ that we have proved. Hence, by invoking the first claim of Lemma~\ref{lem:guarantee_heavy_test}, the if-statement in line~\ref{a9l30} of the algorithm is $\true$ and hence the algorithm outputs $\left( \false , \{0\}^{n^d} \right)$. This proves statement 2 of the lemma.

Finally, observe that throughout this analysis we have assumed that Lemma~\ref{lem:guarantee_heavy_test} holds with probability 1 for all the invocations of \textsc{HeavyTest} by our algorithm. Moreover, we assumend that \textsc{Estimate} successfully works with probability 1. In reality, we have to take the fact that  these primitives are randomized into acount of our analysis. 

The first source of randomness is the fact that \textsc{HeavyTest} only succeeds with some high probability. In fact, Lemma~\ref{lem:guarantee_heavy_test} tells us that every invocation of \textsc{HeavyTest} succeeds with probability at least $1-1/N^5$.	
Our analysis in proof of Lemma~\ref{promise_correctness-runtime} shows that \textsc{RobustPromiseSFT} makes at most $O\left(b \log N \right)$ calls to \textsc{HeavyTest}. Therefore, by a union bound, the overall failure probability of all invocations of $\textsc{HeavyTest}$ is bounded by $O\left( \frac{b \log N}{N^{5}} \right)$. 

The second source of randomness is the fact that \textsc{Estimate} only succeeds with some high probability. Lemma~\ref{est-inner-lem} tells us that every invocation of \textsc{Estimate} on a set $\sident$, succeeds with probability $1-|\sident|/N^8$. Therefore if the algorithm invokes \textsc{Estimate} at iterations $t_1, t_2, \dots$, then, by union bound, the total failure probability of all invocations of this primitive will be bounded by $\sum_{i} \frac{\left|\sident^{(t_i)}\right|}{N^8} = \frac{\left| \supp{(\wh \chi_v}) \right|}{N^8} \le \frac{b}{N^8}$.

Finally, by another application of union bound, the overall failure probability of Algorithm~\ref{alg:ksparsefft-inner}, is bounded by $\frac{1}{N^4}$.
This proves that the lemma holds. 
\end{proof}

\paragraph{Analysis of \textsc{RobustSparseFT}.}
Now we present the invariants of \textsc{RobustSparseFT}.

\begin{lemma}[Invariant of \textsc{RobustSparseFT}: Signal Containment and Energy Control]
\label{lem:invariant} For every integer $t\ge 0$, let $\wh \chi^{(t)}$ and $\sident^{(t)}$ denote the signal $\wh \chi$ and the set $\sident$ at the end of iteration $t$ of Algorithm~\ref{alg:sfftrobust}, respectively. Furthermore, let $\frontier^{(t)}$ denote the tree $\frontier$ at the end of $t^{th}$ iteration and let $\Estimated^{(t)}$ denote the set of ``estimated frequencies'' so far, i.e., $\Estimated^{(t)} : = \supp{\left( \wh \chi^{(t)} \right)}$. Additionaly, for every leaf $v$ of $\frontier^{(t)}$, let $L_v^{(t)}$ denote the ``{unestimated}'' frequencies in support of $\wh x$ that lie in frequency cone of $v$, i.e., $L_v^{(t)} := \subtree_{\frontier^{(t)}}(v) \cap \head_\mu(\wh{x})$, where $\head_\mu(\cdot)$ is defined as per \eqref{def:head}.
If $|\head_\mu(\wh{x})| \le k$ and $ \left\| \wh x - \wh x_{\head_\mu(\wh{x})} \right\|_2 \le \mu $, then for every non-negative integer $t$ the following properties are maintained at the end of $t^{th}$ iteration of Algorithm~\ref{alg:sfftrobust}, with probability at least $1 - \frac{4 t}{N^4}$,
\begin{enumerate}
\item[$P_1(t)$] $\head_\mu(\wh{x}) \setminus \Estimated^{(t)}  \subseteq \supp{\left(\frontier^{(t)}\right)}$;
\item[$P_2(t)$] For every leaf $u \neq \mathrm{root}$ of tree $\frontier^{(t)}$, $\left| L_u^{(t)} \right| \ge 1$. Additionally, if $u \notin \sident^{(t)}$, then $\left| L_u^{(t)} \right| > {b}$;
\item[$P_3(t)$] $\left\|\wh{x}_{\Estimated^{(t)}} - \wh{\chi}^{(t)} \right\|_2^2 \leq \epsilon \cdot \frac{ \left| \Estimated^{(t)} \right|}{k} \cdot \mu^2$;
\item[$P_4(t)$] $\Estimated^{(t)} \subseteq \head_\mu(\wh{x})$ and $\Estimated^{(t)} \cap \supp{\left(\frontier^{(t)}\right)} = \emptyset$;
\item[$P_5(t)$] In every iteration $t>1$, if the if-statement in line~\ref{a10l7} of Algorithm~\ref{alg:sfftrobust} is $\false$, then the following potential function decreases by at least ${b}$. Additionally, when the if-statement in line~\ref{a10l7} is $\true$, the potential decreases by at least $\log N$. Furthermore, the potential does not increase at iteration $t=1$.
$$\phi_t := \sum_{u \in \leaves\left(\frontier^{(t)}\right)} \left( 2\log N - l_{\frontier^{(t)}}(u) \right) \cdot \left| L_u^{(t)}\right| ;$$
\end{enumerate} 
\end{lemma}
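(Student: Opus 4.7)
The plan is to proceed by induction on $t$, with base case $t=0$ where $\frontier^{(0)}=\{\mathrm{root}\}$, $\wh\chi^{(0)}\equiv 0$, $\sident^{(0)}=\Estimated^{(0)}=\emptyset$, so $P_1$--$P_4$ hold trivially and $\phi_0 = 2\log N\cdot|\head_\mu(\wh x)|\le 2k\log N$. For the inductive step, I would split on which branch of Algorithm~\ref{alg:sfftrobust} is executed at iteration $t$: the lazy-estimation branch (when the if-statement in line~\ref{a10l7} fires) or the recursive-exploration branch (picking the minimum-weight unmarked leaf $v$ and calling \textsc{RobustPromiseSFT} on $v_\lef,v_\righ$).

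For the recursive branch, the key tool is Lemma~\ref{promise_correctness-invariants}: by $P_2(t-1)$ the ``unmarked'' picked leaf $v$ satisfies $|L_v^{(t-1)}|>b$, and on each child $v_\star$ ($\star\in\{\lef,\righ\}$) I would invoke the lemma with the promise budget $b$. When $|S_{v_\star}|:=|\subtree(v_\star)\cap\head_\mu(\wh x)|\le b$ we get $\correct_\star=\true$ and our algorithm replaces $v_\star$'s cone by the identified $|S_{v_\star}|$ leaves at depth $\log N$ placed into $\sident$; when $|S_{v_\star}|>b$ we get $\correct_\star=\false$ and $v_\star$ is added to $\frontier$. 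In either sub-case the invariants $P_1,P_2,P_4$ are preserved by construction (no new head element is dropped from $\supp(\frontier)\cup\Estimated$, newly marked leaves have $|L|=1$, newly unmarked children carry $|L|>b$), while $P_3$ is immediate since $\wh\chi$ is not modified in this branch.

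For the lazy-estimation branch I would use Claim~\ref{claim:findCheap} to obtain $\scheap\neq\emptyset$ with $|\scheap|\cdot(8+4\log|\scheap|)\ge\max_{u\in\scheap}2^{w_\frontier(u)}$, then invoke Lemma~\ref{est-inner-lem} with $m=32k/(\epsilon|\scheap|)$. The residual energy bound $\sum_{\bm\xi\notin\supp(\frontier)}|(\wh x-\wh\chi)(\bm\xi)|^2$ is controlled by $P_1(t-1),P_3(t-1),P_4(t-1)$ and the high-SNR hypothesis by at most $\mu^2+\epsilon\mu^2|\Estimated^{(t-1)}|/k\le 2\mu^2$; combined with the Estimate guarantee this yields additional squared error at most $\tfrac{16}{m}\cdot 2\mu^2=\epsilon|\scheap|\mu^2/k$, which added to $P_3(t-1)$ gives $P_3(t)$ since $|\Estimated^{(t)}|=|\Estimated^{(t-1)}|+|\scheap|$. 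Invariants $P_1,P_2,P_4$ are obvious since we only delete correctly located leaves and move them to $\Estimated$.

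The main obstacle is verifying $P_5$: I would compute $\phi_t-\phi_{t-1}$ case-by-case. In the recursive branch, the picked leaf $v$ satisfies $l_\frontier(v)<\log N$ (every depth-$\log N$ leaf of $\frontier$ is marked, by construction and $P_2(t-1)$), and replacing $v$'s contribution $(2\log N-l(v))|L_v^{(t-1)}|$ by either (a) $|L_v^{(t-1)}|\cdot\log N$ if both calls succeed (leaves placed at depth $\log N$), (b) a mixture $|S_\lef|\log N+(2\log N-l(v)-1)|L_{v_\righ}^{(t-1)}|$ if exactly one succeeds, or (c) $(2\log N-l(v)-1)|L_v^{(t-1)}|$ if both fail, each arithmetically gives a net decrease of at least $|L_v^{(t-1)}|>b$ (using $|L_{v_\star}^{(t-1)}|>b$ on every unsuccessful side). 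In the lazy branch, each removed leaf $u\in\scheap$ contributes $(2\log N-\log N)\cdot 1=\log N$ to $\phi_{t-1}$ and nothing afterwards, and $|\scheap|\ge 1$, so the decrease is at least $\log N$. The $t=1$ non-increase claim follows since the root has $l=0$ and $|L_{\mathrm{root}}^{(0)}|\ge 0$. Finally, the overall failure probability is controlled by a union bound: each iteration invokes \textsc{RobustPromiseSFT} at most twice (each failing with probability $1/N^4$ by Lemma~\ref{promise_correctness-invariants}) and possibly once \textsc{Estimate} (failing with probability $\widetilde O(1/N^8)$), giving the claimed $4t/N^4$ bound.
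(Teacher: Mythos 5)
Your proposal is correct and follows essentially the same structure as the paper's proof: induction on $t$ with base case $t=0$, case-split on whether line~\ref{a10l7} fires, invocation of Lemma~\ref{promise_correctness-invariants} (with the budget-$b$ trichotomy on each child) in the exploration branch, Claim~\ref{claim:findCheap} plus Lemma~\ref{est-inner-lem} with $m=32k/(\epsilon|\scheap|)$ in the lazy branch, the same $\le 2\mu^2$ residual-energy bound giving $\tfrac{16}{m}\cdot 2\mu^2 = \tfrac{\epsilon|\scheap|}{k}\mu^2$, the same potential-function accounting for $P_5$, and the same per-iteration union bound. The only cosmetic difference is that you expand the potential computation for the recursive branch into three explicit sub-cases (both succeed / one succeeds / both fail), whereas the paper argues once via $\sum_{u\in\leaves(\frontier_v^{(t)})}|L_u^{(t)}|=|L_v^{(t-1)}|$ together with the depth increase; both give $\phi_t-\phi_{t-1}\le -|L_v^{(t-1)}|<-b$.
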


\begin{proof}
The proof is by induction on the \emph{Repeat-Until loop} of the algorithm. The {\bf base of induction} corresponds to the zeroth iteration ($t=0$), at which point $\frontier^{(0)}=\{\text{root}\}$ is a tree that solely consists of a root and has no other leaves. Moreover, $\wh \chi^{(0)} \equiv 0$. The statement $P_1(t)$ trivially holds because $\subtree_{\frontier^{(0)}}(r) = [n]^d$. The statement $P_2(t)$ holds since there exists no leaf $u \neq$ root in $\frontier^{(0)}$. The statements $P_3(t)$ and $P_4(t)$ hold because of the facts $\wh \chi^{(0)} \equiv 0$ and $\Estimated^{(0)} = \emptyset$.

We now prove the {\bf inductive step} by assuming that the inductive hypotheses, i.e property $P(t-1)$ is satisfied for some iteration $t-1$ of Algorithm~\ref{alg:sfftrobust} with probability a least $1 - \frac{4(t-1)}{N^4}$, and then proving that property $P(t)$ holds at the end of iteration $t$ with probabiliy at least $1 - \frac{4 t}{N^4}$. We also show that the value of the quantity $\phi_t$ defined in $P_5(t)$, satisfies $\phi_t - \phi_{t-1} \le - {b}$ if the if-statement in line~\ref{a10l7} of the algorithm is $\false$ in iteration $t>1$ and $\phi_t - \phi_{t-1} \le - \log N$ if the if-statement in line~\ref{a10l7} is $\true$ in iteration $t$ and also $\phi_1 - \phi_0 \le 0$.
At any given iteration $t$ of the algorithm, there are two possibilities that can happen. We proceed to prove the induction by considering any of the two possibilities:
\paragraph{Case 1 -- the if-statement in line~\ref{a10l7} of Algorithm~\ref{alg:sfftrobust} is True.} 
In this case, we have that $\sum_{u \in \sident^{(t-1)}} 2^{-w_{\frontier^{(t-1)}}(u)} \ge \frac{1}{2}$. As a result, by Claim~\ref{claim:findCheap}, the set $\scheap \subseteq \sident^{(t-1)}$ that the algorithm computes in line~\ref{a10l8} by running the primitive \textsc{ExtractCheapSubset} satisfies the property that $\left| \scheap \right| \cdot \left( 8+4\log |\sident^{(t-1)}| \right) \ge \max_{u \in \scheap} 2^{w_{\frontier^{(t-1)}}(u)}$. Clearly $\scheap \neq \emptyset$, by Claim~\ref{claim:findCheap}. 
Then the algorithm computes $\{\wh H_u\}_{u \in \scheap}$ by running the procedure \textsc{Estimate} in line~\ref{a10l10} and then updates $\wh \chi^{(t)}(\ff_u) \gets \wh H_u$ for every $u \in \scheap$ and $\wh \chi^{(t)}(\bm \xi) = \wh \chi^{(t-1)}(\bm \xi)$ at every other frequency $\bm \xi$.
Therefore, if we let $L:= \left\{ \ff_u: u \in \scheap \right\}$, then $\Estimated^{(t)} \setminus \Estimated^{(t-1)} = L$, by inductive hypothesis $P_4(t-1)$. By $P_3(t-1)$ along with Lemma~\ref{est-inner-lem}, we find that with probability at least $1 - \frac{|\scheap|}{N^8} \ge 1 - \frac{1}{N^7}$ the following holds,
\begin{align}
\left\| \wh \chi^{(t)} - \wh x_{\Estimated^{(t)}}\right\|_2^2 
&= \left\| \wh \chi^{(t-1)} - \wh x_{\Estimated^{(t-1)}}\right\|_2^2 + \left\| \left( \wh \chi^{(t)} - \wh x \right)_{L}\right\|_2^2\nonumber\\
&\le \frac{\epsilon |\Estimated^{(t-1)} |}{k} \mu^2 + \frac{\epsilon \left|L \right|}{2k}  \sum_{\bm\xi \in [n]^d \setminus \supp{\left(\frontier^{(t-1)}\right)}} \left| \left( \wh \chi^{(t-1)} - \wh{x} \right)({\bm{\xi}}) \right|^2.\label{estimate-error-invariant-lemma}
\end{align}
Now we bound the second term above,
\small
\begin{align}
&\sum_{\bm\xi \in [n]^d \setminus \supp{\left(\frontier^{(t-1)}\right)}} \left| \left(\wh{x}-\wh{\chi}^{(t-1)}\right)({\bm{\xi}}) \right|^2\nonumber\\ 
&\qquad = \sum_{\bm\xi \in  [n]^d \setminus \left(\supp{\left(\frontier^{(t-1)}\right)} \cup \Estimated^{(t-1)}\right)} \left| \wh{x}({\bm{\xi}}) \right|^2 + \left\|\wh{x}_{\Estimated^{(t-1)}} -\wh{\chi}^{(t-1)} \right\|_2^2\nonumber\\
&\qquad\le \sum_{\bm\xi \in [n]^d \setminus \head_\mu(\wh{x})} \left| \wh{x}({\bm{\xi}}) \right|^2 + \left\|\wh{x}_{\Estimated^{(t-1)}} -\wh{\chi}^{(t-1)} \right\|_2^2 \text{~~~~~~~~~~~~~~~~~~~~~~~~~~~~~~~~~~~~~~~~~~~~(by $P_1(t-1)$)}\nonumber\\
&\qquad \le 2 \mu^2 \text{~~~~~~~~~~~~~~~~~~~~~~~~~~~~~~~~(by $P_3(t-1)$ and $P_4(t-1)$, preconditions of lemma and $\epsilon \le 1$)}.\nonumber
\end{align}
\normalsize
Therefore, by plugging the above bound back to \eqref{estimate-error-invariant-lemma} we find that,
\[\left\| \wh \chi^{(t)} - \wh x_{\Estimated^{(t)}}\right\|_2^2 \le \epsilon \cdot \frac{|\Estimated^{(t-1)} |}{k} \cdot \mu^2 + \epsilon \cdot \frac{\left|L \right|}{2k} \cdot \left( 2 \mu^2 \right) = \epsilon \cdot  \frac{|\Estimated^{(t)} |}{k} \cdot \mu^2,
\]
which proves the inductive claim $P_3(t)$. Moreover, $P_2(t-1)$ implies that $L \subseteq \head_\mu(\wh{x})$. Thus, the fact that $\Estimated^{(t)} = \Estimated^{(t-1)} \cup L$ together with inductive hypothesis $P_4(t-1)$ as well as the construction of $\frontier$ ($\frontier^{(t)}$ is constructed by removing leaves of $\scheap$ from tree $\frontier^{(t-1)}$), imply $P_4(t)$. The construction of $\frontier^{(t)}$ together with the fact that $|\subtree_{\frontier^{(t-1)}}(u)| = 1 $ for every $u \in \scheap$ give $P_1(t)$ and $P_2(t)$. Additionally, we have,
\begin{align*}
\phi_t - \phi_{t-1} &= - \sum_{u \in \scheap} (2\log N - l_{\frontier^{(t-1)}}(u)) \cdot \left| L_u^{(t-1)} \right|\\
&= - \sum_{u \in \scheap} \log N \cdot \left| L_u^{(t-1)} \right| \\
&= - \sum_{u \in \scheap} \log N \le - \log N,
\end{align*}
where the last inequality follows from the fact that $\scheap \neq \emptyset$. This proves $P_5(t)$.

\paragraph{Case 2 -- the if-statement in line~\ref{a10l7} is False.}
Let $v\in \leaves(\frontier^{(t-1)}) \setminus \sident^{(t-1)}$ be the smallest weight leaf chosen by the algorithm in line~\ref{a10l16}. 
The algorithm constructs tree $T$ by adding leaves $v_\righ$ and $v_\lef$ to tree $\frontier^{(t-1)}$ as right and left children of $v$, in line~\ref{a10l17}.
Then, the algorithm runs \textsc{RobustPromiseSFT} with inputs $(x,  \wh \chi^{(t-1)}, T, v_\lef, b, k, \mu )$ and $(x,  \wh \chi^{(t-1)}, T, v_\righ, b, k, \mu)$ in lines~\ref{a10l18} and \ref{a10l19} respectively.
In the following we focus on analyzing $\left(\correct_\lef, \wh{\chi}_\lef\right)$ but $\left(\correct_\righ, \wh{\chi}_\righ\right)$ can be analyzed exactly the same way. There are two possibilities that can happen:

{\bf Possibility 1)} $\left|\subtree_{T}(v_\lef) \cap \head_\mu(\wh{x}) \right| \le b$. In this case, the inductive hypothesis $P_4(t-1)$ implies that $|\Estimated^{(t-1)}| \le k$ and hence inductive hypothesis $P_3(t-1)$ along with the assumption $\epsilon \le \frac{1}{10}$ gives
\begin{equation}\label{eq:est-error-bound}
\left\| \wh x_{\Estimated^{(t-1)}} - \wh \chi^{(t-1)} \right\|_2^2 \le {\epsilon} \mu^2 \le \frac{\mu^2}{10},
\end{equation}
hence, $\head_\mu\left(\wh x - \wh \chi^{(t-1)}\right) = \head_\mu(\wh{x}) \setminus \Estimated^{(t-1)}$.
Consequently, if we let $\wh y : = \wh x - \wh \chi^{(t-1)}$, then: i) $\head_\mu(\wh{y}) \subseteq \supp{(T')}$, by $P_1(t-1)$, ii) $\| \wh y - \wh y_{\head_\mu(\wh{y})} \|_2^2 \le \frac{11\mu^2}{10}$, by precondition of the lemma along with \eqref{eq:est-error-bound}, and iii) $\left|\subtree_{T}(v_\lef) \cap \head_\mu(\wh{y}) \right| \le b$, by the assumption that $\left|\subtree_{T}(v_\lef) \cap \head_\mu(\wh{x}) \right| \le b$. 
Therefore, all preconditions of the first statement of Lemma~\ref{promise_correctness-invariants} hold, and thus, by invoking this lemma we have that, with probability at least $1 - \frac{1}{N^4}$, $\correct_\lef = \true$, and $\supp{ (\wh \chi_\lef)} \subseteq \subtree_{T}(v_\lef) \cap \head_\mu(\wh{y})$, and $\left\| \wh y_{\subtree_{T}(v_\lef) \cap \head_\mu(\wh{y})} - \wh \chi_\lef \right\|_2^2 \le \frac{\mu^2}{20}$. This together with inductive hypothesis $P_4(t-1)$ imply that, with probability at least $1 - \frac{1}{N^4}$,  $\correct_\lef = \true$ and $\supp{ (\wh \chi_\lef)} = \subtree_{T}(v_\lef) \cap \head_\mu(\wh{x})$. 

So, the if-statement in line~\ref{a10l20} of the algorithm is $\true$  and consequently the algorithm adds all leaves that correspond to frequencies in $\subtree_{T}(v_\lef) \cap \head_\mu(\wh{x})$ to $\frontier^{(t-1)}$ and also updates
\small
\[\sident^{(t)} \gets \sident^{(t-1)} \cup \left\{ u \in \leaves(\frontier): \ff_u \in \subtree_{T}(v_\lef) \cap \head_\mu(\wh{x}) \right\}.\]
\normalsize
By a similar argument, if $\left|\subtree_{T}(v_\righ) \cap \head_\mu(\wh{x}) \right| \le b$, then, with probability at least $1 - \frac{1}{N^4}$, the algorithm adds all leaves corresponding to frequencies in $\subtree_{T}(v_\righ) \cap \head_\mu(\wh{x})$ to $\frontier^{(t-1)}$ and updates
\small
\[\sident^{(t)} \gets \sident^{(t-1)} \cup \left\{ u \in \leaves(\frontier): \ff_u \in \subtree_{T}(v_\righ) \cap \head_\mu(\wh{x}) \right\}.\]
\normalsize

{\bf Possibility 2)} $\left|\subtree_{T}(v_\lef) \cap \head_\mu(\wh{x}) \right| > b$. 
Same as in {\bf possibility 1}, the inductive hypothesis $P_4(t-1)$ implies that $|\Estimated^{(t-1)}| \le k$ and hence inductive hypothesis $P_3(t-1)$ along with the assumption $\epsilon \le \frac{1}{10}$ gives \eqref{eq:est-error-bound}.
Hence, $\head_\mu\left(\wh x - \wh \chi^{(t-1)} \right) = \head_\mu(\wh{x})\setminus \Estimated^{(t-1)}$.
Consequently, if we let $\wh y : = \wh x - \wh \chi^{(t-1)}$, then it holds that: i) $\head_\mu(\wh{y})\subseteq \supp{(T)}$, by $P_1(t-1)$, ii) $\| \wh y - \wh y_{\head_\mu(\wh{y})} \|_2^2 \le \frac{11\mu^2}{10}$, by precondition of the lemma along with \eqref{eq:est-error-bound}, and iii) $\left|\subtree_{T}(v_\lef) \cap \head_\mu(\wh{y}) \right| \le \left| \head_\mu(\wh{x}) \right| \le k$, by precondition of the lemma. 
Additionally, by $P_4(t-1)$, we find that 
\[\left|\subtree_{T}(v_\lef) \cap \head_\mu(\wh{y}) \right| = \left|\subtree_{T}(v_\lef) \cap \head_\mu(\wh{x}) \right| > b.\]
Therefore, all preconditions of the second statement of Lemma~\ref{promise_correctness-invariants} hold, and thus, by invoking this lemma we have that, with probability at least $1 - \frac{1}{N^4}$, $\correct_\lef = \false$, and $\wh \chi_\lef \equiv 0$. 
So, the if-statement in line~\ref{a10l20} of the algorithm is $\false$  and consequently the algorithm adds leaf $v_\lef$ as the left child of $v$ to tree $\frontier^{(t-1)}$.
By a similar argument, if $\left|\subtree_{T}(v_\righ) \cap \head_\mu(\wh{x}) \right| > b$, then, with probability $1 - \frac{1}{N^4}$, the algorithm adds leaf $v_\righ$ as the left child of $v$ to tree $\frontier^{(t-1)}$.

Based on the above arguments, according to the values of $\correct_\lef$ and $\correct_\righ$, there are various cases that can happen. From the way tree $\frontier^{(t)}$ and set $\sident^{(t)}$ are obtained from $\frontier^{(t-1)}$ and $\sident^{(t-1)}$, it follows that in any case, the first 4 properties of $P(t)$ are maintained with probability at least $1 - \frac{2}{N^4}$. Furthermore, the way tree $T^{(t)}$ is constructed implies that,
\[ \sum_{u\in \leaves\left(\frontier_v^{(t)}\right)} \left| L_u^{(t)} \right| = \left| L_v^{(t-1)} \right|. \]
Therefore, for every $t>1$, by inductive hypothesis $P_2(t-1)$, the change in potential is bounded as follows,
\small
\begin{align*}
\phi_t - \phi_{t-1} &= \sum_{u \in \leaves\left(\frontier_v^{(t)}\right)} \left( 2\log N - l_{\frontier^{(t)}}(u) \right) \cdot \left| L_u^{(t)}\right| - \left( 2\log N - l_{\frontier^{(t-1)}}(v) \right) \cdot \left| L_v^{(t-1)}\right|\\
&\le - \left| L_v^{(t-1)}\right| < -b.
\end{align*}
\normalsize
Moreover, if $t=1$ then the change in potential satisfies $\phi_1 - \phi_{0} \le - \left| L_v^{(t-1)}\right| \le 0$ (because in this case $v=$ root). This proves the inductive claim $P_5(t)$. 

We have proved that for every $t$, if the inductive hypothesis $P(t-1)$ is satisfied then the property $P(t)$ is maintained with probability at least $1 - \frac{2}{N^4} - \frac{1}{N^7} \ge 1 - \frac{4}{N^4}$. Therefore, using the inductive hypothesis that $\Pr[P(t-1)] \ge 1 - \frac{4(t-1)}{N^4}$, by using union bound we find that
\[ \Pr[P(t)] \ge \Pr[ P(t) | P(t-1) ] \cdot \Pr[ P(t-1) ] \ge 1 - \frac{4 t }{N^4}. \]
This complets the proof of the lemma.
\end{proof}

Now we are in a position to prove the main result of this section.
\paragraph{Proof of Theorem~\ref{thm:core}.}
The proof basically follows by invoking Lemma~\ref{lem:invariant} and then analyzing the runtime and sample complexity of Algorithm~\ref{alg:sfftrobust}. If we let $\mu := \|\eta\|_2$ then because $x$ is a signal in the $k$-high SNR regime, we have that $\left| \head_\mu(\wh{x}) \right| \le k$ and $\left\| \wh x - \wh x_{\head_\mu(\wh{x})} \right\|_2 \le \mu$. Therefore, if we run the procedure \textsc{RobustSparseFT} (Algorithm~\ref{alg:sfftrobust}) with inputs $(x, k, \epsilon, \mu)$, then the preconditions of Lemma~\ref{lem:invariant} hold and hence by invoking this lemma we conclude that all the invariants $P_1(t)$ through $P_5(t)$, defined in Lemma~\ref{lem:invariant}, hold throughout the execution of Algorithm~\ref{alg:sfftrobust} for every non-negative integer $t$. 

Using this, we first prove the termination of the algorithm. Let $q = O\left( k + \frac{k\log N}{b}  \right)$ be some large enough integer. We show that the algorithm must terminate in $q$ iterations. Note that the probability that the properties $P(t)$ hold for all iterations $t \in \{ 0,1, \dots q\}$ of algorithm \textsc{RobustSparseFFT} is at least $1 - \frac{4(q+1)}{N^4} \ge 1 - \frac{1}{N^3}$, by Lemma~\ref{lem:invariant}. From now on, we condition on the event corresponding to $P(t)$ holding for all iterations $t \in \{ 0,1, \dots q\}$, which holds with probability at least $1 - \frac{1}{N^3}$. Conditioned on this event we prove that the algorithm terminates in less than $q$ iterations. 

Note that, the potential function $\phi_t$ defined in $P_5(t)$ is non-negative for every $t$.	Moreover, at the zeroth iteration of the algorithm $T^{(0)} = \{\text{root}\}$ and hence $L_\text{root}^{(0)} = \head_\mu(\wh{x})$, thus
\[ \phi_0 \le 2 k \log N. \]
Therefore, it follows from $P_5(t)$ that Algorithm~\ref{alg:sfftrobust} must terminate in at most $q = O\left(k + \frac{k\log N}{b} \right)$ iterations.

When the algorithm terminates, the condition of the \emph{Repeat-Until} loop in line~\ref{a10l30} of the algorithm must be True. Thus, when the algorithm terminates, there is no leaf in tree $T^{(q)}$ besides the root. Cosequently, by invariants $P_1(q)$ and $P_3(q)$, the output of the algorithm satisfies,
$\head_\mu(\wh{x}) \subseteq \supp(\wh{\chi})$ and $\left\|\wh{x}_{\Estimated} - \wh{\chi}\right\|_2^2 \leq \frac{\epsilon \left|\Estimated \right|}{k} \cdot \mu^2$, where $\Estimated=\supp(\wh \chi)$. Using the invariant $P_4(q)$, the latter can be Further upper bounded as $\left\|\wh{x}_{\Estimated} - \wh{\chi}\right\|_2^2 \leq \epsilon \cdot \mu^2$. This together with the $k$-high SNR assumption of the theorem gives the approximation guarantee of the theorem $\left\|\wh{x} - \wh{\chi}\right\|_2^2 \leq (1+\epsilon) \cdot \|\eta\|_2^2$.

{\bf Runtime and Sample Complexity.} The expensive components of the algorithm are primitive \textsc{Estimate} in line~\ref{a10l10} and primitive \textsc{RobustPromiseSFT} in lines~\ref{a10l18} and \ref{a10l19} of the algorithm. 
We first bound the time and sample complexity of invoking \textsc{Estimate} in line~\ref{a10l10}. We remark that, at any iteration $t$, the algorithm runs primitive \textsc{Estimate} only if {\bf case 1} that we mentioned earlier in the proof happens. Therefore, in this case, the set $\emptyset \neq \scheap^{(t)} \subseteq \sident^{(t-1)}$ that our algorithm computes in line~\ref{a10l8} by running the primitive \textsc{ExtractCheapSubset} satisfies the property that $\left| \scheap^{(t)} \right| \cdot \left( 8+4\log \left| \sident^{(t-1)} \right| \right) \ge \max_{u \in \scheap^{(t)}} 2^{w_{\frontier^{(t-1)}}(u)}$. By $P_2(t-1)$, and $k$-high SNR assumption, this implies that $\left| \scheap^{(t)} \right| \cdot \left( 8+4\log k \right) \ge \max_{u \in \scheap^{(t)}} 2^{w_{\frontier^{(t-1)}}(u)}$.

Thus, by Lemma~\ref{est-inner-lem}, the runtime and sample complexity of every invocation of \textsc{Estimate} in line~\ref{a10l10} of our algorithm are bounded by $\widetilde{O}\left( \frac{k}{ \epsilon \cdot \left| \scheap^{(t)} \right|}  \sum_{u \in \scheap^{(t)}} 2^{w_{\frontier^{(t-1)}}(u)} + \frac{k}{\epsilon}  \|\wh \chi^{(t-1)} \|_0 \right)$ and $\widetilde{O}\left( \frac{k}{ \epsilon \cdot \left| \scheap^{(t)} \right|}  \sum_{u \in \scheap^{(t)}} 2^{w_{\frontier^{(t-1)}}(u)}\right)$, respectively. Using $P_4(t-1)$, the runtime and sample complexity of \textsc{Estimate} can be further upper bounded by $\widetilde{O}\left( \frac{k}{ \epsilon} \cdot \left| \scheap^{(t)} \right| + \frac{k^2}{\epsilon} \right)$ and $\widetilde{O}\left( \frac{k}{ \epsilon} \cdot \left| \scheap^{(t)} \right| \right)$, respectively.
By property $P_5(t)$ we find that the total number of iterations in which {\bf case 1} happens, and hence number of times we run \textsc{Estimate} in line~\ref{a10l10} of the algorithm, is bounded by $O(k)$.
Using this together with the fact that $\sum_{t: \text{ if-statement in line~\ref{a10l7} is }\true }\left| \scheap^{(t)} \right| = \left\|\wh \chi \right\|_0 \le k$, the total runtime and sample complexity of all invocations of \textsc{Estimate} in all iterations can be upper bounded by $\widetilde{O}\left( \frac{k^3}{\epsilon} \right)$ and $\widetilde{O}\left( \frac{k^2}{\epsilon} \right)$, respectively.

Now we bound the runtime and sample complexity of invoking \textsc{RobustPromiseSFT} in lines~\ref{a10l18} and \ref{a10l19} of the algorithm. Note that at any iteration $t$, the algorithm runs \textsc{RobustPromiseSFT} in lines~\ref{a10l18} and \ref{a10l19} only if {\bf case 2} that we mentioned earlier in the proof happens. Since we pick leaf $v$ in line~\ref{a10l16} of the algorithm with smallest weight, and since the number of leaves that are not in the set $\sident^{(t-1)}$ are bounded by $\frac{k}{b}$ (by invariant $P_2(t-1)$), we have $w_{T^{(t-1)}}(v) \le \log \frac{k}{b}$. Also note that $\left\| \wh \chi^{(t-1)} \right\|_0 \le k$ by invariant $P_4(t-1)$ and the $k$-high SNR assumption.

Therefore, by Lemma~\ref{promise_correctness-runtime}, the runtime and sample complexity of each invokation of \textsc{RobustPromiseSFT} by our algorithm are bounded by $\widetilde{O} \left( k \cdot (b^2 + k) + \frac{k}{b} \cdot (b^3 + k) \right)$ and $\widetilde{O} \left( \frac{k}{b} \cdot (b^3 + k) \right)$. By property $P_5(t)$ we find that the total number of iterations in which {\bf case 2} happens, and hence the number of times we run \textsc{RobustPromiseSFT} in lines~\ref{a10l18} and \ref{a10l19} of the algorithm, is bounded by $O\left( \frac{k \log N}{b} \right)$. Therefore, by using $b \approx {k}^{1/3}$, we find that the total runtime and sample complexity of all invocations of \textsc{RobustPromiseSFT} are bounded by $\widetilde{O} \left( {k^{8/3} } \right)$ and $\widetilde{O} \left( k^{7/3} \right)$, respectively. Hence, the total time and sample complexity of the algorithm are bounded by $\widetilde{O} \left( \frac{k^3}{\epsilon} \right)$ and $\widetilde{O} \left( k^{7/3} + \frac{k^2}{\epsilon} \right)$, respectively.

\subsection{Proving the Correctness of our Computational Primitives.}

\label{sec:prove_correct_primitives}

In this subsection, we shall prove Lemmas~\ref{lem:guarantee_heavy_test}, \ref{est-inner-lem}, and Claim~\ref{claim:findCheap}. We proceed by proving them in the aforementioned order.

\begin{proofof}{Lemma~\ref{lem:guarantee_heavy_test}}

	By convolution-multiplication theorem, $h_{\Delta}^{z}$ computed in line~\ref{a7l6} of Algorithm~\ref{alg:zero-test} satisfies $h_{\Delta}^{z}= N \cdot \left(\chi \star G_v\right)(\Delta)$, and thus
	\begin{align*}
	H^{z} &= \frac{1}{|\textsc{RIP}_m^z|} \sum_{\Delta \in \textsc{RIP}_m^z} \left| N\cdot \sum_{\jj \in [n]^d} G_v(\Delta - \jj) \cdot x({\jj})- h^{z}_{\Delta} \right|^2\\ 
	&=  \frac{N^{2}}{|\textsc{RIP}_m^z|} \sum_{\Delta \in \textsc{RIP}_m^z} \left| \left((x - \chi) \star G_v\right)(\Delta) \right|^2.
	\end{align*}	
	Therefore, by the convolution-multiplication duality and using the definition $\wh y:= \left(\wh x - \wh \chi \right) \cdot \wh G_v$, if we let $y$ be the inverse Fourier transform of $\wh y$, we find that for every $z \in [32\log N]$,
	\[H^{z} = \frac{N^2}{|\textsc{RIP}_m^z|} \sum_{\Delta \in \textsc{RIP}_m^z} \left| y(\Delta) \right|^2. \]
	We first prove the {\bf first claim of the Lemma}. Let us write $\wh{y} = \wh{y}_S + \wh{y}_{\bar{S}}$, where $\wh{y}_S \in \C^{n^d}$ is defined as $\wh{y}_S(\ff) := \wh{y}(\ff) \cdot  \mathbbm{1}_{\{\ff \in S\}}$ and $\wh{y}_{\bar{S}} \in \C^{n^d}$ is defined as $\wh{y}_{\bar{S}}(\ff) := \wh{y} (\ff) \cdot  \mathbbm{1}_{\{\ff \notin S\}}$.  By the assumption of lemma $\|\wh y_S\|_2^2 > \frac{11 \theta}{10}$. Let $y_S$ and $y_{\bar{S}}$ denote the inverse Fourier transform of $\wh{y}_S$ and $\wh{y}_{\bar{S}}$ respectively. We have $y = y_S + y_{\bar{S}}$. Thus we find that,
	\begin{align*}
	\frac{1}{|\textsc{RIP}_m^z|} \sum_{\Delta \in \textsc{RIP}_m^z} \left| y(\Delta) \right|^2 &= \frac{1}{m} \sum_{\Delta \in \textsc{RIP}_m^z} |y_S(\Delta) + y_{\bar{S}}(\Delta)|^2\\
	&=\frac{1}{m} \sum_{\Delta \in \textsc{RIP}_m^z} |y_S(\Delta)|^2 + |y_{\bar{S}}(\Delta)|^2 + 2 \Re\left\{ y_S(\Delta)^* \cdot y_{\bar{S}}(\Delta) \right\}\\
	&\ge \frac{1}{m} \sum_{\Delta \in \textsc{RIP}_m^z} |y_S(\Delta)|^2 + 2 \Re\left\{ y_S(\Delta)^* \cdot y_{\bar{S}}(\Delta) \right\}
	\end{align*}
	First note that since $\wh{y}_S$ is $|S|$-sparse and because we assumed $m = \Omega\left( |S|\log^2|S| \log N \right)$ and because $\Delta$'s are i.i.d. uniform samples from $[n]^d$, by Theorem~\ref{RIP-thrm}, 
	\begin{equation}\label{RIP}
	\Pr \left[\frac{1}{m}\sum_{\Delta \in \textsc{RIP}_m^z} |y_S(\Delta)|^2 \ge 0.99 \cdot \frac{\|\wh{y}_S\|_2^2}{N^2}\right] \ge 1 - \frac{1}{N^2}.
	\end{equation}
	Now it suffices to bound the term $\frac{1}{m} \sum_{\Delta \in \textsc{RIP}_m^z} 2 \Re\left\{ y_S(\Delta)^* \cdot y_{\bar{S}}(\Delta) \right\}$. First, note that 
	\begin{align*}
	\E\left[\frac{1}{m} \sum_{\Delta \in \textsc{RIP}_m^z} 2 \Re\left\{ y_S(\Delta)^* \cdot y_{\bar{S}}(\Delta) \right\} \right] &= \frac{1}{m} \sum_{\Delta \in \textsc{RIP}_m^z} \E\left[ y_S(\Delta)^* \cdot y_{\bar{S}}(\Delta) \right] + \E\left[ y_S(\Delta) \cdot y_{\bar{S}}(\Delta)^* \right]\\
	&= \frac{1}{m} \sum_{\Delta \in \textsc{RIP}_m^z} \frac{1}{N}\langle y_S, y_{\bar{S}} \rangle + \frac{1}{N}\langle y_{\bar{S}}, y_S \rangle\\
	&= \frac{1}{m} \sum_{\Delta \in \textsc{RIP}_m^z} \frac{1}{N^2}\langle \wh{y}_S, \wh{y}_{\bar{S}} \rangle + \frac{1}{N^2} \langle \wh{y}_{\bar{S}}, \wh{y}_S \rangle\\
	&= 0,
	\end{align*}
	where the last line follows because the support of $\wh{y}_{\bar{S}}$ and $\wh{y}_S$ are disjoint.
	We proceed by bounding the second moment of the quantity $\frac{1}{m} \sum_{\Delta \in \textsc{RIP}_m^z} 2 \Re\left\{ y_S(\Delta)^* \cdot y_{\bar{S}}(\Delta) \right\}$ as follows,
	\begin{align*}
	\E\left[ \left| \frac{1}{m} \sum_{\Delta \in \textsc{RIP}_m^z} 2 \Re\left\{ y_S(\Delta)^* \cdot y_{\bar{S}}(\Delta) \right\} \right|^2 \right] &\le \E\left[ \left| \frac{2}{m} \sum_{\Delta \in \textsc{RIP}_m^z}  y_S(\Delta)^* \cdot y_{\bar{S}}(\Delta)  \right|^2 \right]\\
	&= \frac{4}{m}  \E\left[ \left| y_S(\Delta)^* \cdot y_{\bar{S}}(\Delta)\right|^2 \right] \text{~~~(By independence of $\Delta$'s)}\\
	&\le \frac{4}{m} \E\left[ \|y_S\|_\infty^2 \left|y_{\bar{S}}(\Delta)\right|^2 \right]\\
	&= \frac{4}{m} \|y_S\|_\infty^2 \E\left[ \left|y_{\bar{S}}(\Delta)\right|^2 \right]\\
	&= \frac{4}{m} \|y_S\|_\infty^2  \frac{\|\wh{y}_{\bar{S}}\|_2^2}{N^2} 
	\end{align*}
	
	By Chebyshev's inequality we have the following,
	\small
	\begin{align*}
	\Pr\left[ \left| \frac{1}{m} \sum_{\Delta \in \textsc{RIP}_m^z} 2 \Re\left\{ y_S(\Delta)^* \cdot y_{\bar{S}}(\Delta) \right\} \right| \ge 1/20 \cdot \frac{\|\wh{y}_S\|_2^2}{N^2} \right] &\le \frac{1600 N^2 \|y_S\|_\infty^2  \|\wh{y}_{\bar{S}}\|_2^2 }{m \|\wh{y}_S\|_2^4}\\
	&\le \frac{1600 \|\wh{y}_S\|_1^2  \|\wh{y}_{\bar{S}}\|_2^2 }{m \|\wh{y}_S\|_2^4}\\
	&\le \frac{1600 |S| \cdot \|\wh{y}_S\|_2^2  \|\wh{y}_{\bar{S}}\|_2^2 }{m \|\wh{y}_S\|_2^4} \text{~~~(Cauchy-Schwarz)}\\
	&= \frac{1600 |S| \cdot \|\wh{y}_{\bar{S}}\|_2^2 }{m \|\wh{y}_S\|_2^2}.
	\end{align*}
	\normalsize
	Therefore because we assumed that $m = \Omega \left( |S| \frac{\|\wh{y}\|_2^2}{\|\wh{y}_S\|_2^2} \right)$, the following holds,
	\[ \Pr\left[ \left| \frac{1}{m} \sum_{\Delta \in \textsc{RIP}_m^z} 2 \Re\left\{ y_S(\Delta)^* \cdot y_{\bar{S}}(\Delta) \right\} \right| \ge 1/20 \cdot \frac{\|\wh{y}_S\|_2^2}{N^2} \right] \le 1/10.\]
	Combining the above inequality with \eqref{RIP} using union bound gives,
	\[ \Pr\left[ H^{z} \le 0.94 \cdot \|\wh{y}_S\|_2^2 \right] \le 1/8.\]
	
	Since in line~\ref{a7l8} of the algorithm we compare $\textsc{Median}_{z \in [32 \log N]} \left\{H^{z}\right\}$ to $\theta$, using the fact that $\|\wh{y}_S\|_2^2 > \frac{11\theta}{10}$, we have the following,
	\begin{align*}
	\Pr \left[ \textsc{HeavyTest} = \false \right] &\le \Pr \left[ \textsc{Median}_{z \in [32 \log N]} \left\{H^{z} \right\} \le 10/11 \cdot \|\wh{y}_S\|_2^2 \right]\\ 
	&\le {32\log N \choose 16\log N} \frac{1}{8^{16\log N}}\\ 
	&\le \frac{2^{32\log N}}{8^{16\log N}} = \frac{1}{N^{16}}.
	\end{align*}
	This completes the proof of the first claim.
	
	The proof of the {\bf second claim of the lemma} is more straightforward. The expected value of $H^{z}$ is,
	\[ \E [H^{z}] = \frac{N^2}{|\textsc{RIP}_m^z|} \sum_{\Delta \in \textsc{RIP}_m^z} \E \left[\left| y(\Delta) \right|^2\right] = \|\wh y\|_2^2. \]
	Therefore by Markov's inequality we find that for every $z \in [32\log N]$,
	\[ \Pr \left[ H^{z} \ge 5 \|\wh y\|_2^2 \right] \le 1/5. \]
	The assumption of the lemma in this case is that $\|\wh y\|_2^2 \le \theta/5$, thus we have,
	\begin{align*}
	\Pr \left[ \textsc{HeavyTest} = {\true} \right] &\le \Pr \left[ \textsc{Median}_{z \in [32 \log N]} \left\{H^{z}_{\ff}\right\} > 5 \cdot \|\wh{y}_S\|_2^2 \right]\\ 
	&\le {32\log N \choose 16\log N} \frac{1}{5^{16\log N}}\\ 
	&\le \frac{2^{32\log N}}{5^{16\log N}} = \frac{1}{N^{5}}.
	\end{align*}
	This completes the proof of the second claim of the lemma.

	\paragraph{Sample Complexity and Runtime:} Computing the filters $(G_v,\wh G_v)$ uses $O\left( 2^{w_T(v)} + \log N \right)$ runtime, by Lemma~\ref{lem:isolate-filter-highdim}. Given filter $\wh G_v$, computing the quantities $h_\Delta^{z}$ for all $\Delta$ and $z$ in line~\ref{a7l6} of the algorithm uses $O\left( \|\wh \chi\|_0 \cdot \sum_{z}|\textsc{RIP}_m^z| \right) = O\left( \|\wh \chi\|_0 \cdot m \log N \right)$ time. Given filter $G_v$ with $|\supp (G_v)| = 2^{w_T(v)}$, computing the quantity $H^z$ for all $z$ requires $O\left( 2^{w_T(v)} \cdot \sum_{z}|\textsc{RIP}_m^z| \right) = O\left( 2^{w_T(v)} \cdot m\log N \right)$ accesses to the signal $x$ and $O\left( 2^{w_T(v)} \cdot m\log N \right)$ runtime. Therefore, the total sample complexity of the algorithm is $O\left( 2^{w_T(v)} \cdot m\log N \right)$ and the total runtime of  the algorithm is $O\left( 2^{w_T(v)} \cdot m\log N + \|\wh \chi\|_0 \cdot m \log N \right)$
	
\end{proofof}


\begin{proofof}{Lemma~\ref{est-inner-lem}}
	Note that the algorithm constructs $(v, T)$-isolating filters $(G_v,\wh G_v)$ for every leaf $v \in S$. By Lemma~\ref{lem:isolate-filter-highdim}, constructing filters $G_v$ and $\wh G_v$ takes time $O \left( 2^{w_{T}(v)} + \log N \right)$. Moreover, Lemma~\ref{lem:isolate-filter-highdim} tells us that filter $G_v$ has support size $|\supp (G_v)| =2^{w_{T}(v)}$ and $\wh G_v$ can be accessed at any frequency using $O(\log N)$ operations.\\	
	Therefore, for every fixed $v \in S$, computing  $h^z_v = \sum_{\Delta \in \textsc{RIP}_m^z} e^{-2\pi i \frac{\ff^\top\Delta}{n}} \sum_{\bm\xi \in [n]^d}  e^{2\pi i \frac{\bm\xi^T \Delta}{n}} \cdot \wh \chi_{\bm\xi} \cdot \wh G_v(\bm\xi)$ in line~\ref{a8l7} of Algorithm~\ref{alg:high-dim-Est-robust} can be done in total time $O\left(|\textsc{RIP}_m^z|  \log N \cdot \|\wh \chi\|_0  \right) = O\left(B \log N \cdot \|\wh \chi\|_0 \right)$ for all $z$. By convolution-multiplication duality theorem, $h^z_v$ satisfies $h^z_v = N \cdot \sum_{\Delta \in \textsc{RIP}_m^z} e^{-2\pi i \frac{\ff^\top \Delta}{n}} \left(\chi \star G_v\right)(\Delta)$, and thus, for every leaf $v \in S$:
	\begin{align*}
	H^{z}_v&= \frac{1}{|\textsc{RIP}_m^z|} \cdot\left( N \cdot \sum_{\Delta \in \textsc{RIP}_m^z} \left(e^{-2\pi i \frac{\ff^\top \Delta}{n}} \sum_{\jj \in [n]^d} G_v(\Delta -\jj) \cdot x({\jj})\right) - h^z_v \right)\\ 
	&=  \frac{N}{|\textsc{RIP}_m^z|} \sum_{\Delta \in \textsc{RIP}_m^z} e^{-2\pi i \frac{\ff^\top \Delta}{n}} \left( \left(x - \chi \right) \star G_v \right)(\Delta).
	\end{align*}
	To simplify the notation, let us use $y_v := \left(x - \chi\right) \star G_v$. 
	Because $G_v$ is $\left( v, T \right)$-isolating, by Definition~\ref{def:v-t-isolating-highdim}, we have that $\wh{y}_v({\bm \xi}) = 0$ for every ${\bm \xi} \in \bigcup_{\substack{u\in \leaves(T) \\ u \neq v}} \subtree_T(u)$ and also $\wh{y}_v({\ff}) = (\wh{x - \chi})({\ff})$, where $\ff := \ff_v$ is the frequency label of the leaf $v$. Using these facts together with the above equality and the assumption of the lemma on $\textsc{IsIdentified}(T,v) = \true$, we can write,
	\begin{align*}
	H^{z}_v &= \frac{N}{|\textsc{RIP}_m^z|} \sum_{\Delta \in \textsc{RIP}_m^z} e^{-2\pi i \frac{\ff^\top\Delta}{n}} y_v(\Delta)\\
	&= \wh{y}_v(\ff) + \frac{1}{|\textsc{RIP}_m^z|} \sum_{\Delta \in \textsc{RIP}_m^z} \sum_{{\bm \xi} \in [n]^d\setminus \supp{(T)} } e^{2\pi i \frac{({\bm \xi} - \ff)^\top\Delta}{n}} \cdot \wh{y}_v({\bm \xi}).
	\end{align*}
	We continue by computing the expectation of the above quantity. Since $\ff \in \subtree_T(v)$, $\bm{\xi}-\ff \neq 0$ for every $\bm{\xi} \in [n]^d\setminus \supp{(T)}$, which in turn implies that,
	\[\E\left[H^{z}_{v}\right] = \wh{y}_v(\ff) + \frac{1}{|\textsc{RIP}_m^z|} \sum_{\Delta \in \textsc{RIP}_m^z} \sum_{{\bm \xi} \in [n]^d\setminus \supp{(T)}} \E_\Delta\left[e^{2\pi i \frac{({\bm \xi} - \ff)^\top\Delta}{n}}\right] \wh{y}_v({\bm \xi}) = \wh{y}_v(\ff). \]
	In the above expectation we used the fact that $\Delta$ is distributed uniformly on $[n]^d$. Next we compute the second moment of $H^{z}_{v}$. We have,
	\begin{align*}
	\E\left[ \left|H^{z}_{v} - \wh{y}_v(\ff)\right|^2 \right] &= \frac{1}{|\textsc{RIP}_m^z|^2} \sum_{\Delta \in \textsc{RIP}_m^z} \E\left[\left|\sum_{{\bm \xi} \in [n]^d\setminus \supp{(T)} } e^{2\pi i \frac{({\bm \xi} - \ff)^\top\Delta}{n}} \wh{y}_v({\bm \xi})\right|^2\right] \text{~~(by independence of $\Delta$'s)}\\
	&= \frac{1}{|\textsc{RIP}_m^z|} \sum_{{\bm \xi} \in [n]^d\setminus \supp{(T)} } |\wh{y}_v({\bm \xi})|^2 \text{~~~~~~~(since $\Delta$ is uniform over $[n]^d$ and $\bm{\xi}-\ff \neq 0$)}\\
	&= \frac{1}{B} \sum_{{\bm \xi} \in [n]^d\setminus \supp{(T)} } \left|(\wh{x -\chi})({\bm{\xi}}) \cdot \wh{G}_v({\bm\xi})\right|^2. \text{~~~~~~~~~~~~~~~~~~~~~~~~~ (by definition of $y$)}
	\end{align*}
	In the final line above we used the fact that the multiset $\textsc{RIP}_m^z$ defined in Algorithm~\ref{alg:high-dim-Est-robust} has size $m$. Therefore, Markov's inequality implies that for every $z \in [16 \log N]$,
	\[ \Pr \left[ \left|H^{z}_{v} - \wh{y}_v(\ff) \right|^2 \ge \frac{8}{m} \cdot \sum_{{\bm \xi} \in [n]^d\setminus \supp{(T)} } \left|(\wh{x -\chi})({\bm{\xi}}) \cdot \wh{G}_v({\bm\xi})\right|^2 \right] \le \frac{1}{8}. \]
	Since in line~\ref{a8l9} of Algorithm~\ref{alg:high-dim-Est-robust} we set $\wh{H}_v = \textsc{Median}_{z \in [16 \log N]} \left\{H^{z}_{v}\right\}$, where the median of real and imaginary parts are computed separately, we find that
	\begin{align*}
	\Pr \left[ \left|\wh{H}_v - \wh{y}_v(\ff) \right|^2 \ge \frac{16}{m} \cdot \sum_{{\bm \xi} \in [n]^d\setminus \supp{(T)} } \left|(\wh{x -\chi})({\bm{\xi}}) \cdot \wh{G}_v({\bm\xi})\right|^2 \right] & \le {16 \log N \choose 8 \log N} \frac{1}{8^{8\log N}}\\ 
	&\le \frac{2^{16\log N}}{8^{8\log N}} = \frac{1}{N^{8}}.
	\end{align*}
	By recalling that $\wh{y}_v(\ff) = (\wh{x-\chi})({\ff_v})$ for every $v \in S$ and applying union bound we find that,
	\begin{equation} \label{eq:error-emt-inner}
	\Pr \left[ \sum_{v\in S} \left|\wh{H}_v - (\wh{x-\chi})({\ff_v}) \right|^2 \ge \frac{16}{m} \cdot \sum_{v\in S} \sum_{{\bm \xi} \in [n]^d\setminus \supp{(T)} } \left|(\wh{x -\chi})({\bm{\xi}}) \cdot \wh{G}_v({\bm\xi})\right|^2 \right] \le \frac{|S|}{N^{8}}. 
	\end{equation}
	In the last step, we bound the quantity $\sum_{v\in S} \sum_{{\bm \xi} \in [n]^d\setminus \supp{(T)} } \left|(\wh{x -\chi})({\bm{\xi}}) \cdot \wh{G}_v({\bm\xi})\right|^2$ as follows,
	\small
	\begin{align*}
	\sum_{v\in S} \sum_{{\bm \xi} \in [n]^d\setminus \supp{(T)} } \left|(\wh{x -\chi})({\bm{\xi}}) \cdot \wh{G}_v({\bm\xi})\right|^2 &= \sum_{{\bm \xi} \in [n]^d\setminus \supp{(T)} } \left|(\wh{x -\chi})({\bm{\xi}})\right|^2 \cdot \sum_{v\in S} \left|\wh{G}_v({\bm\xi})\right|^2\\
	&\le \sum_{{\bm \xi} \in [n]^d\setminus \supp{(T)} } \left|(\wh{x -\chi})({\bm{\xi}})\right|^2 \cdot \sum_{v\in \leaves(T)} \left|\wh{G}_v({\bm\xi})\right|^2\\
	&= \sum_{{\bm \xi} \in [n]^d\setminus \supp{(T)} } \left|(\wh{x -\chi})({\bm{\xi}})\right|^2, \text{~~~~~~~~(By Lemma~\ref{filter-robust-multidim})}
	\end{align*}
\normalsize
	hence, plugging the above bound into \eqref{eq:error-emt-inner} gives,
	\small
	\[ \Pr \left[ \sum_{v\in S} \left|\wh{H}_v - (\wh{x-\chi})({\ff_v}) \right|^2 \ge \frac{16}{m} \cdot \sum_{{\bm \xi} \in [n]^d\setminus \supp{(T)}} \left|(\wh{x -\chi})({\bm{\xi}})\right|^2 \right] \le \frac{|S|}{N^{8}}.  \]
	\normalsize
\end{proofof}

Lastly, we prove the correctness of \textsc{ExtractCheapSubset}, and in particular Claim~\ref{claim:findCheap}.

\begin{proofof}{Claim~\ref{claim:findCheap}}
	First let $S':= \left\{ u \in S : 2^{w_T(u)} \le 4 |S| \right\}$. It easily follows that $\sum_{u \in S'} 2^{-w_T(u)} \ge \frac{1}{4}$.
	For every $j = 0, 1, \dots \lfloor \log (4|S|) \rfloor$, let $L_j$ denote the subset of $S'$ defined as $L_j:= \{ u: u\in S', w_T(u) = j\}$. We can write,
	\[ \sum_{u \in S'} 2^{-w_T(u)} = \sum_{j=0}^{\lfloor \log (4|S|) \rfloor} \frac{|L_j|}{2^j}\]
	Therefore, by the fact that $\sum_{u \in S'} 2^{-w_T(u)} \ge \frac{1}{4}$, we have that there must exist an integer $j \in \{0, 1, \dots \lfloor \log (4|S|) \rfloor\}$ such that $\frac{|L_j|}{2^j} \ge \frac{1}{4\lfloor \log (4|S|) \rfloor}$. Hence, there must exist a set $L \subseteq S$ such that $|L| \cdot ( 8 + 4 \log |S|) \ge \max_{v\in L} 2^{w_T(v)}$. The primitive \textsc{ExtractCheapSubset} finds this set $L$ efficiently.
\end{proofof}

\newpage

\section{Robust Sparse Fourier Transform II.}
\label{sec:robust_sec}
In this section we present an algorithm that can compute a $1+\epsilon$ approximation to the Fourier transform of a singnal in the $k$-high SNR regime using a sample complexity that is nearly quadratic in $k$ and a runtime that is cubic in $k$, fully making use of techniques I-IV.

Formally we prove the following theorem,
\recursivesfftrobust*

We first present a recursive procedure in Algorithm~\ref{alg:nearquad-robust} that is the main computational component of achieving the abovementioned theorem for a constant value of $\epsilon = \frac{1}{20}$. Any sparse $\wh \chi$ that satisfies the approximation guarantee of Theorem~\ref{thrm:near-quad-robust} for constant $\epsilon$, by the $k$-high SNR assumption, must recover all the \emph{head} elements of $\wh x$ correctly. Once we have the set of heavy frequencies of $\wh x$ we can estimate the head vlaues to a higher $\epsilon$ precision for arbitrarily small $\epsilon$ using a simple algorithm. We present the procedure that achieves such $1+\epsilon$ approximation and thus achieves the guarantee of Theorem~\ref{thrm:near-quad-robust} in Algorithm~\ref{high-dim-nearquad-main-alg}. We demonstrate the execution of primitive \textsc{RecursiveRobustSFT} (Algorithm~\ref{alg:nearquad-robust}) in Figure~\ref{fig:robust-recursive}.

\begin{figure}[t]
	\centering
	\scalebox{.73}{
		\begin{tikzpicture}[level/.style={sibling distance=80mm/#1,level distance = 1.5cm}]
			\node [arn] (z){}
			child {node [arn] {}edge from parent [photon]
				child[draw=white]
				child{node [arn]{}}
			}
			child { node [arn] {} edge from parent [photon]
				child[sibling distance=50mm] {node [arn_l] (v) {}
					child[sibling distance=70mm]{node [arn] {}edge from parent [solidline]
						child[sibling distance=55mm]{node [arn] {}
							child[sibling distance=30mm]{ node [arn]{}
								child[draw=white]
								child{node [arn] {}
									child{node [draw=blue]{}}
									child[draw=white]
								}
							}
							child{ node [arn] (e1) {}
								child{node [arn] {}edge from parent [dashed]
									child{node [arn] {}}
									child{node [arn] {}}
								}
								child[draw=white]
							}
						}
						child[sibling distance=37mm]{node [arn] {}
							child{node [arn] {}
								child{node [arn]{}
									child{node [draw=blue] {}}
									child{node [draw=blue] {}}
								}
								child[draw=white]
							}
							child{node [arn] {}
								child{node [arn]{}
									child{node [draw=blue] {}}
									child[draw=white]
								}
								child{node [arn]{}
									child{node [draw=blue] {}}
									child[draw=white]
								}
							}
						}
					}
					child[sibling distance=55mm]{node [arn] {}edge from parent [solidline]
						child[sibling distance=35mm]{node [circle, white, draw=red, fill=red, inner sep = 1.4] {}edge from parent [draw=red, thin, dashed]
							child[sibling distance=25mm]{node [circle, white, draw=red, fill=red, inner sep = 1.4]{} 
								child[draw=white]
								child{node [circle, white, draw=red, fill=red, inner sep = 1.4]{}
									child{node [circle, white, draw=red, fill=red, inner sep = 1.4]{}}
									child{node [circle, white, draw=red, fill=red, inner sep = 1.4]{}}
								}
							}
							child[sibling distance=30mm]{node [circle, white, draw=red, fill=red, inner sep = 1.4]{}
								child{node [circle, white, draw=red, fill=red, inner sep = 1.4] (r1){}
									child{node [circle, white, draw=red, fill=red, inner sep = 1.4] {}}
									child[draw=white]
								}
								child{node [circle, white, draw=red, fill=red, inner sep = 1.4] (r2){}
									child{node [circle, white, draw=red, fill=red, inner sep = 1.4] {}}
									child[draw=white]
								}
							}
						}
						child[sibling distance=50mm]{node [arn] {}
							child{node [arn] {}
								child{node [arn] (e2) {}
									child{node [arn] {} edge from parent [dashed]}
									child{node [arn] {} edge from parent [dashed]}
								}
							child[draw=white]
							}
							child{node [arn] (e3) {}
								child{node [arn] {} edge from parent [dashed]
									child[draw=white]
									child[sibling distance=10mm]{node [arn] {} edge from parent [dashed]}
								}
								child{node [arn] {} edge from parent [dashed]
									child[sibling distance=10mm]{node [arn] {} edge from parent [dashed]}
									child{node [arn] {} edge from parent [dashed]}
								}
							}
						}
					}
				}
				child [sibling distance=50mm]{node [arn] {}
					child[sibling distance=20mm]{node [arn]{}}
					child[sibling distance=20mm]{node [arn]{}}
				}
			};
			
			\node []	at (v.north)	[label=left: \LARGE{$v$}]	{};
			
			\node [] at (-6.9,0.1) [label=right:{\Large{\frontier}}]	{};
			\draw[draw=black,very  thick, ->] (-4,0) -- (-1.8,-0.5);

			\node [] at (-5,-3.5) [label=right:{\Large{subtree $T$}}]	{};
			\draw[draw=black,very  thick, ->] (-2.5,-3.6) -- (-1,-4);
			
			\node [] at (e3) [label=right:{\Large{yet to be explored}}]	{};
			\node [trrr] at (e1.north) [] {};

			\node [trr] at (e2.north) [] {};
			
			\node [trrr] at (e3.north) [] {};
			
			\node [] at (-9,-10.5) [label=north:\Large{$\sident$ leaves}]	{}
			edge[->, thick, bend right=35] (-6.3,-10.8)
			edge[->, thick, bend right=35] (-2.3,-10.8)
			edge[->, thick, bend right=35] (-1.2,-10.8)
			edge[->, thick, bend right=37] (-0.65,-10.8)
			edge[->, thick, bend right=37] (0.7,-10.8);
			
			\node [] at (-6,-13.5) [label=north:\Large{recovered \& subtracted}] {};
			\node [] at (-3.5,-13.8) [label=left:\Large{leaves (frequencies)}]	{}
			edge[->, bend right=20, thick, dashed, draw=red] (1.3,-10.8)
			edge[->, bend right=20, thick, dashed, draw=red] (2.3,-10.8)
			edge[->, bend right=20, thick, dashed, draw=red] (2.75,-10.8)
			edge[->, bend right=20, thick, dashed, draw=red] (4,-10.8);

			\draw[draw=black,thick] (-6.7,-10.75) rectangle ++(16,0.55);
			\foreach \x in {-5.5,-4.6, -2.5, -1.65, -0.82, 0.1 , 1.1, 1.9, 2.62, 3.5, 4.4, 5.25, 6, 7.55, 8.25} \draw[draw=black,very thick] (\x,-10.2) -- (\x,-10.75);
			\node []	at (10,-13)	[label=left:\Large $\head \cap \subtree_\frontier(v)$] {}
			edge[->, bend right=50, very thick] (9.35,-10.45);
			
		\end{tikzpicture}
	}
	\par
	
	\caption{Illustration of an instance of \textsc{RecursiveRobustSFT} (Algorithm~\ref{alg:nearquad-robust}). This procedure takes in a tree $\frontier$ (shown with thin edges) together with a leaf $v \in \leaves(\frontier)$ and adaptively explores/constructs the subtree $T$ rooted at $v$ to find all heavy frequencies that lie in $\subtree_\frontier(v)$. If $\head$ denotes the set of heavy frequencies, then the algorithm finds $\head \cap \subtree_\Path(v)$ by exploring $T$. Once the identity of a leaf is fully revealed, the algorithm adds that leaf to the set $\sident$. When the number of marked leaves grows to the point where there exists a subset of marked frequencies that can be estimated cheaply, our algorithm estimates the $\scheap$ subset in a batch, subtracts off the estimated signal, and removes all corresponding leaves from $T$ and $\sident$.}\label{fig:robust-recursive}
\end{figure}

\paragraph{Overview of \textsc{RecursiveRobustSFT} (Algorithm~\ref{alg:nearquad-robust}):}
Consider an invocation of \textsc{RecursiveRobustSFT}$(x, \wh\chi_{in}, \frontier, v, k, \alpha, \mu)$. Suppose that $\wh y := \wh x - \wh \chi_{in}$ is a signal in the high SNR regime, i.e., the value of each heavy frequency of signal $\wh y$ is at least $3$ times higher than the tail's norm. More formally, let $\head \subseteq[n]^d$ denote the set of heavy (head) frequencies of $\wh y$ and suppose that the tail norm of $\wh y$ satisfies $\| \wh y - \wh y_{\head}\|_2 \le \mu$ and additionally suppose that $|\wh y(\ff)| \ge 3\mu$ for every $\ff \in \head$. If $\frontier$ fully captures the heavy frequencies of $\wh y$, i.e., $\head \subseteq \supp{(\frontier)}$, and the number of heavy frequencies in frequency cone of node $v$ is bounded by $k$, i.e., $|\head \cap \subtree_{\frontier}(v)| \le k$, then \textsc{RecursiveRobustSFT} finds a signal $\wh \chi_v$ such that $\supp{(\wh\chi_v)} = \head \cap \subtree_{\frontier}(v) := S$ and $\|\wh y_S - \wh \chi_{v}\|_2^2 \le \frac{\mu^2}{40\log^2_{\frac{1}{\alpha}}k}$. An example of the input tree $\frontier$ is illustrated in Figure~\ref{fig:robust-recursive} with thin solid black edges. Additionally, one can see node $v$ which is a leaf of $\frontier$ in this figure.

Algorithm~\ref{alg:nearquad-robust} recovers heavy frequencies of signal $\wh y$ that lie in the subree of $v$, i.e., set $S = \head \cap \subtree_{\frontier}(v)$, by iteratively exploring the subtree of $\frontier$ rooted at $v$, which we denote by $T$, and simultaneously updating the proxy signal $\wh \chi_v$. We show an example of subtree $T$ at some iteration of our algorithm in Figure~\ref{fig:robust-recursive} with thick solid edges. The algorithm also maintains a subset of leaves denoted by $\sident$ that contains the leaves of $\frontier$ that are fully identified, that is the set of leaves that are at the bottom level and hence there is no ambiguity in their frequency content (there is exactly one element in frequency cone of marked leaves). We show the set of marked leaves in Figure~\ref{fig:robust-recursive} using blue squares.
Subtree $T$, in all iterations of our algorithm, maintains the invariant that the frequency cone of each of its leaves contain at least one head element and furthermore the frequency cone of each of its \emph{unmarked} leaves contain at least $b+1$ head element, where $b=\alpha k$, i.e.,
\begin{equation}\label{invariant:freq-cone-load-recursive}
	|\subtree_{\frontier\cup T}(u) \cap \head| \ge \begin{cases}
		1 & \text{ for every }u \in \sident\\
		b+1 & \text{ for every }u \in \leaves(T)\setminus \sident
	\end{cases}.
\end{equation}
We demonstrate, in Figure~\ref{fig:robust-recursive}, the leaves that correspond to set $S = \head \cap \subtree_{\frontier}(v)$ via leaves at bottom level of the subtree rooted at $v$.
Assuming that for the example shown in this figure $b=\alpha k = 2$, one can easily verify \eqref{invariant:freq-cone-load-recursive} by noting that the frequency cone of each leaf of $T$ contains at least one element from the set $\head$ and frequency cones of \emph{unmarked} leaves contain at least two element of $\head$.
Additionally, at every iteration of the algorithm, the union of all frequency cones of subtree $T$ captures all heavy frequencies that are not recovered yet, i.e., 
\begin{equation}\label{eq:tree-contain-all-heavies-recursive}
	S \setminus \supp{(\wh \chi_v)} \subseteq \supp{\left(\frontier\cup T\right)}.
\end{equation}
In Figure~\ref{fig:robust-recursive}, we show the set of fully recovered leaves (frequencies), i.e., $\supp{(\wh \chi_v)}$, using red thin dashed subtrees. These frequencies are subtracted from the residual signal $\wh y - \wh \chi_v$ and their corresponding leaves are removed from subtree $T$, as well. One can verify that condition~\ref{eq:tree-contain-all-heavies-recursive} holds in the example depicted in Figure~\ref{fig:robust-recursive}.
Moreover, the estimated value of every frequency that is recovered so far, is accurate up to an average error of $\frac{\mu}{\sqrt{40k}\cdot \log_{\frac{1}{\alpha}} k}$. More precisely, in every iteration of the algorithm the following property is maintained,
\begin{equation}\label{eq:estimates-recursive}
	\frac{\sum_{\ff \in \supp{(\wh \chi_v)}}| \wh y(\ff) - \wh \chi_v(\ff) |^2}{|\supp{(\wh \chi_v)}|} \le \frac{\mu^2}{40k\cdot \log_{\frac{1}{\alpha}}^2 k}.
\end{equation}

At the begining of the procedure, subtree $T$ is initialized to be the leaf $v$, i.e., $T=\{v\}$, and will be dynamically changing throughout the execution of our algorithm. Moreover, we initialize $\wh\chi_v \equiv 0$. Trivially, these initial values satisfy \eqref{invariant:freq-cone-load-recursive}, \eqref{eq:tree-contain-all-heavies-recursive}, and \eqref{eq:estimates-recursive}. 

The algorithm operates by picking the \emph{unmarked} leaf of $T$ that has the smallest weight. Then the algorithm explores the children of this node by recursively running \textsc{RecursiveRobustSFT} on them with a \emph{reduced budget} to recover the heavy frequencies that lie in their frequency cones. To be more precise, let us call the \emph{unmarked} leaf of $T$ that has the smallest weight $z$. We denote by $z_\lef$ and $z_\righ$ the left and right children of $z$. Let us consider exploration of the left child $z_\lef$, the right child is exactly the same. 
If the number of heavy frequencies in the frequency cone of $z_\lef$ is bounded by $b=\alpha k$, i.e., $|\head \cap \subtree_{\frontier\cup \{z_\lef,z_\righ\}}(z_\lef)| \le b$, then \textsc{RecursiveRobustSFT}$(x, \wh \chi_{in}+\wh \chi_v,\frontier \cup T \cup\{z_\lef,z_\righ\}, z_\lef, b, \alpha, \mu)$ recovers every frequency in the set $\head \cap \subtree_{\frontier\cup \{z_\lef,z_\righ\}}(z_\lef)$ up to an average error of $\frac{\mu}{\sqrt{40b} \cdot \log_{\frac{1}{\alpha}}b}$. 
Note that this everage estimation error is not sufficient for achieving the invariant \eqref{eq:estimates-recursive}, hence, instead of directly using the values that the recursive call of \textsc{RecursiveRobustSFT} recovered to update $\wh \chi_v$ at the newly recovered heavy frequencies, our algorithm adds the leaves corresponding to the recovered set of frequencies, i.e., $\head \cap \subtree_{\frontier\cup \{v_\lef,v_\righ\}}(v_\lef)$, at the bottom level of $T$ and marks them as fully identified (adds them to $\sident$). It can be seen in Figure~\ref{fig:robust-recursive} that all marked leaves are at the bottom level of the tree. 
For achieving maximum efficinecy we employ a new \emph{lazy estimation} scheme, that is, the estimation of values of marked leaves is delayed until there is a large number of marked leaves and thus there exists a subset of them that is cheap to estimate. 
On the other hand, if the number of head elements in frequency cone of $z_\lef$ is more than $b$ then our algorithm detects this and subsequently adds node $z_\lef$ to $T$. These operations ensure that the invariants~\eqref{invariant:freq-cone-load-recursive}, \eqref{eq:tree-contain-all-heavies-recursive}, and \eqref{eq:estimates-recursive} are maintained. 

Once the size of set $\sident$ grows sufficiently such that it contains a subset that is cheap to estimate, our algorithm estimates the values of the cheap frequencies. More precisely, at some point, $\sident$ will contains a non-empty subset $\scheap$ such that the values of all frequencies in $\scheap$ can be estimated cheaply and subsequently, our algorithm esimates those frequencies in a batch up to an average error of $O\left(\frac{\mu}{\sqrt{k}\cdot \log N}\right)$, updates $\wh \chi$ accordingly and removes all estimated ($\scheap$) leaves from $\frontier$ and $\sident$.
This ensures that invariants~\eqref{invariant:freq-cone-load-recursive}, \eqref{eq:tree-contain-all-heavies-recursive}, and \eqref{eq:estimates-recursive} are maintained. The estimated leaves are illustrated in Figure~\ref{fig:robust-recursive} using red thin dashed subtrees. We also demontrate the subtrees of $T$ that contain $\head$ element and are yet to be explored by our algorithm using gray cones and dashed edges in Figure~\ref{fig:robust-recursive}. The gray cone means that there are heavy elements in that frequency cone that need to be identified as that node has not reached the bottom level yet.

Finally, the algorithm keeps tabs on the runtime it spends and ensures that even if the input signal does not satisfy the preconditions for successful recovery, in particular if $|\head \cap \subtree_{\frontier}(v)| > k$, the runtime stays bounded. Additionally, the algorithm performs a quality control by running \textsc{HeavyTest} on the residual and if the recovered signal is not correct due to violation of some preconditions, it will be reflected in the output of our algorithm.

\begin{algorithm}[!t]
	\caption{A Recursive Robust High-dimensional Sparse FFT Algorithm}\label{alg:nearquad-robust}
	\begin{algorithmic}[1]
		\Procedure{RecursiveRobustSFT}{$x, \wh \chi_{in}, \frontier, v, k, \alpha, \mu$}
		\Comment{ $\mu$: upper bound on tail norm $\|\eta\|_2$}
		\If{$k \le \frac{1}{\alpha}$}
		\textbf{return} $\textsc{PromiseSparseFT} \left(x, \wh \chi_{in}, \frontier, v, k, \lceil\frac{k}{\alpha}\rceil, \mu\right)$ \label{a11l3}
		\EndIf
		
		\State{Let $T $ denote the subtree of $\frontier$ rooted at $v$ -- i.e. $T \gets \{v\}$}
		
		\State $\wh \chi_v \gets \{ 0 \}^{n^d}$	\Comment{ Sparse vector to approximate $(\wh{x}- \wh{\chi}_{in})_{\subtree_\frontier(v)}$}
		\State{$b \gets \lceil \alpha k \rceil$, $\sident \gets \emptyset$} \Comment $\sident$: set of fully identified leaves (frequencies)
		
		\Repeat
		\If {$(b +1)\cdot \left|\leaves(T_v) \setminus \sident\right| + |\sident| +  \|\wh \chi_v \|_0 > k$ } \label{a11l9}
		
		\State \textbf{return} $\left( \false, \{0\}^{n^d} \right)$ \label{a11l10} \Comment{Exit because budget of $v$ is wrong}
		\EndIf
		
		\If {$\sum_{u \in \sident} 2^{-w_T(u)} \ge \frac{1}{2}$} \label{a11l11}
		
		\State{$\scheap \gets \textsc{FindCheapToEstimate}\left( T, \sident \right)$} \label{a11l12}
		\State \Comment{Lazy estimation: We extract from the batch of marked leaves a subset that is cheap to estimate on average}
		\State $\left\{ \wh H_u \right\}_{u \in \scheap} \gets \textsc{Estimate}\left(x, \wh \chi_{in} + \wh \chi_v , \frontier \cup T, \scheap, \frac{736 k \cdot \log^2 N}{ |\scheap|} \right)$ \label{a11l14}
		\For{$u \in \scheap$}
		\State{$\wh \chi_v(\ff_u) \gets \wh H_{u}$}
		\State Remove node $u$ from subtree $T$
		\EndFor
		\State $\sident \gets \sident \setminus \scheap$
		\State \textbf{continue}
		\EndIf
		
		\State $z \gets \argmin_{u\in \leaves(T) \setminus \sident} w_{T}(u)$\label{a11l20}
		\Comment{pick the minimum weight leaf in subtree $T$ which is not in $\sident$}
		
		\State $z_\lef :=$ left child of $z$ and $z_\righ :=$ right child of $z$
		
		\State $T'\gets T \cup \left\{ z_\lef, z_\righ \right\}$ \label{a11l21} \Comment{Explore children of $z$}
		
		\small
		\State $(\correct_\lef, \wh{\chi}_\lef  ) \gets \textsc{RecursiveRobustSFT} \left(x, \wh \chi_{in} + \wh \chi_v, \frontier \cup T', z_\lef, b, \alpha, \mu\right)$ \label{a11l22}
		
		\State $(\correct_\righ, \wh{\chi}_\righ  ) \gets \textsc{RecursiveRobustSFT} \left(x, \wh \chi_{in} + \wh \chi_v, \frontier \cup T', z_\righ, b, \alpha, \mu\right)$ \label{a11l23}
		\normalsize
		\If{$\correct_\lef$ and $\correct_\righ$ and $z \neq v$ and $\|\wh \chi_\lef\|_0 + \|\wh \chi_\righ\|_0 \le b$} \label{a11l24}
		\State \textbf{return} $\left( \false, \{0\}^{n^d} \right)$ \label{a11l25}\Comment{Exit because budget of $v$ is wrong}
		\EndIf
		
		\If{$\correct_\lef$}
		
		\State $\forall \bm f \in \supp(\wh{\chi}_\lef)$, add the unique leaf corresponding to $\bm f$ to subtree $T$ and $\marked$
		
		\Else 
		\State $\text{Add } z_\lef$ to subtree $T$
		
		\EndIf
		
		\If{$\correct_\righ$}
		\small
		\State $\forall \ff \in \supp(\wh{\chi}_\righ)$, add the unique leaf corresponding to $\ff$ to subtree $T$ and $\marked$
		\normalsize
		\Else 
		\State $\text{Add } z_\righ$ to subtree $T$
		
		\EndIf

		\Until{$T$ has no leaves besides $v$}\label{a11l36}

		\If{$\textsc{HeavyTest}\left(x, \wh\chi_{in} + \wh \chi_{v} , \frontier, v , O\left(\frac{k}{\alpha} \log^3N \right), 6 \mu^2 \right)$} \label{a11l37} 
		\State \Comment{The number of heavy coordinates in $\subtree_{\frontier}(v)$ is more than $k$}
		\State \textbf{return} $\left( \false, \{0\}^{n^d} \right)$
		\Else
		\State \textbf{return} $\left( \true, \wh \chi_{v} \right)$ 
		\EndIf
		
		\EndProcedure
		
	\end{algorithmic}
\end{algorithm}

\begin{algorithm}[!t]
	\caption{Robust High-dimensional Sparse FFT with $\widetilde{O}(k^3)$ Time and $\widetilde{O}\left(k^{2+o(1)} \right)$ Samples}\label{high-dim-nearquad-main-alg}
	\begin{algorithmic}[1]
		
		\Procedure{RobustSFT}{$x, k, \epsilon, \mu$}
		
		\State $\alpha \gets 2^{-\sqrt{\log k \cdot \log(2\log N)}}$
		
		\State $(\correct , \wh \chi) \gets \textsc{RecursiveRobustSFT}\left(x , \{0\}^{n^d}, \{ \text{root}\}, \text{root}, k, \alpha, \mu \right)$ \label{a12l3}
		
		\State Let ${T}$ be the splitting tree corresponding to the set $\supp{(\wh \chi)}$ \label{a12l4}
		
		\State $\wh \chi_\epsilon \gets \{0\}^{n^d}$

		\While{tree $T$ has a leaf besides its root}
		
		\State{$\scheap \gets \textsc{FindCheapToEstimate}\left( T, \leaves(T) \right)$} \label{a12l7}
		\State \Comment{The set of frequencies that are cheap to estimate on average}
		\State $\left\{ \wh H_u \right\}_{u \in \scheap} \gets \textsc{Estimate}\left(x, \wh \chi_\epsilon , T, \scheap, \frac{32k}{ \epsilon \cdot |\scheap|} \right)$ \label{a12l9} 
		\For{$u \in \scheap$}
		\State{$\wh \chi_\epsilon(\ff_u) \gets \wh H_{u}$}
		\State Remove node $u$ from tree $T$
		\EndFor
		
		\EndWhile
		
		\State \textbf{return} $ \wh \chi_\epsilon$
		
		\EndProcedure

	\end{algorithmic}
	
\end{algorithm}

\paragraph{Analysis of \textsc{RecursiveRobustSparseFT}.}
Frirst we analyze the runtime and sample complexity of \textsc{RecursiveRobustSparseFT} in the following lemma.

\begin{lemma}[\textsc{RecursiveRobustSFT} -- Time and Sample Complexity]
	\label{recursive-robust-runtime}
	For every subtree $\frontier$ of $\tfull_N$, every leaf $v$ of $\frontier$, positive integer $k$, every $\alpha = o\left(\frac{1}{\log N}\right)$ and $\mu\ge0$, and every signals $x,\wh \chi_{in} : [n]^d \to \C$, consider an invocation of primitive \textsc{RecursiveRobustSFT} (Algorithm~\ref{alg:nearquad-robust}) with inputs $(x, \wh \chi_{in}, \frontier, v, k, \alpha, \mu)$. Then,
	\begin{itemize}
	\item The running time of primitive is bounded by
	\[\widetilde{O}\left( \left(\frac{k^2}{\alpha} \cdot 2^{w_\frontier(v)} + \frac{k}{\alpha} \cdot \|\wh \chi_{in}\|_0 \right) \cdot (2\log N)^{\log_{\frac{1}{\alpha}} k} + k^2 \cdot \|\wh \chi_{in}\|_0 + k^3 \right).\]
	\item The number of accesses it makes on $x$ is always bounded by
	\[\widetilde{O}\left(\frac{k^2}{\alpha} \cdot 2^{w_\frontier(v)} \cdot (2\log N)^{\log_{\frac{1}{\alpha}} k} \right).\]
	\end{itemize}	
	Moreover, the output signal $\wh{\chi}_v$ always satisfies $\supp(\wh{\chi}_v) \subseteq \subtree_\frontier(v)$ and $\|\wh{\chi}_v\|_0 \le k$.
	
\end{lemma}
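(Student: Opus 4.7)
\begin{proofof}{Lemma~\ref{recursive-robust-runtime} (sketch)}
The plan is to mirror the structure of the proof of Lemma~\ref{promise_correctness-runtime}, then layer a recursion analysis on top. First I would handle the base case $k\le 1/\alpha$ by invoking the runtime/sample guarantees of \textsc{PromiseSparseFT} (which is \textsc{RobustPromiseSFT} from Section~\ref{sec:robust_first}, applied with budget $\lceil k/\alpha\rceil$); this yields the claimed bounds with no factor of $(2\log N)^{\log_{1/\alpha}k}$ since the exponent is zero. The trivial properties $\supp(\wh\chi_v)\subseteq \subtree_\frontier(v)$ and $\|\wh\chi_v\|_0 \le k$ will follow by induction from the gate in line~\ref{a11l9} (which aborts as soon as a suitable sparsity budget is violated) together with the fact that every frequency ever placed into $\wh\chi_v$ is obtained via \textsc{Estimate} against a leaf sitting in $\subtree_\frontier(v)$.

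Next I would bound the number of iterations of the \textbf{Repeat-Until} loop by an essentially identical potential-function argument as in Lemma~\ref{promise_correctness-runtime}. Define, with $\wh\chi_v^{(t)}$, $T^{(t)}$, $\sident^{(t)}$ denoting the state at the end of iteration $t$,
\[
\phi_t \;:=\; |\sident^{(t)}| \;+\; 2\log N \cdot \|\wh\chi_v^{(t)}\|_0 \;+\; \sum_{u\in \leaves(T^{(t)})} l_{T^{(t)}}(u).
\]
Case analysis on (i) the if in line~\ref{a11l11} firing, (ii) the chosen $z$ being a leaf of $\tfull_N$, or (iii) $z$ being expanded via a pair of recursive calls, shows $\phi_t \ge \phi_{t-1}+1$, giving a $\wt O(k\log N)$ bound on the number of iterations per invocation; combined with line~\ref{a11l9}, this also forces $\|\wh\chi_v\|_0 \le k$ at all times. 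As in Lemma~\ref{promise_correctness-runtime}, the fine-grained Kraft averaging (Lemma~\ref{lemma:kraft-ISCheap}) ensures that the minimum-weight unmarked leaf $z$ picked in line~\ref{a11l20} always satisfies $w_T(z) \le \log(2k)$, so each filter built during this invocation costs $\wt O(k)$ in time/samples per \textsc{HeavyTest}/\textsc{Estimate} primitive via Lemma~\ref{lem:isolate-filter-highdim}.

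I would then charge the non-recursive work to two pools, as in Theorem~\ref{thm:exact_sparse_time}: a first pool scaling with $2^{w_\frontier(v)}$ for the $x$-dependent work (filter application onto time-domain samples) and a second pool scaling with $\|\wh\chi_{in}\|_0$ for subtracting the history $\wh\chi_{in}$ from those samples. Using Lemma~\ref{lem:guarantee_heavy_test}, Lemma~\ref{est-inner-lem}, and the fact that $\sum_{t}|\scheap^{(t)}|\le \|\wh\chi_v\|_0 \le k$, the aggregate non-recursive cost of all \textsc{HeavyTest} and \textsc{Estimate} calls in a single invocation is $\wt O\bigl(\tfrac{k^2}{\alpha}\cdot 2^{w_\frontier(v)} + \tfrac{k}{\alpha}\cdot(\|\wh\chi_{in}\|_0 + \|\wh\chi_v\|_0)\bigr)$ for time, and $\wt O\bigl(\tfrac{k^2}{\alpha}\cdot 2^{w_\frontier(v)}\bigr)$ for samples (the closing \textsc{HeavyTest} in line~\ref{a11l37} contributes an additive $\wt O(k^2/\alpha)$ as it is run with budget $O((k/\alpha)\log^3 N)$).

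Finally I would set up and solve the recursion. Let $S_k(W)$ and $T_k(W,C)$ denote upper bounds on samples and runtime respectively for a call with sparsity $k$, weight $2^{w_\frontier(v)}=W$, and $\|\wh\chi_{in}\|_0 = C$. Each invocation makes at most $\wt O(k)$ recursive calls with sparsity $\lceil \alpha k\rceil$, weight at most $W\cdot 2\log N$ (since the recursive child's $\excl$ includes the sibling of $z$ plus the current frontier $T$, and $|\leaves(T)|\le \wt O(k/\alpha)$ combined with Kraft averaging boosts the weight by at most an $O(\log N)$ factor), and $C$ incremented by at most $k$. This yields
\[
S_k(W) \;\le\; \wt O\!\left(\tfrac{k^2}{\alpha} W\right) \;+\; \wt O(k)\cdot S_{\alpha k}\!\bigl(W\cdot 2\log N\bigr),
\]
which unrolls to $\wt O\bigl(\tfrac{k^2}{\alpha}\cdot W\cdot (2\log N)^{\log_{1/\alpha}k}\bigr)$ after $\log_{1/\alpha}k$ levels (the geometric factor $\alpha$ in the per-level size $k$ tames the $\wt O(k)$ branching). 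An analogous recurrence for $T_k$, tracking the two pools separately as in the proof of Theorem~\ref{thm:exact_sparse_time}, produces the stated runtime bound; the extra $k^2\|\wh\chi_{in}\|_0 + k^3$ term absorbs the cost of the top-level closing \textsc{HeavyTest} and the base-case calls to \textsc{PromiseSparseFT}, whose own cost is already cubic in the base-case sparsity $1/\alpha$.

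The main obstacle will be the clean bookkeeping at step four: verifying that, for every recursive call, $w_{\frontier \cup T'}(z_\lef)$ is dominated by $w_\frontier(v) + O(\log\log N + \log(1/\alpha))$ so that the inductive weight blow-up is only the factor $2\log N$ per level (and not a larger factor), and checking that the sparsity $\|\wh\chi_v\|_0$ maintained inside a recursive call, when added to the inherited $\|\wh\chi_{in}\|_0$, does not propagate an extra factor of $k$ into the recursion for the sample complexity. Both points rely on Kraft-type arguments and on the gate in line~\ref{a11l9} holding inductively.
\end{proofof}
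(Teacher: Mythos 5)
The central gap is in your recursion: you claim "$\wt O(k)$ recursive calls with sparsity $\lceil\alpha k\rceil$" and that "the geometric factor $\alpha$ in the per-level size $k$ tames the $\wt O(k)$ branching," but this is false and the recurrence $S_k(W) \le \wt O(\tfrac{k^2}{\alpha}W) + \wt O(k)\cdot S_{\alpha k}(W')$ does \emph{not} unroll to $\wt O(\tfrac{k^2}{\alpha}W(2\log N)^{\log_{1/\alpha}k})$. Unrolling to depth $i$, the number of nodes is $\prod_{j<i}\wt O(\alpha^j k)=\wt O(k^i\alpha^{i(i-1)/2})$, and multiplying by the per-node non-recursive cost $\wt O(\tfrac{(\alpha^i k)^2}{\alpha}W_i)$ gives a level-$i$ contribution whose $k$-dependence is $k^{i+2}\alpha^{(i^2+3i)/2}$; comparing with the claimed total forces $k^L\alpha^{(L^2+3L)/2}\le 1$ at the bottom level $L=\log_{1/\alpha}k$, which (using $k=(1/\alpha)^L$) is equivalent to $L\le 3$, i.e.\ $k\le (1/\alpha)^3$. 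For the paper's choice $\alpha=2^{-\Theta(\sqrt{\log k\log\log N})}$ this covers only $k\le \poly(\log N)$. The missing idea is that the potential function you import from Lemma~\ref{promise_correctness-runtime}, $\phi_t=|\sident^{(t)}|+2\log N\|\wh\chi_v^{(t)}\|_0+\sum_u l_{T^{(t)}}(u)$, only proves a per-iteration increase of $\ge1$ and hence the weak $\wt O(k\log N)$ bound on \emph{all} iterations, including recursive ones. The paper instead uses the weighted potential $\phi_t:=(\log N+1)|\sident^{(t)}|+2\log N\|\wh\chi_v^{(t)}\|_0+b\cdot\sum_{u\in\leaves(T^{(t)})\setminus\sident^{(t)}}l_{T^{(t)}}(u)$ with $b=\lceil\alpha k\rceil$, and shows $\phi$ jumps by $\ge b$ on every iteration that spawns recursive calls (lines~\ref{a11l22}--\ref{a11l23}). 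Combined with the gate at line~\ref{a11l9} bounding $\phi_t=O(k\log N)$, this yields the crucial bound of $O\big(\tfrac{\log N}{\alpha}\big)$ recursive calls per invocation --- a factor $\alpha k$ smaller than yours --- and it is exactly this $\tilde O(1/\alpha)$ branching, combined with $b^2/\alpha^2=k^2$ at each level, that makes the recurrence close.

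A secondary, related inaccuracy: you take the per-level weight blow-up to be $W'\le W\cdot 2\log N$, but the gate at line~\ref{a11l9} caps $|\leaves(T)\setminus\sident|\le k/(b+1)\approx 1/\alpha$, so the min-weight unmarked leaf chosen at line~\ref{a11l20} has weight at most $w_\frontier(v)+\log\tfrac{k}{b+1}+1$, giving $W'\approx 2W/\alpha$ (not $2W\log N$). Since $1/\alpha\gg\log N$ once $k$ is more than polylogarithmic, your blow-up underestimates the true one. The paper's recursion absorbs this $1/\alpha$ blow-up precisely because it also pays for a $b^2=\alpha^2k^2$ per-call cost at the next level; with your $\wt O(k)$ branching there is no such cancellation. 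The remaining parts of your sketch --- the base case via \textsc{PromiseSparseFT}, the two-pool charging, the $\sum_t|\scheap^{(t)}|\le\|\wh\chi_v\|_0\le k$ bookkeeping for \textsc{Estimate} --- are consistent with the paper's proof.
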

\begin{proof}
	The proof is by induction on parameter $k$. The {\bf base of induction} corresponds to $k \le \frac{1}{\alpha}$. For every $k \le \frac{1}{\alpha}$, Algorithm~\ref{alg:nearquad-robust} simply runs \textsc{PromiseSparseFT}$(x, \wh \chi_{in}, \frontier, v, k, \lceil\frac{k}{\alpha} \rceil, \mu)$ in line~\ref{a11l3}. Therefore, by Lemma\ref{promise_correctness-runtime}, the runtime and sample complexity of our algorithm are bounded by $\widetilde{O} \left( \frac{k}{\alpha}\cdot \|\wh \chi_{in}\|_0 + \frac{k^2}{\alpha} \cdot 2^{w_\frontier(v)} \right)$ and $\widetilde{O} \left( \frac{k^2}{\alpha} \cdot 2^{w_\frontier(v)} \right)$, respectively. Moreover,  by Lemma\ref{promise_correctness-runtime}, the output signal $\wh \chi_v$ satisfies $\|\wh{\chi}_v\|_0 \le k$ as well as $\supp(\wh{\chi}_v) \subseteq \subtree_\frontier(v)$. This proves that the inductive hypothesis holds for every integer $k \le \frac{1}{\alpha}$, hence the base of induction holds.
	
	To prove the {\bf inductive step}, suppose that the lemma holds for every $k \le m - 1$ for some integer $m \ge \lfloor \frac{1}{\alpha} \rfloor + 1$. Assuming the inductive hypothesis, we prove that the lemma holds for $k=m$.
	First, we prove that Algorithm~\ref{alg:nearquad-robust} terminates after a bounded number of iterations. For the purpose of having a tight analysis of the runtime and sample complexity, we need to have tight upper bounds on the number of times our algorithm invokes primitive \textsc{Estimate} in line~\ref{a11l14} as well as the number of times our algorithm recursively calls itself in lines~\ref{a11l22} and \ref{a11l23}. First, we show that the number of iterations in which the if-staement in line~\ref{a11l11} is $\true$, and hence the number of times we invoke \textsc{Estimate} in line~\ref{a11l14}, is bounded by $O(k)$. The reason is, everytime the if-staement in line~\ref{a11l11} becomes $\true$ the sparsity of $\wh \chi_v$, i.e., $\|\wh \chi_v\|_0$, increases by $|\scheap| \ge 1$, because the if-staement in line~\ref{a11l11} ensures that preconditions of Claim~\ref{claim:findCheap} hold, hence, by invoking this claim, $\scheap \neq \emptyset$. On the other hand, we can see from the way our algorithm operates that the sparity of $\wh \chi_v$ does not decrease in any of the iterations of our algorithm. Therefore, because the if-statement in line~\ref{a11l9} of the algorithm makes sure that $\|\wh \chi_v\|_0$ does not exceed $k$, we conclude that the total number of iterations in which the if-statement in line~\ref{a11l11} is $\true$ is bounded by $O(k)$. Hence, the number of times our algorithm calls \textsc{Estimate} in line~\ref{a11l14} is $O(k)$.
	
	In order to bound the number of iterations of our algorithm in which the if-statement in line~\ref{a11l11} is $\false$, we use a potential function. Let $\wh \chi_v^{(t)}$ denote the signal $\wh \chi_v$ at the end of iteration $t$ of the algorithm. Furthermore, let $T^{(t)}$ denote the subtree $T$ at the end of $t^{th}$ iteration. Additionally, let $\sident^{(t)}$ denote the set $\sident$ (defined in Algorithm~\ref{alg:nearquad-robust}) at the end of iteration $t$.  
	We prove that the number of iterations in which the if-statement in line~\ref{a11l11} of our algorithm is $\false$ is bounded by $O\left( \frac{\log N}{\alpha} \right)$ using the following potential function, defined for non-negative integer $t$:
	$$\phi_t := (\log N + 1) \cdot {|\sident^{(t)}|} + 2\log N \cdot \| \wh \chi_v^{(t)} \|_0 + b \cdot \sum_{u\in \leaves\left(T^{(t)}\right) \setminus \sident^{(t)}} l_{T^{(t)}}(u).$$
	We prove that assuming the algorithm does not terminate in $q$ iterations, for some integer $q$, then in every positive iteration $t \le q$, if the if-statement in line~\ref{a11l11} of Algorithm~\ref{alg:nearquad-robust} is $\false$, then the above potential function increases by at least ${b}$, i.e., $\phi_t \ge \phi_{t-1} + b$. Additionally, when the if-statement in line~\ref{a11l11} is $\true$, the potential increases by at least $\log N-1$, i.e., $\phi_t \ge \phi_{t-1} + \log N - 1$.
	We show that at any given iteration $t$ of the algorithm the potential function $\phi_t$ increases in the abovementioned fashion.
	
	\paragraph{Case 1 -- the if-statement in line~\ref{a11l11} of Algorithm~\ref{alg:nearquad-robust} is True.}
	In this case, we have that $\sum_{u \in \sident^{(t-1)}} 2^{-w_{T^{(t-1)}}(u)} \ge \frac{1}{2}$. As a result, by Claim~\ref{claim:findCheap}, the set $\scheap^{(t)} \subseteq \sident^{(t-1)}$ that the algorithm computes in line~\ref{a11l12} by running the primitive \textsc{FindCheapToEstimate} is non-empty. 
	Then, the algorithm constructs $T^{(t)}$ by removing all leaves that are in the set $\scheap^{(t)}$ from tree $T^{(t-1)}$ and leaving the rest of the tree unchanged.
	Furthermore, the algorithm updates the set $\sident^{(t)}$ by subtracting $\scheap^{(t)}$ from $\sident^{(t-1)}$. Additionally, in this case, the algorithm computes $\{\wh H_u\}_{u \in \scheap^{(t)}}$ by running the procedure \textsc{Estimate} in line~\ref{a11l14} and then updates $\wh \chi_v^{(t)}(\ff_u) \gets \wh H_u$ for every $u \in \scheap^{(t)}$ and $\wh \chi_v^{(t)}(\bm \xi) = \wh \chi_v^{(t-1)}(\bm \xi)$ at every other frequency $\bm \xi$. Therefore, $\| \wh \chi_v^{(t)} \|_0 = \| \wh \chi_v^{(t)} \|_0 + |\scheap^{(t)}|$. Thus,
	\[\phi_t - \phi_{t-1} =  (\log N -1 )\cdot |\scheap^{(t)}| \ge \log N - 1,\]
	where the inequality follows from $\scheap^{(t)} \neq \emptyset$. This proves the potential increase that we wanted.
	
	\paragraph{Case 2 -- the if-statement in line~\ref{a11l11} is False.}
	In this case, either the algorithm terminates by the if-statement in line~\ref{a11l24}, which contradicts with our assumption that the algorithm does not terminate after $q \ge t$ iterations, or the following holds,
	\begin{align*}
	&{|\sident^{(t)}|} + b \cdot \sum_{u\in \leaves\left(T^{(t)}\right) \setminus\sident^{(t)}} l_{T^{(t)}}(u)\\ 
	&\qquad \ge {|\sident^{(t-1)}|} + b \cdot \sum_{u\in \leaves\left(T^{(t-1)}\right)\setminus\sident^{(t-1)}} l_{T^{(t-1)}}(u) + b,
	\end{align*}
	while ${|\sident^{(t)}|} \ge |\sident^{(t-1)} |$ and $\| \wh \chi_v^{(t)} \|_0 = \| \wh \chi_v^{(t-1)} \|_0$. Thus, in this case, $\phi_{t+1} - \phi_t \ge b$ which is the potential increase that we wanted to prove.
	
	So far, we proved that $\phi_t$ must increase by at least $\log N - 1$ at every iteration of the algorithm. Moreover, at every iteration of the algorithm where the if-statement in line~\ref{a11l11} is $\false$ the potential increases by at least $b$. Also, the potential function $\phi_t$ is non-negative for every $t$. On the other hand, the if-statement in line~\ref{a11l9} ensures that at any iteration $t \le q$ it must hold that $\phi_t \le 2 k \log N$.
	Therefore, the potential increse that we proved implies that Algorithm~\ref{alg:nearquad-robust} must terminate after at most $q = 2 k \log N$ iterations, where only in $\frac{2 \log N}{\alpha}$ of the iterations the if-statement in line~\ref{a11l11} can be $\false$. Therefore, the total number of times our algorithm recursively invokes itself in lines~\ref{a11l22} and \ref{a11l23} is bounded by $\frac{2 \log N}{\alpha}$.

	Now that we have the termination quarantee, we can use the fact that our algorithm constructs $\wh \chi_v$ by exclusively estimating the values of frequencies that lie in $\subtree_\frontier(v)$ in line~\ref{a11l14}, one can see that the output signal $\wh \chi_v$ always satisfies $\supp(\wh{\chi}_v) \subseteq \subtree_\frontier(v)$. Additionally, the if-staement in line~\ref{a11l9}, ensures that $\|\wh{\chi}_v\|_0 \le k$.
	Now we bound the running time and sample complexity of the algorithm.
	
	\paragraph{Sample Complexity and Runtime:} The expensive components of the algorithm are primitive \textsc{Estimate} in line~\ref{a11l14}, the recursive call of \textsc{RecursiveRobustSFT} in lines~\ref{a11l22} and \ref{a11l23}, and invocation of \textsc{HeavyTest} in line~\ref{a11l37} of the algorithm. 
	
	We first bound the time and sample complexity of invoking \textsc{Estimate} in line~\ref{a11l14}. We remark that, at any iteration $t$, the algorithm runs primitive \textsc{Estimate} only if {\bf case 1} that we mentioned earlier in the proof happens. Therefore, by Claim~\ref{claim:findCheap}, the set $\emptyset \neq \scheap^{(t)} \subseteq \sident^{(t-1)}$ that our algorithm computes in line~\ref{a11l12} by running the primitive \textsc{FindCheapToEstimate} satisfies the property that $| \scheap^{(t)} | \cdot \left( 8+4\log| \sident^{(t-1)} | \right) \ge \max_{u \in \scheap^{(t)}} 2^{w_{T^{(t-1)}}(u)}$. 
	By the if-statement in line~\ref{a11l9} of the algorithm, this implies that $| \scheap^{(t)} | \cdot \left( 8+4\log k \right) \ge \max_{u \in \scheap^{(t)}} 2^{w_{T^{(t-1)}}(u)}$.
	Thus, by Lemma~\ref{est-inner-lem}, the time and sample complexity of every invocation of \textsc{Estimate} in line~\ref{a11l14} of our algorithm are bounded by 
	\[ \widetilde{O}\left( \frac{k}{ | \scheap^{(t)} |}  \sum_{u \in \scheap^{(t)}} 2^{w_{\frontier \cup T^{(t-1)}}(u)} + {k}\cdot \left\|\wh \chi_v^{(t-1)} + \wh \chi_{in} \right\|_0 \right) \]
	and $\widetilde{O}\left( \frac{k}{ | \scheap^{(t)} |}  \sum_{u \in \scheap^{(t)}} 2^{w_{\frontier \cup T^{(t-1)}}(u)}\right)$, respectively. Using the fact that $\|\wh \chi_v^{(t-1)} \|_0 \le k$, these time and sample complexities are further upper bounded by 
	\[\widetilde{O}\left( {k} \cdot \left( 2^{w_\frontier(v)} \cdot | \scheap^{(t)} | + \|\wh \chi_{in} \|_0 \right) + {k^2} \right)\] 
	and $\widetilde{O}\left( {k}\cdot 2^{w_\frontier(v)} \cdot | \scheap^{(t)} | \right)$, respectively.
	We proved that the total number of times we run \textsc{Estimate} in line~\ref{a11l14} of the algorithm, is bounded by $O(k)$.
	Using this together with the fact that $\sum_{t: \text{ if-statement in line~\ref{a11l11} is }\true }\left| \scheap^{(t)} \right| = \left\|\wh \chi_v \right\|_0 \le k$, the total runtime and sample complexity of all invocations of \textsc{Estimate} in all iterations can be upper bounded by $\widetilde{O}\left( {k^3} + k^2 (\| \wh \chi_{in}\|_0 + 2^{w_\frontier(v)})\right)$ and $\widetilde{O}\left( {k^2} \cdot 2^{w_\frontier(v)} \right)$, respectively.
	
	Now we bound the runtime and sample complexity of invoking \textsc{RecursiveRobustSFT} in lines~\ref{a11l22} and \ref{a11l23} of the algorithm. Note that at any iteration $t$, our algorithm recursively calls \textsc{RecursiveRobustSFT} only if {\bf case 2} that we mentioned earlier in the proof occurs. As we showed, the total number of times that this happens is bounded by $ \frac{2\log N}{\alpha}$. 
	Since, in line~\ref{a11l20} of the algorithm, we pick leaf $z$ with the smallest weight, and since the number of leaves of subtree $T^{(t-1)}$ that are not in the set $\sident^{(t-1)}$ are bounded by $\frac{k}{b+1}$ (ensured by the if-statement in line~\ref{a11l9}), we have $w_{\frontier \cup T'}(z_\lef) = w_{\frontier \cup T'}(z_\righ) \le w_\frontier(v) + \log\frac{k}{b+1} + 1$. Also note that $\| \wh \chi_v^{(t-1)} \|_0 \le k$, ensured by the if-statement in line~\ref{a11l9}. Therefore, by the inductive hypothesis, the time and sample complexities of each recursive invocation of \textsc{RecursiveRobustSFT} by our algorithm are bounded by 
	\[\widetilde{O}\left( \left(\frac{b^2 \cdot 2^{w_\frontier(v)}}{\alpha^2} + \frac{b}{\alpha} \cdot  \|\wh \chi_{in}\|_0\right) \cdot (2\log N)^{\log_{\frac{1}{\alpha}} b} + b^2 \cdot \|\wh \chi_{in}\|_0 + kb^2 \right)\] 
	and $\widetilde{O}\left( \frac{b^2 }{\alpha^2}\cdot 2^{w_\frontier(v)} \cdot (2\log N)^{\log_{\frac{1}{\alpha}} b} \right)$.
	We proved that the total number of iterations in which {\bf case 2} happens, and hence the number of times we run \textsc{RecursiveRobustSFT} in lines~\ref{a11l22} and \ref{a11l23} of the algorithm, is bounded by $\frac{2\log N}{\alpha}$. Therefore, the total time and sample complexity of all invocations of \textsc{PromiseSparseFT} in lines~\ref{a11l22} and \ref{a11l23} are bounded by
	\[\widetilde{O}\left( \left(\frac{k^2}{\alpha} \cdot 2^{w_\frontier(v)} + \frac{k}{\alpha} \cdot \|\wh \chi_{in}\|_0\right) \cdot (2\log N)^{\log_{\frac{1}{\alpha}} k} + \alpha k^2 \cdot \|\wh \chi_{in}\|_0 + \alpha k^3 \right)\]
	and $\widetilde{O}\left( \frac{k^2}{\alpha} \cdot 2^{w_\frontier(v)} \cdot (2\log N)^{\log_{\frac{1}{\alpha}} k} \right)$, respectively.
	
	Finally, we bound the time and sample complexity of invoking \textsc{HeavyTest} in line~\ref{a11l37} of our algorithm. Since $\|\wh \chi_v\|_0 \le k$, by Lemma~\ref{lem:guarantee_heavy_test}, the time and sample complexity of the \textsc{HeavyTest} in line~\ref{a11l37} are bounded by $\widetilde{O}\left( \|\wh \chi_{in}\|_0 \cdot \frac{k}{\alpha} + \frac{k^2}{\alpha} + 2^{w_\frontier(v)}\cdot \frac{k}{\alpha} \right)$ and $\widetilde{O}\left( 2^{w_\frontier(v)}\cdot \frac{k}{\alpha} \right)$, respectively.
	Hence, we find that the total time and sample complexity of our algorithm are bounded by
	\[\widetilde{O}\left( \left(\frac{k^2 \cdot 2^{w_\frontier(v)}}{\alpha}  + \frac{k}{\alpha} \cdot  \|\wh \chi_{in}\|_0\right) \cdot (2\log N)^{\log_{\frac{1}{\alpha}} k} + k^2 \cdot \|\wh \chi_{in}\|_0 + k^3 \right)\]
	and $\widetilde{O} \left( \frac{k^2 }{\alpha} \cdot 2^{w_\frontier(v)} \cdot (2\log N)^{\log_{\frac{1}{\alpha}} k} \right)$, respectively. This proves the inductive step of the proof and consequently completes the proof of our lemma.
\end{proof}

Now we are in a position to present the main invariant of primitive \textsc{RecursiveRobustSFT}.
\begin{lemma}[\textsc{RecursiveRobustSFT} - Invariants]
	\label{recursive-robust-invariants}
	Consider the preconditions of Lemma~\ref{recursive-robust-runtime}. Let $\wh y := \wh x - \wh \chi_{in}$ and $S := \subtree_T(v) \cap \head_\mu(\wh y)$, where $\head_\mu(\cdot)$ is defined as per \eqref{def:head}. If i) $ \head_\mu(\wh y) \subseteq \supp{(\frontier)}$, ii) $\| \wh y - \wh y_{\head_\mu(\wh y)} \|_2^2 \le \frac{21\mu^2}{20} + \frac{\mu^2}{20\log_{\frac{1}{\alpha}}(k/\alpha)} $, and iii) $\left| S \right| \leq \frac{k}{\alpha}$, then with probability at least $1 - O \left(\left(\frac{2 \log N}{\alpha}\right)^{\log_{\frac{1}{\alpha}}k} \cdot N^{-4}\right)$, the output $\left( \mathrm{Budget}, \wh{\chi}_v \right)$ of Algorithm~\ref{alg:nearquad-robust} satisfies the following,
	\begin{enumerate}
		\item If $\left| S \right| \leq k$ then $\mathrm{Budget}=\true$, $\supp{(\wh \chi_v)} \subseteq S$, and $\left\|\wh{y}_{S} - \wh{\chi}_v\right\|_2^2 \leq \frac{\mu^2}{40 \log_{1/\alpha}^2 k} $;
		\item If $\left| S \right| > k$ then $\mathrm{Budget}=\false$ and $\wh \chi_v \equiv \{0\}^{n^d}$.
	\end{enumerate}	
	
\end{lemma}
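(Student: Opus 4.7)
The plan is to prove this by induction on the sparsity parameter $k$, paralleling the structure of the proof of Lemma~\ref{promise_correctness-invariants} but substituting the recursive calls for the calls to \textsc{HeavyTest}. For the \textbf{base case} $k \le 1/\alpha$, Algorithm~\ref{alg:nearquad-robust} reduces in line~\ref{a11l3} to a single invocation of \textsc{PromiseSparseFT} with budget $\lceil k/\alpha\rceil$. I would verify that preconditions i--iii of Lemma~\ref{promise_correctness-invariants} are inherited from those of the present lemma (noting that the tolerance $11\mu^2/10$ there is implied by our slightly weaker assumption ii, since in the base case $\log_{1/\alpha}(k/\alpha) \geq 1$), and observe that the $\mu^2/20$ error bound delivered by that lemma is at most $\mu^2/(40\log_{1/\alpha}^2 k)$ whenever $k\leq 1/\alpha$, so statement~1 follows; statement~2 follows directly since \textsc{PromiseSparseFT} returns $(\false,0)$ when its budget is violated.

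For the \textbf{inductive step}, fixing $k>1/\alpha$ and assuming the lemma for every $k'<k$, I would track four invariants at the end of iteration $t$ of the Repeat-Until loop, analogous to those in the proof of Lemma~\ref{promise_correctness-invariants}: $(P_1)$ $S\setminus\supp(\wh\chi_v^{(t)})\subseteq \supp(T^{(t)})$; $(P_2)$ every leaf $u\ne v$ of $T^{(t)}$ satisfies $|\subtree_{\frontier\cup T^{(t)}}(u)\cap \head_\mu(\wh y)|\ge 1$, and if additionally $u\notin\sident^{(t)}$ then this quantity exceeds $b=\lceil\alpha k\rceil$; $(P_3)$ the average estimation error on $\Estimated^{(t)} := \supp(\wh\chi_v^{(t)})$ satisfies $\|\wh y_{\Estimated^{(t)}}-\wh\chi_v^{(t)}\|_2^2 \le \frac{|\Estimated^{(t)}|}{40k\log_{1/\alpha}^2 k}\cdot\mu^2$; $(P_4)$ $\Estimated^{(t)}\subseteq S$ and $\Estimated^{(t)}\cap \supp(T^{(t)})=\emptyset$. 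The branching inside the loop is handled by two cases, exactly as in the earlier proof.

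\textbf{Estimation case} (line~\ref{a11l11} is true): Claim~\ref{claim:findCheap} guarantees $\scheap^{(t)}\ne\emptyset$, and Lemma~\ref{est-inner-lem} applied with sample parameter $\frac{736 k\log^2 N}{|\scheap^{(t)}|}$ together with the bound on the out-of-tree tail (which by $(P_1)$--$(P_4)$ for $t-1$, assumption ii, and the inductive error budget is at most $\tfrac{23\mu^2}{20}$) yields a per-batch error increment of at most $\frac{|\scheap^{(t)}|}{40k\log_{1/\alpha}^2 k}\mu^2$, preserving $(P_3)$. \textbf{Exploration case} (line~\ref{a11l11} is false): pick the minimum weight unmarked leaf $z$, and apply the inductive hypothesis to each recursive call on $z_\lef,z_\righ$ with parameter $b=\lceil \alpha k\rceil$. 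Here the subtle point is checking preconditions i--iii for the recursion: i and iii follow from $(P_1)$ of the current level and the if-guard in line~\ref{a11l9}; for precondition ii, the new ``$\wh y_{\mathrm{new}} = \wh x - (\wh\chi_{in}+\wh\chi_v^{(t-1)})$'' satisfies $\|\wh y_{\mathrm{new}} - (\wh y_{\mathrm{new}})_{\head_\mu(\wh y_{\mathrm{new}})}\|_2^2 \le \|\wh y - \wh y_{\head_\mu(\wh y)}\|_2^2 + \|\wh y_{\Estimated^{(t-1)}}-\wh\chi_v^{(t-1)}\|_2^2$, which by assumption ii and $(P_3)(t-1)$ is at most $\tfrac{21\mu^2}{20}+\tfrac{\mu^2}{20\log_{1/\alpha}(k/\alpha)}+\tfrac{\mu^2}{40\log_{1/\alpha}^2 k} \le \tfrac{21\mu^2}{20}+\tfrac{\mu^2}{20\log_{1/\alpha}b}$, which is precisely what the recursion requires at the next level (with $k\leftarrow b$). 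Marking recovered frequencies as identified preserves $(P_1)$--$(P_4)$ by the inductive output guarantees, and the budget-violation branch in line~\ref{a11l24} maintains $(P_2)$.

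When the loop terminates, $\sident^{(q)}=\emptyset$ and $T^{(q)}=\{v\}$, so $(P_1)$ and $(P_4)$ give $\supp(\wh\chi_v)=S$, and $(P_3)$ together with $|S|\le k$ gives the required error bound. For statement~2, the final \textsc{HeavyTest} in line~\ref{a11l37} is invoked with threshold parameter matching the setting of Lemma~\ref{lem:guarantee_heavy_test}; I would reuse the computation from the proof of Lemma~\ref{promise_correctness-invariants} verbatim, replacing $b$ by $k/\alpha$, to show that when $|S|>k$ the test fires and the routine returns $(\false,0)$. \textbf{The main obstacle} is the probability bookkeeping across the recursion: the tree of recursive calls has depth $\log_{1/\alpha}k$ and at most $O(\log N/\alpha)$ children per invocation (by the potential argument from the proof of Lemma~\ref{recursive-robust-runtime}), so a union bound over all $(2\log N/\alpha)^{\log_{1/\alpha}k}$ nodes---each failing with probability $O(N^{-4})$ by the inductive hypothesis plus Lemmas~\ref{lem:guarantee_heavy_test} and \ref{est-inner-lem}---yields the claimed overall failure probability, provided that the per-level success probability is not degraded by the coupled randomness of the \textsc{Estimate} and \textsc{HeavyTest} primitives across levels; this decoupling is handled by conditioning on the success of all descendant calls before bounding the current level's primitives.
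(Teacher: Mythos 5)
Your proposal matches the paper's proof in structure and in almost all of its key steps: an outer induction on the sparsity parameter $k$ with base case $k \le 1/\alpha$ handled by \textsc{RobustPromiseSFT}, an inner induction on the loop iteration $t$ with exactly the four invariants $P_1$--$P_4$, the two cases on the if-guard at line~\ref{a11l11}, and a union bound over the $(2\log N/\alpha)^{\log_{1/\alpha}k}$ nodes of the recursion tree after first conditioning on the descendants succeeding. Your observation that this proof is obtained from the proof of Lemma~\ref{promise_correctness-invariants} by ``substituting the recursive calls for the calls to \textsc{HeavyTest}'' is precisely how the paper's argument is organized.

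Two small inaccuracies, worth fixing but not breaking the overall logic. First, in the base case you claim that ``the $\mu^2/20$ error bound delivered by that lemma is at most $\mu^2/(40\log_{1/\alpha}^2 k)$ whenever $k \le 1/\alpha$.'' This is false for $k$ near $1/\alpha$: there $\log_{1/\alpha}k$ can be as large as $1$, so the target is $\mu^2/40$, which is smaller than the $\mu^2/20$ that Lemma~\ref{promise_correctness-invariants} guarantees. (The paper's own base case hand-waves this point as well, so the imprecision is shared; but your phrasing commits to a specific inequality that does not hold at the boundary, which the paper avoids by being vague.) Second, in the exploration case you write that the new tail is bounded by $\frac{21\mu^2}{20} + \frac{\mu^2}{20\log_{1/\alpha}b}$ and call this ``precisely what the recursion requires at the next level (with $k\leftarrow b$).'' Substituting $k\leftarrow b$ in precondition~ii actually yields the tighter bound $\frac{21\mu^2}{20} + \frac{\mu^2}{20\log_{1/\alpha}(b/\alpha)}$, and since $\log_{1/\alpha}(b/\alpha) = \log_{1/\alpha} b + 1$, the denominator is larger. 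Fortunately the arithmetic you implicitly need --- namely $\frac{1}{L+1} + \frac{1}{2L^2} \le \frac{1}{L}$ for $L = \log_{1/\alpha} k \ge 1$ --- does verify that the tighter target is met, so the conclusion survives; but you should state the comparison against $\frac{\mu^2}{20\log_{1/\alpha}(b/\alpha)}$ directly rather than against $\frac{\mu^2}{20\log_{1/\alpha}b}$, which is a weaker and insufficient intermediate target.
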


\begin{proof}
	The proof is by induction on parameter $k$. The {\bf base of induction} corresponds to $k \le \frac{1}{\alpha}$. For every $k \le \frac{1}{\alpha}$, Algorithm~\ref{alg:nearquad-robust} simply runs \textsc{PromiseSparseFT}$\left( x, \wh \chi_{in}, \frontier, v, k, \lceil\frac{k}{\alpha} \rceil, \mu \right)$ in line~\ref{a11l3}. Therefore, by Lemma\ref{promise_correctness-invariants}, the claims of the lemma hold with probability at least $1 - \frac{1}{N^4}$. This proves that the inductive hypothesis holds for every integer $k \le \frac{1}{\alpha}$, hence the base of induction holds.
	
	To prove the {\bf inductive step}, suppose that the lemma holds for every $k \le m - 1$ for some integer $m \ge \lfloor \frac{1}{\alpha} \rfloor + 1$. Assuming the inductive hypothesis, we prove that the lemma holds for $k=m$. To prove the inductive claim, we first analyze the algorithm under the assumption that the primitives \textsc{HeavyTest} and \textsc{Estimate} are replaced with more powerful primitives that succeeds deterministically. Hence, we assume that \textsc{HeavyTest} correctly tests the ``heavy'' hypothesis on its input signal with probability $1$ and also \textsc{Estimate} achieves the estimation guarantee of Lemma~\ref{est-inner-lem} deterministrically. Moreover, we assume that our inductive invocation of \textsc{RecursiveRobustSFT} in lines~\ref{a11l22} and \ref{a11l23} of the algorithm succeed deterministically, hence, we assume that the inductive hypothesis (the lemma) holds with probability $1$. With these assumptions in place, we prove that the lemma holds deterministically (with probability 1).
	We then establish a coupling between this idealized execution and the actual execution of our algorithm, leading to our result.
	
	We prove the first statement of lemma by (another) induction on the \emph{Repeat-Until loop} of the algorithm. Note that we are proving the inductive step of an inductive proof using another induction (two nested inductions). The first (outer) induction was on the integer $k$ and the second (inner) induction is on the iteration number $t$ of the \emph{Repeat-Until loop} of our algorithm. 
	Let $\wh \chi_v^{(t)}$ denote the signal $\wh \chi_v$ at the end of iteration $t$ of the algorithm. Furthermore, let $\frontier^{(t)}$ denote the subtree $T$ at the end of $t^{th}$ iteration. Also, let $\sident^{(t)}$ denote the set $\sident$ (defined in Algorithm~\ref{alg:nearquad-robust}) at the end of iteration $t$. Additionaly, for every leaf $u$ of subtree $T^{(t)}$, let $L_u^{(t)}$ denote the ``{unestimated}'' frequencies in support of $\wh y$ that lie in frequency cone of $u$, i.e., $L_u^{(t)} := \subtree_{\frontier \cup T^{(t)}}(u) \cap \head_\mu(\wh{y})$
	We prove that if preconditions i, ii and iii together with the presondition of statement 1 (that is $|S| \le k$), hold, then at every iteration $t=0,1,2, \dots$ of Algorithm~\ref{alg:nearquad-robust}, the following properties are maintained,
	\begin{description}
		\item[$P_1(t)$] $S \setminus \supp\left(\wh \chi_v^{(t)}\right) \subseteq \supp\left(T^{(t)}\right) := \bigcup_{u\in \leaves\left(T^{(t)}\right)} \subtree_{\frontier \cup T^{(t)}}(u)$;
		\item[$P_2(t)$] For every leaf $u \neq v$ of subtree $T^{(t)}$, $\left| L_u^{(t)} \right| \ge 1$. Additionally, if $u \notin \sident^{(t)}$, then $\left| L_u^{(t)} \right| > {b}$;
		\item[$P_3(t)$] $\left\|\wh{y}_{S^{(t)}}-\wh{\chi}_v^{(t)}  \right\|_2^2 \leq \frac{\left|S^{(t)} \right|}{40k \cdot \log_{1/\alpha}^2k } \cdot  \mu^2$, where $S^{(t)} := \supp\left(\wh \chi_v^{(t)} \right)$;
		\item[$P_4(t)$] $S^{(t)} \subseteq S$ and $S^{(t)}  \cap \left(\bigcup_{\substack{u \in \leaves\left(T^{(t)}\right) \\ u \neq v}} \subtree_{\frontier \cup T^{(t)}}(u)\right) = \emptyset$;
	\end{description} 
	
	The {\bf base of induction} corresponds to the zeroth iteration ($t=0$), at which point $T^{(0)}$ is a subtree that solely consists of node $v$ and has no other leaves. Moreover, $\wh \chi_v^{(0)} \equiv 0$. Thus, statement $P_1(0)$ trivially holds by definition of set $S$. The statement $P_2(0)$ holds since there exists no leaf $u \neq v$ in $T^{(0)}$. The statements $P_3(0)$ and $P_4(0)$ hold because of the fact $\wh \chi_v^{(0)} \equiv 0$.

	We now prove the {\bf inductive step} by assuming that the inductive hypothesis, $P(t-1)$ is satisfied for some iteration $t-1$ of Algorithm~\ref{alg:nearquad-robust}, and then proving that $P(t)$ holds. 
	First, we remark that if inductive hypotheses $P_2(t-1)$ and $P_4(t-1)$ hold true, then by the precondition of statement 1 of the lemma (that is $|S| \le k$) the if-statement in line~\ref{a11l9} of Algorithm~\ref{alg:nearquad-robust} is $\false$ and hence lines~\ref{a11l9} and \ref{a11l10} of the algorithm can be ignored in our analysis. 
	We proceed to prove the induction by considering the two cases that can happen in every iteration $t$ of the algorithm:	
	
	\paragraph{Case 1 -- the if-statement in line~\ref{a11l11} of Algorithm~\ref{alg:nearquad-robust} is True.} 
	In this case, we have that $\sum_{u \in \sident^{(t-1)}} 2^{-w_{T^{(t-1)}}(u)} \ge \frac{1}{2}$. As a result, by Claim~\ref{claim:findCheap}, the set $\scheap \subseteq \sident^{(t-1)}$ that the algorithm computes in line~\ref{a11l12} by running the primitive \textsc{FindCheapToEstimate} satisfies the property that $\left| \scheap \right| \cdot \left( 8+4\log|\sident^{(t-1)}| \right) \ge \max_{u \in \scheap} 2^{w_{T^{(t-1)}}(u)}$. Clearly $\scheap \neq \emptyset$, by Claim~\ref{claim:findCheap}. 
	Then the algorithm computes $\{\wh H_u\}_{u \in \scheap}$ by running the procedure \textsc{Estimate} in line~\ref{a11l14} and then updates $\wh \chi^{(t)}(\ff_u) \gets \wh H_u$ for every $u \in \scheap$ and $\wh \chi^{(t)}(\bm \xi) = \wh \chi^{(t-1)}(\bm \xi)$ at every other frequency $\bm \xi$.
	Therefore, if we let $L:= \left\{ \ff_u: u \in \scheap \right\}$, then $S^{(t)} \setminus S^{(t-1)} = L$, by inductive hypothesis $P_4(t-1)$. By $P_3(t-1)$ along with Lemma~\ref{est-inner-lem} (its deterministic version that succeeds with probability 1), we find that
	\begin{align}
		\left\| \wh \chi_v^{(t)} - \wh y_{S^{(t)}}\right\|_2^2 &= \left\| (\wh \chi_v^{(t)} - \wh y)_{S^{(t-1)}}\right\|_2^2 + \left\| (\wh \chi_v^{(t)} - \wh y)_{S^{(t)}\setminus S^{(t-1)}}\right\|_2^2\nonumber\\
		&=\left\| \wh \chi_v^{(t-1)} - \wh y_{S^{(t-1)}}\right\|_2^2 + \left\| (\wh \chi_v^{(t)} - \wh y)_{L}\right\|_2^2\nonumber\\
		&\le \frac{\left|S^{(t-1)} \right| \cdot \mu^2}{40k \log_{1/\alpha}^2k} + \frac{\left|L \right|}{46 k \log^2N} \sum_{\bm\xi \in [n]^d \setminus \supp{\left(\frontier \cup T^{(t-1)}\right)} } \left| \left(\wh{y}-\wh{\chi}_v^{(t-1)}\right)({\bm{\xi}}) \right|^2.\label{estimate-error-nearquad}
	\end{align}
	Now we bound the second term above,
	\small
	\begin{align}
		&\sum_{\bm\xi \in [n]^d \setminus \supp{\left(\frontier \cup T^{(t-1)}\right)}} \left| \left(\wh{y}-\wh{\chi}_v^{(t-1)}\right)({\bm{\xi}}) \right|^2\nonumber\\ 
		&\qquad= \sum_{\bm\xi \in [n]^d \setminus \supp{(\frontier)} } \left| \wh{y}({\bm{\xi}}) \right|^2 + \sum_{\bm\xi \in \subtree_{\frontier}(v) \setminus \supp{\left(T^{(t-1)}\right)} } \left| \left(\wh{y}-\wh{\chi}_v^{(t-1)}\right)({\bm{\xi}}) \right|^2\nonumber\\
		&\qquad = \sum_{\bm\xi \in [n]^d \setminus \supp{(\frontier)} } \left| \wh{y}({\bm{\xi}}) \right|^2 \nonumber\\
		&\qquad\qquad + \sum_{\bm\xi \in \subtree_{\frontier}(v) \setminus \left(\supp{\left(T^{(t-1)}\right)} \cup S^{(t-1)}\right)} \left| \wh{y}({\bm{\xi}}) \right|^2 + \left\|\wh{y}_{S^{(t-1)}} -\wh{\chi}_v^{(t-1)} \right\|_2^2\nonumber\\
		&\qquad = \sum_{\bm\xi \in [n]^d \setminus \left(\supp{\left(\frontier\cup T^{(t-1)}\right)} \cup S^{(t-1)} \right) } \left| \wh{y}({\bm{\xi}}) \right|^2 + \left\|\wh{y}_{S^{(t-1)}} -\wh{\chi}_v^{(t-1)} \right\|_2^2 \nonumber\\
		&\qquad\le \sum_{\bm\xi \in [n]^d \setminus \head_\mu(\wh y)} \left| \wh{y}({\bm{\xi}}) \right|^2 + \left\|\wh{y}_{S^{(t-1)}} -\wh{\chi}_v^{(t-1)} \right\|_2^2 \text{~~~~~~~~~(by $P_1(t-1)$, precondition i and definition of $S$)}\nonumber\\
		&\qquad \le \frac{21\mu^2 }{20} + \frac{\mu^2}{20\log_{\frac{1}{\alpha}}(k/\alpha)} + \frac{\mu^2 }{40 \log_{\frac{1}{\alpha}}^2k} \text{~~~~~~~~~~~~~~(by $P_3(t-1)$ and $P_4(t-1)$ and precondition $|S| \le b$)}\nonumber\\
		&\qquad \le \frac{23\mu^2 }{20}.\nonumber
	\end{align}
	\normalsize
	Therefore, by plugging the above bound back to \eqref{estimate-error-nearquad} we find that,
	\[\left\| \wh \chi_v^{(t)} - \wh y_{S^{(t)}}\right\|_2^2 \le \frac{\left|S^{(t-1)} \right|}{40k \log_{\frac{1}{\alpha}}^2k} \cdot \mu^2 + \frac{\left|L \right|}{46k \log^2N} \cdot \left( \frac{23}{20} \mu^2 \right) \le \frac{\left|S^{(t)} \right|}{40k \log_{\frac{1}{\alpha}}^2k} \cdot \mu^2,
	\]
	which proves the inductive claim $P_3(t)$. 
	
	Moreover, in this case, the algorithm constructs $T^{(t)}$ by removing all leaves that are in the set $\scheap$ from tree $T^{(t-1)}$ and leaving the rest of the tree unchanged.
	Furthermore, the algorithm updates the set $\sident^{(t)}$ by subtracting $\scheap$ from $\sident^{(t-1)}$.	
	Note that, $P_2(t-1)$ implies that $L \subseteq S$. Thus, the fact $S^{(t)} = S^{(t-1)} \cup L$ together with inductive hypothesis $P_4(t-1)$ as well as the construction of $T^{(t)}$, imply $P_4(t)$. The construction of $T^{(t)}$ together with the fact that $|\subtree_{\frontier \cup T^{(t-1)}}(u)| = 1 $ for every $u \in \sident^{(t-1)}$ give $P_1(t)$ and $P_2(t)$.
	
	\paragraph{Case 2 -- the if-statement in line~\ref{a11l11} is False.} Let $z\in \leaves\left(T^{(t-1)}\right) \setminus \sident^{(t-1)}$ be the smallest weight leaf chosen by the algorithm in line~\ref{a11l20}. In this case, the algorithm constructs tree $T'$ by adding leaves $z_\righ$ and $z_\lef$ to tree $T^{(t-1)}$ as right and left children of $z$ in line~\ref{a11l21}. Then, the algorithm runs \textsc{RecursiveRobustSFT} with inputs $\left(x, \wh \chi_{in} + \wh \chi_v^{(t-1)}, T', z_\lef, b, \alpha, \mu \right)$ and $\left(x,  \wh \chi_{in} + \wh \chi_v^{(t-1)}, T', z_\righ, b, \alpha, \mu \right)$ in lines~\ref{a11l22} and \ref{a11l23} respectively. Now we analyze the output of the recursive invocation of \textsc{RecursiveRobustSFT} in lines~\ref{a11l22} and \ref{a11l23}. In the following we focus on analyzing $\left(\correct_\lef, \wh \chi_\lef\right)$ but $\left(\correct_\righ, \wh \chi_\righ\right)$ can be analyzed exactly the same way. There are two possibilities that can happen:	
	
	{\bf Possibility 1)} $\left|\subtree_{\frontier \cup T'}(z_\lef) \cap \head_\mu(\wh{y}) \right| \le b$. In this case, the inductive hypothesis $P_4(t-1)$ implies that $|S^{(t-1)}| \le k$ and hence inductive hypothesis $P_3(t-1)$ gives
	\begin{equation}\label{eq:est-error-bound-nearquad-robust}
	\left\| \wh y_{S^{(t-1)}} - \wh \chi_v^{(t-1)} \right\|_2^2 \le \frac{\mu^2}{40 \log_{1/\alpha}^2 k},
	\end{equation}
	hence, $\head_\mu\left(\wh y - \wh \chi_v^{(t-1)}\right) = \head_\mu(\wh{y}) \setminus S^{(t-1)}$.
	Consequently, if we let $\wh g : = \wh y - \wh \chi_v^{(t-1)}$, then: i) $\head_\mu(\wh{g}) \subseteq \supp{(\frontier \cup T')}$, by \eqref{eq:est-error-bound-nearquad-robust} along with $P_1(t-1)$, ii) $\| \wh g - \wh g_{\head_\mu(\wh{g})} \|_2^2 \le \frac{21\mu^2}{20} + \frac{\mu^2}{20\log_{\frac{1}{\alpha}}(b/\alpha)}$, by precondition of the lemma along with \eqref{eq:est-error-bound-nearquad-robust}, and iii)
	\[\left|\subtree_{\frontier \cup T'}(z_\lef) \cap \head_\mu(\wh{g}) \right| \le b,\]
	by assumption $\left|\subtree_{\frontier \cup T'}(z_\lef) \cap \head_\mu(\wh{y}) \right| \le b$.
	Therefore, all preconditions of the first statement of Lemma~\ref{recursive-robust-invariants} hold. Since we invoke primitive \textsc{RecursiveRobustSFT} with sparsity $b \le m-1$, by our inducive hypothesis that Lemma~\ref{recursive-robust-invariants} holds for any sparsity parameter $k \le m-1$, we can invoke this lemma (a deterministic version of it that succeeds with probability 1) and conclude that, $\correct_\lef = \true$, and $\supp{ (\wh \chi_\lef)} \subseteq \subtree_{\frontier \cup T'}(z_\lef) \cap \head_\mu(\wh{g})$, and $\left\| \wh g_{\subtree_{\frontier \cup T'}(z_\lef) \cap \head_\mu(\wh{g})} - \wh \chi_\lef \right\|_2^2 \le \frac{\mu^2}{40 \log_{1/\alpha}^2 b} \le \frac{\mu^2}{10}$. This together with inductive hypothesis $P_4(t-1)$ imply that, $\supp{ (\wh \chi_\lef)} = \subtree_{\frontier \cup T'}(z_\lef) \cap \head_\mu(\wh{y})$.
	
	So, if $\left|\subtree_{\frontier \cup T'}(z_\lef) \cap \head_\mu(\wh{y}) \right| \le b$, then the algorithm adds all leaves that correspond to frequencies in $\subtree_{\frontier \cup T'}(z_\lef) \cap \head_\mu(\wh{y})$ to tree $T^{(t-1)}$ as well as set $\sident^{(t-1)}$.
	By a similar argument, if $\left|\subtree_{\frontier \cup T'}(z_\righ) \cap \head_\mu(\wh{y}) \right| \le b$, then the algorithm adds all leaves corresponding to frequencies in $\subtree_{\frontier \cup T'}(z_\righ) \cap \head_\mu(\wh{y})$ to tree $T^{(t-1)}$ and set $\sident^{(t-1)}$.
	
	{\bf Possibility 2)} $\left|\subtree_{\frontier \cup T'}(z_\lef) \cap \head_\mu(\wh{y}) \right| > b$. 
	Same as in {\bf possibility 1}, the inductive hypothesis $P_4(t-1)$ implies that $|S^{(t-1)}| \le k$, hence, inductive hypothesis $P_3(t-1)$ gives \eqref{eq:est-error-bound-nearquad-robust}.
	Hence, $\head_\mu\left(\wh y - \wh \chi_v^{(t-1)} \right) = \head_\mu(\wh{y})\setminus S^{(t-1)}$.
	Consequently, if we let $\wh g : = \wh y - \wh \chi_v^{(t-1)}$, then we find that i) $\head_\mu(\wh{g})\subseteq \supp{(\frontier \cup T')}$, by $P_1(t-1)$, ii) $\| \wh g - \wh g_{\head_\mu(\wh{g})} \|_2^2 \le \frac{21\mu^2}{20} + \frac{\mu^2}{20\log_{\frac{1}{\alpha}}(b/\alpha)}$, by precondition of the lemma along with \eqref{eq:est-error-bound-nearquad-robust}, and iii) 
	\[\left|\subtree_{\frontier \cup T'}(z_\lef) \cap \head_\mu(\wh{g}) \right| \le \left| S \right| \le k,\]
	by precondition of statement 1 of the lemma. 
	Additionally, by $P_4(t-1)$, we find that 
	\[\left|\subtree_{\frontier \cup T'}(z_\lef) \cap \head_\mu(\wh{g}) \right| = \left|\subtree_{\frontier \cup T'}(z_\lef) \cap \head_\mu(\wh{y}) \right| > b.\]
	Since we invoke primitive \textsc{RecursiveRobustSFT} with sparsity $b \le m-1$, by our inducive hypothesis that Lemma~\ref{recursive-robust-invariants} holds for any sparsity parameter $k \le m-1$, we can invoke this lemma (a deterministic version of it that succeeds with probability 1) and conclude that, $\correct_\lef = \false$, and $\wh \chi_\lef \equiv 0$. 
	
	We remark that since 
	\[\left|\subtree_{\frontier \cup T'}(z_\lef) \cap \head_\mu(\wh{y}) \right| + \left|\subtree_{\frontier \cup T'}(z_\righ) \cap \head_\mu(\wh{y}) \right| =\left|L_z^{(t-1)} \right|,\]
	 the inductive hypothesis $P_2(t-1)$ along with the above arguments imply that the if-statement in line~\ref{a11l24} of our algorithm cannot be $\true$ and hence in the rest of our analysis we can ignore lines~\ref{a11l24} and \ref{a11l25} of the algorithm.
	Furthermore, in this case the algorithm adds leaf $z_\lef$ as the left child of $v$ to tree $T^{(t-1)}$.
	By a similar argument, if $\left|\subtree_{\frontier \cup T'}(z_\righ) \cap \head_\mu(\wh{y}) \right| > b$, then the algorithm adds leaf $z_\righ$ as the left child of $v$ to tree $T^{(t-1)}$.
	
	Based on the above arguments, according to the values of $\correct_\lef$ and $\correct_\righ$, there are various cases that can happen. From the way tree $T^{(t)}$ and set $\sident^{(t)}$ are obtained from $T^{(t-1)}$ and $\sident^{(t-1)}$, it follows that in any case all 4 properties of $P(t)$ are maintained.
	We have proved that for every $t$, if the inductive hypothesis $P(t-1)$ is satisfied then the property $P(t)$ is maintained. This completess the induction (i.e., the inner induction, recall that we have nested inductions) and proves that properties $P(t)$ is maintained throughout the execution of Algorithm~\ref{alg:nearquad-robust}, assuming that preconditions i, ii, and iii of the lemma along with the precondition $|S| \le k$ of statement 1 of the lemma hold. 
	
	Lemma~\ref{recursive-robust-runtime} proves that Algorithm~\ref{alg:nearquad-robust} must terminate after some $q$ iterations. When the algorithm terminates, the condition of the \emph{Repeat-Until} loop in line~\ref{a11l36} of the algorithm must be $\true$. Thus, when the algorithm terminates, at $q^{th}$ iteration, there is no leaf in subtree $T^{(q)}$ besides $v$ and as a consequence the set $\sident^{(q)}$ must be empty. This, together with $P_1(q)$ imply that the signal $\wh \chi_v^{(q)}$ satisfies,
	\[ \supp{\left(\wh{\chi}_v^{(q)}\right)} = S = \subtree_{\frontier}(v) \cap \head_\mu(\wh y) .\] 
	Moreover, $P_3(q)$ together with precondition $|S|\le k$ imply that 
	\[\left\|\wh{y}_S - \wh{\chi}_v^{(q)}\right\|_2^2 \leq  \frac{\left|S \right|}{40k \log_{1/\alpha}^2k} \cdot \mu^2 \le \frac{\mu^2}{40 \log_{1/\alpha}^2k}.\]
	
	Now we analyze the if-statement in line~\ref{a11l37} of the algorithm. The above equalities and inequalities on $\wh \chi_v^{(q)}$ imply that,
	\begin{align*}
		\left\| \left(\wh y - \wh \chi_v^{(q)} \right)_{\subtree_{\frontier}(v)}\right\|_2^2 &= \left\| \wh y_{\subtree_{\frontier}(v) \setminus S}\right\|_2^2 + \left\| \left(\wh y - \wh \chi_v^{(q)} \right)_S\right\|_2^2\\
		&\le \left\| \wh y_{\subtree_{\frontier}(v) \setminus \head_\mu(\wh y)}\right\|_2^2 + \frac{\mu^2}{40}.
	\end{align*}
	Therefore, if $\wh G_v$ is a Fourier domain $(v,\frontier)$-isolating filter constructed in Lemma~\ref{lem:isolate-filter-highdim}, then by Corollary~\ref{infty-norm-bound-islating-filter} along with the above inequality, we have
	\begin{align*}
		\left\| \left(\wh y - \wh \chi_v^{(q)} \right) \cdot \wh G_v \right\|_2^2 &\le \sum_{\bm\xi \in [n]^d \setminus \supp{(\frontier)} } \left| \wh{y}({\bm{\xi}}) \right|^2 + \left\| \left(\wh y - \wh \chi_v^{(q)} \right)_{\subtree_{\frontier}(v)}\right\|_2^2 \\ 
		& \le \sum_{\bm\xi \in [n]^d \setminus \supp{(\frontier)} } \left| \wh{y}({\bm{\xi}}) \right|^2 + \left\| \wh y_{\subtree_{\frontier}(v) \setminus \head_\mu(\wh y)}\right\|_2^2 + \frac{\mu^2}{40} \\
		& \le \left\| \wh y - \wh y_{\head_\mu(\wh y)}\right\|_2^2 + \frac{\mu^2}{40} \le \frac{11}{10}\cdot \mu^2.
	\end{align*}
	Thus, the preconditions of the second claim of Lemma~\ref{lem:guarantee_heavy_test} hold. So, we can invoke this lemma to conclude that the if-statement in line~\ref{a11l37} of the algorithm is $\false$ and hence the algorithm outputs $\left( \true , \wh \chi_v^{(q)} \right)$. This completes the inductive proof of statement 1 of the lemma.
	
	Now we proceed with the inductive step towards proving the second statement of lemma. Suppose that preconditions i, ii, iii along with the precondition of statement 2 (that is $|S| > k$) hold. Lemma~\ref{recursive-robust-runtime} proved that the signal $ \wh \chi_v$ always satisfies $\supp{(\wh \chi_v)} \subseteq \subtree_{\frontier}(v)$ and $\|\wh \chi_v\|_0 \le k$. Therefore, $S \setminus \supp{(\wh \chi_v)} \neq \emptyset$. Consequently, if $\wh G_v$ is a Fourier domain $(v,\frontier)$-isolating filter constructed in Lemma~\ref{lem:isolate-filter-highdim}, then by definition of isolating filters we have
	\begin{align*}
		\left\| \left(\left( \wh y -\wh \chi_v \right) \cdot \wh G_v\right)_{S\cup \supp{(\wh \chi_v)}} \right\|_2^2 \ge \left\| \left( \wh y -\wh \chi_v \right)_{S\cup \supp{(\wh \chi_v)}} \right\|_2^2 \ge \left\| \wh y_{S \setminus \supp{(\wh \chi_v)}} \right\|_2^2 \ge 9\mu^2,
	\end{align*}
	which follows from the definition of $S$ and $\head_\mu(\cdot)$. On the other hand,
	\begin{align*}
		\left\| \left(\left( \wh y -\wh \chi_v\right) \cdot \wh G_\ell\right)_{[n]^d \setminus (S\cup \supp{(\wh \chi_v)})} \right\|_2^2 &= \left\| \left(\wh y \cdot \wh G_\ell\right)_{[n]^d \setminus (S\cup \supp{(\wh \chi_v)})} \right\|_2^2\\
		&\le \left\| \left(\wh y \cdot \wh G_\ell\right)_{[n]^d \setminus S} \right\|_2^2\\
		&\le \left\| \wh y_{\subtree_{\frontier}(v) \setminus S} \right\|_2^2\\ 
		&\qquad + \sum_{\bm\xi \in [n]^d \setminus \supp{(\frontier)} } \left| \wh{y}({\bm{\xi}}) \right|^2\\
		&\le \left\| \wh y - \wh y _{\head_\mu(\wh y)} \right\|_2^2 \le \frac{11}{10} \cdot \mu^2. \text{~~~~~~~~(precondition ii)}
	\end{align*}
	Additionally note that $\left| S \cup \supp{(\wh \chi_v)} \right| \le k/\alpha + k \le 2k/\alpha$ by preconditions of the lemma and property of $\supp{(\wh \chi_v)}$ that we have proved. Hence, by invoking the first claim of Lemma~\ref{lem:guarantee_heavy_test}, the if-statement in line~\ref{a11l37} of the algorithm is $\true$ and hence the algorithm outputs $\left( \false , \{0\}^{n^d} \right)$. This proves statement 2 of the lemma.

	Finally, observe that throughout this analysis we have assumed that Lemma~\ref{lem:guarantee_heavy_test} holds with probability 1 for all the invocations of \textsc{HeavyTest} by our algorithm. Moreover, we assumend that \textsc{Estimate} successfully works with probability 1. Also we assumed that the inductive hypothesis (that is Lemma~\ref{recursive-robust-invariants} for sparsity parameters $k \le m-1$) holds deterministically. In reality, we have to take the fact that these primitives are randomized into acount of our analysis. 
	
	The first source of randomness is the fact that \textsc{HeavyTest} only succeeds with some high probability. In fact, Lemma~\ref{lem:guarantee_heavy_test} tells us that every invocation of \textsc{HeavyTest} succeeds with probability at least $1-1/N^5$.

	The second source of randomness is the fact that \textsc{Estimate} only succeeds with some high probability. Lemma~\ref{est-inner-lem} tells us that every invocation of \textsc{Estimate} on a set $\scheap$, succeeds with probability $1-\frac{|\scheap|}{N^8} \ge 1 - \frac{1}{N^7}$. Since, our analysis in proof of Lemma~\ref{recursive-robust-runtime} shows that \textsc{RecursiveRobustSFT} makes at most $k$ recursive calls to \textsc{Estimate}, by a union bound, the overall failure probability of all invocations of this primitive will be bounded by $\frac{k}{N^7}$.
	
	The third and last source of randomness in our algorithm is the recursive invocations of \textsc{RecursiveRobustSFT} in lines~\ref{a11l22} and \ref{a11l23} of our algorithm. By the inductive hypothesis (statement of Lemma~\ref{recursive-robust-invariants}), the invocation of this primitive succeeds with probability $1 - O \left( \left(\frac{2 \log N}{\alpha}\right)^{\log_{1/\alpha}b} \cdot N^{-4}\right)$. Our analysis in proof of Lemma~\ref{recursive-robust-runtime} shows that \textsc{RecursiveRobustSFT} makes at most $\frac{2\log N}{\alpha}$ recursive calls to \textsc{RecursiveRobustSFT}. Therefore, by a union bound, the overall failure probability of all invocations of \textsc{RecursiveRobustSFT} is bounded by $O \left( \left(\frac{2 \log N}{\alpha}\right)^{\log_{1/\alpha}k} \cdot N^{-4}\right)$. 
	
	Finally, by another application of union bound, the overall failure probability of Algorithm~\ref{alg:nearquad-robust}, is bounded by $O \left( \left(\frac{2 \log N}{\alpha}\right)^{\log_{1/\alpha}k} \cdot N^{-4}\right)$.
	This completes the proof of the lemma. 
	
\end{proof}

Now we are ready to present our main robust sparse Fourier transform algorithm that achieves the guarantee of Theorem~\ref{thrm:near-quad-robust} for any $\epsilon$ using a number of samples that is near quadratic in $k$ and a runtime that is cubic and prove the main result of this section.

\begin{proofof}{Theorem~\ref{thrm:near-quad-robust}}
	The procedure that achieves the guarantees of the theorem is presented in Algorithm~\ref{high-dim-nearquad-main-alg}. The correctness proof basically follows by invoking Lemma~\ref{recursive-robust-invariants} and the runtime and sample complexity follows from Lemma~\ref{recursive-robust-runtime}. If we let $\mu := \|\eta\|_2$ then because $x$ is a signal in the $k$-high SNR regime, we have that $\left| \head_\mu(\wh{x}) \right| \le k$ and $\left\| \wh x - \wh x_{\head_\mu(\wh{x})} \right\|_2 \le \mu$. Therefore, the signal $\wh \chi$ that we computed in line~\ref{a12l3} of Algorithm~\ref{high-dim-nearquad-main-alg} by running procedure \textsc{RecursiveRobustSFT} (Algorithm~\ref{alg:nearquad-robust}) with inputs $\left( x, \{0\}^{n^d}, \{ \text{root}\}, \text{root}, k, \alpha, \mu \right)$, then all preconditions of Lemma~\ref{recursive-robust-invariants} hold and hence by invoking the first statement of this lemma we conclude that, with probability at least $1 - \frac{1}{2N^3}$, $\wh \chi$ satisfies the following properties:
	\[ \| \wh x - \wh \chi \|_2^2 \le \frac{\mu^2}{40} \text{~~~~and~~} \supp(\wh \chi) \subseteq \head_\mu(\wh x). \]
	This together with the $k$-high SNR assumption imply that, with probability at least $1 - \frac{1}{2N^3}$, $\supp(\wh \chi) = \head_\mu(\wh x)$.
	Therefore, tree $T$ that we construct in line~\ref{a12l4} of Algorithm~\ref{high-dim-nearquad-main-alg} is in fact the spliting tree of the set $\head_\mu(\wh x)$, that is, $\supp{(T)} = \head_\mu(\wh x)$ and $|\leaves(T)| = |\head_\mu(\wh x)|$.
	
	In the rest of the correctness proof we condition on the event that tree $T$ is the spliting tree of the set $\head_\mu(\wh x)$ and analyze the evolution of singal $\wh \chi_\epsilon$ and tree $T$ in every iteration $t=0,1,2,...$ of the \emph{while loop} in Algorithm~\ref{high-dim-nearquad-main-alg}.
	Let $\wh \chi_\epsilon^{(t)}$ denote the signal $\wh \chi_\epsilon$ at the end of iteration $t$, and let ${T}^{(t)}$ denote the tree ${T}$ at the end of iteration $t$. 
	In every iteration $t$, Algorithm~\ref{high-dim-nearquad-main-alg} computes a subset $\scheap^{(t)}$ of leaves of the tree $T^{(t-1)}$ by running the primitive \textsc{FindCheapToEstimate} in line~\ref{a12l7} of the algorithm. By Claim~\ref{claim:findCheap}, the set $\scheap^{(t)} \subseteq \leaves\left(T^{(t-1)}\right)$ satisfies the property that $\left| \scheap^{(t)} \right| \cdot \left( 8+4\log k \right) \ge \max_{u \in \scheap^{(t)}} 2^{w_{T^{(t-1)}}(u)}$. Clearly $\scheap^{(t)} \neq \emptyset$, by Claim~\ref{claim:findCheap}. 
	Then the algorithm computes $\{\wh H_u\}_{u \in \scheap^{(t)}}$ by running the procedure \textsc{Estimate} in line~\ref{a12l9} and then updates $\wh \chi_\epsilon^{(t)}(\ff_u) \gets \wh H_u$ for every $u \in \scheap^{(t)}$ and $\wh \chi_\epsilon^{(t)}(\bm \xi) = \wh \chi_\epsilon^{(t-1)}(\bm \xi)$ at every other frequency $\bm \xi$. Moreover, the algorithm updates the tree $T^{(t)}$ by removing every leaf that is in the set $ \scheap$ from tree $T^{(t-1)}$.
	Hence, one can readily see that since at each iteration of the while loop, tree $T$ looses at least one of its leaves, the algorithm terminates after at most $\left|\leaves\left( T^{(0)} \right) \right| = k$ iterations, since initially the number of leaves of $T^{(0)}$ equals $|\head_\mu(\wh x)| = k$. 
	
	If we denote by $S^{(t)}$ the set $\supp{\left( \wh\chi_\epsilon^{(t)} \right)}$ for every $t$, then we claim that the following holds,
	\[ \Pr \left[\left\|\wh{x}_{S^{(t)}}-\wh{\chi}_\epsilon^{(t)}  \right\|_2^2 \leq \frac{\epsilon\left|S^{(t)} \right|}{k} \cdot  \mu^2\right] \ge 1 - \frac{\left|S^{(t)} \right|}{N^8}. \]
	We prove the above claim by induction on iteration number $t$ of the \emph{while loop} of our algorithm. One can see that the base of induction trivially holds for $t=0$ because $\wh \chi_\epsilon^{(0)} \equiv 0$. To prove the inductive step, suppose that the inductive hypothesis holds for $t-1$, that is,
	\[ \Pr \left[\left\|\wh{x}_{S^{(t-1)}}-\wh{\chi}_\epsilon^{(t-1)}  \right\|_2^2 \leq \frac{\epsilon\left|S^{(t-1)} \right|}{k} \cdot  \mu^2\right] \ge 1 - \frac{\left|S^{(t-1)} \right|}{N^8}. \]
	If we let $L:= \left\{ \ff_u: u \in \scheap^{(t)} \right\}$, then one can see from the way our algorithm updates signal $\wh \chi_\epsilon^{(t)}$ and tree $T^{(t)}$ that $S^{(t)} \setminus S^{(t-1)} = L$ for every iteration $t$. Furthermore, by Lemma~\ref{est-inner-lem} and union bound, we find that with probability at least $1 - \frac{|S^{(t-1)}|}{N^8} - \frac{|\scheap^{(t)}|}{N^8} = 1 - \frac{|S^{(t-1)}|}{N^8}$ the following holds
	\begin{align}
		\left\| \wh x_{S^{(t)}} - \wh \chi_\epsilon^{(t)}\right\|_2^2 &= \left\| ( \wh x - \wh \chi_\epsilon^{(t)})_{S^{(t-1)}}\right\|_2^2 + \left\| ( \wh x - \wh \chi_\epsilon^{(t)})_{S^{(t)}\setminus S^{(t-1)}}\right\|_2^2\nonumber\\
		&=\left\| \wh x_{S^{(t-1)}} - \wh \chi_\epsilon^{(t-1)} \right\|_2^2 + \left\| (\wh x - \wh \chi_\epsilon^{(t)} )_{L}\right\|_2^2\nonumber\\
		&\le \frac{\epsilon |S^{(t-1)}| \mu^2}{k} + \frac{\epsilon \left|L \right|}{2k} \sum_{\bm\xi \in [n]^d \setminus \supp{\left(T^{(t-1)}\right)} } \left| \left(\wh{x}-\wh{\chi}_\epsilon^{(t-1)}\right)({\bm{\xi}}) \right|^2.\label{estimate-error-mainthrm}
	\end{align}
Now we bound the second term above,
\small
\begin{align}
	&\sum_{\bm\xi \in [n]^d \setminus \supp{\left(T^{(t-1)}\right)} } \left| \left(\wh{x}-\wh{\chi}_\epsilon^{(t-1)}\right)({\bm{\xi}}) \right|^2\nonumber\\ 
	&\qquad = \sum_{\bm\xi \in  [n]^d \setminus \left(\supp{\left(T^{(t-1)}\right)} \cup S^{(t-1)}\right)} \left| \wh{x}({\bm{\xi}}) \right|^2 + \left\|\wh{x}_{S^{(t-1)}} -\wh{\chi}_\epsilon^{(t-1)} \right\|_2^2\nonumber\\
	&\qquad\le \sum_{\bm\xi \in [n]^d \setminus \head_\mu(\wh{x})} \left| \wh{x}({\bm{\xi}}) \right|^2 + \left\|\wh{x}_{S^{(t-1)}} -\wh{\chi}_\epsilon^{(t-1)} \right\|_2^2 \text{~~~~($T$ was initially the splitting tree of $\head_\mu(\wh x)$)}\nonumber\\
	&\qquad \le 2 \mu^2 \text{~~~~~~~~~~~~~~~~~~~~~~~~~~~~~~~~~~~~~~~~~~~~~~~~~~~~~~~~~~~~~~~~~~~~~~~~~~~~~~(by the inductive hypothesis)}.\nonumber
\end{align}
\normalsize
Therefore, by plugging the above bound back to \eqref{estimate-error-mainthrm} we find that,
\[\Pr \left[\left\|\wh{x}_{S^{(t)}}-\wh{\chi}_\epsilon^{(t)}  \right\|_2^2 \leq \frac{\epsilon\left|S^{(t)} \right|}{k} \cdot  \mu^2\right] \ge 1 - \frac{\left|S^{(t)} \right|}{N^8},
\]
which proves the inductive claim. Therefore, by another application of union bound, with probability at least $1 - \frac{1}{N^3}$, the output of the algorithm $\wh \chi_\epsilon$ satisfies $\left\|\wh{x} -\wh{\chi}_\epsilon \right\|_2^2 \leq (1 + \epsilon) \cdot  \mu^2$. This proves the correctness of Algorithm~\ref{high-dim-nearquad-main-alg}.

\paragraph{Runtime and Sample Complexity.} By Lemma~\ref{recursive-robust-runtime}, the running time and sample complexity of invoking primitive \textsc{RecursiveRobustSFT} in line~\ref{a12l3} of the algorithm are bounded by $\widetilde{O}(k^3)$ and $\widetilde{O}\left( k^2 \cdot 2^{2 \sqrt{\log k \cdot \log(2\log N)}} \right)$, respectively. Additionally, by Lemma~\ref{est-inner-lem}, the runtime and sample complexity of every invocation of \textsc{Estimate} in line~\ref{a12l9} of our algorithm are bounded by $\widetilde{O}\left( \frac{k}{ \epsilon| \scheap^{(t)} |}  \sum_{u \in \scheap^{(t)}} 2^{w_{T^{(t-1)}}(u)} + \frac{k}{\epsilon}\cdot \|\wh \chi_\epsilon^{(t-1)}\|_0 \right)$ and $\widetilde{O}\left( \frac{k}{ \epsilon| \scheap^{(t)} |}  \sum_{u \in \scheap^{(t)}} 2^{w_{T^{(t-1)}}(u)}\right)$, respectively. Using the fact that $| \scheap^{(t)} | \cdot \left( 8+4\log k \right) \ge \max_{u \in \scheap^{(t)}} 2^{w_{T^{(t-1)}}(u)}$ together with $\|\wh \chi_\epsilon^{(t-1)} \|_0 \le k$, these time and sample complexities are further upper bounded by $\widetilde{O}\left( \frac{k| \scheap^{(t)} |}{\epsilon} + \frac{k^2}{\epsilon} \right)$ and $\widetilde{O}\left( \frac{k}{\epsilon} \cdot | \scheap^{(t)} | \right)$, respectively.
We proved that the total number of iterations, and hence number of times we run \textsc{Estimate} in line~\ref{a12l9} of the algorithm, is bounded by $k$.
Using this together with the fact that $\sum_{t}\left| \scheap^{(t)} \right| = \left\|\wh \chi_\epsilon \right\|_0 = |\head_\mu(\wh x)| \le k$, the total runtime and sample complexity of all invocations of \textsc{Estimate} in all iterations can be upper bounded by $\widetilde{O}\left( \frac{k^3}{\epsilon}\right)$ and $\widetilde{O}\left( \frac{k^2}{\epsilon} \right)$, respectively.
Therefore the total time and sample complexities of our algorithm are bounded by $\widetilde{O}\left( \frac{k^3}{\epsilon}\right)$ and $\widetilde{O}\left( \frac{k^2}{\epsilon} + k^2 \cdot 2^{2 \sqrt{\log k \cdot \log(2\log N)}} \right)$, respectively.
\end{proofof}

\section{Experiments.}
\label{sec:experiments}

In this section, we empirically show that our FFT backtracking algorithm for high dimensional sparse signals is extremely fast and can compete with highly optimized software packages such as the FFTW~\cite{frigo1999fast, fftw}. Our experiments mainly focus on a modification of  Algorithm~\ref{alg:exact_sparse_simple} which exploits only one level of FFT backtracking and runs in $\widetilde{O}(k^{2.5})$ time. One of the baselines that we compare our algorithm to is the vanilla FFT tree pruning of \cite{kapralov2019dimension}, in order to demonstrate the speed gained by our backtracking technique. Furthermore, we compare our method against the SFFT 2.0~\cite{hikp12b, sfft2}, which is optimized for $1$-dimensional signals, and show that our method's performance for small sparsity $k$ is comparable to that of the SFFT 2.0 even in dimension one.

In a subset of our experiments, we exploit a technique introduced in \cite{GHIKPS} to speed up the high-dimensional Sparse FFT algorithms. This method works as follows. By fixing one of the coordinates of a $d$-dimensional signal we get a $(d-1)$-dimensional signal whose Fourier transform corresponds to projecting (aliasing) the Fourier transform of the original signal along the coordinate that was fixed in time domain. Thus we can effectively \emph{project} the Fourier spectrum into a $(d-1)$-dimensional plane by computing a $(d-1)$-dimensional FFT. 
Using a small number of measurements (projections with different values of the fixed coordinate) we can figure out which frequencies are projected without collision and recover them. We use this trick to recover the frequencies that get isolated under the projection and then run our algorithm on the residual signal. Since the residual signal is likely to have a smaller sparsity than the original one, this \emph{projection technique} can speed up our Sparse FFT algorithms.

\paragraph{Sparse signal classes:} In our experiments, we benchmark all methods on the following classes of $k$-sparse signals:
\begin{enumerate}
\item {\bf Random support with overtones:} The Fourier spectrum of this signal class is the superposition of a set of random frequencies and a set of overtones of these frequencies. Specifically, the support of this signal is $\supp(\wh x)=S_{\textsc{random}} \cup S_{\textsc{overtone}}$, which are defined as follows,
\[ S_{\textsc{random}}:=\left\{ \ff_1, \ff_2, \ldots \ff_{k/(d+1)} \sim \text{i.i.d. } \textsc{Unif}(\ZZ_n^d) \right\}, \]
\[ S_{\textsc{overtone}}:= \left\{ \ff + (n/2)\cdot {\bf e}_i : \forall \ff \in S_{\textsc{random}}, i \in [d] \right\}, \]
where ${\bf e}_i$ is the standard basis vector along coordinate $i$ in dimension $d$.
Note that every $\ff \in S_{\textsc{random}}$ will collide with at least one overtone under projection along any coordinate, thus, $S_{\textsc{random}}$ cannot be recovered using the projection trick. We added the overtones precisely for this reason, i.e., to ensure that the projection trick does no recover the signal entirely and there will be something left for the Sparse FFT to recover.

\item {\bf Randomly shifted $d$-dimensional Dirac Comb:} The Fourier support of a Dirac Comb (without shift) is the following,
\[ S_{\textsc{comb}} := \left\{ \left( i_1 \cdot \frac{n}{k^{1/d}}, i_2 \cdot \frac{n}{k^{1/d}}, \ldots i_d \cdot \frac{n}{k^{1/d}} \right) : i_1,i_2, \ldots i_d \in [k^{1/d}] \right\}. \]
We generate a random frequency shift $\tilde{\ff} \sim \textsc{Unif}(\ZZ_n^d)$ and a random phase shift $\tilde{\tt} \sim \textsc{Unif}(\ZZ_n^d)$ then define the $k$-sparse $\wh{x}$  as,
\[ \wh{x}_{\ff} := \sum_{\jj \in S_{\textsc{comb}}} e^{2\pi i \frac{ \ff^\top \tilde{\tt}}{n}} \cdot \mathbbm{1}_{\{ \ff = \jj + \tilde{\ff} \}} .\]
Note that the projection trick will not help at all on this signal and thus it is a good test case for the Sparse FFT algorithms. Additionally, this signal in time domain is also a randomly shifted Dirac Comb with sparsity $N/k$ and thus distinguishing it from zero with constant probability would require $\Omega(k)$ samples. This makes the Dirac Comb a hard test case for our tree exploration algorithms which heavily rely on the \textsc{ZeroTest} primitive to distinguish a sparse signal from a zero signal.
 
\item {\bf Superposition of a $k/2$-sparse signal with random support and a $d$-dimensional Dirac Comb of sparsity $k/2$:} This signal is a mixture of instances defined in {\bf (1)} and {\bf (2)}

 \item {\bf Superposition of two randomly shifted $d$-dimensional Dirac Combs of sparsity $k/2$:} This signal is a mixture of two independent instances of the randomly shifted Dirac Comb defined in {\bf (2)}.
\end{enumerate}

\subsection{FFT Backtracking vs Vanilla FFT Tree Pruning.}
\begin{figure}[t]
	\centering
	\begin{subfigure}{0.328\textwidth}
		\captionsetup{justification=centering,margin=0.2cm}
		\includegraphics[width=1.02\textwidth]{./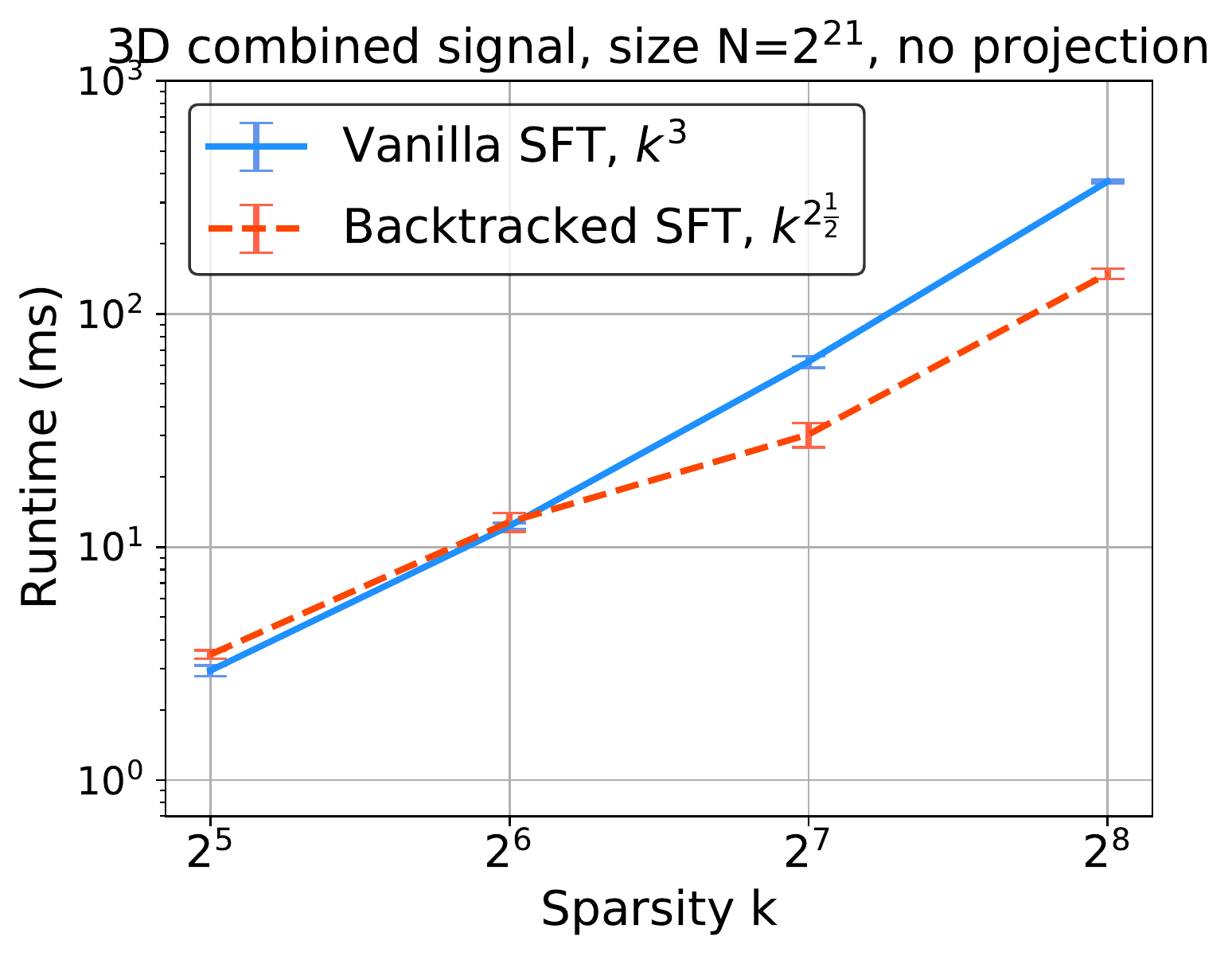}
		
		\caption{Mixture of random support and a $3$D Dirac Comb}\label{fig:combined_vs_vanilla}
	\end{subfigure}
	\begin{subfigure}{0.328\textwidth}
		\captionsetup{justification=centering,margin=0.2cm}
		\includegraphics[width=1.02\textwidth]{./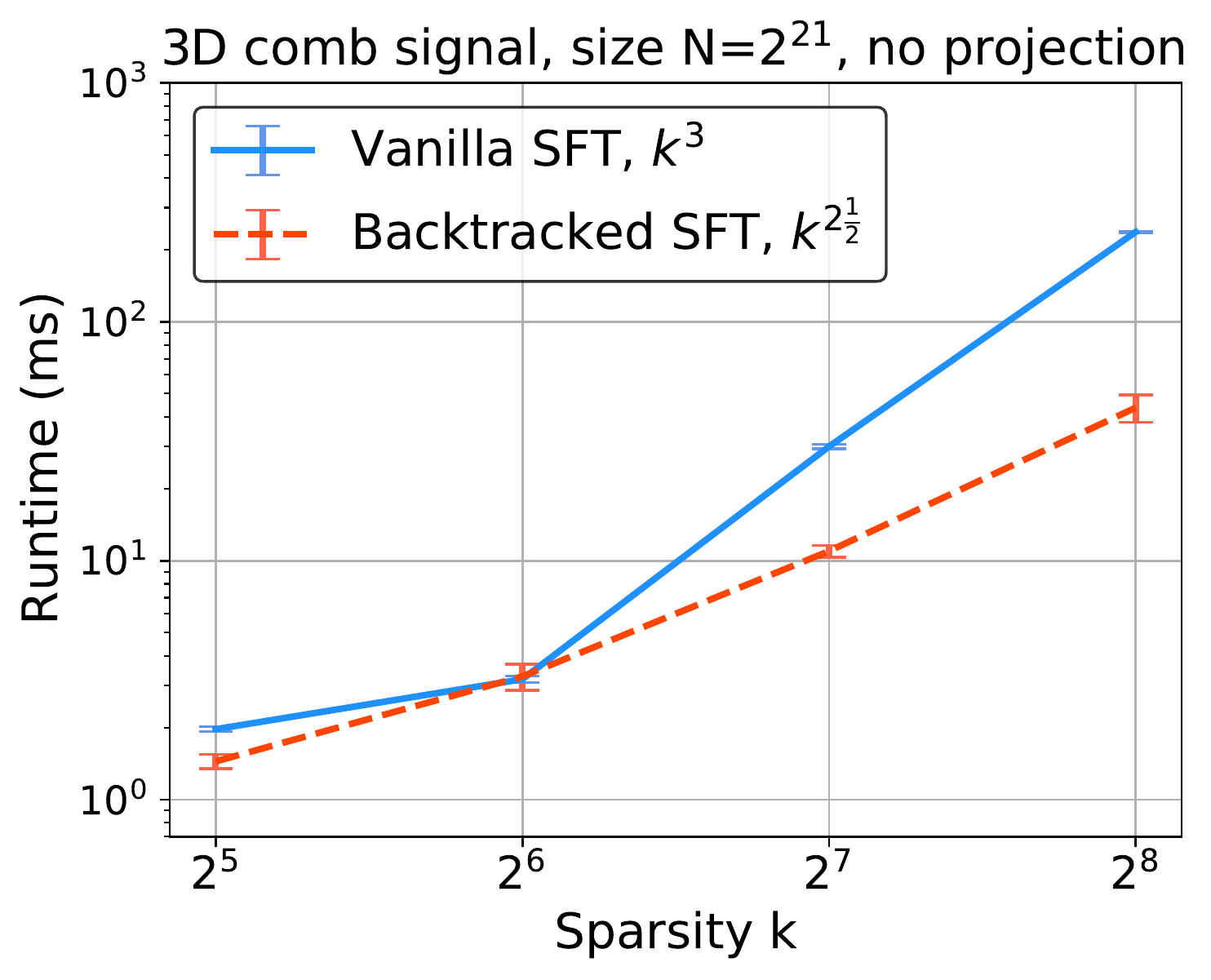}
		
		\caption{Randomly shifted $3$D Dirac Comb}\label{fig:comb_vs_vanilla}
	\end{subfigure}
	\begin{subfigure}{0.328\textwidth}
		\captionsetup{justification=centering,margin=0.2cm}
		\includegraphics[width=1.02\textwidth]{./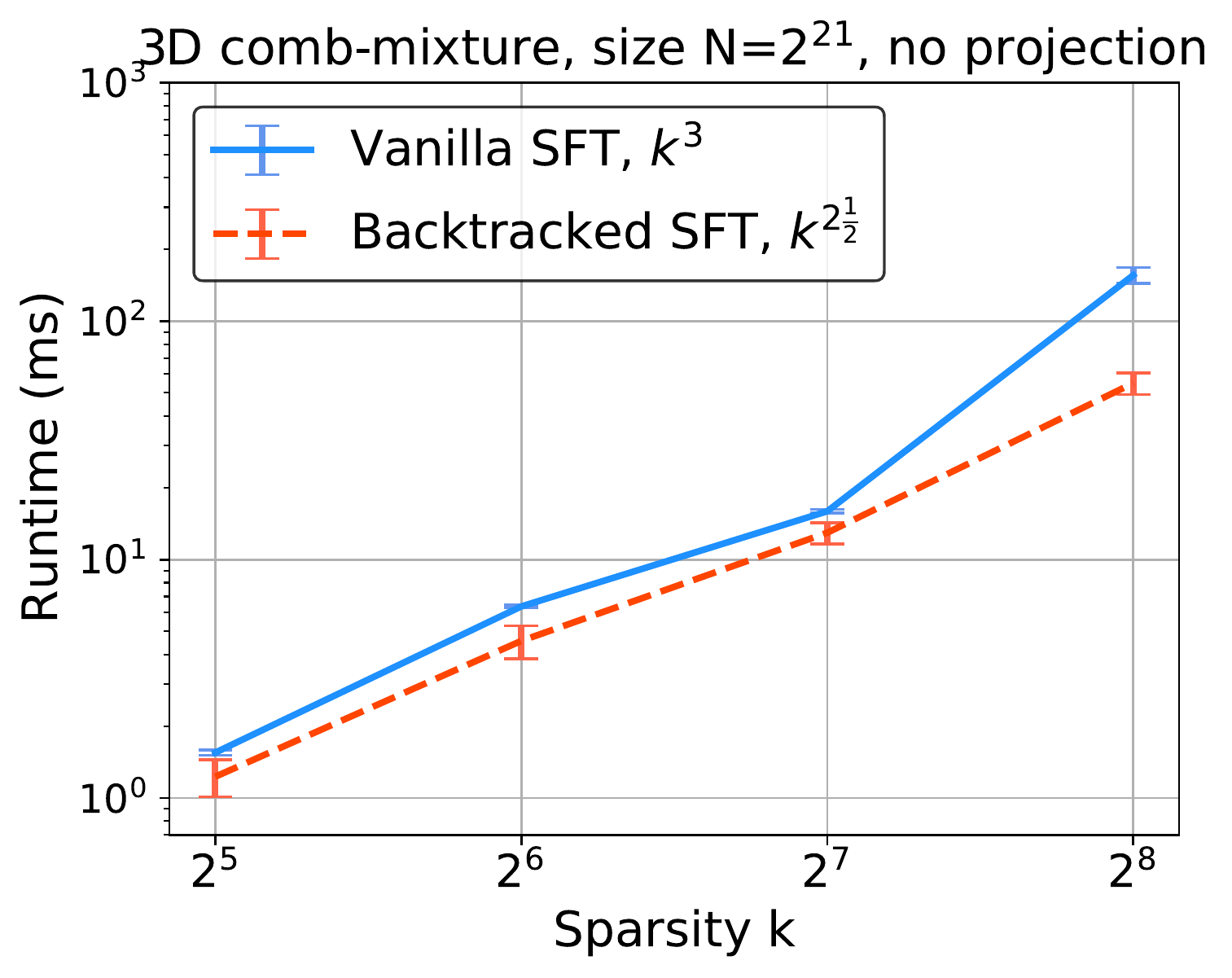}
		
		\caption{Mixture of two randomly shifted $3$D Dirac Combs}\label{fig:twocomb_vs_vanilla}
	\end{subfigure}
	
	\caption{The runtime of recovering: {\bf (a)} superposition of a $k/2$-sparse signal with random support and a $3$D Dirac Comb of sparsity $k/2$, {\bf (b)} a randomly shifted $3$D Dirac Comb with sparsity $k$, and {\bf (c)} mixture of two randomly shifted $3$D Dirac Combs of sparsities $k/2$.} \label{fig:Backtrack_vs_Vanilla}
	
	\vspace{-7pt}
\end{figure}

We first show that our backtracking technique highly improves the runtime of FFT tree pruning and compare our algorithm against the vanilla tree exploration of Kapralov et al.~\cite{kapralov2019dimension} as a baseline.
We run both algorithms on a variety of sparse signals of size $N=2^{21}$ in dimension $d=3$.
We tune the parameters of both algorithms to achieve success probabilities of higher than $90\%$ over 100 independent trials with different random seeds. Projection recovery~\cite{GHIKPS} is turned off for both algorithms to fairly demonstrate the effect of our backtracking technique.
In Figure~\ref{fig:Backtrack_vs_Vanilla}, we benchmark our methods on 3 different classes of $k$-sparse signals and observe that our Backtracked Sparse FFT algorithm consistently achieves a faster runtime and also scales slower as a function of sparsity $k$ compared to the Vanialla Sparse FFT Tree Pruning of \cite{kapralov2019dimension}.

\subsection{Sparse FFT Backtracking vs FFTW.}

\begin{figure}[t]
	\centering
	\begin{subfigure}{0.97\textwidth}
		\captionsetup{justification=centering,margin=0.5cm}
		\includegraphics[width=0.47\textwidth]{./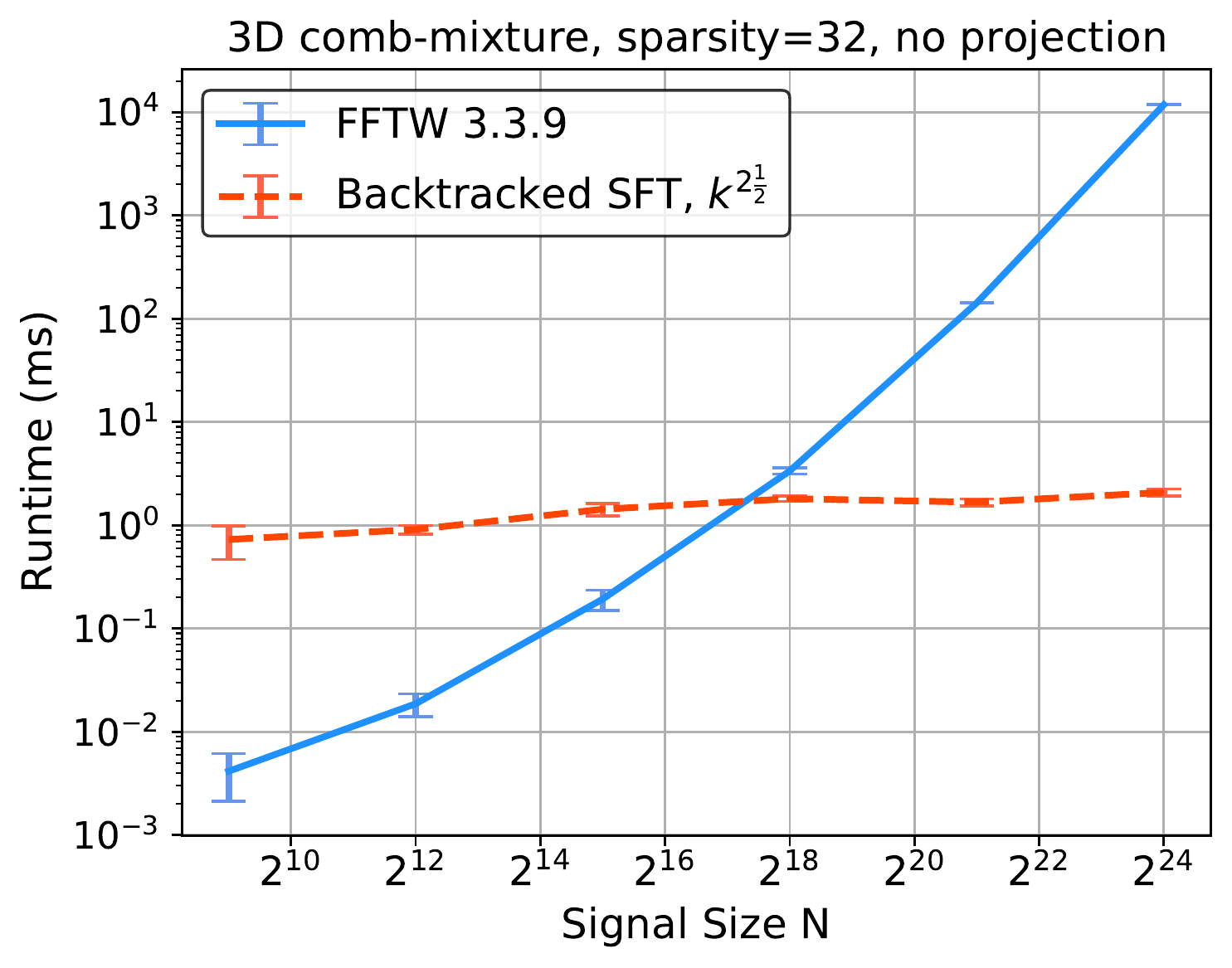}
		\includegraphics[width=0.47\textwidth]{./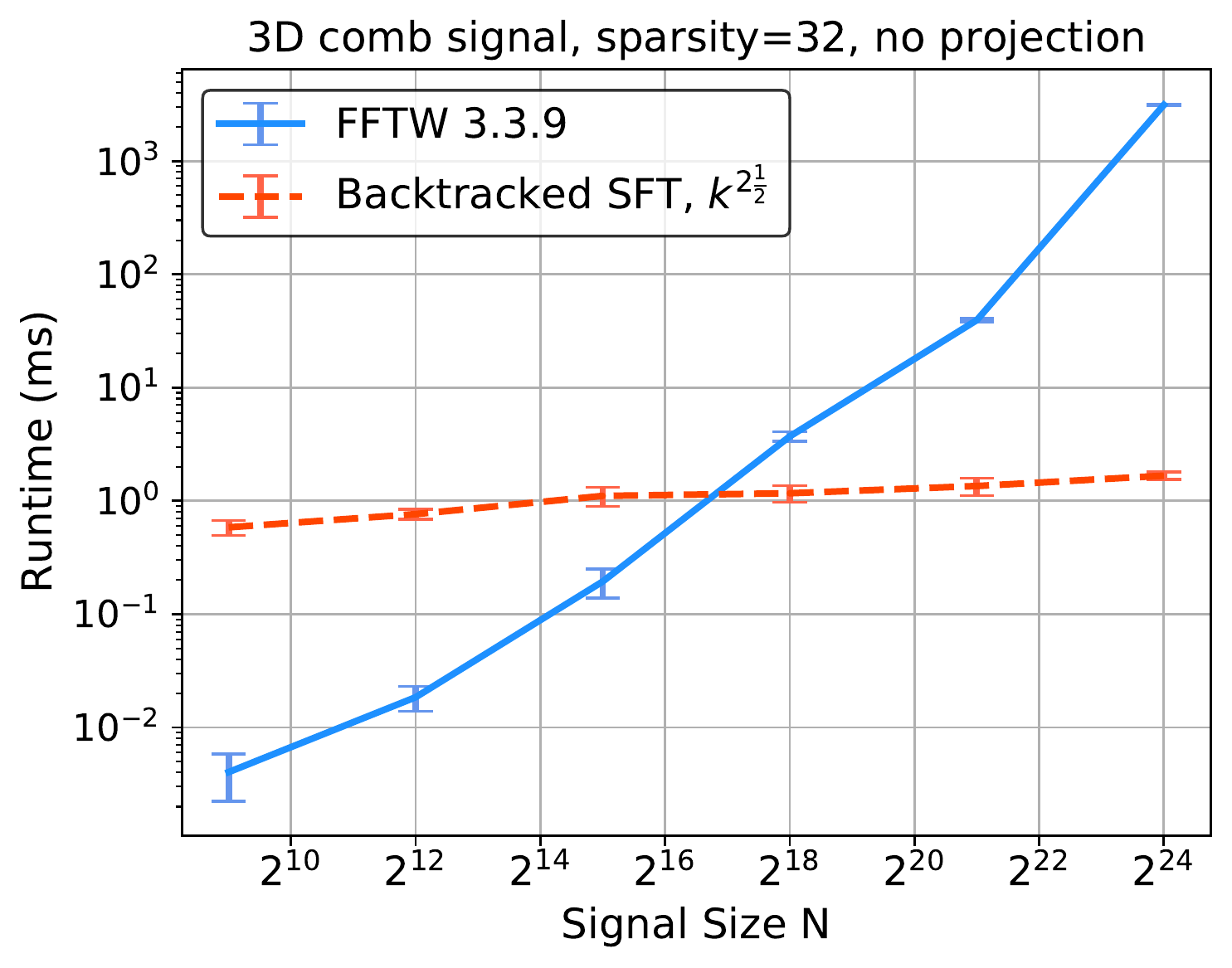}
		\caption{The input signal classes are: (Left) mixture of two randomly shifted $3$D Dirac Combs and (Right) a randomly shifted $3$D Dirac Comb}\label{fig:no_proj}
	\end{subfigure}
	\begin{subfigure}{0.97\textwidth}
		\captionsetup{justification=centering,margin=0.5cm}
		\includegraphics[width=0.47\textwidth]{./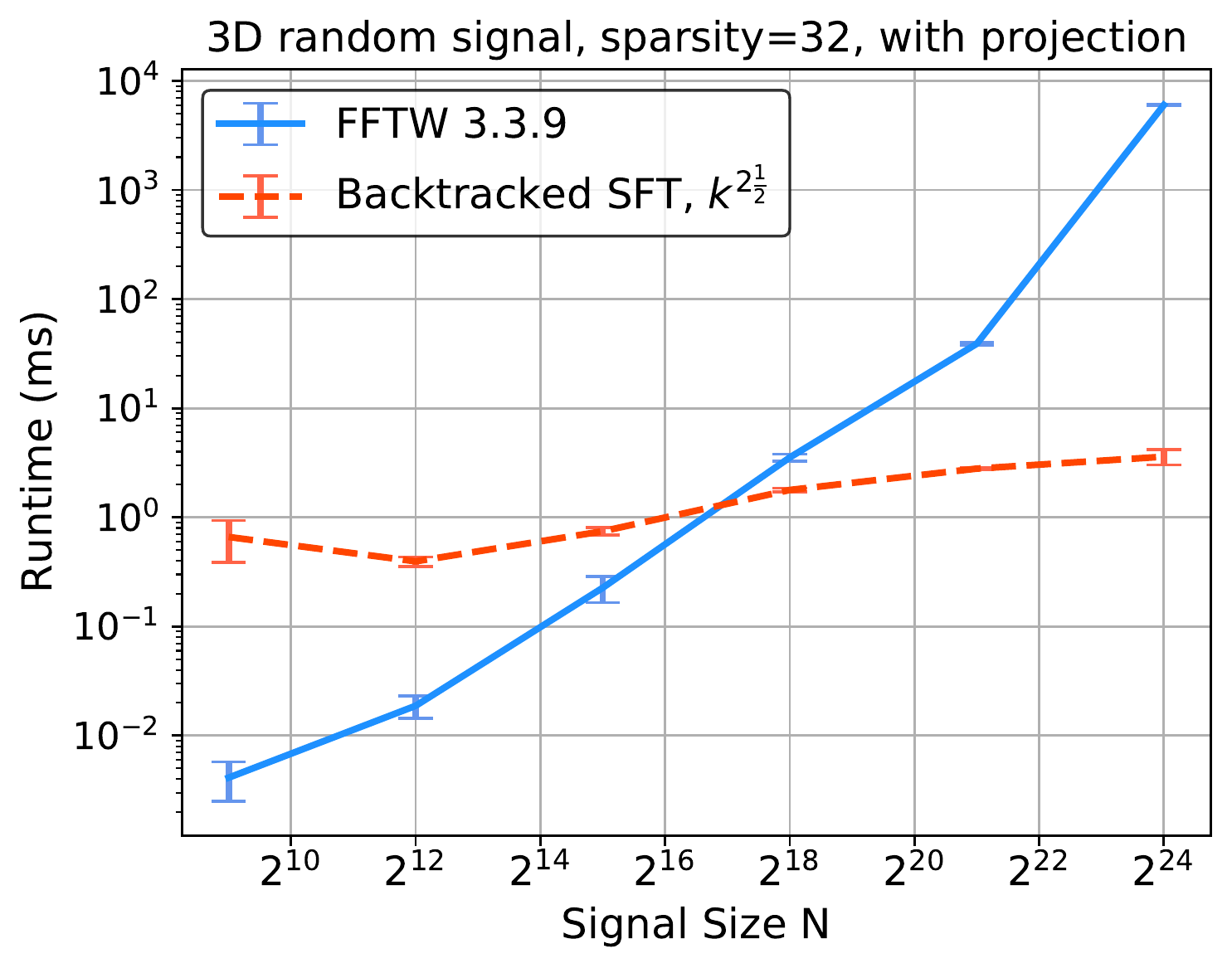}
		\includegraphics[width=0.47\textwidth]{./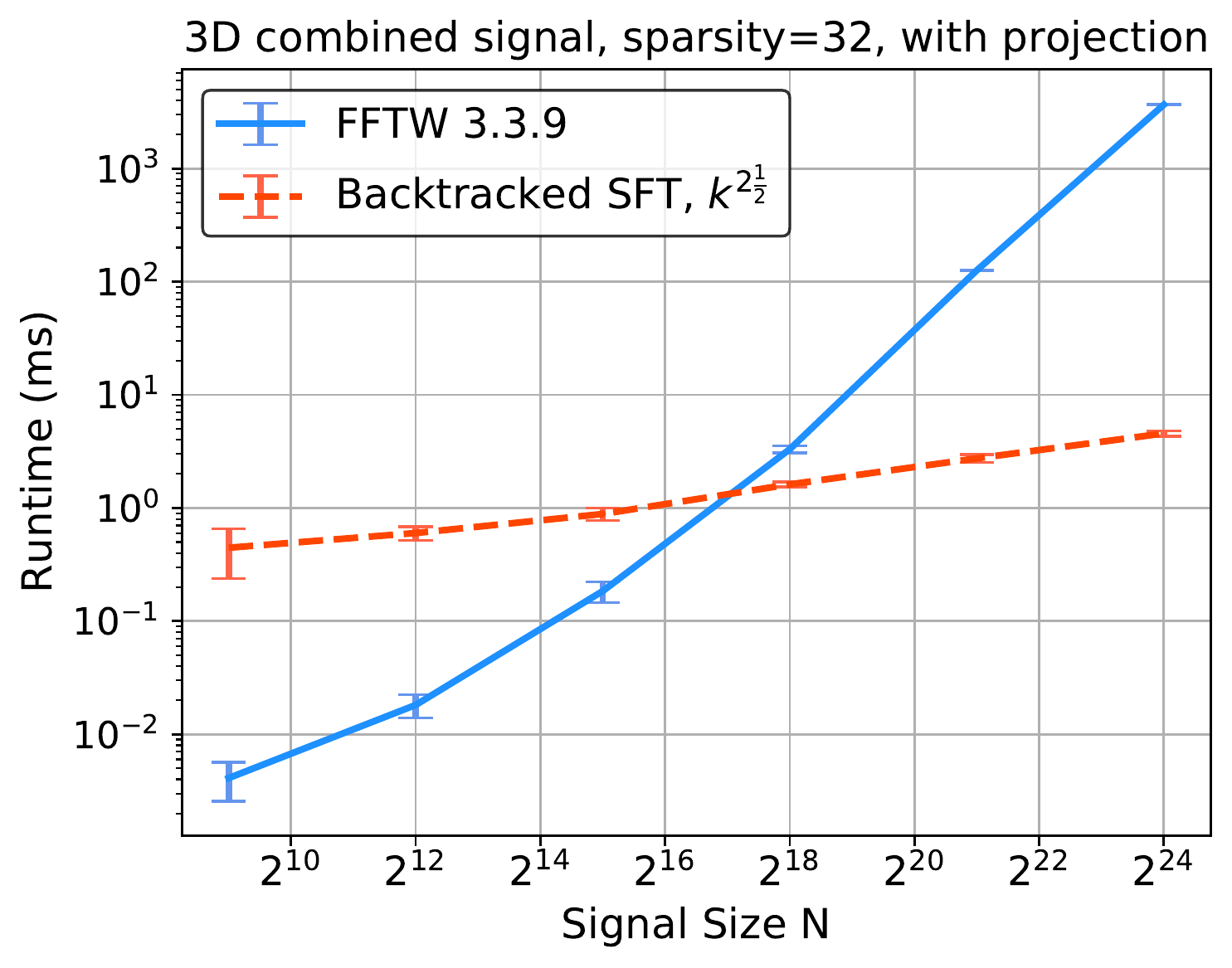}
		\caption{The input signal classes are: (Left) a random support signal with overtones and (Right) mixture of random support and a randomly shifted $3$D Dirac Comb}\label{fig:with_proj}
	\end{subfigure}

	\caption{The runtime of recovering various signal classes with sparsity $k=32$. We consider two variants of our Backtracked Sparse FFT: {\bf (a)} purely modified Algorithm~\ref{alg:exact_sparse_simple} with no prefiltering or projection tricks, {\bf (b)} enhanced version of modified Algorithm~\ref{alg:exact_sparse_simple} which first applies the projection trick.} \label{fig:Backtrack_vs_fftw}
	
	\vspace{-7pt}
\end{figure}
Next we compare our algorithm against the highly optimized FFTW 3.3.9 software package and show that our algorithm outperforms FFTW by a large margin when the signal size $N$ is large. We run both algorithms on a variety of signals of sparsity $k=32$ in dimension $d=3$. As in previous set of experiments, the parameters of our algorithm is tuned to succeed in over $90\%$ of instances.
In Figure~\ref{fig:Backtrack_vs_fftw}, we benchmark our method and the FFTW on 4 different classes of $k$-sparse signals and observe that in all cases the runtime of our Backtracked Sparse FFT algorithm scales very weakly with signal size $N$, particularly, our runtime grows far slower than that of FFTW. Consequently our algorithm is orders of magnitude faster than FFTW for any $N\ge 2^{18}$.

\subsection{Comparison to SFFT 2.0 in Dimension One.}
Finally, in this set of experiments we compare our modified Algorithm~\ref{alg:exact_sparse_simple} against the SFFT software package~\cite{sfft2} which is highly optimized for $1$-dimensional sparse signals and show that we can achieve comparable performance even in dimension one. 
We run both algorithms on two classes of signals with sparsity $k=32$ in dimension $d=1$.
We remark that the runtime of SFFT, which is implemented based on \cite{hikp12b}, will certainly scale badly in high dimensions due to filter support increasing. However, since there is no optimized code available for SFFT in high dimensions, we feel that it is more informative to compare our optimized code to their optimized code in $1$D rather than have a weak extension of their approach as a benchmark.

The SFFT package includes two versions: 1.0 and 2.0. The difference is that SFFT 2.0 adds a Comb prefiltering heuristic to improve the runtime. The idea of this heuristic is to apply the aliasing filter, which is very efficient and has no leakage, to restrict the locations of the large coefficients according to their values mod some number $B=O(k)$. The heuristic, in a preprocessing stage, subsamples the signal at rate $1/B$ and then takes the FFT of the subsampled signal. 

\begin{figure}[t]
	\centering
	\begin{subfigure}{0.48\textwidth}
		\captionsetup{justification=centering,margin=0.2cm}
		\includegraphics[width=0.99\textwidth]{./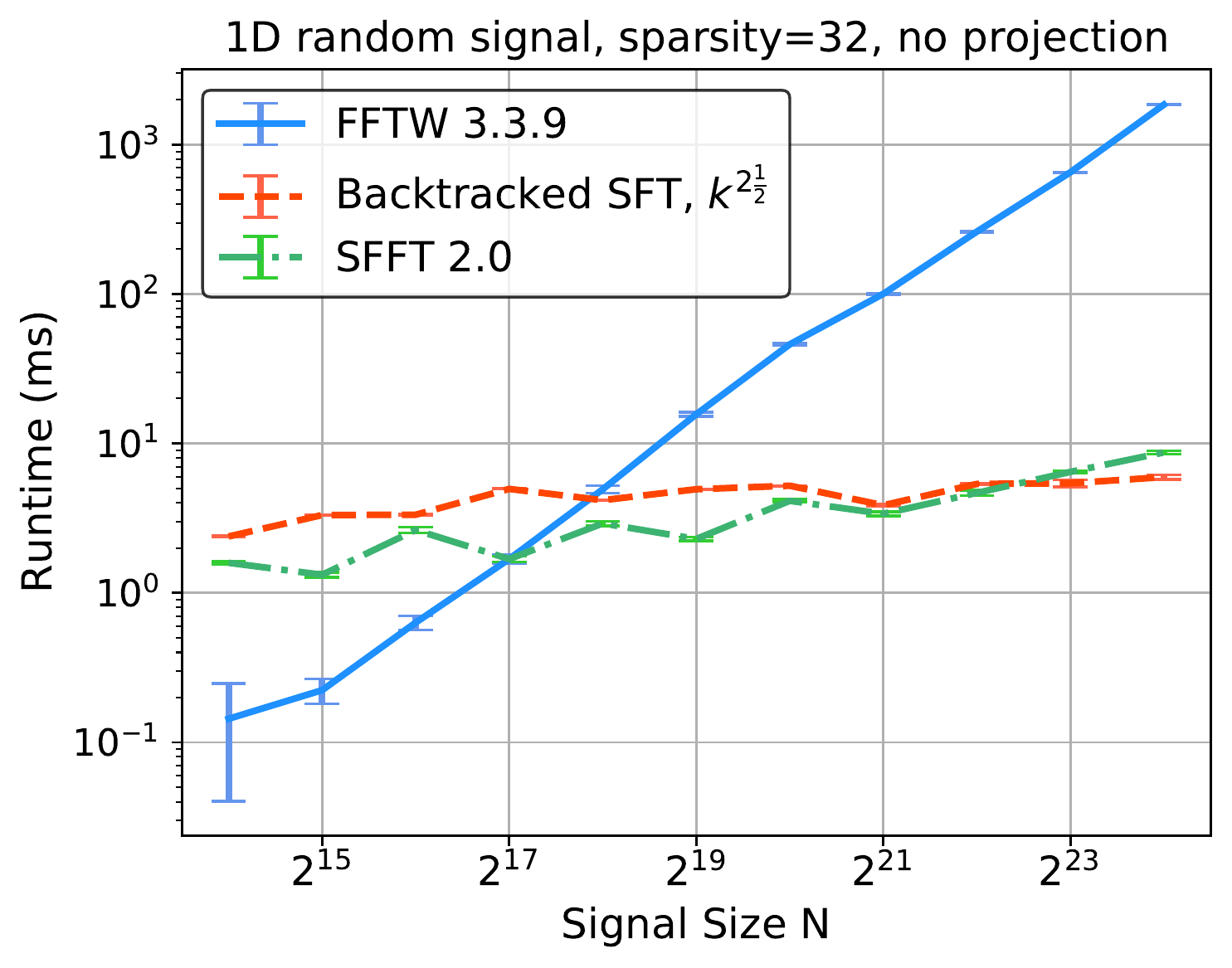}
		
		\caption{Random Fourier support}\label{fig:sfft_random}
	\end{subfigure}
	\begin{subfigure}{0.48\textwidth}
		\captionsetup{justification=centering,margin=0.2cm}
		\includegraphics[width=0.99\textwidth]{./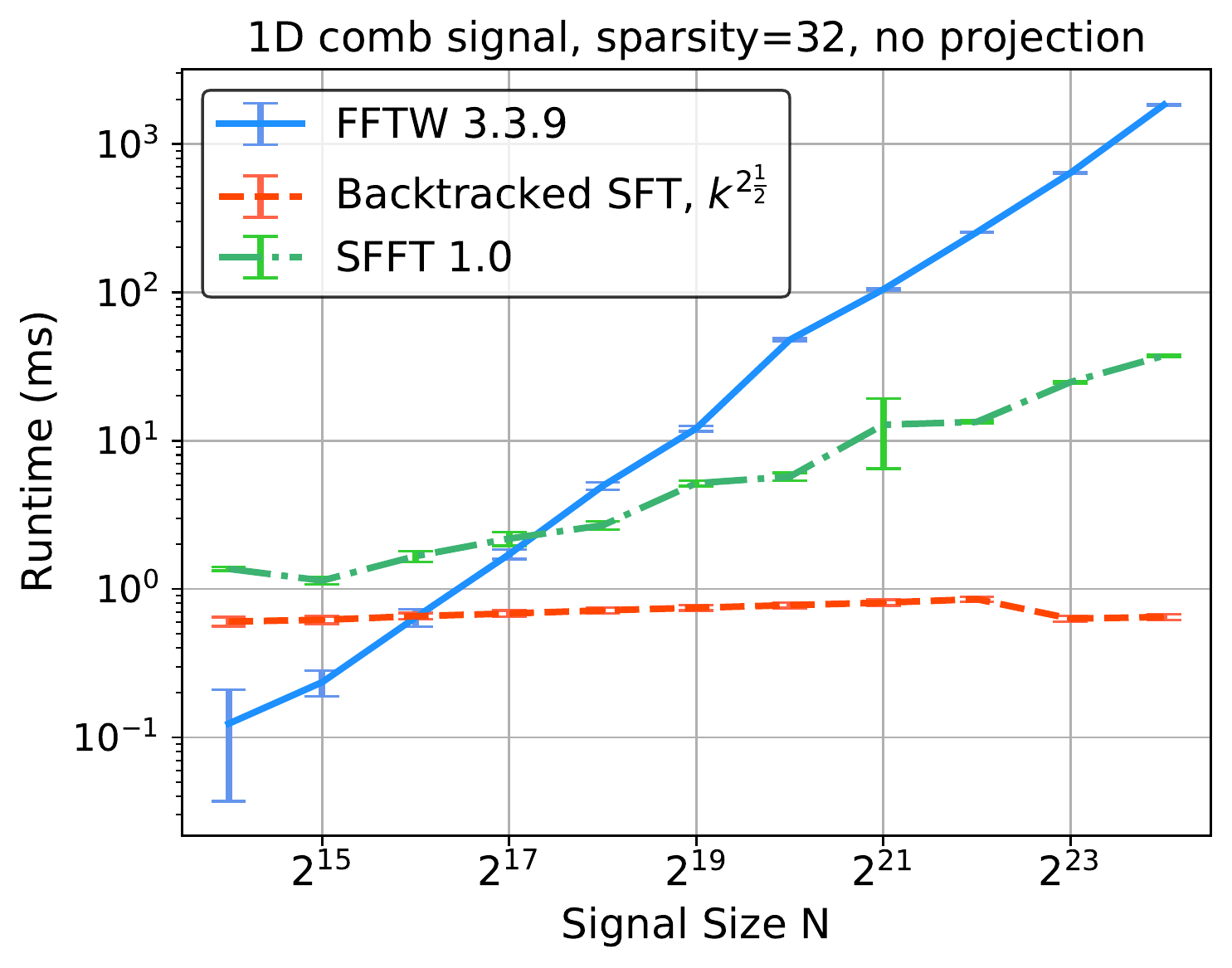}
		
		\caption{Randomly shifted Dirac Comb}\label{fig:sfft_comb}
	\end{subfigure}
	
	\caption{The runtime of recovering: {\bf (a)} $k$-sparse signal with random support and {\bf (b)} a randomly shifted Dirac Comb with sparsity $k$.} \label{fig:sfft}
	
	\vspace{-7pt}
\end{figure}

In Figure~\ref{fig:sfft}, we benchmark our method and SFFT (1.0 and 2.0) on 2 different classes of $k$-sparse signals and observe that the runtime of our Backtracked Sparse FFT algorithm is comparable to that of SFFT. 
In Fig.~\ref{fig:sfft_random} we run the algorithms on a signal with random Fourier support and observe that SFFT 2.0 runs slightly faster. Since the support is random, the heuristic trick used in SFFT 2.0 can recover a large portion of the frequencies and thus SFFT 2.0 owes much of its speed to the heuristic trick.
On the other hand, in Fig.~\ref{fig:sfft_comb}, we run the algorithms on a randomly shifted Dirac Comb and observe that our method outperforms SFFT 1.0. Note that since the Comb prefiltering heuristic used in SFFT 2.0 completely fails on a Dirac Comb input, we used SFFT 1.0 in this experiment instead. This result demonstrates that for signals with small sparsity $k$, our algorithm can run even faster than SFFT when the input's support is a multiplicative subgroup of $\ZZ_n$, such as the Dirac Comb.

\section{Acknowledgements.}
Michael Kapralov, Mikhail Makarov and Amir Zandieh have received funding from the European Research Council (ERC) under the European Unions Horizon 2020 research and innovation programme (grant agreement No. 759471) for the project SUBLINEAR.  Amir Zandieh was supported by the Swiss NSF grant No. P2ELP2\_195140. Karl Bringmann and Vasileios Nakos have received funding from the European Research Council (ERC) under the European Unions Horizon 2020 research and innovation programme (grant agreement No. 850979) for the project TIPEA.

\appendix
\section{Analysis of the Cubic Time Tree Exploration Algorithm.}
\label{appndx-slow-tree-explor}

This section is devoted to proving the correctness and runtime of \cref{alg:ksparsefft_simple}.

The idea behind \cref{alg:ksparsefft_simple} is to recover all non-zero leaves in the subtree of $\tfull_N$ rooted at $v$, given that $|\hleaves(v)| \leq b$ and $v$ is isolated by $\found$ and $\excl$.
\Cref{alg:ksparsefft_simple} is essentially a slightly modified version of~\cite{kapralov2019dimension}, but since in this paper we work with an abstracted problem, we still present the proof of its correctness and runtime.
One of the useful tools for this algorithm is \cref{lem:kraft_averaging}, which states that for any tree $T$, the minimum weight $w_T(\ell)$ of a leaf $\ell$ in $T$ is at most $\log L$, where $L$ is the number of leaves of $T$.

\begin{algorithm}[t]
	\caption{$\textsc{SlowExactSparseRecovery}( \found, \excl, v, b)$}
	\label{alg:ksparsefft_simple}
	\begin{algorithmic}[1]
	\State $\frontier_v \leftarrow \{v\}$ ,  $\steps \leftarrow 1$ ,  $\found_{out} \leftarrow \emptyset$
	\Repeat
		\If {$\steps > 6\cdot b\log N$} \label{line:promise_nonodes_simple}
			\Comment{Have explored more than the expected sparsity}
			\State \Return $\emptyset$
		\EndIf
		\State $z:=$ vertex in $\frontier_v$ with the minimum weight with respect to $\frontier_v$. \label{line:promise-min-z}
		\State $\frontier_v = \frontier_v \setminus \{z\}$,  $\steps \leftarrow \steps +1$ \label{line:remove_from_frontier}
		\State $\excl' \gets \excl \cup \frontier_v$,  $\found' \gets \found + \found_{out}$
		\If {$z$ is a leaf in $\tfull$}
			\State $\found_{out}(z) \leftarrow \textsc{Estimate}(\found', \excl', z)$ 
		\ElsIf {$ \textsc{ZeroTest}(\found', \excl', z, b) = \false $} 
		\Comment{$\leaves(z)$ contains heavy leaves}
			\State $z_{\lef} \gets $ left child of $z$
			\State $z_{\righ} \gets $ right child of $z$
			\State $\frontier_v \leftarrow \frontier_v \cup \{z_{\lef},z_{\righ}\}$		
		\EndIf
		\Comment{$\leaves(z)$ has no heavy leaves we simply remove it from $\frontier$, (see line~\ref{line:remove_from_frontier}).}
	\Until {$\frontier_v = \emptyset$}
	\State \Return $\found_{out}$
	\end{algorithmic}\label{alg:cubic-exact}
	
\end{algorithm}

\begin{theorem}[Theorem~\ref{thm:kexactrecovery_simple_corr}, restated]
If $|\hleaves(v)| \leq b$ and $v$ is isolated by $\found$ and $\excl$, then the procedure $\textsc{SlowExactSparseRecovery}$ returns the correct estimates for all $\hleaves(v)$.
\end{theorem}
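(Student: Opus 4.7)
\begin{proofof}{Theorem~\ref{thm:kexactrecovery_simple_corr} (proof proposal)}
The plan is to analyze \Cref{alg:ksparsefft_simple} by induction on the iterations of the \textbf{Repeat} loop, maintaining two invariants at the end of each iteration: (a) every leaf stored in $\found_{out}$ lies in $\hleaves(v)$ and carries a correct estimate of its value, and (b) every leaf of $\hleaves(v)$ not yet in $\found_{out}$ still lies in $\leaves(u)$ for some $u \in \frontier_v$, i.e.\ $\hleaves(v)\setminus\mathrm{keys}(\found_{out}) \subseteq \bigcup_{u\in\frontier_v}\leaves(u)$. At initialization $\frontier_v=\{v\}$ and $\found_{out}=\emptyset$, and both invariants hold by hypothesis (since $\hleaves(v)\subseteq\leaves(v)$). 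To conclude correctness it then suffices to argue that the loop actually exits through $\frontier_v=\emptyset$ (not through the $\steps>6b\log N$ early return), for then invariant (b) forces $\hleaves(v)\setminus\mathrm{keys}(\found_{out})=\emptyset$, and invariant (a) gives the correct estimates.

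The first thing I would prove is that whenever the algorithm picks $z$ in line~\ref{line:promise-min-z}, the vertex $z$ is isolated by $\found':=\found+\found_{out}$ and $\excl':=\excl\cup\frontier_v$. This follows from \Cref{def:isolated-found-excl}: by the hypothesis that $v$ is isolated by $\found$ and $\excl$, every leaf $\ell$ of $\tfull_N$ not in $\leaves(v)$ either has value zero, lies in $\leaves(\excl)$, or is correctly estimated by $\found$; on the other side, any leaf of $\leaves(v)\setminus\leaves(z)$ either belongs to $\leaves(u)$ for some $u\in\frontier_v\setminus\{z\}$ (hence to $\leaves(\excl')$) or already appears in $\found_{out}$ (hence is correctly estimated by $\found'$ via invariant (a)). Given isolation, the conditions of Assumption~\ref{assmptn-zerotest-estimate} are satisfied, since $|\hleaves(z)|\le|\hleaves(v)|\le b$. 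Hence $\textsc{ZeroTest}(\found',\excl',z,b)$ returns $\true$ exactly when $\hleaves(z)=\emptyset$, and $\textsc{Estimate}(\found',\excl',z)$ returns the exact value when $z$ is a leaf of $\tfull_N$. A case analysis on the three branches of the loop then preserves invariants (a) and (b): the leaf-of-$\tfull_N$ branch adds a single correct estimate to $\found_{out}$ and removes its unique leaf from coverage; the $\textsc{ZeroTest}=\true$ branch discards $z$, which is safe because $\hleaves(z)=\emptyset$; and the $\textsc{ZeroTest}=\false$ branch replaces $z$ by its two children in $\tfull_N$, which still covers $\leaves(z)\cap\hleaves(v)$.

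The main obstacle is the termination bound: we must rule out the early return in line~\ref{line:promise_nonodes_simple} by proving that the total number of vertices that ever enter $\frontier_v$ over the entire execution is at most $6b\log N$. For this I would consider the union $T_v^{expl}$ of all vertices ever placed into $\frontier_v$. By construction $T_v^{expl}$ is a subtree of $\tfull_N$ rooted at $v$: a vertex is only added when its parent was split (so every non-root vertex in $T_v^{expl}$ has its sibling also in $T_v^{expl}$), and a vertex is split only when $\textsc{ZeroTest}$ returns $\false$, i.e.\ when $\hleaves(z)\neq\emptyset$. Consequently every internal vertex of $T_v^{expl}$ has a descendant in $\hleaves(v)$. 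Since $|\hleaves(v)|\le b$, at every level $\ell\in\{0,1,\dots,\log N\}$ the number of internal vertices of $T_v^{expl}$ at level $\ell$ is at most $b$, and the number of leaves of $T_v^{expl}$ at level $\ell$ is at most $2b$ (each such leaf is a sibling of a vertex on a root-to-$\hleaves(v)$ path, or is itself on such a path). Summing over levels gives $|T_v^{expl}|\le 3b\log N<6b\log N$, so $\steps$ never triggers the early return, and the loop must terminate with $\frontier_v=\emptyset$.

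Combining the invariants with termination shows that at the final iteration $\hleaves(v)=\mathrm{keys}(\found_{out})$ and all stored estimates are correct, completing the proof.
\end{proofof}
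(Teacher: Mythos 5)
Your proof is correct and follows essentially the same approach as the paper: the same two inductive invariants on $\found_{out}$ and $\frontier_v$, the same isolation argument to justify the $\ZeroTest$ and $\Estimate$ calls, and the same reliance on a size bound for the explored tree to rule out the early return. The only cosmetic difference is that you bound the explored tree's size by a per-level count of internal vertices and their children, whereas the paper directly counts the vertices with nonempty $\hleaves$ and their children; both give the same $O(b\log N)$ bound.
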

\begin{proof}
	Assume for the moment that the check in line \ref{line:promise_nonodes_simple} is not made.
	We will show inductively that the following invariant holds at the end of each {\bf repeat} loop: all estimated values in $\found_{out}$ are correct, and $\hleaves(v) \setminus \found_{out} \subset \leaves(\frontier_v)$. Then the correctness would follow from the fact that $\frontier_v = \emptyset$ at the end of the procedure.
	
	It is easy to see that invariants hold before the loop starts. Now, suppose that at the start of an arbitrary iteration the invariants hold for current sets $\found_{out}$ and $\frontier_v$. Because the invariants hold, the node $z$ picked in line~\ref{line:promise-min-z} is isolated by $\found + \found_{out}$ and $\excl \cup \frontier_v$.
	If $z$ is a leaf in $\tfull_N$, because $z$ is isolated, $\Estimate$ produces the correct estimate for $z$, therefore, the invariants still hold at the end of the loop.
	If $z$ is not a leaf in $\tfull_N$, then because $z$ is isolated and $\hleaves(z) \subseteq \hleaves(v)$, the prerequisites for calling $\ZeroTest$ are fulfilled, and its output is correct. If it says True, we can just delete $z$ from $\frontier_v$ without violating the invariants, and otherwise we add both it's children instead. Since $\hleaves(z_\lef) \cup \hleaves(z_\righ) = \hleaves(z)$, the invariants still hold.
	
	Finally, even if we do the check in line \ref{line:promise_nonodes_simple}, the execution will still be the same, since we do at most $3 b \log N$ iterations. Notice that each node gets added at most once to $\frontier_v$, and on each iteration one node is removed from $\frontier_v$. Also notice that if for the node $z$, $\hleaves(z) = \emptyset$, it's children will not be added to $\frontier_v$, because $\ZeroTest$ would return True. Since there is at most $b \log N$ vertices $z$ such that $\hleaves(z) \neq \emptyset$, and each of them have only $2$ children, there is at most $3 b \log N$ vertices that can be added to $\frontier_v$, hence the maximum number of iterations is $6 b \log N$.
\end{proof}

\begin{theorem}[Theorem~\ref{thm:kspase_simple_runtime}, restated]
	If  $\leaves(v) \cap \leaves(\excl) = \emptyset$, the running time of 
	$\textsc{SlowExactSparseRecovery}$ is upper bounded by 
	\[
	\wt{O} \left( |\found| \cdot b^2 + 2^{w_{\excl}(v)} \cdot b^3 \right).
	\]
	
\end{theorem}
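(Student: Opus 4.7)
The plan is to bound each iteration of the \textbf{repeat} loop by $\wt{O}\!\left(|\found|\cdot b+2^{w_\excl(v)}\cdot b^2\right)$ and observe that the loop runs at most $\wt{O}(b)$ times. First I will show that the number of iterations is $O(b\log N)$: this already follows from the analysis used to prove \Cref{thm:kexactrecovery_simple_corr}, since each iteration either removes a vertex from $\frontier_v$ and does nothing, or removes a vertex and adds its two children, and a simple charging argument together with the explicit cutoff in line~\ref{line:promise_nonodes_simple} caps the total at $6b\log N$. In particular $|\frontier_v|\le 6b\log N=\wt{O}(b)$ throughout, and the accumulator $\found_{out}$, which only grows when a $\tfull_N$-leaf is processed, satisfies $|\found_{out}|\le 6b\log N=\wt{O}(b)$ as well.

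The central step is to bound $2^{w_{\excl'}(z)}$ for the vertex $z$ picked in line~\ref{line:promise-min-z}, where $\excl'=\excl\cup\frontier_v$. By subadditivity of the weight function (\Cref{lem:weight_subadditive}),
\[
w_{\excl\cup\frontier_v}(z)\;\le\; w_\excl(z)+w_{\frontier_v}(z).
\]
The second term is controlled by Kraft averaging (\Cref{lem:kraft_averaging}): since $z$ has minimum weight in $\frontier_v$ and $|\frontier_v|=\wt{O}(b)$, we get $w_{\frontier_v}(z)\le \log|\frontier_v|=\wt{O}(1)$. For the first term I will argue $w_\excl(z)=w_\excl(v)$. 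This uses the hypothesis $\leaves(v)\cap\leaves(\excl)=\emptyset$: the subtree of $\tfull_N$ rooted at $v$ does not intersect $\excl$, so in $T(\excl\cup\{z\})$ every node strictly between $v$ and the descendant $z$ has exactly one child (the one continuing toward $z$). Hence the two-child ancestors of $z$ in $T(\excl\cup\{z\})$ are exactly the two-child ancestors of $v$ in $T(\excl\cup\{v\})$. Combining, $2^{w_{\excl'}(z)}=\wt{O}\!\left(2^{w_\excl(v)}\right)$.

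Plugging into \Cref{assmptn-zerotest-estimate}, one call to $\ZeroTest(\found',\excl',z,b)$ costs $\wt{O}\!\left(2^{w_{\excl'}(z)}\cdot b+|\found'|\cdot b\right)=\wt{O}\!\left(2^{w_\excl(v)}\cdot b^2+(|\found|+b)\cdot b\right)$, and one call to $\Estimate$ is no more expensive. Multiplying by the $\wt{O}(b)$ iteration bound gives the total
\[
\wt{O}\!\left(2^{w_\excl(v)}\cdot b^3+|\found|\cdot b^2+b^3\right)=\wt{O}\!\left(|\found|\cdot b^2+2^{w_\excl(v)}\cdot b^3\right),
\]
as claimed. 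The main obstacle—beyond the weight analysis just described—is handling the bookkeeping for $\excl'=\excl\cup\frontier_v$ and $\found'=\found+\found_{out}$ without paying $\Omega(|\excl|+|\found|)$ from scratch at each iteration. I would address this by maintaining $T(\excl')$ and $\found'$ incrementally: at the start of the procedure spend $\wt{O}(|\excl|+|\found|)$ to build them, and then at each iteration update by inserting/removing the $O(1)$ vertices in which $\frontier_v$ and $\found_{out}$ changed, at cost $\wt{O}(1)$ per update. This bookkeeping time is dominated by the $\ZeroTest$/$\Estimate$ costs already accounted for, so it does not affect the asymptotic bound.
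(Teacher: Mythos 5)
Your proof is correct and follows essentially the same approach as the paper: bound the number of iterations by $O(b\log N)$ via the explicit cutoff, control $|\frontier_v|$ and $|\found_{out}|$, bound $2^{w_{\excl'}(z)}$ by $\wt{O}(2^{w_\excl(v)})$ using Kraft averaging, and account for the incremental maintenance of $\excl'$ and $\found'$. The only cosmetic difference is that you decompose $w_{\excl\cup\frontier_v}(z)$ via subadditivity plus a separate argument that $w_\excl(z)=w_\excl(v)$, whereas the paper asserts the exact equality $w_{\excl\cup\frontier_v}(z)=w_{\frontier_v}(z)+w_\excl(v)$ directly from the disjointness hypothesis; both yield the same bound.
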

\begin{proof}
	First, notice that since $\leaves(v) \cap \leaves(\excl) = \emptyset$, for all $z$ in the subtree of $v$ at any iteration, $w_{\excl \cup \frontier_v} (z) = w_{\frontier_v}(z) + w_{\excl}(v)$.
	Because of the check in line \ref{line:promise_nonodes_simple}, the algorithm runs for at most $6 b \log N$ iterations, and, consequently, $|\frontier_v| \leq 6 b \log N$. By \cref{lem:kraft_averaging}, for each chosen $z$, $2^{w_{\frontier_v}(z)} \leq 6 b \log N$. Similarly, $|\found_{out}| \leq 6 b \log N$.
	Therefore, each call to $\ZeroTest$ uses 
	\[
	\wt{O}( 2^{w_{\excl \cup \frontier_v}(z)} \cdot b + |\found + \found_{out}|\cdot b) = \wt{O}( 2^{w_{\excl}(v)} \cdot b^2 + |\found|\cdot b + b^2)
	\]
	operations.
	Summing over all iterations, we find that the total runtime of all calls to $\ZeroTest$ is bounded by $\wt{O}( 2^{w_{\excl}(v)} b^3 + |\found|\cdot b^2)$.
	
	Similarly, each call of $\Estimate$ spends $\wt{O}(2^{w_{\excl \cup \frontier_v}(z)} + |\found + \found_{out}|)$ operations, which accumulates to
	$\wt{O}( 2^{w_{\excl}(v)} \cdot b^2 + |\found|\cdot b)$ across all iteration.
	
	Finally, notice that to maintain the set $\excl'$ and its tree $T(\excl)$, we first need to copy it from $\excl$ and add $v$. However, because we only work inside the subtree of $v$, we can reduce the tree $T(\excl)$ and, respectively, it's set to only contain the path to $v$ and all of the children of the vertices in this path. This can be done in time $O(\log N + w_{\excl}(v))$.
	Then, on each iteration, $\excl'$ is only modified by removing one vertex from it, which can be done in time $O(\log N)$. Hence, the total time spent on maintaining $\excl'$ is $\wt{O}(b + w_{\excl}(v))$.
	
	Since runtime of each iteration is dominated by the calls to $\ZeroTest$ and/or $\Estimate$, the total running time is $\wt{O}( 2^{w_{\excl}(v)} \cdot b^3 + |\found|\cdot b^2)$.
\end{proof}

\section{Proof of \Cref{lem:weight_subadditive}.} \label{appndx-weight-subadditive}
\begin{proof}
    Observe that the set of edges of tree $T=T(S_1 \cup S_2 \cup \{ v\})$ is the union of the sets of edges of trees $T_1 = T(S_1 \cup \{v \})$ and $T_2 = T(S_2 \cup \{ v\})$. Consider the set of all ancestors of $v$ with two children. For ancestor $u$, one of those children, say $z$, lies on the path from the $v$ to the root, and, therefore, the edge $(u, z)$ is a part of both of the trees $T_1, T_2$. Now consider the other child of $u$, $z'$. Because edge $(u, z')$ exists in $T$, it also exists in one of the trees $T_1$ or $T_2$. But then $u$ also has two children in that tree. Therefore, each ancestor of $v$ with two children has two children in $T_1$ or $T_2$ as well. Therefore, by definition of $w_T$, we get that $w_T(v) \leq w_{T_1}(v) + w_{T_2}(v)$, which is equivalent to the desired inequality.
\end{proof}
\bibliographystyle{alpha}
\bibliography{paper}

\end{document}